\newtheorem{prop}{Proposition}
\newtheorem{theorem}{Theorem}
\newtheorem{remark}{Remark}
\newtheorem{lemma}{Lemma}
\newtheorem{definition}{Definition}
\newtheorem{example}{Example}
\newtheorem{corollary}{Corollary}
\newcommand{\OO}{{\mathcal O}}
\begin{document}
\title{Non-Commutative Ring Learning With Errors From Cyclic Algebras}

\author{Charles~Grover$^1$, Cong~Ling$^1$, Roope~Vehkalahti$^2$}


\author{Charles~Grover, Cong~Ling and Roope~Vehkalahti\thanks{

C. Grover is with the Department of Electrical and Electronic Engineering, Imperial College London, London SW7 2AZ, UK (e-mail: c.grover15@imperial.ac.uk).

C.  Ling  is  with  the  Department  of  Electrical  and  Electronic  Engineering, Imperial College London, London SW7 2AZ, UK (e-mail: cling@ieee.org).

R. Vehkalahti is with the Department of Communications and Networking, Aalto University, Espoo, FI-02150, Finland (e-mail: roope.vehkalahti@aalto.fi).

}}

\maketitle
\begin{abstract}
The Learning with Errors (LWE) problem is the fundamental backbone of modern lattice based cryptography, allowing one to establish cryptography on the hardness of well-studied computational problems. However, schemes based on LWE are often impractical, so Ring LWE was introduced as a form of `structured' LWE, trading off a hard to quantify loss of security for an increase in efficiency by working over a well chosen ring. Another popular variant, Module LWE, generalizes this exchange by implementing a module structure over a ring. In this work, we introduce a novel variant of LWE over cyclic algebras (CLWE) to replicate the addition of the ring structure taking LWE to Ring LWE by adding cyclic structure to Module LWE. The proposed construction is both more efficient than Module LWE and conjecturally more secure than Ring LWE, the best of both worlds. We show that the security reductions expected for an LWE problem hold, namely a reduction from certain structured lattice problems to the hardness of the decision variant of the CLWE problem. As a contribution of theoretic interest, we view CLWE as the first variant of Ring LWE which supports non-commutative multiplication operations. This ring structure compares favorably with Module LWE, and naturally allows a larger message space for error correction coding.

\end{abstract}
\section{Introduction}
With the predicted advent of quantum computers compromising the bulk of existent cryptographic constructions, lattice based cryptography has emerged as a promising foundation for long term security. In particular, the Learning with Errors (henceforth LWE) problem introduced in \cite{regev_lattices_2009}, as well as its variants over rings (RLWE) \cite{lyubashevsky_ideal_2010} and modules (MLWE) \cite{langlois_worst-case_2015}, provides a natural intermediate step to base cryptographic hardness on lattice short vector problems in a post quantum setting. Indeed, second round submissions to the NIST post quantum standardisation process such as NewHope \cite{alkim_post-quantum_2016-2} and KYBER \cite{avanzi_kyber_2019} rely on the hardness of LWE variants. Cryptography based on the classical LWE problem is typically somewhat impractical, in part due to large key sizes. To solve this, the ring variant was introduced as a way to provide extra structure in LWE to trade a potential loss of security for an increase in efficiency. MLWE generalizes ring  and classical LWE, providing a smoother transition between security and efficiency than the binary option presented by ring or classical LWE. The flexibility of MLWE is highly desirable in practice, as demonstrated by third-round NIST finalists KYBER and SABER, both based on MLWE \cite{NIST-Round3}.

Conceptually, one may view all these problems as variations on a single problem. The (search) LWE problem tasks a solver with recovering a secret vector $\textbf{s} \in \mathbb{Z}_q^n$ from a collection of pairs $(\textbf{a}_i, b = \langle \textbf{a}_i,\textbf{s} \rangle + e_i)$, where $\langle \cdot,\cdot \rangle$ denotes the inner product, each $\textbf{a}_i \in \mathbb{Z}_q^n$ is uniformly random and the $e_i$'s are small random errors. In practice, we view this collection of equations in matrix-vector form:
\begin{align*}
A \textbf{s} + \textbf{e} = \textbf{b},
\end{align*}
where all operations and entries are over $\mathbb{Z}_q$ and the challenge is to recover $\textbf{s}$ from $A, \textbf{b}$. A popular ring variant replaces $A, \textbf{s}, \textbf{e}$ with elements $a,s,e$ from the ring $R_q := \frac{\mathbb{Z}_q[x]}{x^n+1}$, requiring the solver to obtain $s$ from samples $a_i \cdot s + e_i$. For power-of-two $n$ this can be expressed in matrix-vector form by considering the matrix rot$(a)$, the negacyclic matrix obtained from the coefficients of $a$. Explicitly, for $a = a_0 +a_1 x +... +a_{n-1} x^{n-1}$ and bold faced letters denoting coefficient vectors, a sample from the RLWE distribution takes the form:
\begin{align*}
\begin{pmatrix}
a_0 & -a_{n-1} & \dots & -a_{1} \\
a_1 & a_{0} & \dots & -a_2 \\
\vdots & \vdots & \ddots & \vdots \\
a_{n-1} & a_{n-2} & \dots & a_0
\end{pmatrix} \textbf{s} + \textbf{e} = \textbf{b}
\end{align*}
where once again operations and entries are over $\mathbb{Z}_q$. This is exactly a structured version of the classical LWE problem, where the uniformly random matrix $\textbf{A}$ has been replaced by the negacyclic matrix rot$(a)$. Of course, this should be an easier problem to solve, yet no substantial progress has been made in using the structure of rot$(a)$ to solve the problem efficiently. We can extend this matrix-vector view to MLWE as well. An MLWE instance takes place in a module $M$ of dimension $d$ over $R_q$, such that a solver has to recover $\textbf{s} \in M$ from a collection of pairs $(\textbf{a}_i, \langle \textbf{a}_i, \textbf{s} \rangle + e_i)$ where $\textbf{a}_i$ is a uniformly random element of $M$ and each $e_i$ is a small random element of $R_q$. A collection of such pairs can be viewed as $A \textbf{s} + \textbf{e} = \textbf{b}$, where the ambient space $\mathbb{Z}_q$ has been replaced by $R_q$ e.g. with $d$ samples:
\begin{align*}
\begin{pmatrix}
a_{1,1} & a_{1,2} & \dots & a_{1,d} \\
a_{2,1} & a_{2,2} & \dots & a_{2,d} \\
\vdots & \vdots & \ddots & \vdots \\
a_{d,1} & a_{d,2} & \dots & a_{d,d}
\end{pmatrix} \textbf{s} + \textbf{e} = \textbf{b}
\end{align*}
where all operations are over $R_q$ and each $a_{i,j}$ is uniformly random. Of course, we could extend this to have operations over $\mathbb{Z}_q$ by applying the rot$(\cdot)$ operation coordinatewise, to obtain a structured LWE instance in dimension $nd$.

An advantage of these structured matrices is that they allow for streamlined storage and operations. For example, storing a uniformly random matrix $A$ requires one to store all $n^2$ of its entries, but rot$(a)$ requires a factor $n$ less memory since one need only store its first column. Equivalently, one RLWE sample generates $n$ LWE samples while reducing the storage space and key sizes. Multiplication can also be speeded up by using the Chinese Remaindering Theorem (CRT) or other techniques.

This concept of improving efficiency by adding structure motivates this work; can we perform an analog of the transformation taking an LWE matrix $A$ to an RLWE matrix rot$(a)$ for the module $M$? We solve this by constructing a new variant of the LWE problem over a certain non-commutative space known as a \emph{cyclic algebra}. In recent years, cyclic algebras have received significant attention in the field of coding theory (see e.g. \cite{vehkalahti_densest_2009, oggier_cyclic_2007, luzzi_almost_2018}) due to the particular nature of the matrix lattices they induce, and we view them as a suitable option for defining an LWE problem over a non-commutative ring. Though some efforts have been made to construct non-commutative LWE problems, for example \cite{baumslag_generalized_2011}, \cite{cheng_lwe_2016}, the majority of non-commutative cryptography has relied on group theoretic constructions, whose underlying hard problems are often less robust than those of lattice cryptography. Somewhat informally, for a cyclic algebra $\mathcal{A}$ and well chosen parameters there exists an automorphism $\theta$ of $R_q$ and a $\gamma \in R_q$ such that an LWE style sample $a \cdot s + e$ over $\mathcal{A}$ can be written in matrix-vector form
\begin{align*}
\begin{pmatrix}
a_0 & \gamma \theta(a_{d-1}) & \gamma \theta^2(a_{d-2}) & \ldots &\gamma \theta^{d-1}(a_{1}) \\
a_1 & \theta(a_{0}) & \gamma \theta^2(a_{d-1}) & \ldots &\gamma \theta^{d-1}(a_{2}) \\
a_2 & \theta(a_{1}) & \theta^2(a_{0}) & \ldots &\gamma \theta^{d-1}(a_{3}) \\
\vdots & \vdots & \vdots &\ddots & \vdots \\
a_{d-1} & \theta(a_{d-2}) & \theta^2(a_{d-3}) & \ldots & \theta^{d-1}(a_{0}) \\
\end{pmatrix} \textbf{s} + \textbf{e} = \textbf{b}
\end{align*}
where all entries and operations are now over $R_q$. Though more complex than the transformation taking LWE to RLWE this fulfills our goal of providing a structured version of MLWE, since we have replaced the uniformly random matrix $A$ over $R_q$ with a structured matrix which we denote $\phi(a)$ that requires a factor of $d$ less storage. Of course, by applying the rot$(\cdot)$ operation coordinatewise, one can extend this to a high dimensional version of the LWE problem, now with two sets of structure lying on top of each other.
\subsection{Contributions and Methodology}
The main novel contribution of this work is a definition of Cyclic Algebra LWE (CLWE), together with justifications for its construction and a polynomial time reduction from short vector problems over matrix lattices induced by ideals in a cyclic algebra to CLWE, establishing its security on the assumption that such problems are hard.
As in \cite{lyubashevsky_ideal_2010}, the algorithm bases the security of CLWE on short vector problems over ideal lattices in $\mathcal{A}$; similarly to ideal lattices in $K$, these have some extra underlying structure that might make computational problems easier. However, we leave the relative complexity of these problems an open area of investigation.

Overall we consider it plausible that LWE in cyclic algebras could be both more efficient than MLWE and more secure than RLWE in a quantum setting. CLWE represents a middle ground between RLWE and MLWE, with the salient feature of its non-commutative ring structure. Cyclic algebra is equipped with a proper ring multiplication which preserves the dimension of the lattice. This is in sharp contrast to MLWE which only supports scalar multiplication and to RLWE whose multiplication is commutative. Specifically, we consider the following advantages of our CLWE construction:

\begin{itemize}
\item Efficiency. CLWE can be seen a structured variant of MLWE. Assuming for simplicity that the public key in LWE based schemes is a sample $(A,\textbf{b})$, a public key generated as $A =$ rot($\phi(a)$) requires only as much storage as that of an equivalent dimension RLWE public key\footnote{In practice, a seed is often used to generate the matrix $A$, which however requires a pseudorandom generator under the random oracle model. By contrast, CLWE does not require the random oracle model. Moreover, certain applications do not permit the use of a seed, \textit{e.g.}, pseudorandom functions \cite{Banerjee}.}. Multiplication in cyclic algebras can be implemented over a product of skew polynomial rings following a CRT-style decomposition (see Appendix \ref{appendix:multiplication-complexity}), for which well known fast algorithms, such as those of \cite{caruso_fast_2017-4} and \cite{puchinger_fast_2018}, can applied to compute the operation $A \cdot \textbf{s}$ more efficiently in the case where $A = \phi(a)$ than in the module case where $A$ is uniform.
\item Security. Following recent works on quantum attacks on related ideal lattice problems (e.g. \cite{biasse_quantum_2015},\cite{cramer_recovering_2016}, \cite{cramer_short_2017}, \cite{campbell_soliloquy:_2015} amongst others), we observe that the non-commutativity of multiplication in cyclic algebras may be viewed as a security advantage. This is because the Hidden Subgroup Problem (HSP), an integral part of the majority of algorithms using quantum computing to gain an advantage over classical computation, requires that the underlying group, in this case the unit group of $\mathcal{O}_K$, is commutative, see e.g. \cite{jozsa_quantum_2001}, which is untrue for a non-commutative algebra. We conjecture that the security level is higher than RLWE, but welcome further cryptanalysis. We actively avoid known attacks on previous attempts to create structured MLWE (see \cref{BCWappendix}).
\item Decryption failure rates. The scalar multiplication of MLWE is dimension-lossy. In other words, the message space of MLWE is restricted in $R_q$, whose dimension is smaller than that of the module lattice. It leaves less room for error correction coding in MLWE-based schemes (e.g., a KYBER instance for a key size of $256$ within $R_q$ of dimension $256$). This limitation of MLWE appears to be fundamental, due to its module structure. In contrast, the dimension of the message space of CLWE is that of the (non-commutative) ring, which is higher by a factor of $d$. Thus, it accommodates better error correction coding (see \cref{clwecrypto}), and low decryption failure rates are desired under chosen ciphertext attacks (CCA). Even trivial repetition coding can dramatically reduce decryption failure rates (e.g., NewHope).
\item Functionality. We view the ring structure of CLWE as a major advantage over MLWE, which opens up the prospect of extra functionality. For example, since operations are composable and non-commutative, one could hope to construct FHE in this non-commutative ring. We leave this frontier open for separate work.
\end{itemize}

\subsection{Related Work and Organization}\label{related}

This work is related to a number of different areas: lattice-based cryptography, information theory and number theory.

In lattice-based cryptography, an alternative construction for structured module LWE, called multivariate-RLWE, was presented in \cite{pedrouzo-ulloa_ring_2016,Revisit-MRLWE}, where they tensor product two (or more) number fields in order to provide a structured module matrix. However, an efficient implementation of \cite{pedrouzo-ulloa_ring_2016} was attacked in \cite{bootland_security_2018}, together with a warning about taking care when putting structure on a module. In short, \cite{bootland_security_2018} attacks certain instances of multivariate-RLWE by providing a homomorphism  to some underlying subfield $K$, dramatically reducing the dimension of the lattice problem to be attacked. Fortunately for this work, a somewhat technical condition on the choice of $\gamma$ known as the \emph{non-norm condition} precludes such a homomorphism existing to reduce the dimension of CLWE (see \cref{BCWappendix}). It is worth pointing out that that their problem has been addressed in \cite{Revisit-MRLWE}, and in fact this fix looks somewhat like our non-norm condition (\emph{e.g.}, unlike the original version, full rank is maintained in \cite{Revisit-MRLWE}).

This paper is inspired by the abundant literature of space-time coding based on cyclic division algebras (see the monographs \cite{oggier_cyclic_2007,Berhuy_2013} and references therein). On a high level, our construction is reminiscent of multi-block space-time codes \cite{Lu_2008,lahtonen_construction_2008}, rather than single-block codes \cite{Oggier_2006,Elia_2007}, with the caveat of scaling up the number of blocks to make the codes practically undecodable. In the context of space-time coding, our construction generalizes \cite{lahtonen_construction_2008} and offers greater flexibility in the code parameters (the number of blocks vs. the number of antennas). Multi-block space-time codes have been used in \cite{luzzi_almost_2018} to achieve information-theoretic security over wiretap channels, as opposed to computational security in a classic cryptographic setting of this paper. Maximal orders were shown in \cite{Hollanti_2008,vehkalahti_densest_2009} as advantageous to the so-called natural orders; both types of orders play a crucial role in this paper. There is a major difference between the roles of cyclic algebras in coding and cryptography, though: the primary concern for coding is the non-vanishing determinant (NVD), while the non-commutative ring structure becomes crucial for cryptography. For efficient multiplication of elements in a cyclic algebra, we heavily rely on the CRT technique of \cite{oggier_quotients_2012-4}; a similar technique has been used in lattice index codes \cite{Huang_DA_CRT_2019,Huang_CRT_2017}.

We present two approaches (subfields and compositum fields) to the construction of novel cyclic division algebras, which enlarge the pool of algebras and may find other applications. Specifically, our proof that the natural order of the family of cyclic division algebras constructed in \cref{goodalgebras} (including those in \cite{lahtonen_construction_2008}) is in fact maximal, is an original contribution.

The rest of this paper is organized as follows. In \cref{sec:preliminaries} we provide necessary background material on lattices, number fields, and cyclic algebras. In \cref{sec3} we provide a definition and discussion of CLWE, together with novel constructions of cyclic division algebras for the CLWE problem. In \cref{security_proof} we provide a reduction from structured lattice problems to search CLWE, as well as a search-worst case decision reduction for CLWE. In \cref{crypto} we show a sample CLWE cryptosystem and provide an estimate of its asymptotic operation complexity. Finally, the paper is concluded in \cref{conclusions} with a discussion of open problems. For a smooth flow of the main text, certain proofs, sideline discussions and technical details are deferred to appendices.

\section{Preliminaries}\label{sec:preliminaries}
\subsection{Lattices}
A lattice is a discrete additive subgroup of a vector space $V$. If $V$ has dimension $n$ a lattice $\mathcal{L}$ can be viewed as the set of all integer linear combinations of a set of linearly independent vectors $B = \lbrace \textbf{b}_1,...,\textbf{b}_k \rbrace$ for some $k \leq n$,  written $\mathcal{L} = \mathcal{L}(B) = \lbrace \sum_{i=1}^k z_i \textbf{b}_i: z_i \in \mathbb{Z} \rbrace$. If $k=n$ we call the lattice full-rank, and we will only consider lattices of full-rank. We can extend this notion of lattices to matrix spaces by stacking the columns of a matrix. We recall two standard lattice definitions.
\begin{definition}
Given a lattice $\mathcal{L}$ in a space $V$ endowed with a metric $\Vert \cdot \Vert$, the minimum distance of $\mathcal{L}$ is defined as $\lambda_1 (\mathcal{L}) = \min_{\textbf{v} \in \Lambda/\lbrace 0 \rbrace} \Vert \textbf{v} \Vert$. Similarly, $\lambda_n(\mathcal{L})$ is the minimum length of a set of $n$ linearly independent vectors, where the length of a set of vectors $ \lbrace \textbf{x}_1,..., \textbf{x}_n \rbrace$ is defined as $\max_i(\Vert \textbf{x}_i \Vert)$.
\end{definition}
\begin{definition}
Given a lattice $\mathcal{L} \subset V$, where $V$ is endowed with an inner product $\langle \cdot, \cdot \rangle$, the dual lattice $\mathcal{L}^*$ is defined $\mathcal{L}^* = \lbrace \textbf{v} \in V : \langle \mathcal{L} , \textbf{v} \rangle \subset \mathbb{Z} \rbrace$.
\end{definition}

\subsection{Gaussian Distributions}
\begin{definition}
For a vector space $V$ with norm $\Vert \cdot \Vert$ and an $r >0$, we define the Gaussian function $\rho_r: V \rightarrow (0,1]$ by $\rho_r(\textbf{x}) = \exp(- \pi \Vert \textbf{x} \Vert/r^2)$.
\end{definition}

We can use this function to define the spherical Gaussian distribution $D_r$ over $V$, which outputs $\textbf{v}$ with probability proportional to $\rho_r(\textbf{v})$. Similarly, we can sample an elliptical Gaussian $D_\textbf{r}$ in a basis $\textbf{b}_1,...,\textbf{b}_n$ of $V$, for $\textbf{r} = (r_1,...,r_n)$ a vector of positive reals, by sampling $x_1,...,x_n$ independently from the one dimensional Gaussian distributions $D_{r_i}$ and outputting $\sum_{i =1}^n x_i \textbf{b}_i$.

When sampling a Gaussian over a lattice $\mathcal{L}$ we will use the discrete form of the Gaussian distribution. We define the distribution $D_{\Lambda,r}$ over $\Lambda$ by outputting $\textbf{x}$ with probability $\frac{\rho_r(\textbf{x})}{\rho_r(\mathcal{L})}$ for each $\textbf{x} \in \mathcal{L}$. This version of the discrete Gaussian is centered at $0$, which in general need not be the case.

An important lattice quantity, known as the smoothing parameter, was introduced in \cite{micciancio_worst-case_2007}. The motivation for the name is provided by \cref{1} following the definition.
\begin{definition}
For a lattice $\mathcal{L}$ and $\varepsilon > 0$, the smoothing parameter $\eta_\varepsilon(\mathcal{L})$ is defined as the smallest $r > 0$ satisfying $\rho_{1/r}(\mathcal{L}^* / \lbrace \textbf{0} \rbrace ) \leq \varepsilon$.
\end{definition}
The following is a special case of \cite{micciancio_worst-case_2007}, Lemma 4.1.

\begin{lemma}\label{1}
For a lattice $\mathcal{L}$ over $\mathbb{R}^n$, $\varepsilon > 0, r \geq \eta_\varepsilon(\mathcal{L})$, and $\textbf{x} \in \mathbb{R}^n$, the statistical distance between $(D_r + \textbf{x}) \mod \mathcal{L}$ and the uniform distribution modulo $\mathcal{L}$ is bounded above by $\varepsilon/2$. Equivalently, $\rho_r(\mathcal{L} + \textbf{x}) \in [\frac{1- \varepsilon}{1+ \varepsilon},1] \cdot \rho_r(\mathcal{L})$.
\end{lemma}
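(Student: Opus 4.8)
The plan is to derive both assertions from a single application of the Poisson summation formula to a shifted Gaussian. First I would recall the two standard facts needed. For a full-rank lattice $\mathcal{L}\subset\mathbb{R}^n$ and a sufficiently nice function $f$, Poisson summation gives $\sum_{\mathbf{v}\in\mathcal{L}}f(\mathbf{v}) = \frac{1}{\det\mathcal{L}}\sum_{\mathbf{w}\in\mathcal{L}^*}\widehat{f}(\mathbf{w})$; and the Gaussian is, up to scaling, its own Fourier transform, $\widehat{\rho_r}(\mathbf{w}) = r^n\rho_{1/r}(\mathbf{w})$, so the translate $\mathbf{z}\mapsto\rho_r(\mathbf{z}+\mathbf{x})$ has transform $\mathbf{w}\mapsto r^n\rho_{1/r}(\mathbf{w})\,e^{2\pi i\langle\mathbf{x},\mathbf{w}\rangle}$. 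Combining these and using $\rho_{1/r}(\mathbf{0})=1$ yields the key identity
\[
\rho_r(\mathcal{L}+\mathbf{x}) \;=\; \frac{r^n}{\det\mathcal{L}}\sum_{\mathbf{w}\in\mathcal{L}^*}\rho_{1/r}(\mathbf{w})\,e^{2\pi i\langle\mathbf{x},\mathbf{w}\rangle} \;=\; \frac{r^n}{\det\mathcal{L}}\Bigl(1 + \sum_{\mathbf{w}\in\mathcal{L}^*\setminus\{\mathbf{0}\}}\rho_{1/r}(\mathbf{w})\,e^{2\pi i\langle\mathbf{x},\mathbf{w}\rangle}\Bigr) .
\]

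For the ``equivalently'' statement, I would invoke the hypothesis $r\ge\eta_\varepsilon(\mathcal{L})$, which by definition says precisely that $\rho_{1/r}(\mathcal{L}^*\setminus\{\mathbf{0}\})\le\varepsilon$, so the tail sum above has modulus at most $\varepsilon$. Taking $\mathbf{x}=\mathbf{0}$ shows $\rho_r(\mathcal{L})\in\frac{r^n}{\det\mathcal{L}}[1,1+\varepsilon]$, the lower end holding because the correction term is a nonnegative sum; for general $\mathbf{x}$ the identity gives $\rho_r(\mathcal{L}+\mathbf{x})\ge\frac{r^n}{\det\mathcal{L}}(1-\varepsilon)$, and since each summand has modulus one it also gives $\rho_r(\mathcal{L}+\mathbf{x})\le\frac{r^n}{\det\mathcal{L}}\bigl(1+\rho_{1/r}(\mathcal{L}^*\setminus\{\mathbf{0}\})\bigr)=\rho_r(\mathcal{L})$. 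Dividing these bounds yields $\rho_r(\mathcal{L}+\mathbf{x})/\rho_r(\mathcal{L})\in[\tfrac{1-\varepsilon}{1+\varepsilon},1]$, which is the claimed inclusion.

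For the statistical distance, fix a fundamental domain $\mathcal{P}$ for $\mathcal{L}$. Since $\rho_r$ integrates to $r^n$ over $\mathbb{R}^n$, the reduced distribution $(D_r+\mathbf{x})\bmod\mathcal{L}$ has density $\rho_r(\mathcal{L}+\mathbf{y}-\mathbf{x})/r^n$ at $\mathbf{y}\in\mathcal{P}$, to be compared with the uniform density $1/\det\mathcal{L}$. Substituting the key identity, the pointwise difference of densities equals $\frac{1}{\det\mathcal{L}}\sum_{\mathbf{w}\ne\mathbf{0}}\rho_{1/r}(\mathbf{w})\,e^{2\pi i\langle\mathbf{y}-\mathbf{x},\mathbf{w}\rangle}$, whose absolute value is at most $\frac{1}{\det\mathcal{L}}\rho_{1/r}(\mathcal{L}^*\setminus\{\mathbf{0}\})\le\varepsilon/\det\mathcal{L}$ uniformly in $\mathbf{y}$. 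Integrating over $\mathcal{P}$, which has volume $\det\mathcal{L}$, and halving bounds the statistical distance by $\varepsilon/2$.

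The only step demanding any care is justifying Poisson summation and the interchange of sum and integral, but for the rapidly decaying Gaussian the lattice sums over both $\mathcal{L}$ and $\mathcal{L}^*$ converge absolutely, so this presents no real obstacle; as already noted, the lemma is a direct specialization of \cite{micciancio_worst-case_2007}, Lemma 4.1, and the outline above follows that proof.
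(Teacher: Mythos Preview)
Your proof is correct and follows the standard Poisson-summation argument. The paper itself does not give a proof of this lemma at all; it simply cites it as a special case of \cite{micciancio_worst-case_2007}, Lemma 4.1, and your outline is exactly that argument.
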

We introduce well known lemmas used to relate the smoothing parameter to standard lattice properties. The first comes from \cite{banaszczyk_new_1993}, the second from \cite{peikert_pseudorandomness_2017-2}.
\begin{lemma}
For a lattice $\mathcal{L}$ of dimension $n$ and $c \geq 1$ it holds that $c \sqrt{n}/ \lambda_1(\mathcal{L}^*) \geq \eta_\varepsilon(\mathcal{L})$ for $\varepsilon = \exp(-c^2n)$.
\end{lemma}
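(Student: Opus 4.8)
The plan is to unwind the definition of $\eta_\varepsilon$ and reduce the claim to Banaszczyk's Gaussian tail inequality applied to a suitably rescaled copy of the dual lattice. First I would set $r := c\sqrt{n}/\lambda_1(\mathcal{L}^*)$ and observe that, since $\rho_{1/s}(\mathbf{x}) = \exp(-\pi s^2 \Vert \mathbf{x} \Vert^2)$ is non-increasing in $s$ for fixed $\mathbf{x}$, the sum $\rho_{1/s}(\mathcal{L}^* \setminus \{\mathbf{0}\})$ is non-increasing in $s$ as well; hence it suffices to verify $\rho_{1/r}(\mathcal{L}^* \setminus \{\mathbf{0}\}) \le \varepsilon$ for this single value $r$, and the definition of the smoothing parameter then yields $\eta_\varepsilon(\mathcal{L}) \le r$. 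Passing to the rescaled lattice $M := r\,\mathcal{L}^*$, which is full-rank with $\lambda_1(M) = r\,\lambda_1(\mathcal{L}^*) = c\sqrt{n}$ and satisfies $\rho_{1/r}(\mathcal{L}^* \setminus \{\mathbf{0}\}) = \rho_1(M \setminus \{\mathbf{0}\})$, the target becomes
\begin{equation*}
N := \rho_1(M \setminus \{\mathbf{0}\}) = \rho_1(M) - 1 \le \exp(-c^2 n).
\end{equation*}

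Next I would invoke the concentration bound from \cite{banaszczyk_new_1993}: for every rank-$n$ lattice $L$ and every $t \ge 1$,
\begin{equation*}
\rho_1\!\left( L \setminus t\sqrt{n}\,\B \right) \le \left( t \sqrt{2\pi e}\, e^{-\pi t^2} \right)^{n} \rho_1(L),
\end{equation*}
where $\B$ denotes the (open) unit ball. Because $c \ge 1$, every nonzero vector of $M$ has norm at least $c\sqrt{n}$, so $M$ has no nonzero point inside $c\sqrt{n}\,\B$ and therefore $N = \rho_1(M \setminus c\sqrt{n}\,\B)$, up to a harmless open-versus-closed-ball convention (handled by shrinking the radius slightly and passing to the limit). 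Applying the displayed bound with $L = M$ and $t = c$, and abbreviating $g(c) := c\sqrt{2\pi e}\, e^{-\pi c^2}$, this gives the self-referential estimate $N \le g(c)^n\,(1 + N)$.

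Finally I would solve this inequality and check that the constants close. Since $g$ is decreasing on $[1,\infty)$, one has $g(c) \le g(1) = \sqrt{2\pi e}\, e^{-\pi} < 1$ for all $c \ge 1$, so $g(c)^n < 1$ and the inequality rearranges to $N \le g(c)^n/(1 - g(c)^n)$. It then remains to verify $g(c)^n/(1 - g(c)^n) \le e^{-c^2 n}$, equivalently $\left( g(c)\,e^{c^2}\right)^{n}\,(1 + e^{-c^2 n}) \le 1$. Here $g(c)\,e^{c^2} = c\sqrt{2\pi e}\, e^{-(\pi-1)c^2}$ is again decreasing on $[1,\infty)$, hence at most $\sqrt{2\pi e}\, e^{-(\pi-1)} < 1/2$, while $1 + e^{-c^2 n} \le 1 + e^{-1} < 3/2$, so the left side is at most $(1/2)^{n}\cdot(3/2) < 1$ for every $n \ge 1$. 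This yields $N \le e^{-c^2 n}$, as required, and hence $\eta_\varepsilon(\mathcal{L}) \le r = c\sqrt n/\lambda_1(\mathcal{L}^*)$.

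The step I expect to be the real obstacle is pinning down the correct numerical form of Banaszczyk's tail bound — the $t$-dependent base $t\sqrt{2\pi e}\,e^{-\pi t^2}$ — since it is exactly the decay rate $e^{-\pi c^2 n}$ concealed in that base, together with the hypothesis $c \ge 1$, that leaves enough room to absorb the $\sqrt{2\pi e}$-type prefactor and the self-reference correction and still reach $\varepsilon = e^{-c^2 n}$; a weaker constant, or dropping $c \ge 1$, would make the bookkeeping in the last paragraph fail. Everything else is routine rescaling and elementary monotonicity.
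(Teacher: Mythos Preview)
Your proof is correct. The paper itself does not prove this lemma at all: it simply states the result and attributes it to \cite{banaszczyk_new_1993}, so there is no argument in the paper to compare against. What you have written is essentially the standard derivation (as in Micciancio--Regev) of this smoothing-parameter bound from Banaszczyk's Gaussian mass inequality, and your handling of the constants --- the self-reference $N \le g(c)^n(1+N)$, the monotonicity of $g(c)e^{c^2}$, and the numerical check at $c=1$ --- is clean and correct.
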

\begin{lemma}\label{smoothing1}
For a lattice $\mathcal{L}$ and $\varepsilon \in (0,1)$ it holds that $\eta_\varepsilon(\mathcal{L}) \geq \frac{\sqrt{\log(1/\varepsilon)/\pi}}{\lambda_1(\mathcal{L}^*)}$.
\end{lemma}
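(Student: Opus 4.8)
The plan is to argue directly from the definition, lower-bounding the dual Gaussian sum by the single term coming from a shortest nonzero vector of $\mathcal{L}^*$. Recall that $\eta_\varepsilon(\mathcal{L})$ is, by definition, the smallest $r>0$ for which $\rho_{1/r}(\mathcal{L}^*/\lbrace\textbf{0}\rbrace)\leq\varepsilon$. Hence it suffices to show that \emph{every} $r>0$ satisfying this defining inequality must already obey $r\geq\sqrt{\log(1/\varepsilon)/\pi}\,/\,\lambda_1(\mathcal{L}^*)$; the infimum defining $\eta_\varepsilon(\mathcal{L})$ then inherits the same bound.

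First I would fix a nonzero $\textbf{w}\in\mathcal{L}^*$ with $\Vert\textbf{w}\Vert=\lambda_1(\mathcal{L}^*)$, which exists since $\mathcal{L}^*$ is again a full-rank lattice. Discarding all other points of $\mathcal{L}^*/\lbrace\textbf{0}\rbrace$ and using that $\rho$ is positive, monotonicity of a sum of positive terms over subsets gives
\[
\rho_{1/r}(\mathcal{L}^*/\lbrace\textbf{0}\rbrace)\;\geq\;\rho_{1/r}(\textbf{w})\;=\;\exp\!\bigl(-\pi\Vert\textbf{w}\Vert^2 r^2\bigr)\;=\;\exp\!\bigl(-\pi\,\lambda_1(\mathcal{L}^*)^2\,r^2\bigr),
\]
where the middle equality is just $\rho_s(\textbf{x})=\exp(-\pi\Vert\textbf{x}\Vert^2/s^2)$ evaluated at $s=1/r$.

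Next I would impose the defining inequality $\rho_{1/r}(\mathcal{L}^*/\lbrace\textbf{0}\rbrace)\leq\varepsilon$ and chain it with the previous display to get $\exp(-\pi\lambda_1(\mathcal{L}^*)^2 r^2)\leq\varepsilon$. Taking logarithms (legitimate as both sides are positive) and rearranging yields $\pi\,\lambda_1(\mathcal{L}^*)^2 r^2\geq\log(1/\varepsilon)$, i.e.\ $r\geq\sqrt{\log(1/\varepsilon)/\pi}\,/\,\lambda_1(\mathcal{L}^*)$; note $\log(1/\varepsilon)>0$ because $\varepsilon\in(0,1)$, so the square root is well defined. Since this holds for every $r$ meeting the defining condition, it holds in particular for the smallest such $r$, namely $\eta_\varepsilon(\mathcal{L})$, which is exactly the claimed inequality.

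There is essentially no serious obstacle: the proof is a one-term lower bound on the theta-like series, and the only points requiring care are the direction of the inequalities, the convention that $\eta_\varepsilon$ is realised as a minimum (so that lower-bounding all feasible $r$ lower-bounds $\eta_\varepsilon$), and the bookkeeping of the $\rho_{1/r}$ normalisation. The estimate is tight up to the contribution of the non-minimal dual vectors, which is precisely what the Banaszczyk-type upper bound on $\eta_\varepsilon$ recalled just above accounts for.
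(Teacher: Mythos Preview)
Your argument is correct and is precisely the standard one-term lower bound on the dual theta series: isolate a shortest nonzero dual vector, observe that its contribution alone already forces $\exp(-\pi\lambda_1(\mathcal{L}^*)^2 r^2)\leq\varepsilon$, and rearrange. The paper does not actually prove this lemma; it is quoted from \cite{peikert_pseudorandomness_2017-2} without argument, so there is no alternative proof to compare against, but what you wrote is exactly how that result is established in the literature.
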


\subsection{Algebraic Number Theory}
\begin{definition}
A number field $K$ is a finite degree extension of the rationals $\mathbb{Q}$. Typically, we define a number field by adjoining some algebraic element $\alpha \in \mathbb{C}$ and set $K = \mathbb{Q}(\alpha)$. The degree of $K$ refers to its degree as a field extension.
\end{definition}
To define a cyclic algebra, we will need to take an additional extension of $K$. In particular, we will need the extension to be Galois over $K$, defined as follows.
\begin{definition}
Let $L/K$ be an extension of number fields of dimension $d$. The Galois group of $L$ over $K$ is the group Aut$(L/K)$ of automorphisms of $L$ that fix $K$. We say that the extension is Galois if the subfield of $L$ fixed by Aut$(L/K)$ is exactly $K$.
\end{definition}
We define a cyclic Galois extension $L/K$ to be a Galois extension such that the Galois group of $L$ over $K$ is the cyclic group generated by some element $\theta$ of degree $d := [L:K]$. Finally, we require the ring of integers of a number field.
\begin{definition}
Given a number field $K$, its ring of integers $\mathcal{O}_K$ is the ring consisting of those elements of $K$ whose minimal polynomial over $\mathbb{Q}$ lie in $\mathbb{Z}[x]$.
\end{definition}
It is easy to check that if $L/K$ is an extension of number fields then $\mathcal{O}_L \cap K = \mathcal{O}_K$.

\subsubsection{The Canonical Embedding}
Let $K = \mathbb{Q}(\alpha)$ be a number field of degree $n$. It is a well known fact that there are exactly $n$ distinct ring embeddings $\sigma_i : K \rightarrow \mathbb{C}$. These embeddings correspond to the $n$ distinct injective ring homomorphisms mapping $\alpha$ to the roots of its minimum polynomial $f$. We split these embeddings and say that there are $r_1$ real embeddings (whose image lie in $\mathbb{R}$) and $r_2$ conjugate pairs of complex embeddings (the complex embeddings come in pairs since complex roots of $f$ occur in conjugate pairs), such that $r_1 +  2 r_2 = n$. The standard convention is to order the embeddings such that the $r_1$ real embeddings come first and the complex embeddings are arranged such that $\sigma_{r_1+ j} = \overline{\sigma_{r_1 + r_2 + j}}$ for $1 \leq j \leq r_2$.
\begin{definition}
Let $K = \mathbb{Q}(\alpha)$ be a number field of degree $n = r_1 + 2r_2$. The canonical embedding $\sigma$ is the ring homomorphism $\sigma : K \rightarrow \mathbb{R}^{r_1} \times \mathbb{C}^{2 r_2}$ defined by
\begin{align*}
\sigma(x) = (\sigma_1(x),...,\sigma_n(x)).
\end{align*}
Formally, $\sigma$ maps into the space
\begin{align*}
H = \lbrace (x_1,...,x_n) \in \mathbb{R}^{r_1} \times \mathbb{C}^{2 r_2} \, | \, x_{r_1 + r_2 + j} = \overline{x_{r_1 + j}} \, \, \forall 1 \leq j \leq r_2 \rbrace \subset \mathbb{C}^n,
\end{align*}
which is isomorphic to $\mathbb{R}^n$ as an inner product space.
\end{definition}
We can equip $H$ with the orthonormal basis $\lbrace \textbf{h}_i \rbrace$, where $\textbf{h}_i = \textbf{e}_i$ for $1 \leq i \leq r_1$ and $\textbf{h}_j = \frac{1}{\sqrt{2}}(\textbf{e}_j + \textbf{e}_{j + r_2}), \textbf{h}_{j+r_2} = \frac{\sqrt{-1}}{\sqrt{2}}(\textbf{e}_j - \textbf{e}_{j + r_2})$ for $r_1 < j \leq r_1+r_2$, and use the well defined $\ell_p$ norm induced by viewing $H$ as a subset of $\mathbb{C}^n$. Observe that multiplication in $K$ maps to coordinatewise multiplication in $H$. The $\ell_2$ norm on $H$ allows us to efficiently sample a Gaussian distribution $D_\textbf{r}$ over $K$ by sampling such a Gaussian coordinatewise over $H$, although technically this distribution is over the field tensor product $K_\mathbb{R} = K \otimes_\mathbb{Q} \mathbb{R} \cong H$. Furthermore, it satisfies the property that for any $x \in K_\mathbb{R}$ we have the equality of distributions $x \cdot D_\textbf{r}$ and  $D_{\textbf{r}'}$, where $r_i' = r_i \cdot \vert \sigma_i(x)\vert$. When we have an extension of number fields $L/K$ we will denote their respective canonical embeddings $\sigma_L$ and $\sigma_K$ as maps into $H_L$ and $H_K$ to avoid confusion.

\subsubsection{Relative Embeddings}
In the case of an extension $L$ of a number field $K$ it is sometimes more convenient to apply a different order on its embeddings induced by extending embeddings of $K$ to those of $L$. Given a tower $L/K/\mathbb{Q}$ where $K$ has degree $n$ and $L$ has degree $d$ over $K$, there are precisely $n$ embeddings $\sigma_1,...,\sigma_n$ of $K$ into $\mathbb{C}$. Assuming $L/\mathbb{Q}$ is Galois, each of these can be extended to an embedding $\alpha_i: L \rightarrow L$ such that $\alpha_i \vert_K = \sigma_i$. However, these extensions are not unique, and it is easy to see that there are $[L:K] = d$ choices for each $\alpha_i$. In particular, in the case where $L/K$ is a cyclic extension with Galois group generated by $\theta$ it holds that the composite automorphisms $\alpha_i \circ \theta^j( \cdot ), 1 \leq j \leq d$, run through the $d$ choices of $\alpha_i$. Hence for a fixed choice of $\alpha_1,...,\alpha_n$ the $nd$ automorphisms of $L$ can each be uniquely represented by some $\alpha_i \circ \theta^j(\cdot)$, which we denote by $\alpha_i^j(\cdot), 1 \leq i \leq n, 1\leq j \leq d$. Given the usual ordering of embeddings of $K$ this induces two systematic orderings on the embeddings of $L$ by running through either the $i$ or $j$ coordinates first.

\subsection{Cyclic Algebras}
\begin{definition}
Let $K$ be a number field with degree $n$, and let $L$ be a Galois extension of $K$ of degree $d$ such that the Galois group of $L$ over $K$ is cyclic of degree $d$, Gal$(L/K) = \langle \theta \rangle$. For non-zero $\gamma \in K$ we define the resulting cyclic algebra
\begin{align*}
\mathcal{A} = (L/K, \theta, \gamma) := L \oplus u L \oplus ... \oplus u^{d-1}L
\end{align*}
where $\oplus$ denotes the direct sum, $u \in \mathcal{A}$ is some auxiliary generating element of $\mathcal{A}$ satisfying the additional relations $xu = u \theta(x), \forall x \in L$ and $u^d = \gamma$. We will call $d$ the degree of the algebra $\mathcal{A}$. We call such an algebra a division algebra if every element $a \in \mathcal{A}$ has an inverse $a^{-1} \in \mathcal{A}$ such that $aa^{-1} = 1$.
\end{definition}

The relations among $K$, $L$ and $\mathcal{A}$ are illustrated in Fig. \ref{fig:relation}.
\begin{figure}[htb]
               \centering
   \includegraphics[trim=37mm 88mm 40mm 71mm, clip, width=1\linewidth]{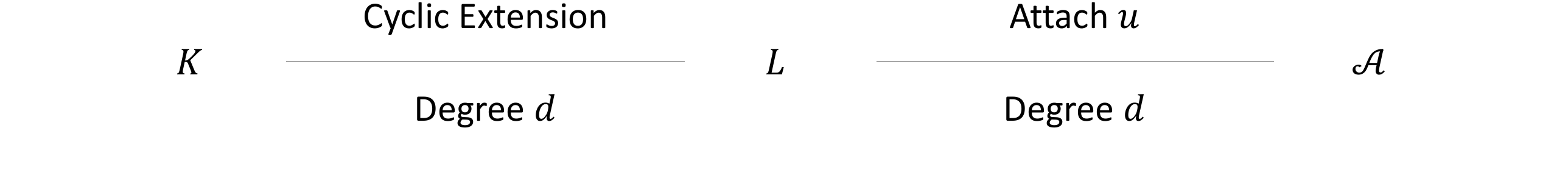}
   \caption{Structure of a cyclic algebra.}\label{fig:relation}
\end{figure}

Since $\theta$ fixes $K$, the center of the cyclic algebra is precisely $K$. Oftentimes the condition $\gamma \in K$ is replaced by the stronger condition $\gamma \in \mathcal{O}_K$, and we will use this condition in our work to guarantee the existence of a certain subring known as the natural order. Note that the division property does not hold for arbitrary $\gamma$, and such algebras are not always easy to construct, which we will discuss later in this section.

We present a matrix representation of elements of $\mathcal{A}$ which proves useful for computing multiplication in cyclic algebras. We can naturally view an element $a \in \mathcal{A}$ as an $d$-dimensional vector Vec$(a)$ over $L$, in which case we can view left multiplication of elements as matrix-vector operations. This is done by defining the map $\phi: \mathcal{A} \rightarrow M_{d \times d}(L)$, where for $x = x_0 + ux_1 + ... + u^{d-1} x_{d-1} \in \mathcal{A}$ with each $x_i \in L$,
\begin{align*}
\phi(x) = \begin{pmatrix}
x_0 & \gamma \theta(x_{d-1}) & \gamma \theta^2(x_{d-2}) & \ldots &\gamma \theta^{d-1}(x_{1}) \\
x_1 & \theta(x_{0}) & \gamma \theta^2(x_{d-1}) & \ldots &\gamma \theta^{d-1}(x_{2}) \\
x_2 & \theta(x_{1}) & \theta^2(x_{0}) & \ldots &\gamma \theta^{d-1}(x_{3}) \\
\vdots & \vdots & \vdots &\ddots & \vdots \\
x_{d-1} & \theta(x_{d-2}) & \theta^2(x_{d-3}) & \ldots & \theta^{d-1}(x_{0}) \\
\end{pmatrix}.
\end{align*}
We call this mapping a left regular representation of $\mathcal{A}$, because it holds for any $a,b \in \mathcal{A}$ that $\phi(a) \text{Vec}(b) = \text{Vec}(ab)$, and that $\phi(ab) = \phi(a) \cdot \phi(b)$. In the case where $\mathcal{A}$ is a division algebra it follows that each $\phi(a)$ is an invertible matrix. Since $\theta$ is well defined on $L_\mathbb{R}$ we abuse notation and extend this map to $\phi: \bigoplus_{i=0}^{d-1}u^i L_\mathbb{R} \rightarrow M_{d \times d} (L_\mathbb{R})$. We derive  lattices from subrings of a cyclic algebra by vectorising their images under $\phi$.
\begin{definition}\label{def1}
Let $\mathcal{A} = (L/K, \theta, \gamma)$ be a cyclic division algebra. A $\mathbb{Z}$-order $\Lambda$ in $\mathcal{A}$ is a finitely generated $\mathbb{Z}$-module such that $\Lambda \cdot \mathbb{Q} = \mathcal{A}$ and that $\Lambda$ is a subring of $\mathcal{A}$ with the same identity element as $\mathcal{A}$. We call $\Lambda$ maximal if there is no $\mathbb{Z}$-order $\Gamma$ such that $\Lambda \subsetneq \Gamma \subsetneq \mathcal{A}$. Here, $\Lambda \cdot \mathbb{Q} = \lbrace \sum_{i=1}^m a_i q_i : a_i \in \Lambda, q_i \in \mathbb{Q}, m \in \mathbb{Z}_{\geq 1} \rbrace$.
\end{definition}
Since we are only concerned with $\mathbb{Z}$-orders in this paper, we will just refer to them as orders.
\begin{example}
The ring of integers $\mathcal{O}_K$ of a number field $K$ is the unique maximal order of a number field. In the case of cyclic algebras a maximal order is not necessarily unique.
\end{example}
An order of particular interest that we will use in our LWE construction is known as the \textit{natural order}, defined as $\Lambda := \bigoplus_{i=0}^{d-1} u^i \mathcal{O}_L$. Unlike in the case of $\mathcal{O}_K$, this order is not necessarily maximal (however, we are going to work with natural orders that are also maximal). Note that in order for $\Lambda$ to be closed under multiplication the element $\gamma$ must lie in $\mathcal{O}_K$.

\subsubsection{Non-Norm Condition}\label{existence}
It is not a priori obvious whether well-defined cyclic algebras or orders actually exist. As observed earlier, the existence of $\gamma$ enforcing the division algebra condition is a key component in constructing such objects. Fortunately, it is sufficient for $\gamma$ to satisfy the so called `non-norm condition' \cite{vehkalahti_densest_2009}.

\begin{prop}\label{non-norm}
The cyclic algebra $\mathcal{A} = (L/K, \theta, \gamma)$ of degree $d$ is a division algebra if and only if none
of the elements $\gamma^t$, $1\leq t \leq d-1$, appears in $N_{L/K}(L)$, where $N_{L/K}$ represents the relative norm of $L$ into $K$.
\end{prop}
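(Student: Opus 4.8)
The plan is to prove Proposition \ref{non-norm}, the classical criterion (due to Wedderburn/Albert, and used in \cite{vehkalahti_densest_2009}) that $\mathcal{A} = (L/K,\theta,\gamma)$ is a division algebra exactly when no nontrivial power $\gamma^t$ ($1 \le t \le d-1$) is a norm from $L$. Since $\mathcal{A}$ is a central simple $K$-algebra of degree $d$, it is a division algebra if and only if it is not a proper matrix algebra over a smaller division algebra, i.e.\ if and only if $\mathcal{A} \not\cong M_r(D)$ with $r>1$; equivalently, the index of $\mathcal{A}$ equals $d$. So the real content is to relate the index of the cyclic algebra to the multiplicative order of $\gamma$ modulo the norm group $N_{L/K}(L^\times)$.

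First I would set up the standard reduction. The key structural fact is that for any cyclic algebra, the subalgebra generated by $u^s$ where $s \mid d$ is again cyclic: if $d = se$, then $(L/K, \theta, \gamma)$ contains the cyclic algebra $(L^{\langle \theta^e\rangle}/K, \theta, \gamma)$ of degree $e$ built from the fixed field of $\theta^e$, and more usefully one shows $(L/K,\theta,\gamma^s)$ has index dividing $e = d/s$. Concretely I would use the criterion that $\mathcal{A} = (L/K,\theta,\gamma)$ splits (i.e.\ is isomorphic to $M_d(K)$) if and only if $\gamma \in N_{L/K}(L^\times)$ — this is the degree-$d$, "$t=1$ fails everywhere" base case, proved by exhibiting, when $\gamma = N_{L/K}(x)$, an explicit zero divisor (the element $u - x$ annihilates a suitable product $\prod_j \theta^j(x)$-type expression, since its "norm" down the tower vanishes), and conversely noting that a splitting gives an element of reduced norm hitting $\gamma$. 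Then for general $t$, observe $(L/K,\theta,\gamma)^{\otimes t} \sim (L/K,\theta,\gamma^t)$ in the Brauer group, so the order of $[\mathcal{A}]$ in $\mathrm{Br}(K)$ is the least $t$ with $\gamma^t \in N_{L/K}(L^\times)$ (using that $(L/K,\theta,\gamma^t)$ contains $L$ as a maximal subfield, hence is split iff $\gamma^t$ is a norm).

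With that in hand, the endgame is: the exponent of $\mathcal{A}$ in the Brauer group divides its index, which divides $d$; and for cyclic algebras exponent equals index (a theorem going back to this very cyclic setting — a cyclic algebra of degree $d$ whose class has exponent $d$ cannot be $M_r(D)$ with $r>1$ because $D$ would be cyclic of smaller degree with class of the same exponent, contradiction on degrees). Hence $\mathcal{A}$ is division $\iff$ index $= d$ $\iff$ exponent $= d$ $\iff$ no $\gamma^t \in N_{L/K}(L^\times)$ for $1 \le t \le d-1$, which is exactly the claim. I would present the "only if" direction via the contrapositive: if $\gamma^t = N_{L/K}(x)$ for some $0 < t < d$ with, say, $s = \gcd(t,d) < d$, then $\gamma^s$ is also a norm (since $s$ is an integer combination of $t$ and $d$, and $\gamma^d = N_{L/K}(\gamma)$ trivially), so $(L^{\langle\theta^{d/s}\rangle}/K, \theta, \gamma)$ — a cyclic algebra of degree $d/s$ sitting inside $\mathcal{A}$ whose $(d/s)$-th "inner power" $\gamma^{d/s}\cdot$ hmm — more cleanly, the subalgebra $\bigoplus_{i=0}^{s-1} u^{i d / s}\, C$ where $C$ is the centralizer, is a nontrivial idempotent decomposition, producing a zero divisor. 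For the "if" direction I would show any proper two-sided decomposition forces some $\gamma^t$ to be a norm by descending to the index.

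The main obstacle, and where I would spend the most care, is the exponent-equals-index step for cyclic algebras and the clean production of an explicit zero divisor in the "$\gamma^t$ is a norm" case — the algebraic identity that $u^{d/s} - x'$ (for an appropriate $x'$ with $N_{M/K}(x') = \gamma^{d/s}$, $M$ the degree-$s$ subfield) is a genuine zero divisor requires the telescoping computation $\prod_{i=0}^{s-1} \theta^{i d/s}(x') = N_{M/K}(x') = \gamma^{d/s} = (u^{d/s})^{s}$ in the right order, being careful about the non-commutative ordering of the factors and the twisting relation $x u = u\theta(x)$. Everything else is bookkeeping with the standard central-simple-algebra machinery, which I would cite rather than reprove; I will reference \cite{vehkalahti_densest_2009} and a standard text such as \cite{Berhuy_2013} for the Brauer-group facts and invoke them directly since the excerpt already treats this proposition as classical.
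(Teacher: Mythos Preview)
The paper does not supply a proof of this proposition at all: it is stated as a classical fact with a reference to \cite{vehkalahti_densest_2009} (and implicitly to standard texts such as \cite{reiner_maximal_1975} and \cite{pierce_cyclic_1982-1}), and is used throughout as a black box. So there is no ``paper's own proof'' to compare against; your proposal is a self-contained argument for a result the authors simply cite.

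As for the substance of your sketch, the overall strategy---identify the exponent of $[\mathcal{A}]$ in the Brauer group with the order of $\gamma$ in $K^\times/N_{L/K}(L^\times)$, then argue index equals exponent---is the classical route and is sound. One point to tighten: your parenthetical justification for ``exponent equals index for cyclic algebras'' only establishes the easy direction (exponent $=d \Rightarrow$ index $=d$, since exponent divides index divides degree). The harder direction, needed for ``only if'', is that index $=d$ forces exponent $=d$; this is exactly Wedderburn's reduction $(L/K,\theta,\gamma) \cong M_{d/e}\bigl((L'/K,\theta',\gamma)\bigr)$ with $L'$ the degree-$e$ subfield and the right-hand factor a \emph{division} algebra, where $e$ is the order of $\gamma$ modulo norms. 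You allude to this (``descending to the index''), but in a full write-up that is the step requiring the real work, and your explicit zero-divisor construction via $u^{d/s}-x'$ is the concrete incarnation of it. If you cite \cite{reiner_maximal_1975} (Theorem 30.4 and surrounding material) or \cite{pierce_cyclic_1982-1} for this reduction, as the paper itself does elsewhere, the argument is complete.
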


In other words, this condition states that the lowest power of $\gamma$ that is norm of some element of $L$, is $\gamma^d$.

\subsubsection{Order Ideals}
Analogous to the use of $\mathcal{O}_K$ ideals in RLWE, we will be interested in ideals of an order $\Lambda$ of a cyclic division algebra $\mathcal{A}$. Although $\Lambda$ is a ring, it is non-commutative - thus there are three types of ideals. A left (respectively right) ideal $\mathcal{I}$ of $\Lambda$ is an additive subgroup of $\Lambda$ such that for any $i \in \mathcal{I}, r \in \Lambda$, we have $r \cdot i \in \mathcal{I}$ (respectively $i \cdot r \in \mathcal{I}$). A two-sided ideal of $\Lambda$ is an additive subgroup that is closed under left and right scaling by $\Lambda$, i.e. a right ideal that is also a left ideal. The sum and product of two ideals $\mathcal{I}, \mathcal{J}$ are defined as usual; $\mathcal{I} + \mathcal{J} = \lbrace i + j: i \in \mathcal{I}, j \in \mathcal{J} \rbrace$ and $\mathcal{I} \cdot \mathcal{J} = \lbrace \sum^m_{l=1} i_l \cdot j_l : i_l \in \mathcal{I}, j_l \in \mathcal{J}, m \in \mathbb{N}\rbrace$. In the case of two-sided ideals we have the standard notion of a fractional ideal; $\mathcal{I}$ is a fractional ideal of $\Lambda$ if $c \mathcal{I} = \mathcal{J}$ for a two-sided ideal $\mathcal{J}$ and some $c \in K$. In the rest of this paper, a (fractional or integral) ideal is always restricted to be two-sided, unless otherwise stated.

We remark that the structure of the collection of two-sided ideals of the natural order is not as simple as those of $\mathcal{O}_K$, or indeed those of an arbitrary maximal order. In a maximal order, the group of two-sided ideals is a free abelian group generated by the prime (e.g. maximal) ideals \cite[Theorem 22.10]{reiner_maximal_1975}, from which one can deduce obvious definitions of inverse and coprime ideals. For a general order $\Lambda$, we define its prime ideals as its maximal two-sided ideals and the inverse of an ideal $\mathcal{I} \subset \Lambda$ is
\begin{align*}
\mathcal{I}^{-1} = \lbrace x \in \mathcal{A} : \mathcal{I} \cdot x \cdot \mathcal{I} \subset \mathcal{I} \rbrace,
\end{align*}
which lines up with the expected definition in the two-sided case (e.g. $\mathcal{I} \cdot \mathcal{I}^{-1} = \mathcal{I}^{-1} \cdot \mathcal{I} = \Lambda$).

For the case of the natural order we do not have such a well-behaved ideal group, but a nice exposition is given in \cite[Section 3]{oggier_quotients_2012-4}. In particular, for a two-sided ideal $\mathcal{I} \subset \Lambda$, $\mathcal{I} \cap \mathcal{O}_K$ is an ideal of $\mathcal{O}_K$. For an ideal $\mathcal{I} \subset \mathcal{O}_K$, $(\mathcal{I}\cdot \Lambda) \cap \mathcal{O}_K = \mathcal{I}$, from which it follows that this intersection map is a surjection onto the ideals of $\mathcal{O}_K$. However, it is not in general an injection since several ideals of $\mathcal{A}$ may have the same intersection with $\mathcal{O}_K$. Since the ideals of $\Lambda$ do not in general form a finitely generated abelian group, we define two ideals $\mathcal{I}, \mathcal{J}$ of $\Lambda$ to be coprime if $\mathcal{I} + \mathcal{J} = \Lambda$.

Nonetheless, since the orders to be constructed in \cref{goodalgebras} are both natural and maximal, it will always hold for a two-sided ideal $\mathcal{I}$ that $\mathcal{I} \cdot \mathcal{I}^{-1} = \mathcal{I}^{-1} \cdot \mathcal{I}= \Lambda$ and $(\mathcal{I}^{-1})^{-1} = \Lambda$. These properties will be required in the proofs of \cref{3,homomorphism}.

\subsubsection{Some Useful Ideals}
For an order $\Lambda$ we define the codifferent ideal
\begin{align*}
\Lambda^\vee = \lbrace x \in \mathcal{A} : \text{Tr}(x \Lambda) \subset \mathbb{Z} \rbrace
\end{align*}
where Tr refers to the reduced trace, defined Tr$(a) := \text{Tr}_{K/\mathbb{Q}}(\text{Trace}(\phi(a)))$. Similarly, for an ideal $\mathcal{I}$ we define the dual ideal
\begin{align*}
\mathcal{I}^\vee = \lbrace x \in \mathcal{A} : \text{Tr}(x\mathcal{I}) \subset \mathbb{Z}\rbrace.
\end{align*}
Since the matrix trace satisfies Trace$(AB)$ = Trace$(BA)$, this definition is two-sided. Note that the codifferent ideal and a general dual ideal may be fractional ideals rather than full ideals, and they satisfy the equality $\mathcal{I}^{\vee} = \Lambda^{\vee} \cdot \mathcal{I}^{-1}$ for any ideal $\mathcal{I}$.

We will also be interested in principal ideals, but must take more care with these than in commutative settings. For a central element $t \in K$, we can define simply $\langle t \rangle = t \cdot \Lambda$, the set of elements of $\Lambda$ divisible by $t$. However, for a general $t$ that does not lie in the center of $\Lambda$ we need the slightly more complex definition
\begin{align*}
\langle t \rangle =  \left \{ \sum^m_{i=1} r_i t s_i: r_i, s_i \in \Lambda, m \in \mathbb{N} \right \},
\end{align*}
which can easily be seen to be a two-sided ideal, moreover the smallest one that contains $t$.

\subsubsection{Orders and Ideals as Integer Lattices}
Any order $\Lambda$ of a cyclic algebra $\mathcal{A} = (L/K,\theta, \gamma)$ has dimension $n d^2$ over $\mathbb{Z}$ and thus generates a lattice of dimension $nd^2$ over $\mathbb{Z}$. We will consider the following representation of these lattices, which extends naturally to ideals of orders as well. Consider an element $x = \bigoplus_{i = 0}^{d-1} u^i x_i \in \Lambda$. We can consider $x$ as a vector over $H_L$ of dimension $d$ by $\sigma_\mathcal{A}(x) := \lbrace \sigma_L(x_0), \sigma_L(x_1),...,\sigma_L(x_{d-1}) \rbrace$. Then, the collection $\sigma_\mathcal{A}(\Lambda)$ forms an integer lattice of dimension $nd^2$. We will refer to this representation as the ``module representation" and will sometimes double index the element $x$, denoting by $x_{i,j}$ the embedding $\sigma_j(x_i)$, and extend this notation in the obvious manner to the space $\bigoplus_{i=0}^{d-1} u^iL_\mathbb{R}$. Though this representation is conceptually simple, we remark that it has some drawbacks in the case where $\vert \sigma_i(\gamma) \vert \neq 1$ for some $i$ when considering sizes of lattice elements; we will choose $\gamma$ carefully in our constructions to remove this issue.

\subsubsection{Gaussian Distributions Over Cyclic Algebras}
As in (R)LWE, we will need to sample Gaussian distributions over our ambient space in certain norms. In the case of RLWE, the continuous Gaussians are sampled in $K_\mathbb{R} \cong H$. Since a cyclic algebra $\mathcal{A}$ can be viewed as an $n$-dimensional algebra over $L$, we use the visualization from the previous subsection and sample our error distributions over $\bigoplus_{i=0}^{d-1} u^iL_\mathbb{R}$, which has the same structure as a vector space as ${H_L}^d$. For simplicity we restrict ourselves to the case when $\vert \sigma_i(\gamma) \vert = 1$ for each $i$. Although this is a strong condition on $\gamma$ it holds in the case where it is a root of unity, which we will enforce later. Otherwise, in order to maintain a norm that is sub-multiplicative the norm and shape of $\gamma$ must be considered.

Explicitly, we just consider the norm of an element of $\mathcal{A}$ to be equal to the norm of the corresponding module element in $L^d$ of dimension $nd^2$ used in \cite{langlois_worst-case_2015}, e.g. $\Vert x \Vert = \Vert (\sigma_L(x_0), \sigma_L(x_1),...,\sigma_L(x_{d-1})) \Vert_2$ for $x = x_0 + u x_1 + ... + u^{d-1} x_{d-1} \in \mathcal{A}$. It is straightforward to check that this is indeed a norm in the case where $\vert \sigma_i(\gamma) \vert = 1$ for each $i$, since $\gamma$ is fixed under $\theta$ and multiplying by $\gamma$ does not change the norm of an entry of $\sigma_L$. It is clear that this norm extends to any $y \in\bigoplus_{i=0}^{d-1} u^i L_\mathbb{R}$ in a natural manner. Now that we have defined a norm, it is easy to define a Gaussian distribution $D_{\textbf{r}}$ on $\mathcal{A}$, or its discrete analogue on $\Lambda$ by sampling over the module ${L_\mathbb{R}}^d$.

\subsubsection{The CRT}
In this subsection we state the CRT for order ideals, and deduce some important consequences. We note that the following lemmas are merely adaptations of those in \cite[Section 2.3.8]{lyubashevsky_ideal_2010} extended to the case of cyclic algebras. The first is just the CRT.
\begin{lemma}
Let $\mathcal{I}_1,...,\mathcal{I}_r$ be pairwise coprime ideals of an order $\Lambda$ of a cyclic algebra $\mathcal{A}$, and let $ \mathcal{I} = \prod_{i=1}^r \mathcal{I}_i$. Then, the natural map $\Lambda \rightarrow \bigoplus_{i=1}^r (\Lambda/\mathcal{I}_i)$ induces an isomorphism $\Lambda/\mathcal{I} \rightarrow \bigoplus_{i=1}^r (\Lambda/\mathcal{I}_i)$.
\end{lemma}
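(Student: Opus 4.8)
The plan is to follow the classical proof of the Chinese Remainder Theorem for rings, carried over essentially verbatim to the non-commutative setting, since the statement only asserts an isomorphism of additive groups (indeed of $\Lambda$-bimodules) and does not require any special structure of the ideal lattice. First I would establish the two-ideal case, $\Lambda/(\mathcal{I}_1 \cap \mathcal{I}_2) \cong (\Lambda/\mathcal{I}_1) \oplus (\Lambda/\mathcal{I}_2)$ when $\mathcal{I}_1 + \mathcal{I}_2 = \Lambda$, and then bootstrap to $r$ ideals by induction. The natural map $\psi : \Lambda \to \bigoplus_i (\Lambda/\mathcal{I}_i)$ sending $x$ to $(x + \mathcal{I}_1, \dots, x + \mathcal{I}_r)$ is visibly an additive (in fact ring) homomorphism with kernel $\bigcap_i \mathcal{I}_i$, so the content is in two claims: (i) $\bigcap_{i=1}^r \mathcal{I}_i = \prod_{i=1}^r \mathcal{I}_i$, and (ii) $\psi$ is surjective.

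For claim (i) in the two-ideal case, coprimality gives $1 = a + b$ with $a \in \mathcal{I}_1$, $b \in \mathcal{I}_2$; then for any $x \in \mathcal{I}_1 \cap \mathcal{I}_2$ we write $x = xa + xb$ (or $x = ax + bx$), and since $\mathcal{I}_2$ is a left ideal containing $b$ we get $xb \in \mathcal{I}_1 \mathcal{I}_2$ — wait, one must be slightly careful: $xb$ with $x \in \mathcal{I}_1$ and $b \in \mathcal{I}_2$ lies in $\mathcal{I}_1 \mathcal{I}_2$ by definition of the ideal product, and $xa$ with $x \in \mathcal{I}_2$, $a \in \mathcal{I}_1$ lies in $\mathcal{I}_2 \mathcal{I}_1$. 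So in the non-commutative setting I actually get $\mathcal{I}_1 \cap \mathcal{I}_2 \subseteq \mathcal{I}_1 \mathcal{I}_2 + \mathcal{I}_2 \mathcal{I}_1$; but since $\mathcal{I}_1, \mathcal{I}_2$ are \emph{two-sided}, both $\mathcal{I}_1\mathcal{I}_2$ and $\mathcal{I}_2\mathcal{I}_1$ are contained in $\mathcal{I}_1 \cap \mathcal{I}_2$, and one can choose the convention so that the ``product'' in the lemma is interpreted consistently — e.g. noting $\mathcal{I}_1 \mathcal{I}_2 \subseteq \mathcal{I}_1 \cap \mathcal{I}_2 \subseteq \mathcal{I}_1\mathcal{I}_2 + \mathcal{I}_2\mathcal{I}_1 \subseteq \mathcal{I}_1 \cap \mathcal{I}_2$, forcing equality $\mathcal{I}_1 \cap \mathcal{I}_2 = \mathcal{I}_1\mathcal{I}_2 = \mathcal{I}_2\mathcal{I}_1$ when the factors are coprime two-sided ideals. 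For the inductive step I need pairwise coprimality of $\mathcal{I}_1$ and $\prod_{i \geq 2}\mathcal{I}_i$: from $1 = a_i + b_i$ with $a_i \in \mathcal{I}_1$, $b_i \in \mathcal{I}_i$, the product $\prod_{i \geq 2}(a_i + b_i) = 1$ expands into a sum whose only term not manifestly in $\mathcal{I}_1$ is $\prod_{i\geq 2} b_i \in \prod_{i \geq 2}\mathcal{I}_i$, giving $\mathcal{I}_1 + \prod_{i\geq 2}\mathcal{I}_i = \Lambda$; the expansion is legitimate despite non-commutativity because each $a_i \in \mathcal{I}_1$ and $\mathcal{I}_1$ absorbs on both sides.

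For claim (ii), surjectivity, it suffices (by additivity) to hit each standard basis vector $(0,\dots,1,\dots,0)$. Fix index $j$; since $\mathcal{I}_j$ is coprime to $\prod_{i\neq j}\mathcal{I}_i$, write $1 = c + d$ with $c \in \mathcal{I}_j$ and $d \in \prod_{i \neq j}\mathcal{I}_i \subseteq \bigcap_{i\neq j}\mathcal{I}_i$. Then $\psi(d) = (0,\dots,0,1+\mathcal{I}_j,0,\dots,0)$, since $d \in \mathcal{I}_i$ for all $i \neq j$ and $d = 1 - c \equiv 1 \pmod{\mathcal{I}_j}$. Combining over all $j$ gives surjectivity, and then the first isomorphism theorem yields $\Lambda/\bigcap_i \mathcal{I}_i \cong \bigoplus_i(\Lambda/\mathcal{I}_i)$, which by claim (i) is $\Lambda/\prod_i\mathcal{I}_i \cong \bigoplus_i(\Lambda/\mathcal{I}_i)$.

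The main obstacle — really the only place where care beyond the commutative argument is needed — is the bookkeeping of left versus right multiplication: one must consistently use that the $\mathcal{I}_i$ are two-sided so that every ideal product appearing absorbs $\Lambda$ on both sides, and verify that the expansion of $\prod(a_i+b_i)$ and the identity $x = xa + xb$ place each term on the correct side of the appropriate ideal. Once the two-sided hypothesis is invoked, the intersection-equals-product identity and surjectivity go through exactly as in the commutative case, and the induction on $r$ is routine. I would present the two-ideal case in full and then indicate the induction briefly.
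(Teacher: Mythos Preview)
The paper does not supply its own proof of this lemma; it simply states it as ``just the CRT'' and refers the reader to the corresponding statement in \cite{lyubashevsky_ideal_2010}. Your approach --- the textbook CRT argument via the natural ring homomorphism, identifying the kernel and proving surjectivity by producing idempotents --- is exactly what one would expect, and your surjectivity argument (claim (ii)) and the inductive step showing $\mathcal{I}_j + \prod_{i\neq j}\mathcal{I}_i = \Lambda$ are correct.

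There is, however, a genuine logical slip in claim (i). From the chain
\[
\mathcal{I}_1\mathcal{I}_2 \subseteq \mathcal{I}_1\cap\mathcal{I}_2 \subseteq \mathcal{I}_1\mathcal{I}_2 + \mathcal{I}_2\mathcal{I}_1 \subseteq \mathcal{I}_1\cap\mathcal{I}_2
\]
you conclude $\mathcal{I}_1\cap\mathcal{I}_2 = \mathcal{I}_1\mathcal{I}_2 = \mathcal{I}_2\mathcal{I}_1$. But the chain only forces $\mathcal{I}_1\cap\mathcal{I}_2 = \mathcal{I}_1\mathcal{I}_2 + \mathcal{I}_2\mathcal{I}_1$; it says nothing about the individual summands. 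In an arbitrary non-commutative ring this is the best one can do: for instance, in the ring of upper-triangular $2\times 2$ matrices over a field, the two-sided ideals $I_1 = \{\begin{psmallmatrix}*&*\\0&0\end{psmallmatrix}\}$ and $I_2 = \{\begin{psmallmatrix}0&*\\0&*\end{psmallmatrix}\}$ are coprime, yet $I_2 I_1 = 0 \subsetneq I_1\cap I_2 = I_1 I_2$. So ``coprime two-sided'' alone does not give $\mathcal{I}_1\mathcal{I}_2 = \mathcal{I}_2\mathcal{I}_1$.

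What rescues the statement in the paper's setting is extra structure: the paper ultimately works with orders that are \emph{maximal} (Theorem~\ref{natural_is_maximal}), and in a maximal order the two-sided ideals form a free abelian group generated by the primes (\cite[Theorem~22.10]{reiner_maximal_1975}, cited in the paper), so in particular $\mathcal{I}_1\mathcal{I}_2 = \mathcal{I}_2\mathcal{I}_1$ and your argument then goes through. You should either invoke this explicitly, or --- if you want the lemma at the stated generality of an arbitrary order --- replace $\prod_i\mathcal{I}_i$ by $\bigcap_i\mathcal{I}_i$ in the conclusion, which is all the CRT actually delivers and all that the subsequent lemmas in the paper need.
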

We call a CRT basis for a set of coprime order ideals $\mathcal{I}_1,...,\mathcal{I}_r$ a basis $C = \lbrace c_1,...,c_r \rbrace$ of elements of $\Lambda$ satisfying $c_i = 1 \mod \mathcal{I}_i, c_i = 0 \mod \mathcal{I}_j$ for $i \neq j$.
\begin{lemma}\label{2}
Given pairwise coprime ideals $\mathcal{I}_1,...,\mathcal{I}_r$ of an order $\Lambda$, there is a deterministic polynomial time algorithm that outputs a CRT basis $c_1,...,c_r \in \Lambda$ for those ideals.
\end{lemma}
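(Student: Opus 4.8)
The plan is to imitate the classical construction of a CRT basis over a Dedekind domain, replacing the commutative ideal calculus with the two-sided ideal calculus recalled earlier in the paper. Fix once and for all a $\mathbb{Z}$-basis of $\Lambda$ (e.g. the one coming from the module representation $\sigma_\mathcal{A}$), so that $\Lambda \cong \mathbb{Z}^{nd^2}$ with multiplication given by a fixed table of structure constants, and assume each $\mathcal{I}_i$ is supplied by a $\mathbb{Z}$-basis, i.e. as a full-rank integer sublattice. With this data, the product $\mathcal{I} \cdot \mathcal{J}$ of two two-sided ideals has a $\mathbb{Z}$-spanning set $\{\alpha_k \beta_l\}$ obtained from $\mathbb{Z}$-bases $\{\alpha_k\}$ of $\mathcal{I}$ and $\{\beta_l\}$ of $\mathcal{J}$; reducing this spanning set to Hermite normal form returns a $\mathbb{Z}$-basis of $\mathcal{I}\mathcal{J}$ in deterministic polynomial time and of polynomial bit length.

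First I would record the purely algebraic fact that pairwise coprimality propagates to products. Since each $\mathcal{I}_j$ is two-sided, $X \cdot \mathcal{I}_j \subseteq \mathcal{I}_j$ and $\mathcal{I}_j \cdot X \subseteq \mathcal{I}_j$ for every $X \subseteq \Lambda$; hence if $\mathcal{I}_1 + \mathcal{I}_2 = \Lambda$ and $\mathcal{I}_1 + \mathcal{I}_3 = \Lambda$ then
\[
\Lambda = \Lambda \cdot \Lambda = (\mathcal{I}_1 + \mathcal{I}_2)(\mathcal{I}_1 + \mathcal{I}_3) \subseteq \mathcal{I}_1 + \mathcal{I}_2\mathcal{I}_3,
\]
because the three cross terms $\mathcal{I}_1\mathcal{I}_1$, $\mathcal{I}_1\mathcal{I}_3$, $\mathcal{I}_2\mathcal{I}_1$ all lie in $\mathcal{I}_1$. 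Iterating this over a fixed ordering of the factors gives $\mathcal{I}_i + \mathcal{J}_i = \Lambda$, where $\mathcal{J}_i := \prod_{j \neq i} \mathcal{I}_j$. Moreover a product of two-sided ideals is contained in each of its factors (as $i_l j_l \in \mathcal{I}\Lambda \cap \Lambda\mathcal{J} \subseteq \mathcal{I} \cap \mathcal{J}$), so $\mathcal{J}_i \subseteq \mathcal{I}_j$ for each $j \neq i$.

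Next I would construct the basis. For each $i$, compute a $\mathbb{Z}$-basis $\{\beta_l\}$ of $\mathcal{J}_i$ by $r-2$ successive product-and-HNF steps, and let $\{\alpha_k\}$ be the given basis of $\mathcal{I}_i$. Since $1 \in \Lambda = \mathcal{I}_i + \mathcal{J}_i$, the linear Diophantine system $1 = \sum_k x_k \alpha_k + \sum_l y_l \beta_l$ has an integer solution; such a solution can be found in deterministic polynomial time and with polynomially bounded bit length via Smith/Hermite normal form, and its existence is guaranteed precisely by $1 \in \mathcal{I}_i + \mathcal{J}_i$. Setting $a_i := \sum_k x_k \alpha_k \in \mathcal{I}_i$ and $c_i := \sum_l y_l \beta_l \in \mathcal{J}_i$, we get $c_i = 1 - a_i \equiv 1 \pmod{\mathcal{I}_i}$ and $c_i \in \mathcal{J}_i \subseteq \mathcal{I}_j$, i.e. $c_i \equiv 0 \pmod{\mathcal{I}_j}$ for all $j \neq i$. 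Repeating for $i = 1, \dots, r$ outputs the desired CRT basis $c_1, \dots, c_r$, at a total cost of $O(r)$ product computations plus $O(r)$ Diophantine solves.

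I expect the only real difficulty here to be bookkeeping rather than mathematics: one must verify that the intermediate bases of the iterated products $\mathcal{J}_i$ and the solutions of the Diophantine systems remain of polynomial bit length, which is exactly what standard bounds for Hermite normal form and for solving integer linear systems provide, and one must check that every non-commutative manipulation above invokes only the two-sided hypothesis on the $\mathcal{I}_j$ — which is the standing assumption of this paper. Notably, neither the division-algebra property nor maximality of $\Lambda$ is used, so the algorithm works for an arbitrary order.
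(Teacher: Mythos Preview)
Your proof is correct and is essentially the approach the paper has in mind: the paper gives no argument of its own, simply stating that ``the proof is the same as in the ring case \cite[Lemma 2.13]{lyubashevsky_ideal_2010},'' and what you have written is exactly that argument with the non-commutative bookkeeping (checking that two-sidedness of the $\mathcal{I}_j$ suffices for $\mathcal{I}_i + \prod_{j\neq i}\mathcal{I}_j = \Lambda$) made explicit. The only blemish is a minor counting slip---computing each $\mathcal{J}_i$ separately is $O(r^2)$ products rather than $O(r)$ unless you share prefix/suffix products---but either way the total work is polynomial.
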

The proof is the same as in the ring case \cite[Lemma 2.13]{lyubashevsky_ideal_2010}. Using \cref{2} we can efficiently invert the natural CRT isomorphism. Given $a = (a_1,...,a_r) \in \bigoplus_{i=1}^r (\Lambda/\mathcal{I}_i)$, it can be easily checked that its inverse is $b = \sum_{i=1}^r a_i c_i \mod \mathcal{I}$.

The next two lemmas will be required later to construct an efficiently invertible bijection between quotient spaces $\mathcal{I}/\langle q \rangle \cdot \mathcal{I}$ and $\Lambda/\langle q \rangle$.

\begin{lemma}\label{3}
Assuming $q$ is unramified in $L$. Let $\mathcal{I}$ be an ideal of the natural order $\Lambda$ which is maximal and let $\mathcal{J} = q \cdot \Lambda = \langle q \rangle \cdot  \Lambda$, where $q$ is a prime integer and $\langle q \rangle = \prod_{i=1}^r \mathfrak{q}_i$ is a decomposition into prime ideals in $\mathcal{O}_K$. Assume $\gamma \notin \mathfrak{q}_i$ for each $i$. Then, there exists an element $t \in \mathcal{I}\cap \mathcal{O}_K$ such that the ideal $t \cdot \mathcal{I}^{-1} \subset \Lambda$ is coprime to $\mathcal{J}$, and we can compute such a $t$ efficiently given $\mathcal{I}$ and the prime factorization of $\mathcal{J}$.
\end{lemma}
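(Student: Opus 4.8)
The plan is to mimic the classical number-field argument (as in \cite[Section 2]{lyubashevsky_ideal_2010}) that one can always find a principal-like generator making an ideal coprime to a fixed modulus, but carried out in the non-commutative natural order using the structure results already recalled in the excerpt. The key observation is that because the order $\Lambda$ constructed in \cref{goodalgebras} is simultaneously natural and maximal, two-sided ideals behave well: $\mathcal{I}^{-1}$ is a genuine two-sided fractional ideal with $\mathcal{I}\cdot\mathcal{I}^{-1}=\mathcal{I}^{-1}\cdot\mathcal{I}=\Lambda$, and the intersection map $\mathcal{I}\mapsto \mathcal{I}\cap\mathcal{O}_K$ together with its section $\mathfrak{a}\mapsto (\mathfrak{a}\cdot\Lambda)\cap\mathcal{O}_K=\mathfrak{a}$ lets us transfer questions about coprimality in $\Lambda$ down to questions about ideals of $\mathcal{O}_K$, where ordinary commutative algebraic number theory applies.

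First I would reduce the problem to $\mathcal{O}_K$. The ideal $\mathcal{J}=\langle q\rangle\cdot\Lambda$ is the extension to $\Lambda$ of $\langle q\rangle=\prod_{i=1}^r\mathfrak{q}_i\subset\mathcal{O}_K$; since $q$ is unramified in $L$ and $\gamma\notin\mathfrak{q}_i$ for each $i$, the hypotheses of \cite[Section 3]{oggier_quotients_2012-4} guarantee that each $\mathfrak{q}_i\cdot\Lambda$ is a prime (maximal) two-sided ideal of $\Lambda$ and that $\mathcal{J}$ factors as $\prod_i(\mathfrak{q}_i\cdot\Lambda)$. A two-sided ideal $\mathcal{K}\subset\Lambda$ is coprime to $\mathcal{J}$ (i.e.\ $\mathcal{K}+\mathcal{J}=\Lambda$) precisely when $\mathcal{K}$ is not contained in any $\mathfrak{q}_i\cdot\Lambda$, and one checks via the intersection/extension correspondence that $\mathcal{K}\subset\mathfrak{q}_i\cdot\Lambda$ is equivalent to $\mathcal{K}\cap\mathcal{O}_K\subset\mathfrak{q}_i$. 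So it suffices to find $t\in\mathcal{I}\cap\mathcal{O}_K$ with $(t\cdot\mathcal{I}^{-1})\cap\mathcal{O}_K$ not contained in any $\mathfrak{q}_i$.

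Next I would produce $t$. Writing $\mathfrak{c}=\mathcal{I}\cap\mathcal{O}_K$, the element $t$ we seek lies in $\mathfrak{c}$, and the quantity $(t\cdot\mathcal{I}^{-1})\cap\mathcal{O}_K$ should be controlled by the $\mathcal{O}_K$-ideal $\langle t\rangle_{\mathcal{O}_K}\cdot\mathfrak{c}^{-1}$ (here one uses $\mathcal{I}^{-1}\cap\mathcal{O}_K=\mathfrak{c}^{-1}$, which follows from $\mathcal{I}^\vee=\Lambda^\vee\cdot\mathcal{I}^{-1}$ and the maximality of $\Lambda$, or directly from the correspondence). Thus I want $t\in\mathfrak{c}$ such that $t\mathcal{O}_K=\mathfrak{c}\cdot\mathfrak{b}$ with $\mathfrak{b}$ coprime to every $\mathfrak{q}_i$. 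This is the standard statement that in $\mathcal{O}_K$ one can choose a generator of $\mathfrak{c}$ modulo $\langle q\rangle$: pick, using the CRT of \cref{2} in $\mathcal{O}_K$ (or directly weak approximation), an element $t\in\mathfrak{c}$ whose image in each $\mathfrak{c}/\mathfrak{q}_i\mathfrak{c}$ is nonzero; equivalently $t\in\mathfrak{c}\setminus\bigcup_i\mathfrak{q}_i\mathfrak{c}$, which is nonempty since $\mathfrak{c}\neq\mathfrak{q}_i\mathfrak{c}$ by unique factorization. Such a $t$ forces $v_{\mathfrak{q}_i}(t)=v_{\mathfrak{q}_i}(\mathfrak{c})$ for every $i$, hence $\mathfrak{b}=t\mathcal{O}_K\cdot\mathfrak{c}^{-1}$ is coprime to $\langle q\rangle$, and transporting back up, $t\cdot\mathcal{I}^{-1}$ is coprime to $\mathcal{J}$. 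Since $t\in\mathcal{O}_K\subset K$ is central, $t\cdot\mathcal{I}^{-1}=\mathcal{I}^{-1}\cdot t$ is unambiguously a two-sided ideal, so no non-commutativity subtlety arises at this final step. Efficiency is immediate: factoring $\mathfrak{c}$ and the $\mathfrak{q}_i$ is given, and the CRT basis computation is polynomial time by \cref{2}.

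The main obstacle I anticipate is not the approximation step itself but verifying rigorously that coprimality and containment in $\Lambda$ really do descend to and ascend from $\mathcal{O}_K$ as claimed — i.e.\ that $\mathcal{K}\cap\mathcal{O}_K\not\subset\mathfrak{q}_i$ for all $i$ implies $\mathcal{K}+\mathcal{J}=\Lambda$. The clean way to see this is to pass to the quotient: $\Lambda/\mathfrak{q}_i\Lambda$ is a finite-dimensional algebra over the field $\mathcal{O}_K/\mathfrak{q}_i$, and under the stated hypotheses ($q$ unramified, $\gamma\notin\mathfrak{q}_i$) it is a \emph{simple} algebra (a matrix algebra over a field), so it has no nontrivial two-sided ideals; hence the image of $\mathcal{K}$ in $\Lambda/\mathfrak{q}_i\Lambda$ is either $0$ or everything, and it is $0$ iff $\mathcal{K}\subset\mathfrak{q}_i\Lambda$ iff $\mathcal{K}\cap\mathcal{O}_K\subset\mathfrak{q}_i$. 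Once this simplicity fact is invoked (it is exactly the content of the relevant part of \cite[Section 3]{oggier_quotients_2012-4}), the rest is bookkeeping with valuations in $\mathcal{O}_K$.
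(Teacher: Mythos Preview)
Your proposal is correct and follows essentially the same route as the paper's proof: set $\overline{\mathcal{I}}=\mathcal{I}\cap\mathcal{O}_K$, invoke the commutative result (\cite[Lemma~2.14]{lyubashevsky_ideal_2010}) to pick $t\in\overline{\mathcal{I}}\setminus\bigcup_i\mathfrak{q}_i\overline{\mathcal{I}}$, and then lift coprimality to $\Lambda$ using that the only maximal two-sided ideals of $\Lambda$ above $\mathcal{J}$ are the $\mathfrak{q}_i\Lambda$ (via \cite[Propositions~1 and~4]{oggier_quotients_2012-4}, which is exactly the simplicity of $\Lambda/\mathfrak{q}_i\Lambda$ you identify as the key point). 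The paper phrases the lifting as a short contradiction---assume $t\cdot\mathcal{I}^{-1}+\mathcal{J}\neq\Lambda$, pick a maximal ideal $\mathfrak{q}_i\Lambda$ containing both, multiply $t\cdot\mathcal{I}^{-1}\subset\mathfrak{q}_i\Lambda$ through by $\mathcal{I}$ (using $\mathcal{I}\cdot\mathcal{I}^{-1}=\Lambda$ in the maximal order) to get $t\in\mathfrak{q}_i\cdot\mathcal{I}$, then intersect with $\mathcal{O}_K$---rather than going through your intermediate claim $(t\cdot\mathcal{I}^{-1})\cap\mathcal{O}_K=t\mathcal{O}_K\cdot\mathfrak{c}^{-1}$, but the substance is identical.
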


\begin{remark}
The condition on $\gamma$ will be immaterial in our use case, since when $\gamma$ is a unit the only $\mathcal{O}_K$ ideal that contains $\gamma$ is $\mathcal{O}_K$ itself.
\end{remark}

\begin{proof}
For an ideal $\mathcal{I}$ denote by $\overline{\mathcal{I}}$ its intersection with $K$, which is a non-trivial ideal of $\mathcal{O}_K$ (see \cite[Section 3]{oggier_quotients_2012-4}). We apply the corresponding \cite[Lemma 2.14]{lyubashevsky_ideal_2010} to obtain $t \in \overline{\mathcal{I}}$ such that $t \cdot \overline{\mathcal{I}}^{-1}$ and $\overline{\mathcal{J}}$ are coprime as ideals of $\mathcal{O}_K$ and $t \in \overline{\mathcal{I}} \setminus\bigcup_{i=1}^r \mathfrak{q}_i \cdot \overline{\mathcal{I}}$. Assume, for a contradiction, that $t \cdot \mathcal{I}^{-1} + \mathcal{J} \neq \Lambda$ e.g. the ideals are not coprime. Then, there is some maximal ideal $\mathcal{M}$ of $\Lambda$ containing $t \cdot \mathcal{I}^{-1}$ and $\mathcal{J}$. Since $q$ is unramified in $L$ and  $\gamma \notin \mathfrak{q}_i$, by \cite[Propositions 1 and 4]{oggier_quotients_2012-4}, this ideal must be one of the ideals $\mathfrak{q}_i \cdot \Lambda$ since it contains $\mathcal{J}$. Then $t \cdot \mathcal{I}^{-1} \subset \mathfrak{q}_i \cdot \Lambda$ and consequentially $t \in \mathfrak{q}_i \cdot \mathcal{I}$ because $\mathcal{I}\cdot \mathcal{I}^{-1}=\Lambda$ in a maximal order. Since $t$ and $\mathfrak{q}_i$ are central it follows that $t \in \mathfrak{q}_i \cdot \overline{\mathcal{I}}$, a contradiction.
\end{proof}

%

The next lemma will be the one we use in our reduction. As in RLWE, in practice we are interested in the case where $\mathcal{J} = \langle q \rangle$ for a prime integer $q$ and $\mathcal{P} = \Lambda^\vee$. We will use the familiar notation $\mathcal{I}_q := \mathcal{I}/q \cdot \mathcal{I}$ for an ideal $\mathcal{I}$ and $q \in \mathbb{Z}$ throughout the paper.
\begin{lemma}\label{homomorphism}
Let $\Lambda$, $\gamma$ and $q$ be given in \cref{3}. Let $\mathcal{I}, \mathcal{J}$ be ideals of $\Lambda$, with $t \in \mathcal{I}\cap \mathcal{O}_K$ chosen as above such that $ t  \cdot \mathcal{I}^{-1}$ and $\mathcal{J}$ are coprime as ideals, and let $\mathcal{P}$ denote an arbitrary fractional ideal of $\Lambda$. Then, the function $\chi_t: \mathcal{A} \rightarrow \mathcal{A}$ defined as $\chi_t(x) = t \cdot x$ induces a module isomorphism from $\mathcal{P}/\mathcal{J} \cdot \mathcal{P} \rightarrow \mathcal{I}\cdot \mathcal{P}/ \mathcal{I} \cdot \mathcal{J} \cdot \mathcal{P}$.  Furthermore, in the case $\mathcal{J} = \langle q \rangle$ for a prime integer $q$ we can efficiently compute the inverse.
\end{lemma}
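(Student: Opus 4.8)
The plan is to mimic the proof of the analogous statement in the RLWE setting (\cite[Lemma 2.15]{lyubashevsky_ideal_2010}), transferring it to the non-commutative order $\Lambda$ while being careful that all the ideal-theoretic manipulations only use properties that hold because $\Lambda$ is both natural and maximal (so that $\mathcal{I}\cdot\mathcal{I}^{-1}=\mathcal{I}^{-1}\cdot\mathcal{I}=\Lambda$ and $(\mathcal{I}^{-1})^{-1}=\Lambda$, as noted before \cref{homomorphism}). First I would observe that $\chi_t(x)=t\cdot x$ is manifestly a left $\Lambda$-module homomorphism (indeed a two-sided one, since $t\in\mathcal{O}_K$ is central), and that it maps $\mathcal{P}$ into $\mathcal{I}\cdot\mathcal{P}$ because $t\in\mathcal{I}\cap\mathcal{O}_K\subset\mathcal{I}$ and $t$ central gives $t\mathcal{P}\subseteq\mathcal{I}\mathcal{P}$; likewise it maps $\mathcal{J}\cdot\mathcal{P}$ into $\mathcal{I}\cdot\mathcal{J}\cdot\mathcal{P}$. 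Hence $\chi_t$ descends to a well-defined module homomorphism $\bar\chi_t:\mathcal{P}/\mathcal{J}\cdot\mathcal{P}\to\mathcal{I}\cdot\mathcal{P}/\mathcal{I}\cdot\mathcal{J}\cdot\mathcal{P}$.

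Next I would prove surjectivity of $\bar\chi_t$. Because $t\cdot\mathcal{I}^{-1}$ and $\mathcal{J}$ are coprime, $t\cdot\mathcal{I}^{-1}+\mathcal{J}=\Lambda$; multiplying this identity on the left (or on both sides) by $\mathcal{I}\cdot\mathcal{P}$ and using $\mathcal{I}\cdot\mathcal{I}^{-1}=\Lambda$ gives $t\cdot\mathcal{P}+\mathcal{I}\cdot\mathcal{J}\cdot\mathcal{P}=\mathcal{I}\cdot\mathcal{P}$, which says exactly that every class in $\mathcal{I}\cdot\mathcal{P}/\mathcal{I}\cdot\mathcal{J}\cdot\mathcal{P}$ has a representative of the form $t\cdot p$ with $p\in\mathcal{P}$, i.e. $\bar\chi_t$ is onto. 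For injectivity I would count: both sides are finite (quotients of a lattice by a finite-index sublattice, since $\mathcal{J}$ has finite index), and $\chi_t$ is injective on $\mathcal{A}$ (as $\mathcal{A}$ is a division algebra and $t\neq 0$), so $\chi_t$ gives a bijection between $\mathcal{P}$ and $t\mathcal{P}$; comparing indices, $[\mathcal{I}\cdot\mathcal{P}:\mathcal{I}\cdot\mathcal{J}\cdot\mathcal{P}]=[\mathcal{P}:\mathcal{J}\cdot\mathcal{P}]$ because both equal the norm (as a $\mathbb{Z}$-lattice index, or via the reduced norm of $\mathcal{J}$) — here one uses that $\mathcal{I}$ is invertible so $X\mapsto\mathcal{I}\cdot X$ is an index-preserving operation on sublattices. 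A surjection between finite sets of equal cardinality is a bijection, so $\bar\chi_t$ is an isomorphism.

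Finally, for the effective inverse when $\mathcal{J}=\langle q\rangle$ with $q$ a rational prime, I would use the coprimality $t\cdot\mathcal{I}^{-1}+\langle q\rangle=\Lambda$ to compute, by a CRT-style argument (invoking \cref{2}), an element $y\in\Lambda$ with $y\cdot t\equiv 1 \bmod \langle q\rangle$ — concretely, lift $t$ to a unit modulo $q$ in the finite ring $\Lambda/q\Lambda$; one can do this in deterministic polynomial time by linear algebra over the field $\mathbb{Z}/q\mathbb{Z}$ on the matrix of left-multiplication-by-$t$ on $\Lambda/q\Lambda$ (it is invertible precisely because $t\cdot\mathcal{I}^{-1}$ is coprime to $\langle q\rangle$). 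Then $x\mapsto y\cdot x \bmod \mathcal{I}\cdot\mathcal{J}\cdot\mathcal{P}$, reduced appropriately, inverts $\bar\chi_t$. The main obstacle I anticipate is the index/norm bookkeeping in the non-commutative setting: I need to be sure that $[\mathcal{I}\cdot\mathcal{P}:\mathcal{I}\cdot\mathcal{J}\cdot\mathcal{P}]=[\mathcal{P}:\mathcal{J}\cdot\mathcal{P}]$ genuinely holds for one-sided/two-sided $\Lambda$-modules, which is where invertibility of $\mathcal{I}$ in the maximal order — and hence the fact that multiplication by $\mathcal{I}$ is an invertible, index-preserving operation on the lattice of submodules — does the real work; the well-definedness, surjectivity and algorithmic parts are then routine adaptations of the RLWE proof.
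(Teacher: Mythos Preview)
Your argument for the module isomorphism is correct and is essentially dual to the paper's: the paper proves injectivity directly (writing $1=a+b$ with $a\in t\mathcal{I}^{-1}$, $b\in\mathcal{J}$ and showing $x=ax+bx\in\mathcal{J}\mathcal{P}$ whenever $tx\in\mathcal{I}\mathcal{J}\mathcal{P}$), leaving surjectivity implicit, whereas you prove surjectivity directly by multiplying $t\mathcal{I}^{-1}+\mathcal{J}=\Lambda$ through by $\mathcal{I}\mathcal{P}$ and then deduce injectivity by a cardinality count. Both routes are fine; yours has the minor advantage of making the index equality $[\mathcal{I}\mathcal{P}:\mathcal{I}\mathcal{J}\mathcal{P}]=[\mathcal{P}:\mathcal{J}\mathcal{P}]$ explicit (and you correctly flag that invertibility of $\mathcal{I}$ in the maximal order is what justifies it).

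Your efficient-inversion step, however, has a genuine gap. You claim that left-multiplication by $t$ is invertible on $\Lambda/q\Lambda$ ``precisely because $t\cdot\mathcal{I}^{-1}$ is coprime to $\langle q\rangle$''. This inference is false: coprimality of $t\mathcal{I}^{-1}$ with $\langle q\rangle$ does \emph{not} imply coprimality of $t\Lambda$ with $\langle q\rangle$. Concretely, if $\mathcal{I}$ shares a prime factor $\mathfrak{q}_i$ with $\langle q\rangle$ in $\mathcal{O}_K$, then $t\in\overline{\mathcal{I}}\subseteq\mathfrak{q}_i$, so $t$ is a zero-divisor in $\Lambda/q\Lambda$ and no $y\in\Lambda$ with $yt\equiv 1\bmod q$ exists. (Note $\mathcal{I}\subseteq\Lambda$ gives $\Lambda\subseteq\mathcal{I}^{-1}$, hence $t\Lambda\subseteq t\mathcal{I}^{-1}$: the containment goes the wrong way for your argument.) The paper sidesteps this by a randomized construction: sample uniformly random $x_i\in\Lambda_q$, push them forward to $y_i=\chi_t(x_i)\in\mathcal{I}_q$, and with high probability the $y_i$ span $\mathcal{I}_q$; reducing to a basis with known preimages then lets one invert by $\mathbb{Z}$-linear algebra. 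If you want to rescue a deterministic approach, observe that the coprimality relation actually yields $y\in\mathcal{I}^{-1}$ (not $\Lambda$) with $ty\equiv 1\bmod\mathcal{J}$, and then $z\mapsto yz$ lands in $\mathcal{I}^{-1}\cdot\mathcal{I}\mathcal{P}=\mathcal{P}$ and inverts $\bar\chi_t$; but you must then argue that such a $y$ can be computed efficiently from a presentation of $\mathcal{I}^{-1}$.
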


\begin{proof}
The proof is similar to that of \cite{lyubashevsky_ideal_2010}. Since $t$ lies in the center of $\Lambda$ it is clear that multiplication by $t$ induces a module homomorphism. Given the map $\chi_t: \mathcal{P} \rightarrow \mathcal{I}\cdot \mathcal{P}/\mathcal{I} \cdot \mathcal{J}\cdot \mathcal{P}$ and $j \in \mathcal{J} \cdot \mathcal{P}$, $\chi_t(j) = t \cdot j \in \mathcal{I} \cdot \mathcal{J} \cdot \mathcal{P}$, so it is clear that $\mathcal{J} \cdot \mathcal{P}$ is in the kernel of this map. Conversely, if $\chi_t(x) = 0$ then $t \cdot x \in \mathcal{I} \cdot \mathcal{J} \cdot \mathcal{P}$, from which it follows that $\mathcal{I}^{-1} \cdot t \cdot x \subset \mathcal{J} \cdot \mathcal{P}$. From the definition of coprime, $t \cdot \mathcal{I}^{-1} + \mathcal{J} = \Lambda$, from which it follows that there exists $a \in t \cdot \mathcal{I}^{-1}, b \in \mathcal{J}$ such that $a + b = 1$. Hence $x = (a+b)\cdot x = a \cdot x + b \cdot x$. Since $a \cdot x, b \cdot x \in \mathcal{J} \cdot \mathcal{P}$ it follows that $x \in \mathcal{J} \cdot \mathcal{P}$, from which injectivity follows immediately.

To demonstrate efficient invertibility, we must work slightly harder. Now let $ \mathcal{J} = \langle q \rangle$. Compute $t$ as in \cref{3} and observe that the bijection $\chi_t : \Lambda_q \rightarrow \mathcal{I}_q$ is an additive homomorphism. Thus, it suffices to compute the inverse of all elements of a $\mathbb{Z}$ basis of $\mathcal{I}_q$, since then any element can be inverted by computing its representation in this basis and inverting that. We construct such a basis as follows. First, choose $n^2 \cdot d^4$ elements $x_i, i = 1,..., n^2 \cdot d^4$ from $\Lambda_q$ uniformly at random and compute $y_i = \chi_t (x_i)$ for each $i$. It follows that each $y_i$ is a uniformly random element of $\mathcal{I}_q$. Then, with high probability the $y_i$'s form a spanning set of $\mathcal{I}_q$ (see the proceeding lemma), which we can reduce to a $\mathbb{Z}$ basis $y_1',...,y_{n \cdot d^2}'$. This basis satisfies the desired property that each element has a known inverse. If this algorithm fails (e.g. there is no suitable basis $y_1',...y_{n \cdot d^2}'$), we repeat, choosing a fresh set of elements $x_1,...,x_{n^2 \cdot d^4}$ until we succeed.
\end{proof}

\begin{lemma}
Given a set of $n^2 \cdot d^4$ independent and uniformly random elements $\Xi \subset \mathbb{Z}_q^{n \cdot d^2}$, the probability that $\Xi$ contains no set of $n \cdot d^2$ linearly independent vectors (over $\mathbb{Z}$) is exponentially small in $d$.
\end{lemma}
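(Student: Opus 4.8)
The plan is to reduce the statement to a standard linear-algebra fact over the finite field $\mathbb{F}_q$, namely that a large enough collection of uniformly random vectors in $\mathbb{F}_q^m$ fails to span the whole space only with probability exponentially small in the number of ``extra'' samples. Here $m = n\cdot d^2$ is the $\mathbb{Z}$-rank of $\Lambda$, so $\Lambda_q \cong \mathbb{Z}_q^{m}$ as an additive group, and ``containing no set of $m$ linearly independent vectors over $\mathbb{Z}$'' is equivalent (for vectors already reduced mod $q$, with $q$ prime) to ``failing to span $\mathbb{F}_q^m$ as an $\mathbb{F}_q$-vector space.'' So it suffices to bound $\Pr[\,\mathrm{span}(\Xi) \neq \mathbb{F}_q^m\,]$ for $\Xi$ a multiset of $n^2 d^4 = m \cdot (n d^2)$ i.i.d.\ uniform elements of $\mathbb{F}_q^m$.

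First I would set up the union bound over hyperplanes: $\mathrm{span}(\Xi)$ is a proper subspace iff every element of $\Xi$ lies in some fixed hyperplane $W \subset \mathbb{F}_q^m$. For a fixed hyperplane $W$, $\Pr[\xi \in W] = q^{m-1}/q^m = 1/q$, so by independence $\Pr[\Xi \subset W] = q^{-N}$ with $N = n^2 d^4$. The number of hyperplanes of $\mathbb{F}_q^m$ is $(q^m - 1)/(q - 1) < q^{m}$. Hence
\begin{align*}
\Pr[\,\mathrm{span}(\Xi) \neq \mathbb{F}_q^m\,] \;\le\; \sum_{W} \Pr[\Xi \subset W] \;<\; q^{m} \cdot q^{-N} \;=\; q^{\,m - N}.
\end{align*}
Since $N = m \cdot n d^2$ and $m = n d^2 \ge 1$ (with $d \ge 2$ in any nontrivial instance), we get $m - N = m(1 - n d^2) \le -m(n d^2 - 1)$, so the bound is at most $q^{-m(nd^2-1)} \le q^{-(nd^2-1)}$, which is exponentially small in $d$ (indeed in $d^2$), completing the argument.

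The main thing to be careful about — and what I'd expect to be the only real subtlety — is the passage from ``linearly independent over $\mathbb{Z}$'' to ``spanning over $\mathbb{F}_q$.'' For vectors in $\mathbb{Z}_q^m$, a subset contains $m$ vectors that are linearly independent over $\mathbb{Z}$ precisely when their reductions are linearly independent over $\mathbb{F}_q$ (since $q$ is prime, $\mathbb{Z}_q = \mathbb{F}_q$ is a field, and any $\mathbb{F}_q$-independent set lifts to a $\mathbb{Z}$-independent set and conversely a $\mathbb{Z}$-dependence among residues reduces mod $q$); and $m$ vectors in an $m$-dimensional space are independent iff they span. One should also note the reduction step used in \cref{homomorphism}: once $\Xi$ spans, Gaussian elimination over $\mathbb{F}_q$ extracts an explicit basis $y_1',\dots,y_m'$ from the $y_i$, each of which therefore has a known preimage under $\chi_t$ (a $\mathbb{Z}$-linear combination of the preimages $x_i$), so efficient invertibility follows. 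I would state the lemma's conclusion with the explicit bound $q^{-(nd^2-1)}$ rather than merely ``exponentially small in $d$,'' since the proof yields it for free and it is the quantity actually needed downstream.
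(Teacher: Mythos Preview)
Your argument is correct, and in fact the paper gives no proof at all here: it simply says the lemma is ``a straightforward adaptation of Corollary~3.16 of \cite{regev_lattices_2009}.'' Your hyperplane union bound is exactly the kind of argument that citation is pointing to, and it delivers the sharper explicit estimate $q^{m-N}\le q^{-(nd^2-1)}$, which is more than enough for the application in \cref{homomorphism}.

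One small wrinkle worth tightening: the biconditional you assert between ``$m$ vectors $\mathbb{Z}$-independent'' and ``$m$ vectors $\mathbb{F}_q$-independent'' is not literally true for arbitrary lifts (e.g.\ $(1,0)$ and $(2,0)$ in $\{0,1,2\}^2\subset\mathbb{Z}^2$ are $\mathbb{Z}$-independent but $\mathbb{F}_3$-dependent). However, only the implication you actually use is needed: if $\Xi$ spans $\mathbb{F}_q^m$ then some $m$-subset is $\mathbb{F}_q$-independent, and any $\mathbb{Z}$-relation among those lifts would reduce mod $q$ to a nontrivial $\mathbb{F}_q$-relation, a contradiction. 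So $\Pr[\text{no }m\text{ }\mathbb{Z}\text{-independent vectors}]\le\Pr[\mathrm{span}(\Xi)\neq\mathbb{F}_q^m]$, and your bound on the right-hand side suffices. In the paper's intended reading, ``$\mathbb{Z}$-basis of $\mathcal{I}_q$'' really means a minimal $\mathbb{Z}$-module generating set, which for prime $q$ is exactly an $\mathbb{F}_q$-basis, so the two notions coincide anyway.
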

This lemma is a straightforward adaptation of Corollary 3.16 of \cite{regev_lattices_2009}.

\subsection{Lattice Problems}\label{latticeproblems}
Computational problems on lattices represent the foundations of the security of (R)LWE, and will do so for our Cyclic LWE as well. The standard lattice problems are as follows.
\begin{definition}
Let $\Vert \cdot \Vert$ be some norm on $\mathbb{R}^n$ and let $\xi \geq 1$. Then the approximate Shortest Vector Problem (SVP$_\xi$) on input a lattice $\mathcal{L}$ is to find some non-zero vector $\textbf{x}$ such that $\Vert \textbf{x} \Vert \leq \xi \cdot \lambda_1(\mathcal{L})$.
\end{definition}
\begin{definition}
Let $\Vert \cdot \Vert$ be some norm on $\mathbb{R}^n$ and let $\xi \geq 1$. Then the (approximate) Shortest Independent Vectors Problem (SIVP$_\xi$) on input a lattice $\mathcal{L}$ is to find $n$ linearly independent non-zero vectors $\textbf{x}_1,...,\textbf{x}_n$ such that $\max_{i}(\Vert \textbf{x}_i \Vert) \leq \xi \cdot \lambda_n(\mathcal{L})$.
\end{definition}
\begin{definition}
Let $\Vert \cdot \Vert$ be some norm on $\mathbb{R}^n$, let $\mathcal{L}$ be a lattice, and let $d < \lambda_1(\mathcal{L})/2$. Then the Bounded Distance Decoding problem (BDD$_{\mathcal{L}, d}$) on input $\textbf{y} = \textbf{x} + \textbf{e}$ for $\textbf{x} \in \mathcal{L}$ and $\Vert \textbf{e} \Vert \leq d$ is to compute $\textbf{x}$, or equivalently $\textbf{e}$.
\end{definition}

The above problems are all well investigated, and believed to be sufficiently hard to base post-quantum cryptographic security on; there are no known algorithms for any of these problems (for suitable parameters) running in polynomial time in dimension $n$.

Unfortunately, these problems are not directly suitable for CLWE, where we will be interested in their adaptations to lattices generated by order ideals, similarly to how ideal lattices are used the ring case. Specifically we have the same problems on lattices that they induce under the map $\sigma_\mathcal{A}(\cdot)$. So, SVP becomes:
\begin{definition}
Let $\mathcal{A}$ be a cyclic algebra, let $\mathcal{I}$ be some (possibly fractional) ideal of the natural order $\Lambda$. Then, for an approximation factor $\xi \geq 1$, the $\mathcal{A}$-SVP$_\xi$ is to find a non-zero element $a \in \mathcal{I}$ such that $\vert a \vert := \Vert \sigma_\mathcal{A}(a) \Vert_2 \leq \xi \cdot \lambda_1(\mathcal{I})$, where as usual $\lambda_1(\mathcal{I})$ denotes the minimal length of elements of $\mathcal{I}$ in the given norm.
\end{definition}
\begin{remark}
When we use these problems in our security reductions, we will assume that the ideals are in fact \textit{integral} ideals (e.g. we exclude fractional ideals). Observe that this may be done without loss of generality, since solving the $\mathcal{A}$-SVP problem on the fractional ideal $\mathcal{I}$ may be done by solving it on the integral ideal $c \mathcal{I}$ (where $c \in K$ is the element such that $c\mathcal{I}$ is integral) and rescaling the solution.
\end{remark}
Essentially we have a specialized version of the SVP problem; we must find an element of $\mathcal{I}$ with minimal norm (up to approximation factor) in the ideal $\mathcal{I}$. The extension of SIVP to $\mathcal{A}$-SIVP is analogous, but since we consider our objects as $\mathbb{Z}$-lattices we require the independent `vectors' $a_1,...,a_r$ to be linearly independent over $\mathbb{Z}$. For BDD, we need a suitable ambient space, and use the following definition.
\begin{definition}
Let $\mathcal{A}$ be a cyclic algebra, let $\mathcal{I}$ be some (possibly fractional) ideal of a maximal $\mathbb{Z}$-order $\Lambda$, and let $\delta < \lambda_1(\mathcal{I})/2$. Then the $\mathcal{A}$-BDD$_{\mathcal{I}, \delta}$ problem, on input $y = x + e$ for $x \in \mathcal{I}$ and $e \in \bigoplus_{i = 0}^{d-1} u^i L_{\mathbb{R}}$ satisfying $\vert e \vert \leq \delta$, is to compute $x$.
\end{definition}

\subsection{The Learning With Errors Problem}
We will briefly recall the initial Learning With Errors (LWE) problem here; in \cref{sec3} we will extend it to cyclic algebras. The problem comes in two forms; search and decision, both of which are based on the LWE distribution. Let $n$ and $q$ be positive integers, and let $\alpha > 0$ be some error parameter. Define $\mathbb{T} := \mathbb{R}/ \mathbb{Z}$, the unit torus.
\begin{definition}
For a secret $\textbf{s} \in \mathbb{Z}_q^n$, a sample $(\textbf{a}, b) \leftarrow A_{\textbf{s}, \alpha}$ is taken by sampling a uniformly random vector $\textbf{a} \in \mathbb{Z}_q^n$ and $e \leftarrow D_\alpha$ and outputting $(\textbf{a},b) = (\textbf{a}, \langle \textbf{a}, \textbf{s} \rangle /q + e \mod \mathbb{Z})$.
\end{definition}
Given the above distribution, the LWE problem comes in two forms.
\begin{definition}
The search LWE problem is to recover $\textbf{s}$ from a collection of samples $A_{\textbf{s}, \alpha}$. The decision LWE problem on input a collection of samples on $\mathbb{Z}_q^n \times \mathbb{T}$ is to decide whether they are uniform samples or were taken from $A_{\textbf{s}, \alpha}$ for some secret $\textbf{s}$, providing the samples were taken from one of these distributions.
\end{definition}
Typically, the number of samples provided in each of these problems depends on the application. Since the decision problems has a probabilistic element, we will be interested in the advantage of the algorithms that solve it, which is defined as the difference between their acceptance probabilities on samples from an LWE distribution $A_{\textbf{s}, \alpha}$ and the uniform distribution. In practice, the decision problem is of more interest in cryptography.

We will not define the popular extensions of these problems to number fields or modules, known as Ring-LWE and Module-LWE, but the unfamiliar reader may find details in \cite{lyubashevsky_ideal_2010} and \cite{langlois_worst-case_2015} respectively, both of which we reference frequently in this work.


\section{The CLWE Problem}\label{sec3}
In this section we present the general definition of CLWE together with justifications for choices made in the definition, as well as constructions of specific algebras to use. We will save the security properties for \cref{sec4}.
\begin{definition}
Let $L/K$ be a Galois extension of number fields of dimension $[L : K] = d$, $[K: \mathbb{Q}] = n$ with cyclic Galois group generated by $\theta (\cdot)$. Let $\mathcal{A} := (L/K, \theta, \gamma)$ be the resulting cyclic algebra with center $K$  and invariant $u$ with $u^d = \gamma \in \mathcal{O}_K$. Let $\Lambda$ be an order of $\mathcal{A}$. For an error distribution $\psi$ over $\bigoplus_{i = 0}^{d-1} u^i L_{\mathbb{R}}$, an integer modulus $q \geq 2$, and a secret $s \in \Lambda^\vee_q$, a sample from the CLWE distribution $\Pi_{q, s, \psi}$  is obtained by sampling $a \leftarrow \Lambda_q$ uniformly at random, $e \leftarrow \psi$, and outputting $(a,b) =(a, (a \cdot s)/q + e \mod \Lambda^\vee) \in (\Lambda_q, \bigoplus_{i = 0}^{d-1} u^i L_{\mathbb{R}})/\Lambda^\vee$.
\end{definition}
\begin{remark}
Unlike in commutative spaces, the order of multiplication of $a$ and $s$ is important; our choice is $(a \cdot s)$, but similar security properties would hold if one took $(s \cdot a)$ instead. Also observe that our modulo reduction in the second coordinate of the pair is well defined, since $(a \cdot s) \in \Lambda^\vee_q$.
\end{remark}
As usual, the associated CLWE problem will come in search and decision variants.
\begin{definition}
Let $\Pi_{q,s, \psi}$ be a CLWE distribution for parameters $q \geq 2$, $s \in \Lambda^\vee_q$, and error distribution $\psi$. Then, the search CLWE problem, which we denote by CLWE$_{q, s, \psi}$, is to recover $s \in \Lambda^\vee_q$ from a collection of independent samples from $\Pi_{q,s, \psi}$.
\end{definition}
We do not state the number of samples allowed for this (or the next) problem, as typically it depends on the application.
\begin{definition}
Let $\Upsilon$ be some distribution on a family of error distributions over $\bigoplus_{i = 0}^{d-1} u^i L_{\mathbb{R}}$ and $U_\Lambda$ denote the uniform distribution on $(\Lambda_q, (\bigoplus_{i = 0}^{d-1} u^i L_{\mathbb{R}})/\Lambda^\vee)$. Then, the decision CLWE problem, written D-CLWE$_{q, \Upsilon}$, is on input a collection of independent samples from either $\Pi_{q, s, \psi}$ for a random choice of $(s, \psi) \leftarrow U(\Lambda^\vee_q) \times \Upsilon$ or from $U_\Lambda$, to decide which is the case with non-negligible advantage.
\end{definition}

\subsection{Discussions}\label{discussions}

\subsubsection{Relation to Module-LWE}
First, we explain why we choose the order of multiplication $a \cdot s$. As discussed in the introduction, the transformation from a (primal) RLWE sample to $n$ related LWE samples provides our motivation. Here, one RLWE sample $a \cdot s + e$, where $a,s,e \in R_q \cong \frac{\mathbb{Z}_q[x]}{x^n+1}$, generates $n$ LWE samples by considering the multiplication operation as $A\textbf{s} + \textbf{e}$, where $A :=$ rot$(a)$ is a negacyclic matrix. For appropriate choices of error distributions, this is precisely $n$ LWE samples with the exception that there is some structure in the matrix $A$. By ordering the multiplication $a \cdot s$, we get a similar transform from CLWE to MLWE. Assuming for now that we have a discretized form of CLWE, and observing that for $q \in \mathbb{Z}$ we have $\Lambda_q \cong \bigoplus_{i=0}^{d-1} u^i \mathcal{O}_L/q \mathcal{O}_L$ (see \cite{oggier_quotients_2012-4}), we transform a CLWE sample $a \cdot s + e$ into matrix-vector form to get $\phi(a) \cdot \textbf{s} + \textbf{e}$, where $\textbf{s}$ and $\textbf{e}$ are vectors of dimension $d$ over $\mathcal{O}_L/q \mathcal{O}_L$. Setting $A = \phi(a)$, one can see that for appropriate choices of error distribution this is similar to $d$ samples from the MLWE distribution with some additional structure in the matrix $A$, as intended.

\subsubsection{The Natural Order vs. Maximal Order}
We consider $\Lambda$ the natural order or a maximal order. The natural order is simple to construct and represent, whereas finding a maximal order is computationally slow. Additionally, the natural order is somewhat orthogonal, in the sense that it has the same span in each $u^i$ coordinate independently of the other coordinates. This is advantageous when considering the relation to MLWE, where the module is always taken to be the full module $\mathcal{O}_K^d$.

As mentioned above, two-sided ideals in a maximal order form a free abelian group, which is not necessarily the case in the natural order. Further, as lattices, a maximal order gives denser sphere packing than the natural order, since the latter is a sublattice. Fortunately, we will construct in \cref{goodalgebras} cyclic algebras whose natural order is also maximal, thus enjoying both the simplicity of the natural order and the convenience of a maximal order.

\begin{example}
Quaternion algebra over $\mathbb{Q}$ is defined by $\mathbb{H} = \left\{ x + yj: x, y \in \mathbb{Q}(i)\right\}$, with the usual relations $i^2=j^2=-1$ and $ij = -ji$. It can be seen as a cyclic division algebra $(\mathbb{Q}(i)/\mathbb{Q}, \overline{(\cdot)}, -1)$ where $\overline{(\cdot)}$ denotes the complex conjugate and $-1$ is a non-norm element. A quaternion has matrix representation
$$\left(\begin{array}{cc} x & -\overline{y} \\ y & \overline{x} \end{array}\right).$$

The \textit{Lipschitz integers} $\mathcal{L} \subset \mathbb{H}$ form the (non-maximal) natural order $\mathcal{L}=\left\{x+yj : x, y \in \mathbb{Z}[i]\right\}.$
The maximal Hurwitz order is given by
$$\mathcal{H} = \left\{a+bi+cj+d(-1+i+j+ij)/2 : a, b, c, d \in \mathbb{Z}\right\}.$$ It is easy to check that, as $\mathbb{Z}$-lattices of dimension $4$, the Lipschitz order is a sublattice of the Hurwitz order, of index $2$.
\end{example}

\subsubsection{A Pair of Number Fields}
In MLWE, we are free to choose the dimension of our module over the underlying number field $K$. However, in the cyclic algebra case we are restricted to cases where we can find $L,K$, and $\gamma$ such that $\mathcal{A} = (L/K, \theta, \gamma)$ is well defined. From a theoretical standpoint it is not immediately clear whether we want to consider asymptotic security in terms of $n$ or $d$, but following our motivation from MLWE we suggest that $n$ is likely the suitable choice since the module dimension $d$ is typically small in applications using MLWE, whereas the dimension of the underlying field $K$ is large. However, there seems to be no a priori reason why with the right techniques one could not consider both $n$ and $d$ asymptotically; the only case a cyclic algebra precludes is high dimensional MLWE over a low dimension number field $L$, because the parameter $d$ occurs in both the module and field dimension.

\subsection{Evading BCV Style Attacks}\label{BCWappendix}
In our CLWE construction we have enforced that $\gamma$ is selected so that $\mathcal{A}$ is a division algebra. We do this to avoid attacks in the style of \cite{bootland_security_2018} on the $m$-RLWE protocol. For $m = 2$, the $m$-RLWE protocol of \cite{pedrouzo-ulloa_ring_2016} can be considered as a structured variant of MLWE, where the matrix $A$ in the operation $A\textbf{s} + \textbf{e}$ is a negacyclic matrix over some ring $R_q$. More explicitly, $2$-RLWE considers the tensor product of two fields $K = K_1 \otimes K_2$ and runs the LWE assumption in the ring of integers $R_q$. The example use case given in \cite{pedrouzo-ulloa_ring_2016} considers power-of-two cyclotomics $K_1, K_2$ defined by the polynomials $x^{k_1} + 1$ and $y^{k_2} + 1$ respectively, claiming that the resulting problem in $R_q = \frac{\mathbb{Z}_q[x,y]}{(x^{k_1}+1, y^{k_2} +1)}$ effectively corresponds to an RLWE problem of dimension $k_1 \cdot k_2$ due to an obvious homomorphism between $K$ and the two-power cyclotomic field $L$ of degree $k_1 \cdot k_2$. The problem also represents a structured MLWE instance over $\frac{\mathbb{Z}_q[x]}{(x^{k_1}+1)}$ of dimension $k_2$.

However, the observation of \cite{bootland_security_2018} is that there is a smaller field $K'$ containing $K_1$ such that there is a homomorphism from $K$ into $K'$ with a well defined image for $y$. This is because the roots of distinct two-power cyclotomic polynomials are algebraically related. For example, in the case $k_1 = 8, k_2 = 4$, it is clear that the map taking $y$ to $x^2$ and fixing $K_1$ is a well defined homomorphism from $K$ to $K_1$. Using this homomorphism, \cite{bootland_security_2018} simplifies the problem of solving one $2$-RLWE instance by considering it as four RLWE instances in dimension $k_1$ rather than one instance in dimension $k_1 \cdot k_2$, essentially removing the module dimension $k_2$ from the problem.

We argue that the non-norm condition of $\gamma$ precludes the existence of a homomorphism removing the module structure by taking a well defined cyclic algebra $\mathcal{A} = (L/K, \theta, \gamma)$ to a smaller subfield containing $K$. We restrict our search to maximal subfields of $\mathcal{A}$, since any subfield is contained in at least one maximal subfield. It is a well known result on division algebras that any maximal subfield $E$ of $\mathcal{A}$ contains $K$ and satisfies $[E:K] = d$, and that in the case of a cyclic division algebra $\mathcal{A}$ there is a choice of $u' \in \mathcal{A}$ such that the cyclic algebra $\mathcal{A}' := \bigoplus_j u'^j E$ is isomorphic to $\mathcal{A}$ (see Section 15.1, Proposition a of \cite{pierce_cyclic_1982-1}). Assume, for a contradiction, that we had such a homomorphism $\chi : \mathcal{A} \rightarrow L$, where without loss of generality we assume the maximal subfield is $L$ by the aforementioned proposition. Since $L$ is Galois, the restriction of $\chi$ to $L$ is an automorphism of $L$. It is clear that $\chi$ must agree on conjugates, since $\chi(u) \cdot \chi(\ell) = \chi(u \cdot \ell) = \chi(\theta(\ell) \cdot u) = \chi(u) \cdot \chi(\theta(\ell))$ for any $\ell \in L$. However, this contradicts $\chi$ being injective on $L$ and it follows that no such homomorphism exists. Hence we conclude that the attack style of \cite{bootland_security_2018} does not threaten our algebraic structure.

On the other hand, Appendix \ref{badgamma} shows that if $\gamma$ violates the non-norm condition, then those instances of the CLWE problem are potentially vulnerable. To sum up, the non-norm condition is crucial to the hardness of the CLWE problem.

\subsection{Concrete Algebras for CLWE}\label{algebras}

In order to apply the CLWE assumption in a practical cryptosystem one must choose a concrete algebra as an ambient space. More generally, we are interested in finding families of algebras suitable for CLWE that allow for asymptotic analysis and varied security levels. Our search for algebras is motivated by the restrictions and conditions discussed in the previous section. In particular, we are interested in cyclic division algebras satisfying the following properties:
\begin{itemize}
\item The non-norm element $\gamma$ must lie in $\mathcal{O}_K$ to keep the natural order closed under multiplication, and should satisfy $\vert \gamma \vert = 1$ in order to maintain both the coordinatewise independence and sub-multiplicative properties of the norm\footnote{We abbreviate the condition $\vert \sigma_i(\gamma) \vert=1$ for all $i$ by $\vert \gamma \vert = 1$, since in fact these are equivalent for algebraic $\gamma$.}.
\item The dimension $n:= [K: \mathbb{Q}]$ of the division algebra should be large and the degree $d:= [L:K]$ should be small. This is to maintain the analogy with structured MLWE (the degree corresponds to the module rank) and follows from the search-decision reduction, which takes time polynomial in $n$ but not in $d$.
\item The base field $K$ should be cyclotomic and $q$ should split completely in $K$. This is also a result of the methodology of the search-decision reduction, which uses the well understood factorization of $\langle q \rangle$ in $\mathcal{O}_K$. In addition, since the bulk of lattice based cryptography is done over cyclotomic fields, we consider algebras which are small extensions of these as somewhat natural. We observe that an improved proof of decision security may allow this point to be dropped, whereas the other two points feel more integral.
\end{itemize}
Although significant effort has been expended by coding theorists to construct cyclic division algebras satisfying a variety of conditions, such as in \cite{vehkalahti_densest_2009}  or \cite{lahtonen_construction_2008}, we find ourselves with a fairly unique set of restrictions. In particular, for reasons relating to desired applications, the majority of algebras used in coding theory are either of small total dimension or have small $[K:\mathbb{Q}]$ and scale asymptotically in $[L:K]$. Since we are interested in scaling up $K$ asymptotically, we will have to build novel algebras satisfying the above requirements ourselves. We will, however, make heavy use of the following theorem as an intermediate step. Here $\zeta_m$ denotes a primitive $m^\text{th}$ root of unity where $\varphi(m)=n$ is the degree of the base field $K = \mathbb{Q}(\zeta_m)$.

\begin{theorem}[\cite{lahtonen_construction_2008}]\label{lahtonenalgebras}
Let $m = p^a$ be a prime power and let $K = \mathbb{Q}(\zeta_m)$. Then, there exist infinitely many cyclic Galois extensions $L/K$ of degree $m$ such that $\zeta_m^i$ is not a norm of $L/K$ for $0 <i < m$.
\end{theorem}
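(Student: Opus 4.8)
The plan is to build the extension $L$ as the compositum of $K=\mathbb{Q}(\zeta_m)$ with a carefully chosen cyclic degree-$m$ extension of $\mathbb{Q}$, and then to verify the non-norm condition prime by prime. First I would fix an auxiliary rational prime $\ell$ with $\ell\equiv 1\pmod m$ (there are infinitely many by Dirichlet), so that $\mathbb{Q}(\zeta_\ell)$ contains a unique subfield $F_\ell/\mathbb{Q}$ that is cyclic of degree $m$, totally real, and ramified only at $\ell$. Set $L=K\cdot F_\ell$. Since $[F_\ell:\mathbb{Q}]=m=\varphi(m)$ and $F_\ell\cap K=\mathbb{Q}$ (their ramification sets are disjoint — $K$ ramifies only at $p$, $F_\ell$ only at $\ell$), the compositum has $[L:K]=m$ and $\mathrm{Gal}(L/K)\cong\mathrm{Gal}(F_\ell/\mathbb{Q})$ is cyclic of degree $m$, generated by a $\theta$ lifting a generator of $\mathrm{Gal}(F_\ell/\mathbb{Q})$. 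Varying $\ell$ over the infinitely many admissible primes yields infinitely many such $L$ (they are pairwise distinct because they have distinct ramification).

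Next I would check that $\zeta_m^i$ is not a norm from $L$ for $0<i<m$. By Proposition~\ref{non-norm} it suffices to show each $\zeta_m^i$, $0<i<m$, fails to be a relative norm; equivalently, by the Hasse norm theorem applied to the cyclic extension $L/K$, it suffices to exhibit one place $v$ of $K$ at which $\zeta_m^i$ is not a local norm. The natural candidate is a place $v$ above $p$: since $K=\mathbb{Q}(\zeta_{p^a})$, the prime $p$ is totally ramified in $K$, and $\zeta_m$ is, up to units, a uniformizer-adjacent element whose valuation-theoretic behaviour is rigid. Concretely, $1-\zeta_m$ generates the unique prime above $p$ in $\mathcal{O}_K$, and $\zeta_m$ itself is a unit; so I would instead track the \emph{local extension degree} $[L_w:K_v]$ at a place $w\mid v\mid p$. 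Because $F_\ell/\mathbb{Q}$ is unramified at $p$ with residue degree equal to the order of $p$ in $(\mathbb{Z}/\ell\mathbb{Z})^\times/\!\!\pmod{\text{index }m}$, I can choose $\ell$ so that $p$ is inert in $F_\ell$, hence $L_w/K_v$ is unramified of degree $m$; then the norm group $N(L_w^\times)$ has index $m$ in $K_v^\times$ and is exactly $\{x\in K_v^\times : v(x)\in m\mathbb{Z}\}\cdot(\text{units that are norms})$. Since $\zeta_m$ is a root of unity in $K_v$, the crux is a local unit computation showing the image of $\zeta_m$ in $K_v^\times/N(L_w^\times)\cong\mathbb{Z}/m\mathbb{Z}$ is a generator, so that $\zeta_m^i$ is a norm only when $m\mid i$.

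The main obstacle is precisely this last local computation: controlling the image of the root of unity $\zeta_m$ in the local norm-residue group at $p$. For an unramified extension the norm map on units is surjective, so one cannot detect $\zeta_m$ (a unit) by the unramified place alone — this forces a different choice of witnessing place, or a more delicate argument. The cleaner route, which I would ultimately pursue, is to impose instead that $\ell$ be chosen so that a prime $\mathfrak{l}$ of $K$ above $\ell$ is \emph{totally ramified} of degree $m$ in $L$ (this happens because $F_\ell/\mathbb{Q}$ is totally ramified at $\ell$, and $\ell$ is unramified in $K$). At such a place, $L_w/K_v$ is totally tamely ramified of degree $m$, and local class field theory gives $K_v^\times/N(L_w^\times)\cong\mathbb{Z}/m\mathbb{Z}$ with the isomorphism \emph{detecting valuation modulo $m$}: an element is a norm iff its valuation is divisible by $m$. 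Since $\zeta_m$ is a unit at $\mathfrak{l}$ this still does not separate the $\zeta_m^i$ — so in fact the witnessing place must be the tame place $\ell$ together with a reciprocity (product formula) argument: the idele $\zeta_m$ is a global norm iff the sum of its local norm-residue symbols vanishes, and one arranges the local symbols at $p$ and at $\ell$ so that their contributions combine to $i\bmod m$. Executing this product-formula bookkeeping cleanly — in particular pinning down the tame symbol of $\zeta_m$ at $\ell$ as a primitive $m$-th root of unity in $\mathbb{Z}/m\mathbb{Z}$ via the choice of $\ell$ — is the technical heart of the proof, and the rest (degrees, Galois structure, infinitude) is routine.
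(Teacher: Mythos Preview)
Your construction of $L$ as the compositum $K\cdot F_\ell$ with $F_\ell$ the degree-$m$ subfield of $\mathbb{Q}(\zeta_\ell)$ is exactly the paper's construction (the paper writes $q'$ for your $\ell$ and describes $L$ as the fixed field of $\sigma^m$ inside $K(\zeta_{q'})$, which is the same thing). However, the paper imposes the sharper condition $\ell\equiv 1\pmod{p^a}$ \emph{and} $\ell\not\equiv 1\pmod{p^{a+1}}$, i.e.\ $p^a\parallel(\ell-1)$, and this extra condition is precisely what you are missing to close the argument. (Incidentally, ``$[F_\ell:\mathbb{Q}]=m=\varphi(m)$'' is a slip; $m\neq\varphi(m)$ in general, but this does not affect the construction.)

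The genuine gap is in your local computation at the totally tamely ramified place above $\ell$. You assert that for such an extension the norm group is detected by valuation modulo $m$; this is backwards. For a totally ramified extension $L_w/K_v$ one has $v(N_{L_w/K_v}(\pi_w))=1$, so uniformizers \emph{are} norms and the quotient $K_v^\times/N(L_w^\times)\cong\mathbb{Z}/m\mathbb{Z}$ is carried entirely by the unit part. Concretely, since the extension is tame and the residue field is $\mathbb{F}_\ell$, a unit $u$ is a local norm iff its residue class lies in $(\mathbb{F}_\ell^\times)^m$. Now $\zeta_m$ reduces to a primitive $m$-th root of unity in $\mathbb{F}_\ell^\times$, and this is an $m$-th power iff $m\mid(\ell-1)/m$, i.e.\ iff $p^{2a}\mid(\ell-1)$. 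The condition $p^a\parallel(\ell-1)$ forces $(\ell-1)/m$ to be coprime to $p$, so $\zeta_m^i$ is a local norm at $\mathfrak{l}$ iff $m\mid i$. This single local obstruction, plus the Hasse norm theorem, finishes the proof; no product-formula bookkeeping is needed. Your proposed fallback of combining symbols at $p$ and at $\ell$ cannot work either: at every place not above $\ell$ the extension $L/K$ is unramified (or complex archimedean), so $\zeta_m$, being a unit, is automatically a local norm there and contributes trivially.
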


We remark that the theorem is effective in the sense that it provides an explicit description of $L$, and we provide a summary of the recipe for constructing $L$. The crucial aspect of its construction is that $L$ is a subfield of some cyclotomic extension of $K$, $K(\zeta_q')$ for a prime $q'$, but we present its full description for completeness.

First, find some prime $q'$ such that $q' = 1 \mod p^a$ but $q' \neq 1 \mod p^{a+1}$, so that $p^a$ is the highest power of $p$ dividing $q'-1$\footnote{It is easy to show that infinitely many primes satisfying this condition always exist by appealing to classical theorems of Chebotarev or Dirichlet.}. Set $M = K(\zeta_{q'})$ so that by coprimality $M = \mathbb{Q}(\zeta_{mq'})$. Then Gal$(M/K)$ is a cyclic group of order $q'-1$ generated by some automorphism $\sigma$. Denote by $L$ the subfield of $M$ fixed by $\sigma^m$. Then $[L:K] = m$ by the fundamental theorem of Galois theory and the extension is both cyclic and Galois. Finally, localization theory is used to show that the powers of $\zeta_m$ are not norms in this extension. In this way, the theorem constructs $L$ explicitly.

The part of this theorem of our interest is that it allows us to scale $K$ asymptotically, but this comes with a drawback of very high degree $L$, \emph{i.e.}, it only permits a degree-$m$ extension $L$ of a degree-$\varphi(m)$ base field $K$. We present a new method that uses this theorem as a starting point to construct good algebras satisfying our restrictions. More precisely, our construction will begin with \cref{lahtonenalgebras} and then use elementary methods from Galois theory to build more favourable fields.

\subsubsection{Constructions Using Subfields}\label{goodalgebras}
We squash the field $L$ from \cref{lahtonenalgebras} to a subfield $M$ of small index over the base $K$ satisfying the necessary properties to generate a cyclic algebra.
\begin{theorem}\label{primepoweralgebras}
Let $K = \mathbb{Q}(\zeta_m)$, where $\varphi(m)=n$, be a prime power cyclotomic with $m = p^a$ for some integer $a$ and prime $p$. Then, there exists a cyclic Galois extension $M/K$ of any index $d$ dividing $m$ within which $\zeta_m$ satisfies the non-norm condition.
\end{theorem}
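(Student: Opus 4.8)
The plan is to start from the field $L$ produced by \cref{lahtonenalgebras}, which is a cyclic Galois extension of $K=\mathbb{Q}(\zeta_m)$ of degree $m$ in which no $\zeta_m^i$, $0<i<m$, is a norm, and then to descend to an intermediate field $M$ with $K \subseteq M \subseteq L$ of the desired index $d \mid m$. Concretely, write $\mathrm{Gal}(L/K) = \langle \theta \rangle$, a cyclic group of order $m$. Since $d \mid m$, the subgroup $\langle \theta^d \rangle$ has order $m/d$, so its fixed field $M := L^{\langle \theta^d \rangle}$ satisfies $[M:K] = d$ by the fundamental theorem of Galois theory, and $\mathrm{Gal}(M/K) \cong \langle \theta \rangle / \langle \theta^d \rangle$ is cyclic of order $d$ — hence $M/K$ is a cyclic Galois extension of index $d$. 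Let $\bar\theta$ denote the generator of $\mathrm{Gal}(M/K)$ induced by $\theta$.

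The substantive part is to verify that $\zeta_m$ satisfies the non-norm condition for $M/K$, i.e. (by \cref{non-norm}) that $\zeta_m^t \notin N_{M/K}(M)$ for $1 \le t \le d-1$. The key observation is the tower formula for norms: for $x \in M$, regarding $x \in L$, one has
\begin{align*}
N_{L/K}(x) = N_{M/K}\bigl(N_{L/M}(x)\bigr) = N_{M/K}(x)^{[L:M]} = N_{M/K}(x)^{m/d},
\end{align*}
because $x$ is fixed by $\mathrm{Gal}(L/M) = \langle \theta^d \rangle$ and $N_{L/M}$ acts as raising to the power $[L:M]=m/d$ on such elements. Now suppose for contradiction that $\zeta_m^t = N_{M/K}(y)$ for some $y \in M$ and some $t$ with $1 \le t \le d-1$. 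Then $\zeta_m^{t\cdot(m/d)} = N_{M/K}(y)^{m/d} = N_{L/K}(y)$, so $\zeta_m^{t m/d}$ is a norm from $L/K$. Since $1 \le t \le d-1$ we have $1 \le tm/d \le m - m/d < m$, and as $m/d \ge 1$ the exponent $tm/d$ is a positive integer strictly less than $m$ — hence nonzero mod $m$. This contradicts the non-norm property of $L/K$ from \cref{lahtonenalgebras}. Therefore no such $t$ exists, and $\zeta_m$ satisfies the non-norm condition for $M/K$, so $\mathcal{A} = (M/K, \bar\theta, \zeta_m)$ is a cyclic division algebra by \cref{non-norm}.

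I expect the main obstacle to be the bookkeeping around the norm tower and, in particular, making sure the exponent $tm/d$ genuinely lands strictly between $0$ and $m$ (the edge case $d=1$ is vacuous, and one should note $tm/d$ need not be reduced mod $m$ since it is already $<m$). A secondary point worth a remark is that $\zeta_m \in \mathcal{O}_K$ and $|\sigma_i(\zeta_m)|=1$ for all embeddings, so $M$ inherits all the ambient-space properties required for CLWE; this is immediate since $\zeta_m$ is a root of unity, but it is the reason the same $\gamma=\zeta_m$ is reused rather than some other non-norm element. The constructions of \cref{lahtonenalgebras} additionally place $L$ (hence $M$) inside a cyclotomic field $K(\zeta_{q'})$, so $M$ remains an abelian — indeed cyclotomic-contained — extension of $\mathbb{Q}$, which is relevant for the later splitting-of-$q$ requirements but is not needed for the division-algebra claim itself.
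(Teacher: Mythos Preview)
Your proposal is correct and follows essentially the same approach as the paper: construct $M$ as the fixed field of $\langle\theta^d\rangle$ inside the degree-$m$ extension $L$ from \cref{lahtonenalgebras}, and use the norm tower $N_{L/K}(x)=N_{M/K}(x)^{m/d}$ for $x\in M$ to push a hypothetical $\zeta_m^t\in N_{M/K}(M)$ up to a forbidden $\zeta_m^{tm/d}\in N_{L/K}(L)$. The paper phrases the contradiction as $m\mid (m/d)t\Rightarrow d\mid t$, whereas you bound $1\le tm/d<m$ directly, but this is the same argument.
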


\begin{remark}
Since the proof will provide an explicit description of $M$, the correct interpretation of this theorem is that we can construct cyclic division algebras $\mathcal{A} = (M/K, \theta, \gamma)$ with $\langle \theta \rangle = \text{Gal}(M/K), \gamma = \zeta_m, K = \mathbb{Q}(\zeta_m),$ and $[M:K]$ is any divisor of $m = p^a$. Fig. \ref{fig:L-M-K} shows all possible cases of intermediate field $M$ between $K$ and $L$.
\end{remark}

\begin{proof}
Let $K= \mathbb{Q}(\zeta_m)$ for a fixed $m = p^a$ with prime $p$ and integer $a$. Following the construction of \cref{lahtonenalgebras} fix a cyclic Galois extension $L/K$ of degree $m$ such that  $\zeta_m^i$ is not a norm of an element of $L$ into $K$ for any $i = 1,2,\dots, m-1$. We will choose $M$ as a suitable intermediate extension $L/M/K$. Let $\sigma$ denote the generator of Gal$(L/K)$, an automorphism of degree $m$. For $d$ dividing $m$, $\sigma^{d}$ fixes an extension $M$ of $K$ with $[L:M] = \vert \text{Gal}(L/M) \vert = m/d$ and it follows from the tower lemma that $[M:K] = d$. We will show that $M$ is a satisfactory extension of $K$.

First, since Gal$(L/M)$ is a normal subgroup of Gal$(L/K)$ we see that $M/K$ is a normal, and hence Galois\footnote{Since in this case all extensions are separable.}, extension. It follows from standard Galois Theory that
\begin{align*}
\text{Gal}(M/K) \cong \text{Gal}(L/K)/\text{Gal}(L/M).
\end{align*}
Both groups in the quotient are cyclic, and so Gal$(M/K)$ is cyclic with some generator $\theta$. Furthermore, this isomorphism also allows us to deduce $\vert \text{Gal}(M/K) \vert = d$.

We've shown that $M/K$ is a cyclic Galois extension of degree $d$; we are left to show that $\zeta_m^i$ is not a norm for $i =1,\dots, d-1$. Let $\overline{L}$ denote $N_{L/K}(L^\times)$ and $\overline{M}$ denote $N_{M/K}(M^\times)$. Say $\zeta_m^i \in \overline{M}$, fixing $x \in M$ such that $N_{M/K}(x)= \zeta_m^i$. Now by transitivity of the norm,
\begin{align*}
N_{L/K}(x) &= N_{M/K}(N_{L/M}(x)) \\
&= N_{M/K}(x^{m/d}) \\
&= \zeta_m^{(m/d)i}
\end{align*}
where the first equality follows from $x \in M$ and the second since the norm is multiplicative. $\overline{L}$ does not contain any power of $\zeta_m$ except $\zeta_m^m = 1$ since $\zeta_m$ is a non-norm element in $L/K$, so it follows that $m \vert (m/d)i$ and so $d \vert i$. From this we conclude that $\zeta_m, \zeta_m^2,\dots,\zeta_m^{d-1}$ do not lie in $\overline{M}$ and so $\zeta_m$ satisfies the non-norm condition.
\end{proof}

\begin{figure}[htb]
               \centering
   \includegraphics[trim=60mm 30mm 110mm 53mm, clip, width=0.70\linewidth]{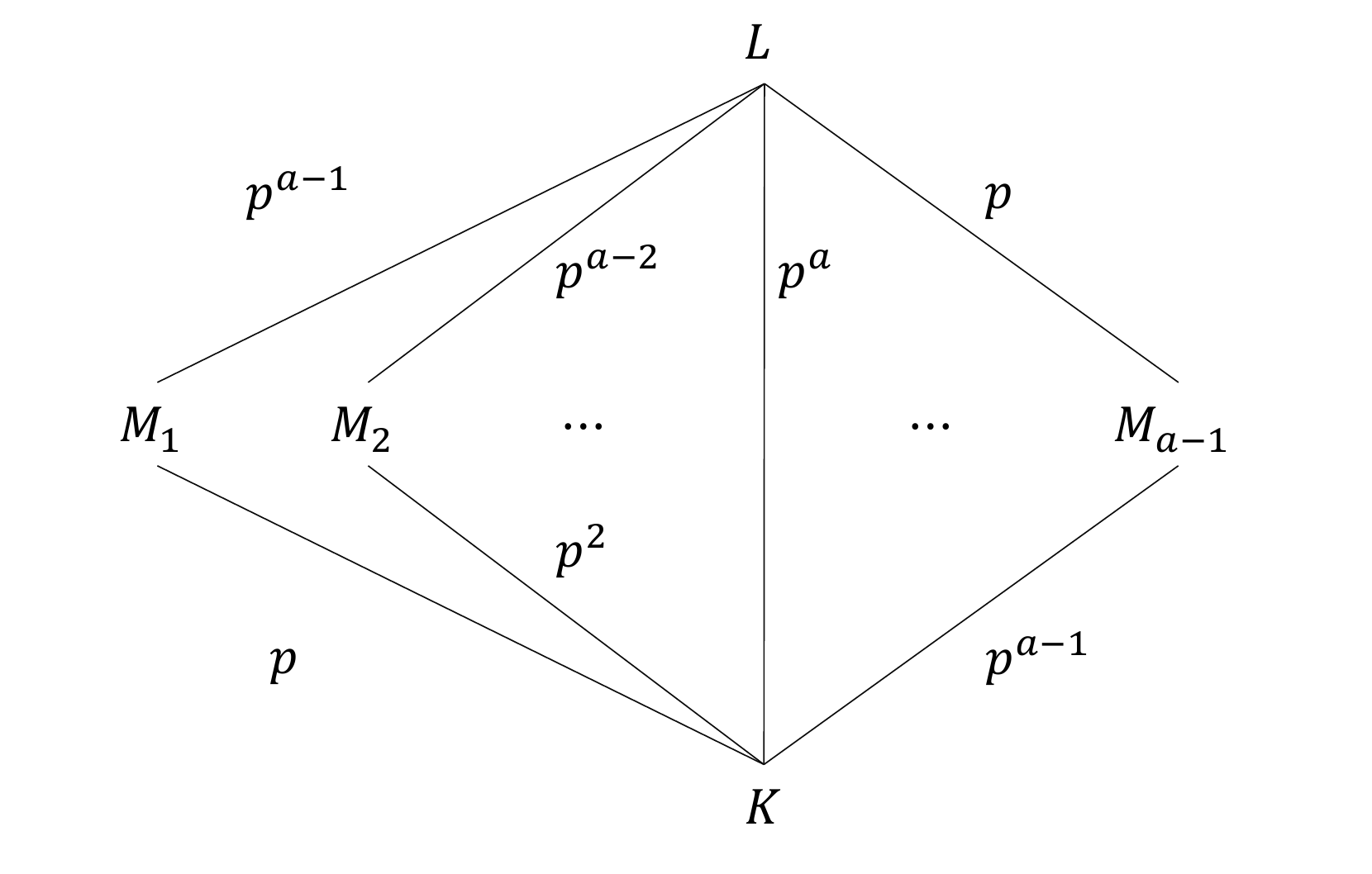}
   \caption{Cyclic subfields between $L$ and $K$.}\label{fig:L-M-K}
\end{figure}

\begin{remark}
We presented the proof in the above form for ease of legibility, but it is straightforward to extend the argument in the final paragraph to show that $\zeta_m^{jd+1}$ satisfies the non-norm condition for any $j =0,1,\dots, (m/d) -1$.
\end{remark}
This is an effective construction that allows us to build cyclic division algebras of the form $\mathcal{A} = (M/K, \theta, \gamma)$ where $\vert \gamma \vert = 1$, $K$ is an arbitrary prime power cyclotomic, and $M$ is an extension of $K$ with degree divisible by the prime $p$. For cryptographically relevant examples, we can consider degree $2$ or $4$ extensions of a $2$-power cyclotomic or degree $3$ extensions of a $3$-power cyclotomic. Given the impossibility result of Appendix \ref{impossiblealgebras} and the restriction on the absolute value of $\gamma$ we view these algebras as essentially the best possible, at least for the case where $K$ is a prime-power cyclotomic.

As discussed in \cref{discussions}, the natural order is not necessarily a maximal order. Nevertheless, the following theorem shows that the specific family of algebras we have constructed in \cref{goodalgebras} represents a lucky case (its proof is given in Appendix \ref{appendix:natural-maximal}).

\begin{theorem}\label{natural_is_maximal}
For the family of cyclic division algebras $\mathcal{A}=(M/K,\theta, \zeta_m)$ constructed in \cref{primepoweralgebras}, the natural order of $\mathcal{A}$ is maximal.
\end{theorem}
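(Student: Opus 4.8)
The plan is to show that the natural order $\Lambda := \bigoplus_{i=0}^{d-1}u^i\mathcal{O}_M$ is locally maximal at every finite prime of $\mathcal{O}_K$, and to do so by computing the discriminant of $\Lambda$ and matching it against the discriminant of a maximal order $\Gamma \supseteq \Lambda$; equality of these ideals forces the module index $[\Gamma:\Lambda]$ to be trivial, hence $\Lambda=\Gamma$. The first step is to read off the ramification of $M/K$ from its construction as a nested subfield $K\subseteq M\subseteq L\subseteq K(\zeta_{q'})=\mathbb{Q}(\zeta_{mq'})$, where $q'$ is the auxiliary prime with $v_p(q'-1)=a$. Since $q'\equiv 1\pmod m$ the prime $q'$ splits completely in $K$, say $q'\mathcal{O}_K=\mathfrak q_1\cdots\mathfrak q_n$ with each $\mathfrak q_j$ of residue degree $1$; since $\gcd(q',p)=1$ the prime $p$ is unramified in $K(\zeta_{q'})/K$, whereas each $\mathfrak q_j$ is totally (tamely) ramified there of index $q'-1$. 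Consequently $M/K$ is unramified outside $\{\mathfrak q_1,\dots,\mathfrak q_n\}$ and is totally tamely ramified of degree $d$ at each $\mathfrak q_j$, so its relative different is $\prod_j\mathfrak Q_j^{d-1}$ and its relative discriminant is $\mathfrak d_{M/K}=\prod_j\mathfrak q_j^{d-1}=(q')^{d-1}\mathcal{O}_K$.

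Second, I would compute $\text{disc}(\Lambda/\mathcal{O}_K)$ directly from the reduced trace form. With the $\mathcal{O}_K$-basis $\{u^ie_k\}$ of $\Lambda$ for an integral basis $\{e_k\}$ of $\mathcal{O}_M/\mathcal{O}_K$, the relations $e_ku^j=u^j\theta^j(e_k)$, $u^d=\zeta_m$, and the fact that $\text{Trd}(u^\ell z)=0$ for $z\in M$ whenever $\ell\not\equiv 0\pmod d$, show the Gram matrix is block anti-diagonal: the $(0,0)$ block is the trace-form Gram matrix of $\mathcal{O}_M/\mathcal{O}_K$, and the $(i,d-i)$ block ($1\le i\le d-1$) is that form twisted by $\zeta_m\theta^{-i}$. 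As $\theta$ restricts to a unimodular automorphism of $\mathcal{O}_M$ and multiplication by $\zeta_m$ scales a Gram determinant by the unit $N_{M/K}(\zeta_m)$, one obtains, as ideals, $\text{disc}(\Lambda/\mathcal{O}_K)=\mathfrak d_{M/K}^{\,d}=(q')^{d(d-1)}\mathcal{O}_K$. This is precisely where the assumption $|\gamma|=1$ (more precisely, $\gamma$ being a unit) enters.

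Third, I would compute $\text{disc}(\Gamma/\mathcal{O}_K)$ for a maximal order $\Gamma$, which by the theory of maximal orders equals $\prod_{\mathfrak p}\mathfrak p^{\,d(d-d/m_{\mathfrak p})}$ over the finite primes where $\mathcal{A}$ ramifies, $m_{\mathfrak p}$ being the local index. If $\mathfrak p\notin\{\mathfrak q_j\}$ then $M\otimes_KK_{\mathfrak p}$ is unramified and $\gamma$ is a unit, so $\mathcal{A}_{\mathfrak p}$ is split. At $\mathfrak q_j$ the local algebra is $(M_{\mathfrak Q_j}/K_{\mathfrak q_j},\theta,\zeta_m)$ with $M_{\mathfrak Q_j}/K_{\mathfrak q_j}$ cyclic, totally tamely ramified of degree $d$; its Hasse invariant is $\chi(\text{rec}(\zeta_m))$, where $\chi$ sends $\theta$ to $1/d$, and tame local reciprocity sends the unit $\zeta_m$ — whose reduction has order $m=p^a$ in $\mathbb{F}_{q'}^\times$ — to an element of order $p^a/\gcd(p^a,(q'-1)/d)=d$ in $\mathrm{Gal}(M_{\mathfrak Q_j}/K_{\mathfrak q_j})\cong\mathbb{Z}/d$. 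Here the hypothesis $v_p(q'-1)=a$ is exactly what makes this order equal to $d$. Hence $m_{\mathfrak q_j}=d$ for all $j$ and $\text{disc}(\Gamma/\mathcal{O}_K)=\prod_j\mathfrak q_j^{d(d-1)}=(q')^{d(d-1)}\mathcal{O}_K=\text{disc}(\Lambda/\mathcal{O}_K)$. Since $\text{disc}(\Lambda)=[\Gamma:\Lambda]^2\,\text{disc}(\Gamma)$ as ideals, the module index is trivial, so $\Lambda=\Gamma$ is maximal.

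I expect the main obstacle to be the local analysis at the ramified primes $\mathfrak q_j$: one must first correctly extract the totally, tamely ramified, degree-$d$ splitting type of $M/K$ there from the nested cyclotomic construction of \cref{lahtonenalgebras}, and then carry out the Hasse-invariant computation, which hinges on the sharp divisibility $v_p(q'-1)=a$ built into that construction. An alternative to the reciprocity step is to check directly that the local natural order $\bigoplus_i u^i\mathcal{O}_{M_{\mathfrak Q_j}}$ has residue algebra $\mathbb{F}_{q'^d}$ — a field of the largest possible degree, which already forces it to coincide with the valuation ring of the local division algebra — but this does not seem to simplify matters. Finally, the edge cases $m\in\{1,2\}$, where $K=\mathbb{Q}$ may carry a real place so that $\mathcal{A}$ could ramify at infinity, are immaterial here, since maximality of an order is a condition on the finite primes only.
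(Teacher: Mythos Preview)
Your proposal is correct and follows essentially the same route as the paper: compute the $\mathcal{O}_K$-discriminant of the natural order from the relative discriminant $\mathfrak d_{M/K}$ (using that $\gamma=\zeta_m$ is a unit), compute the discriminant of the algebra via the local-index formula, and conclude maximality from equality of the two ideals.

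The only genuine difference is how you establish $m_{\mathfrak q_j}=d$. The paper simply imports from the proof of \cref{lahtonenalgebras} that $\zeta_m$ is already a local non-norm in $L_{\mathfrak q_j}/K_{\mathfrak q_j}$, and then reuses the transitivity-of-norm descent of \cref{primepoweralgebras} verbatim in the completed fields to get the local non-norm property in $M_{\mathfrak q_j}/K_{\mathfrak q_j}$; this makes the local division property, and hence $m_{\mathfrak q_j}=d$, immediate without any explicit reciprocity computation. Your route via the Hasse invariant and tame local reciprocity is more self-contained and makes transparent exactly where the hypothesis $v_p(q'-1)=a$ is used, at the cost of needing the tame reciprocity description. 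Both buy the same conclusion; the paper's version is shorter because it piggybacks on work already done in the cited reference.
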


This makes our constructed family of algebras very attractive, as it enjoys both the simplicity of the natural order and the nice property of a maximal order.

\begin{remark}
In the context of multiblock space-time coding \cite{lahtonen_construction_2008}, the construction of \cref{lahtonenalgebras} allows for a space-time code for $m$ antennas and $\varphi(m)$ blocks, \textit{i.e.}, a relatively small number of blocks. With our new construction \cref{primepoweralgebras}, any number $\varphi(mk)$, $k \in \mathbb{N}$ of blocks becomes possible. Further, using a maximal order leads to optimum coding gains; it was not realized in \cite{lahtonen_construction_2008} that the natural order from \cref{lahtonenalgebras} is actually maximal.
\end{remark}

\subsection{Sample Parameters}\label{sampleparameters}

Now that we have discussed our techniques for constructing suitable number fields we proceed to demonstrate that these methods are able to attain cryptographically relevant dimensions. In this section, we present a small selection of proof-of-concept dimensions in Table \ref{Sample Cyclic Algebra Parameters} where we take our motivation for choices of dimension from KYBER and NewHope, since they are the successful second round NIST candidates whose methods are most similar to our own. Thus we aim for dimensions in the region of between $512$ and $1024$, dimensions proposed for both NewHope and KYBER (which also achieves dimension $768$). Of course, these schemes are restricted to having power-of-two ring dimension $n$ and so their choices of dimension may not be optimal in general, but FrodoKEM \cite{bos_frodo:_2016}, a plain LWE scheme, suggests dimensions in around the same range, specifically $640$, $976$, and $1344$, so we consider dimensions in this region a sensible starting point. Corresponding to KYBER and other MLWE based schemes we will set a small `module' rank $d:=[\mathcal{A}:L]$. We are constricted in our choice of fields by the fact that $d$ appears as a square in the total dimension $N = nd^2$, but for the most part we are able to work around this problem.
\begin{table}[ht!]
\begin{center}
\caption{Sample Parameters of Cyclic Algebras. The subfield method is given in \cref{algebras}, while the compositum method is given in Appendix \ref{appendix:compositum}.}
 \begin{tabular}{||c c c c c||}
 \hline
 Method & Center $K$ & $n=[K:\mathbb{Q}]$ & $d=[L:K]$ & Total Dimension $N=nd^2$ of $\mathcal{A}$ \\ [0.5ex]
 \hline\hline
 Subfield & $\mathbb{Q}(\zeta_{81})$ & 54 & 3 & 486 \\
 \hline
 Subfield & $\mathbb{Q}(\zeta_{256})$ & 128 & 2 & 512 \\
 \hline
 Subfield & $\mathbb{Q}(\zeta_{64})$ & 32 & 4 & 512 \\
 \hline
 Subfield & $\mathbb{Q}(\zeta_{512})$ & 256 & 2 & 1024 \\
 \hline
 Subfield & $\mathbb{Q}(\zeta_{128})$ & 64 & 4 & 1024 \\
 \hline
 Subfield & $\mathbb{Q}(\zeta_{243})$ & 162 & 3 & 1458 \\
 \hline
 Compositum & $\mathbb{Q}(\zeta_{192})$ & 64 & 3 & 576 \\
 \hline
 Compositum & $\mathbb{Q}(\zeta_{576})$ & 192 & 2 & 768 \\
 \hline
 Compositum & $\mathbb{Q}(\zeta_{384})$ & 128 & 3 & 1152 \\
 \hline
\end{tabular}

\label{Sample Cyclic Algebra Parameters}
\end{center}
\end{table}
\subsubsection{Two-Power Cyclotomic $K$}
We begin with straightforward cases where we can apply \cref{primepoweralgebras} immediately to obtain fields in suitable dimensions. Let $K$ be a two-power cyclotomic field, $K = \mathbb{Q}(\zeta_{2^k})$, with dimension $n := 2^{k-1}$. Since the rank $d = [L:K] = [\mathcal{A}:L]$ is a small power of two, the dimension $n$ of $K$ will be dictated by the choice of module rank $d$. We construct rank $2$ and $4$ examples as follows:
\begin{itemize}
\item For $d = 2$ we have $[\mathcal{A}:K] = 4$, so for total dimension $1024$ we set $K = \mathbb{Q}(\zeta_{512})$.
\item For $d =4$ we have $[\mathcal{A}:K] = 16$, so for total dimension $1024$ we set $K = \mathbb{Q}(\zeta_{128})$.
\end{itemize}
To obtain algebras in dimension $512$ simply pick $K$ with dimension $n/2$ e.g. $\mathbb{Q}(\zeta_{256})$ and $\mathbb{Q}(\zeta_{64})$ respectively. In all cases, \cref{primepoweralgebras} lets us pick the non-norm element $\gamma$ as a root of unity.
\subsubsection{Three-Power Cyclotomic $K$}
Since $3\nmid 1024$, one can not achieve algebras in dimension $1024$ with a $3$-power cyclotomic center and instead we set about searching for algebras of nearby dimensions. Although we are unable to build fields in this case with dimension around $1024$, we can get close to the more lightweight cryptographic dimension of $512$ used in schemes targeting a lower security level. Recall that if $K = \mathbb{Q}(\zeta_{3^k})$ then $K$ has dimension $n := \phi(3^k) = 2 \cdot 3^{k-1}$. Again, the module rank is a power of $3$ and the choice of module rank will define the choice of $n$.
\begin{itemize}
\item For $d=3$ we have $[\mathcal{A}:K] = 9$, so for total dimension $486$ we set $K = \mathbb{Q}(\zeta_{81})$. The next achievable dimension is $1458$, for which $K = \mathbb{Q}(\zeta_{243})$.
\item For $d = 9$ we have $[\mathcal{A}:K] = 81$. To achieve the same total dimensions we take small base fields $K = \mathbb{Q}(\zeta_9)$ and $\mathbb{Q}(\zeta_{27})$ respectively.
\end{itemize}

\subsubsection{Fields Using Compositum Techniques}
The algebras with prime-power cyclotomic centers of the previous subsections use the field construction technique of \cref{primepoweralgebras}, and as such they are restricted to algebras whose dimension $N$ is in the form $p^k(p-1)$ for a prime $p$ and integer $k$. In Appendix \ref{appendix:compositum}, we present another method of constructing algebras using compositum fields that allows us to target dimensions not achievable in this setting. The bottom three algebras of dimensions $576$, $768$ and $1152$ in Table \ref{Sample Cyclic Algebra Parameters} are obtained with this method.

\subsection{Extensions Where $q$ Splits Completely}\label{qsplitscompletely}
All suggested algebras in the previous section satisfy the conditions required for our chosen norm $\Vert \sigma_\mathcal{A}(x) \Vert_2$ to be well-defined. In particular, they have root of unity non-norm $\gamma$ and $K$ is cyclotomic. Because any $q = 1 \mod m$ splits completely in $\mathbb{Q}(\zeta_m)$, it is straightforward to find $q$ which splits completely in $\mathcal{O}_K$.

Later in this paper, in order to enable efficient multiplication algorithms, it will turn out that it is convenient to have a modulus $q$ that splits completely into a product of prime ideals in both $\mathcal{O}_K$ and $\mathcal{O}_L$. Recall \cref{3,homomorphism} also require $q$ be unramified in $L$. An appeal to Chebotarev's Density Theorem suggests that a proportion of $1/d$ of the primes $q$ that split completely in $K$ also do so in $L$. In cases where $d$ is small this suggests that finding such primes should not prove too arduous; but since cryptosystems require specific parameters rather than density arguments, we provide constructions satisfying the requisite conditions on $q$ in Appendix \ref{qsplits}.

\section{Security Proof}\label{security_proof}

The `standard' security reductions used in \cite{regev_lattices_2009} and \cite{lyubashevsky_ideal_2010} firstly reduce certain lattice problems to search LWE and RLWE, then establish hardness of the decision problem via a search-decision reduction. This proof follows a sequence of shorter reductions as shown in \cref{fig:Reductions}.

\begin{figure}[htb]
               \centering
   \includegraphics[trim=30mm 50mm 180mm 55mm, clip, width=0.70\linewidth]{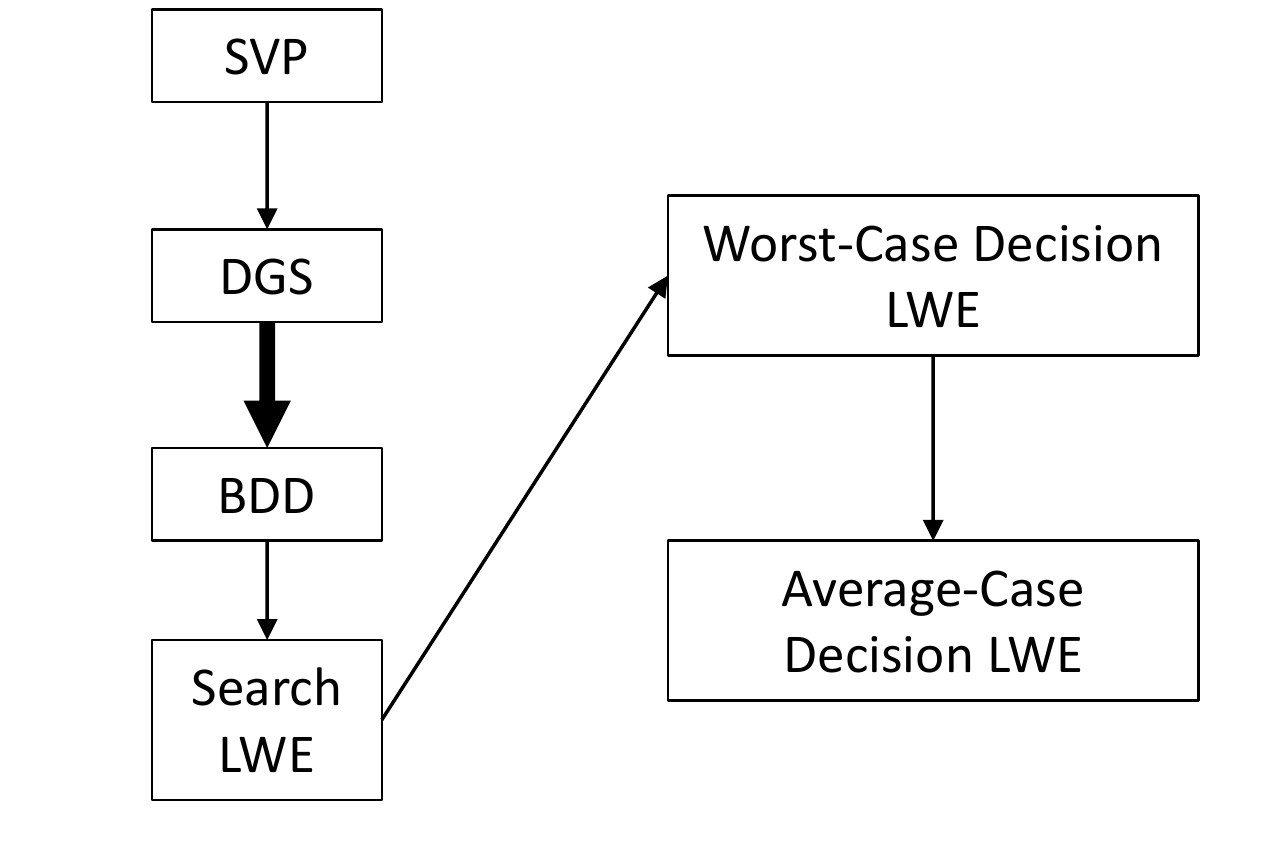}
   \caption{Reductions for LWE. The bold arrow denotes a quantum step.}\label{fig:Reductions}
\end{figure}

The reduction from the approximate SVP to the search LWE problem implies that search LWE is at least as hard as approximate SVP. It can be explained as follows: first, the approximate SVP is reduced to the problem of sampling a discrete Gaussian of narrow variance over a lattice, where intuitively sampling from a sufficiently narrow Gaussian should output a vector whose norm is reasonably short compared to the first minima. Then, a quantum algorithm reduces the problem of sampling from a narrow Gaussian to that of solving the BDD problem on the lattice. Finally, a transformation maps an instance of the BDD problem to an appropriate instance of the LWE problem, reducing the BDD problem to that of search LWE.

For applications in cryptography, the hardness of the decision problem is preferred to that of the search problem.
Assuming that the decision problem is hard implies that LWE samples are computationally indistinguishable from uniform, so intuitively an LWE sample can be used to hide a message $m$ as an element of $\mathbb{Z}_q^n$ by adding it to $b$.

Using similar machinery, we reduce a BDD problem to search CLWE using the same method as in \cite{lyubashevsky_ideal_2010}. The methodology of their search-decision reduction is an adaptation of that of Regev's, which relies on guessing each coordinate of the secret $\textbf{s}$ separately. The adaptation to the ring case instead guesses the coordinate of the secret ring element $s$ modulo a suitable collection of ideals $\mathfrak{p}_i$ such that guessing $s \mod \mathfrak{p}_i \mathcal{O}_K^\vee$ requires only a polynomial number of guesses, from which $s$ is recovered using the CRT. We apply a similar method in suitable subrings to deduce the hardness of our decision problem. The main technical novelty is to deal with non-commutativity in the proof.

For the remainder of this paper, we will always be working in an extension of number fields $L/K$, where $[L:\mathbb{Q}] = [L:K] \cdot [K: \mathbb{Q}] = d \cdot n$. Recall from the motivation of structured MLWE and the sample algebras given that in practice we seek asymptotic security in $n$, since the parameter $d$ corresponds to the typically small module dimension. 

\subsection{Hardness of Search CLWE}\label{sec4}

\begin{definition}
We define the family of error distributions $\Sigma_\alpha$ as the set of all Gaussian distributions $D_\Sigma$ over $\bigoplus_{i=0}^{d-1} u^i L_\mathbb{R}$ with covariance matrix obtained as the distribution of the error in \cref{bddtrans}.
\end{definition}
This is the family of error distributions we will claim hardness of search CLWE for; although specifying this family of matrices precisely is not simple, we demonstrate how the error is obtained in the BDD transformation step. For now, we remark that it is a Gaussian distribution whose marginals are Gaussian with variance at most $\alpha$.

In the following theorem we denote by $\mathcal{A}-$DGS$_\xi$ the problem of sampling a discrete Gaussian $D_{\mathcal{I}, \xi}$, where $\mathcal{I}$ is some ideal of the order $\Lambda$.

\begin{theorem}\label{mainresult}
Let $\mathcal{A}$ be a cyclic division algebra over a number field $L$ with center $K$ and natural, maximal order $\Lambda$ with $\vert \gamma \vert =1$. Let $\alpha = \alpha (n) \in (0,1)$ and $q = q(n) \geq 2$, unramified in $L$, be parameters such that $\alpha \cdot q \geq \omega(1)$. Then, there is a polynomial-time quantum reduction from $\mathcal{A}$-DGS$_\xi$ to search CLWE$_{q, \Sigma_\alpha}$ for any $\xi =   r \cdot \sqrt{d}\omega(\sqrt{\log{(d \cdot n)}})/ \alpha q$, where $r > \sqrt{2} q \cdot \eta_\varepsilon(\mathcal{I})$.
\end{theorem}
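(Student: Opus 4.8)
The plan is to replicate the quantum iterative reduction of Regev \cite{regev_lattices_2009}, as adapted to the ring setting in \cite{lyubashevsky_ideal_2010}, now carried out over the cyclic algebra $\mathcal{A}$ with its module representation $\sigma_\mathcal{A}$. The overall architecture is a loop: starting from a discrete Gaussian of large width over the ideal $\mathcal{I}$, one repeatedly uses a CLWE oracle together with a quantum Fourier-transform step to produce samples from a narrower Gaussian over $\mathcal{I}$, eventually reaching width $\xi$. Each pass through the loop has two halves. First, a \emph{classical} half: given samples from $D_{\mathcal{I},r}$ with $r > \sqrt{2}q\cdot\eta_\varepsilon(\mathcal{I})$, build CLWE samples whose secret encodes a BDD instance on the dual lattice, feed them to the search-CLWE oracle, and thereby solve $\mathcal{A}$-BDD$_{\mathcal{I}^\vee,\delta}$ for the appropriate decoding radius $\delta$. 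Second, a \emph{quantum} half (the bold arrow in \cref{fig:Reductions}): Regev's observation that an oracle solving BDD on $\mathcal{L}$ within radius $\delta$ can be turned, via creating a superposition over $\mathcal{L}^*$ with Gaussian amplitudes and applying the QFT, into a sampler for $D_{\mathcal{L}^*,\,c/\delta}$ for a suitable constant. Combining the two with $\mathcal{L}=\mathcal{I}^\vee$, $\mathcal{L}^*$ proportional to $\mathcal{I}$ (up to the codifferent, using $\mathcal{I}^\vee=\Lambda^\vee\cdot\mathcal{I}^{-1}$ and the fact that $\Lambda$ is maximal so these behave as expected), yields the width-reduction step; iterating $\mathrm{poly}(nd)$ times drives the width down to $\xi$. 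The base case — obtaining an initial wide Gaussian — is standard (e.g. from an LLL-reduced basis, or the bootstrapping argument in \cite{regev_lattices_2009}).

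The new content, and the part I would write carefully, is the classical half: the reduction \emph{from} $\mathcal{A}$-BDD$_{\mathcal{I}^\vee,\delta}$ \emph{to} search CLWE, which I expect the paper to isolate as a lemma (the \cref{bddtrans} referenced in the definition of $\Sigma_\alpha$). Given a BDD instance $y = x + e$ with $x\in\mathcal{I}^\vee$ and $\vert e\vert\le\delta$, and given access to discrete Gaussian samples $z\leftarrow D_{\mathcal{I},r}$, one forms CLWE samples roughly of the shape $(a,\ b) = \big(a,\ (a\cdot s)/q + e'\big)$ where $a$ is obtained by reducing $z$ modulo $q\mathcal{I}$ — here one invokes \cref{homomorphism} (the isomorphism $\chi_t$ between $\Lambda_q$ and $\mathcal{I}_q$, using the central element $t\in\mathcal{I}\cap\mathcal{O}_K$ with $t\cdot\mathcal{I}^{-1}$ coprime to $\langle q\rangle$ from \cref{3}) to get $a$ uniform in $\Lambda_q$, and the secret $s\in\Lambda^\vee_q$ encodes $x$ via the same central-multiplication trick. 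The error term $e'$ is built by combining $e$ with a small fresh continuous Gaussian; because $z$ is Gaussian and multiplication by $z$ rescales a spherical Gaussian coordinatewise in $H_L$ (the distributional identity $z\cdot D_{\mathbf r}=D_{\mathbf r'}$ with $r_i'=r_i\lvert\sigma_i(z)\rvert$, extended to $L^d$), the resulting $e'$ is Gaussian with covariance depending on $\sigma_\mathcal{A}(z)$ — and it is precisely this covariance that \emph{defines} the family $\Sigma_\alpha$. The condition $r>\sqrt 2 q\,\eta_\varepsilon(\mathcal{I})$ is what makes $a$ (equivalently $z \bmod q\mathcal{I}$) statistically close to uniform via \cref{1}, and the relation $\xi = r\sqrt d\,\omega(\sqrt{\log(dn)})/(\alpha q)$ is exactly the bookkeeping that propagates the smoothing/width constraints through one loop iteration, the $\sqrt d$ and $\log(dn)$ arising because our lattice lives in dimension $nd^2$ and we must tail-bound the Gaussian entries of $z$ across all $d$ blocks.

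The main obstacle is non-commutativity, and it surfaces in exactly two spots. First, in the classical half one must decide on which side $a$ multiplies $s$ and ensure the error-folding and the eventual recovery of $s\bmod\mathfrak p\mathcal{O}_K$ (later needed for the search-to-decision step, though that is a separate theorem) remain consistent; the fix is to keep all the "decoding" elements $t$, $x$ and the CRT multipliers \emph{central} (in $K$ or $\mathcal{O}_K$), so that $\chi_t$ is honestly a $\Lambda$-bimodule isomorphism and the identity $\phi(a)\mathrm{Vec}(s)=\mathrm{Vec}(as)$ can be used without ambiguity — this is why \cref{3,homomorphism} were stated with $t\in\mathcal{I}\cap\mathcal{O}_K$. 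Second, one needs $\mathcal{I}^\vee$ and its dual to interact correctly with the reduced trace $\mathrm{Tr}(a)=\mathrm{Tr}_{K/\mathbb{Q}}(\mathrm{Trace}(\phi(a)))$; here the maximality of the natural order (\cref{natural_is_maximal}) is essential, since it guarantees $\mathcal{I}\cdot\mathcal{I}^{-1}=\Lambda$, $(\mathcal{I}^{-1})^{-1}=\mathcal{I}$, and $\mathcal{I}^\vee=\Lambda^\vee\mathcal{I}^{-1}$, so that the quantum FFT step lands on a clean scalar multiple of $\mathcal{I}$ rather than some ill-behaved fractional ideal. Once these two points are handled, the remaining estimates — Gaussian tail bounds on $\sigma_\mathcal{A}(z)$, statistical-distance accounting via \cref{1,smoothing1}, and the $\mathrm{poly}(nd)$ iteration count — are routine adaptations of \cite{regev_lattices_2009,lyubashevsky_ideal_2010} with $n\rightsquigarrow nd^2$ (and occasionally $\rightsquigarrow nd$) substitutions.
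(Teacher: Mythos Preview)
Your proposal is correct and follows essentially the same approach as the paper: the iterative Regev-style loop (\cref{bootstrapping}), with the classical half being the BDD-to-CLWE transformation of \cref{bddtrans} using the central element $t$ from \cref{3,homomorphism}, and the quantum half being the standard BDD-oracle-to-Gaussian-sampler step (the paper's Lemma 10, adapted from \cite{langlois_worst-case_2015} with the $\|\cdot\|_{2,\infty}$ norm). Your identification of the two non-commutativity issues---keeping $t$ central so $\chi_t$ is a bimodule map, and using maximality of $\Lambda$ for the ideal arithmetic---matches exactly where the paper places the weight.
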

From this we deduce the following corollary, similarly to \cite{langlois_worst-case_2015}, since the lattice structure of our algebra is merely a special case of their modules. We denote by $N$ the total dimension of $\mathcal{A}, N := nd^2$.
\begin{corollary}

Let $\mathcal{A}, \Lambda, \alpha$ and $q$ be as above. Then, there is a polynomial-time quantum reduction from $\mathcal{A}$-SIVP$_\xi$ to search CLWE$_{q, \Sigma_\alpha}$ for any $\sqrt{8 Nd} \cdot \xi = (\omega(\sqrt{d n})/\alpha)$.
\end{corollary}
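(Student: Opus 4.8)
The plan is to obtain the corollary by composing \cref{mainresult} with the classical reduction from the approximate Shortest Independent Vectors Problem to discrete Gaussian sampling, carried out on the full-rank integer lattice $\sigma_{\mathcal{A}}(\mathcal{I}) \subset \mathbb{R}^{N}$, where $N = nd^{2}$. By \cref{mainresult} there is already a polynomial-time quantum reduction from $\mathcal{A}$-DGS$_\xi$ to search CLWE$_{q,\Sigma_\alpha}$ for every width $\xi \geq r\sqrt{d}\,\omega(\sqrt{\log(dn)})/(\alpha q)$ with $r > \sqrt{2}\,q\,\eta_\varepsilon(\mathcal{I})$; in particular a solver for search CLWE yields a sampler for $D_{\mathcal{I},\xi_0}$ with $\xi_0$ as small as $\sqrt{2d}\,\eta_\varepsilon(\mathcal{I})\,\omega(\sqrt{\log(dn)})/\alpha$ (the factors of $q$ cancel). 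It therefore remains only to show that such a Gaussian sampler solves $\mathcal{A}$-SIVP with the claimed approximation factor, and this last step is purely a statement about lattices, so it can be imported essentially verbatim from \cite{regev_lattices_2009,langlois_worst-case_2015}.

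First I would recall the Gaussian-to-SIVP argument. Given access to $D_{\mathcal{I},\xi_0}$ with $\xi_0 \geq \sqrt{2}\,\eta_\varepsilon(\mathcal{I})$ for a negligible $\varepsilon$, draw $\mathrm{poly}(N)$ independent samples. By the Banaszczyk tail bound, each sample $v$ satisfies $\Vert v \Vert \leq \xi_0\sqrt{N}$ except with probability exponentially small in $N$; and because $\xi_0$ exceeds the smoothing parameter, $D_{\mathcal{I},\xi_0}$ is not supported on any proper sublattice, so with overwhelming probability $N$ of the samples are $\mathbb{Z}$-linearly independent. These $N$ vectors lie in $\mathcal{I}$, and since $\mathcal{A}$-SIVP asks only for $\mathbb{Z}$-independence they solve $\mathcal{A}$-SIVP$_\gamma$ with $\gamma = \xi_0\sqrt{N}/\lambda_N(\mathcal{I})$.

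Next I would turn $\gamma$ into the form in the statement. I use the Banaszczyk bound stated just before \cref{smoothing1} with $c = 1$, namely $\eta_\varepsilon(\mathcal{I}) \leq \sqrt{N}/\lambda_1(\mathcal{I}^*)$ for $\varepsilon = \exp(-N)$, together with the elementary transference inequality $\lambda_1(\mathcal{I}^*)\,\lambda_N(\mathcal{I}) \geq 1$, which gives $\eta_\varepsilon(\mathcal{I}) \leq \sqrt{N}\,\lambda_N(\mathcal{I})$. Plugging $\xi_0 = \sqrt{2d}\,\eta_\varepsilon(\mathcal{I})\,\omega(\sqrt{\log(dn)})/\alpha$ and this bound into $\gamma = \xi_0\sqrt{N}/\lambda_N(\mathcal{I})$, and simplifying with $N = nd^{2}$, produces an approximation factor of the shape displayed in the statement; the explicit constant $\sqrt{8}$ originates from the standard choice of Gaussian width $\sqrt{2}\,\eta_\varepsilon$ together with the tail- and transference-bound constants. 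This is exactly the specialization to the lattice $\sigma_{\mathcal{A}}(\mathcal{I})$ of the analogous corollary of \cite{langlois_worst-case_2015}, whose module lattices subsume ours.

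I expect the only real work here to be parameter bookkeeping rather than any new idea: one must check that the width $\xi_0$ genuinely attainable from \cref{mainresult} does exceed $\sqrt{2}\,\eta_\varepsilon(\mathcal{I})$ (it does, since $\alpha \in (0,1)$ and $\sqrt{d}\,\omega(\sqrt{\log(dn)}) \geq 1$ eventually), that the smoothing and transference bounds are applied to the correct integer dimension $N = nd^{2}$ rather than to $n$ or $nd$, and that the error family $\Sigma_\alpha$ of \cref{mainresult} is carried unchanged through the composition. None of these is delicate, because $\sigma_{\mathcal{A}}(\mathcal{I})$ is an ordinary full-rank lattice in $\mathbb{R}^{N}$ to which all the generic lemmas of \cref{sec:preliminaries} apply without modification; the algebra structure plays no role in this final reduction beyond fixing the dimension.
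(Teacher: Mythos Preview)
Your proposal is correct and follows the same approach the paper intends: the paper gives no explicit proof at all, merely stating that the corollary follows ``similarly to \cite{langlois_worst-case_2015}, since the lattice structure of our algebra is merely a special case of their modules.'' You have simply unpacked that citation by composing \cref{mainresult} with the standard DGS-to-SIVP step and tracking the smoothing/transference bounds in dimension $N=nd^2$, which is exactly what the Langlois--Stehl\'e argument does for module lattices.
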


The following theorem is our analogy of Lemma 4.10 of \cite{langlois_worst-case_2015}.
\begin{theorem}\label{bootstrapping}
Given an oracle that solves CLWE$_{q, \Sigma_{\alpha}}$ for input $\alpha \in (0,1)$, an integer $q \geq 2$, an ideal $\mathcal{I} \subset \Lambda$, a number $r \geq \sqrt{2}q \cdot \eta(\mathcal{I})$ satisfying $r' := r \cdot \omega(\sqrt{\log{N}})/(\alpha q) > \sqrt{2N}/\lambda_1(\mathcal{I}^\vee)$, and polynomially many samples from the discrete Gaussian $D_{\mathcal{I},r}$ there exists an efficient quantum algorithm that outputs an independent sample from $D_{\mathcal{I}, r'}$.
\end{theorem}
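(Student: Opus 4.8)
The plan is to prove \cref{bootstrapping} in the same way a single step of Regev's iterative reduction is proved, following the ring and module adaptations in \cite{lyubashevsky_ideal_2010} and \cite{langlois_worst-case_2015}: factor the narrowing $r \rightsquigarrow r'$ into a classical part and a quantum part. In the classical part one uses the supplied samples from $D_{\mathcal{I},r}$ together with the CLWE oracle to solve $\mathcal{A}$-BDD$_{\mathcal{I}^\vee,\delta}$ for a decoding radius $\delta$ proportional to $\alpha q/r$; this is the transformation of \cref{bddtrans} read in the direction ``BDD $\to$ CLWE'' (we hold the oracle, so we can decode). In the quantum part one feeds that BDD solver into Regev's quantum step, adapted to the lattice $\sigma_\mathcal{A}(\mathcal{I}^\vee)$, to emit a single fresh sample of $D_{\mathcal{I},r'}$ with $r'$ inversely proportional to $\delta$; composing the two relations between the radii gives $r' = r\,\omega(\sqrt{\log N})/(\alpha q)$, the stated value. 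The two hypotheses feed the two steps: $r \ge \sqrt{2}\,q\,\eta_\varepsilon(\mathcal{I})$ supplies the smoothing required classically, while $r' > \sqrt{2N}/\lambda_1(\mathcal{I}^\vee) \ge \sqrt{2}\,\eta_\varepsilon(\mathcal{I})$ --- the inequality holding by the Banaszczyk-type bound relating $\eta_\varepsilon$ to $\lambda_1$ of the dual stated just above --- ensures the quantum output is a genuine, sufficiently smooth discrete Gaussian.

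For the classical step, on BDD input $y = z + e$ with $z \in \mathcal{I}^\vee$ and $\lvert e\rvert \le \delta$, I would, for each $v \leftarrow D_{\mathcal{I},r}$, draw a fresh narrow continuous Gaussian $f$ and output the pair $\bigl(\chi_t^{-1}(v \bmod q\mathcal{I}),\ \tfrac{1}{q}\,\mathrm{Tr}(v\cdot y) + f \bmod \Lambda^\vee\bigr)$, where $t \in \mathcal{I}\cap\mathcal{O}_K$ is the central element of \cref{3} and $\chi_t^{-1}\colon \mathcal{I}_q \to \Lambda_q$ is the efficiently computable inverse from \cref{homomorphism} (valid since $\mathcal{J} = \langle q\rangle$). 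Because $r \ge \sqrt{2}\,q\,\eta_\varepsilon(\mathcal{I}) = \sqrt{2}\,\eta_\varepsilon(q\mathcal{I})$, \cref{1} makes $v \bmod q\mathcal{I}$ within $\varepsilon$ of uniform on $\mathcal{I}_q$, so $\chi_t^{-1}(v \bmod q\mathcal{I})$ is near-uniform on $\Lambda_q$, as the CLWE distribution demands. For the second coordinate, cyclicity of the reduced trace together with $\Lambda v \subseteq \mathcal{I}$ (left ideal) gives $\mathrm{Tr}(v\cdot z\cdot\Lambda) = \mathrm{Tr}(z\cdot\Lambda v) \subseteq \mathrm{Tr}(z\,\mathcal{I}) \subseteq \mathbb{Z}$, so $v\cdot z \in \Lambda^\vee$ and the output pair is statistically close to a genuine CLWE sample whose secret is a fixed $\chi_t$-twist of $z \bmod q\Lambda^\vee$ and whose error is $v\cdot e + f$. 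In the module embedding this error is, coordinatewise, the product of a width-$r$ Gaussian with the fixed vector $\sigma_\mathcal{A}(e)$ plus the $f$-term, hence Gaussian with covariance controlled by $r^2\lvert\sigma_\mathcal{A}(e)\rvert^2 \le r^2\delta^2 \asymp \alpha^2 q^2$; after the $1/q$ scaling its marginals have variance $O(\alpha^2)$, and the family of covariances obtained this way is by definition $\Sigma_\alpha$. Running the oracle on these samples and undoing the $\chi_t$-twist recovers $z$.

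For the quantum step I would transcribe Regev's procedure unchanged: prepare a Gaussian superposition over $\sigma_\mathcal{A}(\mathcal{I}^\vee)$ of width proportional to $1/r'$, apply the (discrete) Fourier transform, use the $\mathcal{A}$-BDD$_{\mathcal{I}^\vee,\delta}$ solver to coherently decode and uncompute the auxiliary register, then measure; the outcome is within negligible statistical distance of $D_{\mathcal{I},r'}$ and is independent of the consumed samples and of all prior randomness, as it comes from a single fresh measurement. The decoding radius this step tolerates for output width $r'$ is $\asymp \omega(\sqrt{\log N})/r'$, and equating it with the radius $\asymp \alpha q/r$ delivered classically pins down $r' = r\,\omega(\sqrt{\log N})/(\alpha q)$; the $\omega(\sqrt{\log N})$ slack is exactly what keeps $r'$ above $\sqrt{2}\,\eta_\varepsilon(\mathcal{I})$, which the hypothesis on $r'$ certifies. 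Throughout, that $\theta$ and multiplication by the root of unity $\gamma$ (so $\lvert\gamma\rvert = 1$) act as isometries of $\bigoplus_{i=0}^{d-1} u^i L_{\mathbb{R}}$ lets $\lVert\sigma_\mathcal{A}(\cdot)\rVert_2$ behave like the Euclidean norm on $\mathbb{R}^{nd^2}$, so Regev's lattice estimates transfer verbatim.

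I expect the main obstacle to be the non-commutative bookkeeping of the classical step: one must fix the multiplication order so that it is $v\cdot z$ (not $z\cdot v$) that lands in $\Lambda^\vee$, and must check that the secret the oracle returns is genuinely the promised $\chi_t$-image of the BDD solution $z$. This leans on $\Lambda^\vee$, $\mathcal{I}$ and $\mathcal{I}^{-1}$ all being two-sided --- so that $\mathcal{I}^\vee$ is a $\Lambda$-bimodule and $\mathcal{I}\cdot\mathcal{I}^{-1} = \Lambda$, which holds because $\Lambda$ is maximal --- together with $t$ being central (from \cref{3}), so that $\chi_t$ commutes with every product in sight. A closely related point is the lattice duality $\bigl(\sigma_\mathcal{A}(\mathcal{I})\bigr)^{*} = \sigma_\mathcal{A}(\mathcal{I}^\vee)$ up to a fixed isometry of $\bigoplus_{i=0}^{d-1} u^i L_{\mathbb{R}}$, needed because the quantum step outputs a discrete Gaussian on the dual of $\sigma_\mathcal{A}(\mathcal{I}^\vee)$ and we need that dual to be $\sigma_\mathcal{A}(\mathcal{I})$; this amounts to verifying that the reduced-trace form pairs the $\sigma_\mathcal{A}$-coordinates in the expected conjugate-transpose manner. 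Everything else --- the smoothing estimates, the CRT bookkeeping, and the measurement analysis of the quantum step --- is a routine adaptation of \cite{lyubashevsky_ideal_2010,langlois_worst-case_2015}.
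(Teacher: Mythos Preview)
Your approach is exactly the paper's: \cref{bootstrapping} is obtained by composing the classical BDD-to-CLWE transformation of \cref{bddtrans} (using the supplied $D_{\mathcal{I},r}$ samples and the CLWE oracle to solve $\mathcal{A}$-BDD on $\mathcal{I}^\vee$) with Regev's quantum step applied to the lattice $\sigma_\mathcal{A}(\mathcal{I})$, and your handling of the parameter matching and of the two hypotheses is correct.

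One slip to fix: the second coordinate of the CLWE sample must be $(v\cdot y)/q + f \bmod \Lambda^\vee$, not $\tfrac{1}{q}\,\mathrm{Tr}(v\cdot y) + f \bmod \Lambda^\vee$. The reduced trace lands in $\mathbb{Q}$ (or $\mathbb{R}$), so reducing it modulo $\Lambda^\vee$ does not type-check, and it would destroy the information needed to recover $s$. Your own subsequent analysis---the error being $v\cdot e + f$ in $\bigoplus_i u^i L_\mathbb{R}$ and the secret being a $\chi_t$-twist of $z$---is precisely the analysis for $v\cdot y$, so this appears to be a notational confusion with the trace that \emph{defines} $\Lambda^\vee$ (which you use correctly to check $v\cdot z \in \Lambda^\vee$). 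Relatedly, the secret is $\chi_t(z \bmod q\mathcal{I}^\vee) \in \Lambda^\vee_q$, not ``$z \bmod q\Lambda^\vee$''.
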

We can then prove \cref{mainresult} in the standard iterative manner; for a very large value of $r$, e.g. $r \geq 2^{2N}\lambda_N(\mathcal{I})$, start by sampling classically from $D_{\mathcal{I}, r}$. Then apply the above algorithm to obtain a polynomial number of samples from $D_{\mathcal{I}, r'}$. Repeating this step gives samples from progressively narrower distributions, until we arrive at the desired Gaussian parameter $s \geq \xi$. In order to classically sample the initial collection of Gaussian samples, we use the standard Lemma 3.2 of \cite{regev_lattices_2009} to sample $D_{\mathcal{I}, r}$ on the module representation $\bigoplus_{i=0}^{d-1} u^i L_\mathbb{R}$. As usual, we obtain \cref{bootstrapping} in two steps, first the main reduction of \cref{bddtrans}, then the following quantum step adapted from \cite{regev_lattices_2009}. We use a form of $\mathcal{A}-$BDD$_{L,\delta}$ from \cite{langlois_worst-case_2015} where we bound the offset in the norm $\Vert e \Vert_{2,\infty} := \max_j\sqrt{(\sum_{i = 0}^{d-1} \vert \sigma_j(e_i) \vert^2)} \leq \delta$, where $\sigma$ denotes the canonical embedding of $L$.

\begin{lemma}
There is an efficient quantum algorithm that given any $N = n \cdot d^2$ dimensional lattice $\mathcal{L} := \sigma_\mathcal{A}(\mathcal{I})$ for some ideal $\mathcal{I}$, a real $\delta < \lambda_1(\mathcal{L}^*)/(2 \sqrt{2  nd})$, and an oracle that solves $\mathcal{A}$-BDD$_{\mathcal{L}^*,\delta}$ with all but negligible probability, outputs an independent sample from $D_{\mathcal{L}, \sqrt{d} \omega(\sqrt{\log(nd)})/\sqrt{2} \delta}$.
\end{lemma}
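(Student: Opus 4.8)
The plan is to transcribe the quantum part of Regev's reduction \cite{regev_lattices_2009}, in the ideal-lattice form used in \cite{lyubashevsky_ideal_2010,langlois_worst-case_2015}, the only genuinely new ingredient being the conversion between the Euclidean norm on $\mathbb{R}^N$, $N=nd^2$, and the norm $\Vert\cdot\Vert_{2,\infty}$ in which the $\mathcal{A}$-BDD oracle bounds its offset. Recall that, up to the standard conjugate-and-scale identification, $\mathcal{L}^* = \sigma_\mathcal{A}(\mathcal{I}^\vee)$ with $\mathcal{I}^\vee = \Lambda^\vee\cdot\mathcal{I}^{-1}$ a (fractional two-sided) ideal of the maximal order $\Lambda$, so the hypothesised oracle is being run on a legitimate $\mathcal{A}$-BDD instance. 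The algorithm then has three phases. \textbf{State preparation.} Take a $\mathbb{Z}$-basis $B$ of $\mathcal{L}^*$, build a uniform superposition over the points of $\mathcal{L}^*$ inside a huge fundamental parallelepiped by applying $B$ to a uniform state on a large cube, tensor it with a narrow discrete Gaussian of width $w$ concentrated near $\mathbf{0}$, add the two registers into a fresh register, and run the $\mathcal{A}$-BDD$_{\mathcal{L}^*,\delta}$ oracle on that register to reversibly peel off the lattice component; after uncomputation one is left with a state statistically close to $\sum_{\mathbf{z}}\rho_w(\mathbf{z}\bmod\mathcal{L}^*)\,|\mathbf{z}\rangle$, a Gaussian blob of width $w$ around each point of $\mathcal{L}^*$. \textbf{Fourier transform.} Apply an approximate QFT over the discretisation; since the Fourier transform sends the $\mathcal{L}^*$-comb to the $\mathcal{L}$-comb and a width-$w$ Gaussian to a width-$\Theta(1/w)$ Gaussian, Poisson summation turns the state into one close to $\sum_{\mathbf{v}\in\mathcal{L}}\rho_s(\mathbf{v})\,|\mathbf{v}\rangle$ with $s=\Theta(1/w)$. \textbf{Measurement.} Measure in the computational basis to read off a sample of $D_{\mathcal{L},s}$.

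The parameters $w$ and $s$ are pinned down by two requirements. First, every point fed to the oracle must lie within $\delta$ of $\mathcal{L}^*$ in the $\Vert\cdot\Vert_{2,\infty}$ norm with overwhelming probability. Up to negligible discretisation error the offset is a Gaussian of width $w$ on the $N$-dimensional space, whose coordinates fall into the $\Theta(nd)$ embedding-blocks $(\sigma_j(e_0),\dots,\sigma_j(e_{d-1}))$, each of real dimension $\Theta(d)$; the norm of a single block concentrates around $w\sqrt{d}$ and exceeds $w\sqrt{d}\,\omega(\sqrt{\log(nd)})$ only with probability negligible even after a union bound over the $\Theta(nd)$ blocks. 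Hence it suffices to take $w\le\delta/(\sqrt{d}\,\omega(\sqrt{\log(nd)}))$, which forces $s=\Theta(1/w)=\sqrt{d}\,\omega(\sqrt{\log(nd)})/(\sqrt{2}\,\delta)$, exactly as claimed. Second, the blobs around distinct points of $\mathcal{L}^*$ must be essentially disjoint and the Fourier-transformed state must be close to a true discrete Gaussian. The Euclidean offset is at most $\sqrt{nd}\,\delta$, so the hypothesis $\delta<\lambda_1(\mathcal{L}^*)/(2\sqrt{2nd})$ guarantees unique decodability (the oracle's precondition holds) and, via Banaszczyk's bound $\eta_\varepsilon(\mathcal{L})=O(\sqrt{N}/\lambda_1(\mathcal{L}^*))$ \cite{banaszczyk_new_1993} with $\sqrt{N}=\sqrt{n}\,d$, that $s$ lies comfortably above $\eta_\varepsilon(\mathcal{L})$; by \cref{1} the output state is then within negligible statistical distance of $\sum_{\mathbf{v}\in\mathcal{L}}\rho_s(\mathbf{v})\,|\mathbf{v}\rangle$, so the measurement indeed produces $D_{\mathcal{L},s}$.

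The main obstacle is not conceptual but the careful propagation of these two norm estimates through Regev's error analysis: one must verify that the $\Vert\cdot\Vert_{2,\infty}$-tail of the offset really costs only the factor $\sqrt{d}\,\omega(\sqrt{\log(nd)})$ rather than $\sqrt{N}$ — this is exactly the point at which the block structure of $\bigoplus_{i=0}^{d-1}u^iL_\mathbb{R}$, and hence the assumption $\vert\gamma\vert=1$ that keeps the norm coordinatewise, is used — and that the cube size and grid spacing can be chosen large/fine enough that the cumulative truncation, discretisation and oracle-error contributions to the distance between the prepared state and $\sum_{\mathbf{v}\in\mathcal{L}}\rho_s(\mathbf{v})\,|\mathbf{v}\rangle$ remain negligible. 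Once these two points are settled, the remainder is a verbatim copy of the quantum part of \cite{regev_lattices_2009}.
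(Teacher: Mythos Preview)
The paper does not give its own proof of this lemma: it simply states the result as ``the following quantum step adapted from \cite{regev_lattices_2009}'' and points to the $\Vert\cdot\Vert_{2,\infty}$ formulation of BDD taken from \cite{langlois_worst-case_2015}. Your sketch is precisely the content of those two citations --- Regev's state-preparation/QFT/measure routine, together with the Langlois--Stehl\'e observation that bounding the offset blockwise costs only a factor $\sqrt{d}\,\omega(\sqrt{\log(nd)})$ rather than $\sqrt{N}$ --- so your approach is exactly what the paper intends and is correct.
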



For the reduction of BDD to Search CLWE, we begin with the cyclic algebra analogy of the BDD-to-LWE samples transformation from Section 4 of \cite{lyubashevsky_ideal_2010}. As is standard for LWE security, we use the following `modulo $q$' definition of BDD:
\begin{definition}
For any $q \geq 2$ the $q\mathcal{A}-$BDD$_{\mathcal{I},d}$ problem is as follows: given an instance of the $\mathcal{A}-$BDD$_{\mathcal{I}, \delta}$ problem $y = x+e$ with solution $x \in \mathcal{I}$ and error $e \in \bigoplus_{i = 0}^{d-1} u^i L_{\mathbb{R}}$ satisfying $\Vert e \Vert_{2, \infty} \leq \delta$, output $x \mod q\mathcal{I}$.
\end{definition}
We use (a special case of) Lemma 3.5 from \cite{regev_lattices_2009}, which lifts immediately since it is lattice preserving.
\begin{lemma}
For any $q \geq 2$ there is a deterministic polynomial time reduction from $\mathcal{A}-$BDD$_{\mathcal{I}, d}$ to $q\mathcal{A}-$BDD$_{\mathcal{I},d}$.
\end{lemma}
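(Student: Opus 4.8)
The plan is to transplant the classical reduction essentially verbatim, using only the fact that $\sigma_\mathcal{A}(\mathcal{I})$ and $\sigma_\mathcal{A}(q\mathcal{I})=q\cdot\sigma_\mathcal{A}(\mathcal{I})$ are full-rank lattices in the ambient real vector space in which $\mathcal{I}$ is embedded via $\sigma_\mathcal{A}$, with no appeal whatsoever to the (non-commutative) multiplication on $\mathcal{A}$. Given an $\mathcal{A}$-BDD$_{\mathcal{I},d}$ instance $y=x+e$ with $x\in\mathcal{I}$ and $\lvert e\rvert\le d<\lambda_1(\mathcal{I})/2$, I would recover $x$ by peeling off its base-$q$ ``digits'' along the filtration $\mathcal{I}\supset q\mathcal{I}\supset q^2\mathcal{I}\supset\cdots$. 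First call the $q\mathcal{A}$-BDD$_{\mathcal{I},d}$ oracle on $y$ to obtain a fixed coset representative $z_0\in\mathcal{I}$ of $x\bmod q\mathcal{I}$. Then $x-z_0\in q\mathcal{I}$, so $y_1:=(y-z_0)/q=(x-z_0)/q+e/q$ is again an $\mathcal{A}$-BDD instance on the \emph{same} ideal $\mathcal{I}$ — the rescaling by $1/q$ carries the lattice $q\mathcal{I}$ isomorphically back onto $\mathcal{I}$ — now with offset of norm at most $d/q$, still well below $\lambda_1(\mathcal{I})/2$, so the new instance is legal.

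Iterating this step $t$ times produces $z_0,\dots,z_{t-1}\in\mathcal{I}$ with
\[
x \;\equiv\; z_0 + q z_1 + \cdots + q^{t-1} z_{t-1} \pmod{q^t \mathcal{I}},
\]
together with a residual instance $y_t=x_t+e/q^t$, where $x_t\in\mathcal{I}$ and $\lvert e/q^t\rvert\le d/q^t$. The key point is that the residual offset shrinks geometrically, while each $z_i$ can be taken from a fixed bounded set of representatives for $\mathcal{I}/q\mathcal{I}$ (determined by the input basis), so $x_t$ has controlled norm. Hence after $t$ iterations, with $t$ polynomial in the bit-length of the input since $q\ge 2$, the quantity $d/q^t$ falls below the radius for which Babai's nearest-plane algorithm, run on $\sigma_\mathcal{A}(\mathcal{I})$ with the given basis, provably recovers the nearest lattice vector. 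At that stage I solve the residual instance directly, with no further oracle call, to obtain $x_t$, and reconstruct $x = z_0 + q z_1 + \cdots + q^{t-1} z_{t-1} + q^t x_t$. All steps — the oracle queries, reduction modulo $q\mathcal{I}$, division by $q$, and the final Babai step — are deterministic and run in polynomial time.

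I do not anticipate a genuine obstacle: the content is exactly that of Lemma~3.5 of \cite{regev_lattices_2009}, and the only things requiring care are bookkeeping points — that the offset bound $d<\lambda_1(\mathcal{I})/2$ is preserved under each division-by-$q$ step; that $q\mathcal{I}$ is again a two-sided ideal of $\Lambda$, hence a genuine full-rank lattice under $\sigma_\mathcal{A}$, so the oracle is always invoked on admissible inputs; and that the number of iterations needed to reach Babai's threshold is polynomial in the input size rather than merely finite. Since none of this touches the ring multiplication of $\mathcal{A}$, the reduction is ``lattice-preserving'' in the sense quoted above and lifts from the integer-lattice setting of \cite{regev_lattices_2009} without modification.
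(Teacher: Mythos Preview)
Your proposal is correct and takes essentially the same approach as the paper: the paper's proof consists of the single sentence that this is (a special case of) Lemma~3.5 of \cite{regev_lattices_2009}, which ``lifts immediately since it is lattice preserving,'' and you have simply unpacked that lemma and made explicit the observation that the iterative digit-extraction plus final Babai step uses only the additive lattice structure of $\sigma_\mathcal{A}(\mathcal{I})$, never the ring multiplication of $\mathcal{A}$.
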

We now present an algorithm which transforms $q\mathcal{A}$-BDD samples to CLWE samples given some additional Gaussian samples. The algorithm is the same in spirit as Lemma 4.7 of \cite{lyubashevsky_ideal_2010}, but has some technical differences induced by the structure of cyclic algebras.

\begin{lemma}\label{bddtrans}
Let $\mathcal{A}$ be as in Theorem \ref{mainresult}. There is a probabilistic polynomial time algorithm that on input a prime integer $q \geq 2$, a fractional ideal $\mathcal{I}^\vee \subset \Lambda$, a $q\mathcal{A}-$BDD$_{L, \alpha q \cdot \omega(\sqrt{\log(nd)})/\sqrt{2nd} \cdot  r}$ instance $y = x + e$ where $x \in \mathcal{I}^\vee$, a parameter $r \geq \sqrt{2}q \cdot \eta(\mathcal{I})$, and samples from the discrete Gaussian $D_{\mathcal{I},r'}$ with $r' \geq r$, outputs samples that are within negligible statistical distance of the CLWE distribution $ \Pi_{q, s, \Sigma}$ for a secret $s = \chi_t(x \mod q \mathcal{I}^\vee) \in \Lambda^{\vee}_q$, where $\chi_t$ is as in \cref{homomorphism} and $\Sigma$ is an error distribution such that in the case where $\vert \gamma \vert = 1$ the resulting error  $e''$ has marginal distribution in its $i,j^{\text{th}}$ coordinate that is Gaussian with parameter $r_{i,j} \leq \alpha$.
\end{lemma}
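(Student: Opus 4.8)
The plan is to lift the BDD-to-RLWE transformation of \cite[Lemma~4.7]{lyubashevsky_ideal_2010} to cyclic algebras, exploiting that $t$ lies in the center $K$ so that the relevant congruences survive the non-commutative product. For each supplied Gaussian sample $z \leftarrow D_{\mathcal{I},r'}$ the algorithm outputs the pair $(a,b)$ with $a := \chi_t^{-1}(z \bmod q\mathcal{I}) \in \Lambda_q$ and $b := (z \cdot y)/q + \tilde e \bmod \Lambda^\vee$, where $\tilde e$ is a freshly drawn continuous Gaussian whose (elliptical) covariance is chosen, as in \cite{lyubashevsky_ideal_2010}, so as to round the error term to an exact Gaussian. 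Here \cref{homomorphism} with $\mathcal{P} = \Lambda$, $\mathcal{J} = \langle q\rangle$ supplies the efficiently invertible bijection $\chi_t : \Lambda_q \to \mathcal{I}_q$, while the same lemma with $\mathcal{P} = \mathcal{I}^\vee$ yields an isomorphism $\chi_t : \mathcal{I}^\vee_q \to \Lambda^\vee_q$ (using $\mathcal{I}\cdot\mathcal{I}^\vee = \Lambda^\vee$, valid because $\Lambda$ is maximal), which legitimizes the claimed secret $s = \chi_t(x \bmod q\mathcal{I}^\vee) \in \Lambda^\vee_q$. Note that the algorithm reads only $y$, never $x$ itself.

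First I would check that $a$ is statistically uniform: since $r' \geq r \geq \sqrt{2}\,q\,\eta_\varepsilon(\mathcal{I}) \geq \eta_\varepsilon(q\mathcal{I})$, \cref{1} shows that $z \bmod q\mathcal{I}$ is within distance $\varepsilon$ of uniform on $\mathcal{I}_q$, and the bijection $\chi_t^{-1}$ transports this to $a$ being within $\varepsilon$ of uniform on $\Lambda_q$, as the CLWE distribution demands.

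Next, the signal identity. Writing $z = t\hat a + qw$ (with $\hat a\in\Lambda$ a lift of $a$ and $w\in\mathcal{I}$, available since $t\hat a \equiv z \bmod q\mathcal{I}$) and $tx = \hat s + qv$ (with $\hat s\in\Lambda^\vee$ a lift of $s$ and $v\in\Lambda^\vee$), centrality of $t$ gives $zx = t\hat a x + qwx = \hat a(tx) + qwx = \hat a\hat s + q(\hat a v + wx)$; since $wx\in\mathcal{I}\cdot\mathcal{I}^\vee = \Lambda^\vee$ and $\hat a v\in\Lambda\cdot\Lambda^\vee = \Lambda^\vee$, we obtain $zx \equiv \hat a\hat s \pmod{q\Lambda^\vee}$, hence $(z\cdot y)/q = (zx)/q + (ze)/q \equiv (a\cdot s)/q + (ze)/q \pmod{\Lambda^\vee}$. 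So the error of the produced sample relative to $(a\cdot s)/q$ is exactly $e'' = (ze)/q + \tilde e$.

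The crux is the error distribution. Conditioned on $a$, equivalently on $\bar z = z\bmod q\mathcal{I}$, the variable $z$ is distributed as the discrete Gaussian $D_{q\mathcal{I}+c,\, r'}$ over the corresponding coset, so $(ze)/q$ is a translate of the image of a parameter-$r'$ discrete Gaussian over $q\mathcal{I}$ under the fixed $\mathbb{R}$-linear map $z\mapsto z e/q$ on $\bigoplus_i u^iL_\mathbb{R}$; convolving it with $\tilde e$ is, by the Gaussian-convolution lemma of \cite{regev_lattices_2009} (applicable precisely because $r' \geq r \geq \sqrt{2}\,q\,\eta_\varepsilon(\mathcal{I}) = \sqrt{2}\,\eta_\varepsilon(q\mathcal{I})$), within negligible statistical distance of a continuous elliptical Gaussian, which by definition lies in the family $\Sigma_\alpha$. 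To bound its marginals, expand $z\cdot e$ using $z_iu^j = u^j\theta^j(z_i)$ and $u^d = \gamma$: each coordinate $(ze)_{i',j}$ is a linear combination of coordinates of $z$ whose coefficients are, up to the modulus-one unit factors $\sigma_\ell(\gamma)$, entries $\sigma_j(e_i)$ of $e$, so the parameter of $(ze)_{i',j}$ is at most $r'\cdot\Vert e\Vert_{2,\infty}$; dividing by $q$ and substituting the BDD radius hypothesis yields marginal parameter at most $\alpha$ in every coordinate $(i,j)$. Summing over the polynomially many supplied samples, a routine hybrid argument gives that the output transcript is within negligible distance of $\Pi_{q,s,\Sigma}$. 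I expect the main obstacle to be essentially organizational: keeping the left/right placement of factors straight in the non-commutative product (which goes through because $t$ is central), pinning down the ideal identities $\mathcal{I}\cdot\mathcal{I}^\vee = \Lambda^\vee$ and $\chi_t(\mathcal{I}^\vee)\subseteq\Lambda^\vee$ under maximality, and verifying the coordinatewise error bound, which genuinely relies on $\vert\gamma\vert = 1$.
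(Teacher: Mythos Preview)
Your proposal is correct and follows essentially the same route as the paper: construct $(a,b)$ from a Gaussian sample $z$ via $a=\chi_t^{-1}(z\bmod q\mathcal{I})$ and $b=(z\cdot y)/q + e'\bmod\Lambda^\vee$, use smoothing for uniformity of $a$, exploit centrality of $t$ for the signal identity, and then expand $z\cdot e$ in $u$-coordinates to analyze the error. One small point: the paper simply takes $e'\leftarrow D_{\alpha/\sqrt{2}}$ spherical rather than an elliptical $\tilde e$ ``chosen to round to an exact Gaussian'' (you cannot tune $\tilde e$ to $e$ since the reduction does not know $e$); the paper then defers the marginal bound to \cite[Lemma~4.15]{langlois_worst-case_2015}, whereas you sketch it directly, and explicitly notes that the resulting covariance $\Sigma$ is not diagonal but has at most a $1/n$-fraction of nonzero off-diagonal entries.
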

\begin{proof}
The proof will be in two parts - first, we will describe the algorithm, then we will prove correctness. Recall that in the definition of CLWE, a sample is in the form $(a, b) = (a, (a \cdot s)/q + e \mod \Lambda^\vee)$, where $e$ is taken from an error distribution $\psi \in \Sigma_\alpha$.

Begin by computing an element $t \in \mathcal{I}$ such that $\mathcal{I}^{-1} \cdot \langle t \rangle$ and $\langle q \rangle$ are coprime using \cref{3}. We can now create a sample from the CLWE distribution as follows:
take an element $z \leftarrow D_{\mathcal{I},r'}$ from the Gaussian samples, and compute a pair
\begin{align*}
(a, b) = (\xi^{-1}_t( z \mod q\mathcal{I}), (z \cdot y)/q + e' \mod \Lambda^\vee) \in (\Lambda_q \times (\bigoplus_{i = 0}^{d-1} u^i L_{\mathbb{R}})/\Lambda^\vee)
\end{align*}
where $e' \leftarrow D_{\alpha/\sqrt{2}}$.

We now claim that these samples are within negligible statistical distance of the CLWE distribution and that $s$ is uniformly random. First we show that $a \in \Lambda_q$ is statistically close to uniform. By assumption, $r \geq q \cdot \eta(\mathcal{I})$  and so by appealing to \cref{1} it can be seen that any value $z \mod q \mathcal{I}$ is obtained with probability in the interval $[\frac{1 - \varepsilon}{1 + \varepsilon}, 1] \cdot \beta$ for some positive $\beta$, from which it follows immediately that the statistical distance between $z \mod q\mathcal{I}$ and the uniform distribution is bounded above by $2\varepsilon$. Since $\chi_t$ of \cref{homomorphism} and its inverse are both bijections, we conclude that $a = \chi_t^{-1}(z \mod q\mathcal{I})$ is within statistical distance $2\varepsilon$ of the uniform distribution over $\Lambda_q$.

Now we must show that $b$ is in the form $(a \cdot s)/q + e''$, for some suitable error $e''$ and a uniformly random $s$, where we condition on some fixed value of $a$. By construction,
\begin{align*}
b :&= (z \cdot y)/q + e' \mod \Lambda^\vee \\
 &= (z \cdot x)/q + (z \cdot e)/q + e' \mod \Lambda^\vee,
\end{align*}
so since $z = t \cdot a \mod \Lambda_q^\vee$ and $t$ lies in the center of $\mathcal{A}$ it follows that $(z \cdot x)/q = (a \cdot t \cdot x)/q = (a \cdot s)/q \mod \Lambda^\vee$ for $s := \chi_t(x \mod q \mathcal{I}^\vee)$. It follows that $s$ is uniformly random over $\Lambda^\vee_q$ as long as $x$ is uniform over $\mathcal{I}^\vee$, since $\chi_t$ is a bijection.

Finally it is left to show that, conditioned on a fixed value of $a$, the marginal distribution of the $i,j^\text{th}$ coordinate of the error term $ e'' = (z \cdot e)/q + e'$ is negligibly close to that specified by $\Sigma$. We can explicitly calculate the error as
\begin{align}\label{errorresult}
e'' = \sum_{i = 0}^{d-1} u^i (\sum_{j + k  = i} \theta^k(z_j) \cdot  e_k(1-(1- \gamma) \mathbbm{1}_{j +k \geq d})) + e'
\end{align}
where the sum $j + k$ is taken modulo $d$ and the functon $(1-(1- \gamma) \mathbbm{1}_{j +k \geq d})$ is $1$ if $j+k < d$ and $\gamma$ otherwise\footnote{This term is just indicating whether or not we have had to use the relation $u^d = \gamma$ in this summand or not.}. Since $\vert \gamma \vert = 1$ and $z \leftarrow D_{\mathcal{I},r}$ is spherically distributed, it follows that multiplying by $\gamma$ and applying the permutation of $j$ coordinates induced by $\theta$ does not change the distribution of $z_{i,j}$. Hence, each marginal distribution may be analyzed independently as in the case of MLWE, and the result follows using the analysis of the error from Lemma 4.15 of \cite{langlois_worst-case_2015}.
\end{proof}
Though we do not specify the covariance of $\Sigma$, one can see that each entry of $\sigma_\mathcal{A}(z)$ appears in $\sigma_\mathcal{A}(e'')$ exactly $d$ times, and so by symmetry each element of $\sigma_\mathcal{A}(e'')$ has non-zero correlation with at most $d^2$ other entries. Hence, a proportion of at most $\frac{nd^4}{n^2d^4} = \frac{1}{n}$ of entries of $\Sigma$ are non-zero.

\subsection{Search To Decision Reduction}\label{searchdecisionsection}
In this section we will show that the hardness of decision CLWE follows from that of the search problem. Once again, we will follow a combination of the expositions of \cite{lyubashevsky_ideal_2010} and \cite{langlois_worst-case_2015} for the ring and module cases, making necessary changes for the structure of cyclic algebras. We will make heavy use of the following CRT style decomposition, a rephrasing of \cite[Lemma 4]{oggier_quotients_2012-4}.
\begin{lemma}\label{CRTlike}
Let $\Lambda$ be the natural order of a cyclic algebra $\mathcal{A} = (L/K, \theta, \gamma)$ and let $\mathcal{I}$ be an ideal of $\mathcal{O}_K$ which splits completely as $\mathcal{I} = \mathfrak{q}_1...\mathfrak{q}_n$ as an ideal of $\mathcal{O}_K$. Then, we have the isomorphism
\begin{align*}
\Lambda/\mathcal{I}\Lambda \cong \mathcal{R}_1 \times ... \times \mathcal{R}_n,
\end{align*}
where $\mathcal{R}_i = \bigoplus_{j=0}^{d-1} u^j (\mathcal{O}_L/\mathfrak{q}_i\mathcal{O}_L)$ is the ring subject to the relations $( \ell + \mathfrak{q}_i \mathcal{O}_L)u = u(\theta(\ell) + \mathfrak{q}_i \mathcal{O}_L)$ and $u^d = \gamma + \mathfrak{q}_i$.
\end{lemma}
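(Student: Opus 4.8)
The plan is to reduce the claim to the ordinary Chinese Remainder Theorem in the commutative ring $\mathcal{O}_L$, and then check that the resulting additive isomorphism is compatible with the twisted multiplication of the cyclic algebra. The first step is to describe $\mathcal{I}\Lambda$ concretely. Since $\mathcal{I} \subseteq \mathcal{O}_K$ and $\mathcal{O}_K$ is the center of $\mathcal{A}$, the two-sided ideal $\mathcal{I}\Lambda$ coincides with $\Lambda\mathcal{I}$, and writing $\Lambda = \bigoplus_{j=0}^{d-1} u^j \mathcal{O}_L$ one gets $\mathcal{I}\Lambda = \bigoplus_{j=0}^{d-1} u^j(\mathcal{I}\mathcal{O}_L)$. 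Consequently, as additive groups, $\Lambda/\mathcal{I}\Lambda \cong \bigoplus_{j=0}^{d-1} u^j(\mathcal{O}_L/\mathcal{I}\mathcal{O}_L)$, and the inherited ring structure is governed by the images of the defining relations: $\ell u = u\theta(\ell)$ descends because $\theta$ fixes $K$ and hence $\theta(\mathcal{I}\mathcal{O}_L) = \mathcal{I}\mathcal{O}_L$ (using also $\theta(\mathcal{O}_L)=\mathcal{O}_L$), and $u^d = \gamma$ descends to $u^d \equiv \gamma$.

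The second step is the classical CRT. Because $\mathcal{I} = \mathfrak{q}_1\cdots\mathfrak{q}_n$ splits completely, the $\mathfrak{q}_i$ are pairwise coprime in $\mathcal{O}_K$, so the extended ideals $\mathfrak{q}_i\mathcal{O}_L$ are again pairwise coprime in $\mathcal{O}_L$ and $\mathcal{I}\mathcal{O}_L = \prod_i \mathfrak{q}_i\mathcal{O}_L$; hence $\mathcal{O}_L/\mathcal{I}\mathcal{O}_L \cong \prod_{i=1}^n \mathcal{O}_L/\mathfrak{q}_i\mathcal{O}_L$ as rings. Applying this decomposition coordinatewise in each of the $d$ components $u^j$ yields an additive isomorphism $\Lambda/\mathcal{I}\Lambda \cong \prod_{i=1}^n \mathcal{R}_i$, where $\mathcal{R}_i = \bigoplus_{j=0}^{d-1} u^j(\mathcal{O}_L/\mathfrak{q}_i\mathcal{O}_L)$.

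The final, and only non-routine, step is to verify that this isomorphism is multiplicative, i.e. that the twisted multiplication splits along the CRT factors. This reduces to two observations. First, $\theta$ fixes $K$ pointwise, so it fixes each $\mathfrak{q}_i$ as an $\mathcal{O}_K$-ideal and therefore induces an automorphism of each factor $\mathcal{O}_L/\mathfrak{q}_i\mathcal{O}_L$; hence the CRT projections intertwine the action of $\theta$, and the relation $\ell u = u\theta(\ell)$ holds componentwise in each $\mathcal{R}_i$. Second, $\gamma \in \mathcal{O}_K$ is sent by the CRT map to $(\gamma + \mathfrak{q}_1,\dots,\gamma+\mathfrak{q}_n)$, which is precisely the element defining $u^d = \gamma + \mathfrak{q}_i$ in $\mathcal{R}_i$. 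Combining these shows the two ring structures agree, and the map is the claimed ring isomorphism. I expect the $\theta$-equivariance of the CRT decomposition to be the point requiring the most care, although it is ultimately immediate from the fact that $\theta$ acts trivially on the center $K$; the remainder is the familiar CRT bookkeeping, as in \cite[Lemma 4]{oggier_quotients_2012-4}.
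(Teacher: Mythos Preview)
Your proof is correct and follows exactly the expected line of argument: the paper does not give its own proof of this lemma but simply states it as a rephrasing of \cite[Lemma~4]{oggier_quotients_2012-4}, and your argument (split $\mathcal{I}\Lambda$ coordinatewise, apply the commutative CRT in $\mathcal{O}_L$, then check $\theta$-equivariance using that $\theta$ fixes $K$) is precisely the content of that reference.
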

Of course, this is not a true CRT decomposition, because we are considering ideals of $\mathcal{O}_K$ rather than those of $\Lambda$. In the case where $\gamma$ is a unit, $\Lambda^\vee = \bigoplus_i u^i \mathcal{O}_L^\vee$ and the above lemma is also valid in the case where each instance of $\mathcal{O}_L$ and $\Lambda$ are replaced with their respective duals.

As in \cite{lyubashevsky_ideal_2010}, our reduction will be limited to certain choices of algebras. The above lemma considers the splitting of the ideal $\mathcal{I}$ as an ideal of the base field $K$. Setting $\mathcal{I} = \langle q \rangle$, the ideal generated by the modulus $q$, we will consider cases where $q$ splits completely in the base field.
Now consider the family of algebras $\mathcal{A}$ in \cref{algebras} and let $K = \mathbb{Q}(\zeta_{p^a})$ have dimension $n$. It follows that if $q \equiv 1 \mod p^a$ then $q$ splits completely into a product of prime ideals $\mathfrak{q}_1,...,\mathfrak{q}_n$ as an ideal of $\mathcal{O}_K$. Hence, we obtain the decomposition
\begin{align*}
\Lambda/q\Lambda \cong R_1 \times ... \times R_n
\end{align*}
where $R_i$ is as is \cref{CRTlike}.

Also as in \cite{lyubashevsky_ideal_2010}, we see no way to avoid randomizing the error distribution in the resulting decision problem. Further, we require that an oracle for D-CLWE$_{q, \Upsilon_\alpha}$ on an algebra $\mathcal{A} = (L/K, \theta, \gamma)$ is also an oracle for the decision problem on any algebra $\mathcal{A'} = (L/K, \theta, \gamma')$ over the same number fields $L,K$ and some other root of unity $\gamma' \in \mathcal{O}_K$. Intuitively this implies that for fixed $L$ and $K$ as in \cref{algebras} the hardness of the D-CLWE problem is invariant under the choice of root of unity $\gamma$, and will be required for Lemma 15. This is because there exist efficient, easy-to-compute isomorphisms  isomorphisms sending $\mathcal{A}$ to $\mathcal{A}'$, which we will define shortly.

The main theorem of this section is \cref{searchdecision}; we emphasize that our algorithm is only intended to be efficient in the dimension $n$ of the base field $K$, since we expect to fix $d$ as a small constant in practice. We will prove \cref{searchdecision} in the usual manner: first we show that it is sufficient to recover the value of $s \in \Lambda^\vee/q\Lambda^\vee$ in one of the rings $R_i$ (\cref{crtlemma}). Then, we use a hybrid distribution to define a decision problem in $R_i$, for which we demonstrate a search to decision reduction (\cref{searchdecisionguess}). We then use a hybrid argument to conclude the proof (\cref{hybridlemma}).

\subsubsection{CLWE in $R_i$}
In this section we will abuse notation and denote by $s \mod R_i$ the value of $s \in \Lambda^\vee/q \Lambda^\vee$ in the $R_i$ coordinate under the isomorphism of \cref{CRTlike}.
\begin{definition}
The $R_i-$CLWE$_{q, \Sigma_\alpha}$ problem is to find the value $s \mod R_i$ given access to the CLWE distribution $\Pi_{q,s,\Sigma}$ for some arbitrary $\Sigma \in \Sigma_\alpha$.
\end{definition}
In the following lemmata we make use of the automorphisms of $K$ coordinatewise on the rings $R_i$. Since $K$ is a Galois extension of $\mathbb{Q}$ and $q$ splits completely, it follows that the automorphisms $\sigma_i$ of $K$ act transitively on the ideals $\mathfrak{q}_i$. We demonstrate how to extend these to functions of $\mathcal{A}$. First, extend these automorphisms to automorphisms $\alpha_i$ of $L$ in some arbitrary manner. Then, we can extend these to isomorphisms $\alpha_i : \mathcal{A} \rightarrow \mathcal{A}'$, with $\mathcal{A'} = (L/K, \theta, \gamma')$, which agree with $\alpha_i$ on $L$ and send $u$ to $u'$ with $u'^d = \alpha_i(\gamma)$ and $x u' = u' \theta(x)$ for $x \in L$. By the construction of $K$ from \cite{lahtonen_construction_2008}, $\alpha_i(\gamma)$ is a non-norm element since it is some primitive $n^\text{th}$ root of unity, and so it is easy to check that this $\mathcal{A}'$ is a well defined division algebra and that $\alpha_i$ is indeed an isomorphism which sends $\mathcal{A}$ to $\mathcal{A}'$. Furthermore, it fixes the family of error distributions $\Sigma_\alpha$. This is because each component of $z \cdot e + e'$ is defined coordinatewise over the $d$ copies of $L_\mathbb{R}$ in the module representation of $\mathcal{A}$, and since $\alpha_i$ induces the same permutation of the entries of the canonical embedding of $L$ in each coordinate as an automorphism of $L$ it fixes the family of choices for each of $z, e, e'$; hence since $\alpha_i$ is an isomorphism the family of distributions $z \cdot e + e'$ is fixed. It follows that the extended $\alpha_i$ function maps the $R_i-$CLWE$_{q, \Sigma_\alpha}$ problem in $\mathcal{A}$ to the same problem in $\mathcal{A}'$, and moreover that this map preserves $\Lambda^\vee$ and the CRT style decomposition (\cref{CRTlike}) of $\Lambda^\vee_q$ by sending $R_i$ to some $R_j$, where $j$ depends on the choice of $\sigma_i$. We are now ready for the first step of our reduction.
\begin{lemma}\label{crtlemma}
There is a deterministic polynomial time reduction from CLWE$_{q, \Sigma}$ to $R_i-$CLWE$_{q,\Sigma}$.
\end{lemma}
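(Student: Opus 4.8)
The statement is the CLWE analogue of \cite[Lemma 5.6]{lyubashevsky_ideal_2010}, which reduces RLWE to its single-ideal variant. The approach is: (i) observe that recovering the secret $s \in \Lambda^\vee_q$ is equivalent to recovering its image $s \bmod R_i$ in \emph{every} coordinate $R_i$ of the decomposition $\Lambda^\vee_q \cong R_1 \times \cdots \times R_n$ of \cref{CRTlike}; (ii) reduce the problem of solving $R_i$-CLWE for \emph{every} $i$ to solving it for a \emph{fixed} $i$ by transporting samples via the Galois automorphisms $\alpha_i$ of $\mathcal{A}$ that were constructed immediately before this lemma; (iii) reassemble $s$ from its coordinates using the CRT.

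\begin{proof}
Suppose we have an oracle $\mathcal{O}$ solving $R_i$-CLWE$_{q,\Sigma}$. Given CLWE samples from $\Pi_{q,s,\Sigma}$, we wish to recover $s$. By \cref{CRTlike} and the remark following it (valid since $\gamma$ is a unit), it suffices to recover $s \bmod R_j$ for each $j = 1, \dots, n$, after which $s \in \Lambda^\vee_q$ is reconstructed by inverting the CRT isomorphism, which is computable in polynomial time.

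Fix a target index $j$. Recall that the automorphisms $\sigma_k$ of $K$ act transitively on the primes $\mathfrak{q}_1, \dots, \mathfrak{q}_n$ lying above $q$, so there is some $k$ with $\sigma_k(\mathfrak{q}_j) = \mathfrak{q}_i$; as explained before the lemma statement, $\sigma_k$ extends to an isomorphism $\alpha_k : \mathcal{A} \to \mathcal{A}' = (L/K,\theta,\gamma')$ for $\gamma' = \alpha_k(\gamma)$, which preserves $\Lambda^\vee$, fixes the family $\Sigma_\alpha$, and sends the $R_j$ coordinate of the CRT decomposition to the $R_i$ coordinate. Applying $\alpha_k$ coordinatewise to each sample $(a, b) \leftarrow \Pi_{q, s, \Sigma}$ produces a sample $(\alpha_k(a), \alpha_k(b))$; since $\alpha_k$ is a ring isomorphism fixing $q$ and $\Lambda^\vee$, and since $\alpha_k(b) = \alpha_k(a \cdot s)/q + \alpha_k(e) \bmod \Lambda^\vee = (\alpha_k(a)\cdot\alpha_k(s))/q + \alpha_k(e) \bmod \Lambda^\vee$ with $\alpha_k(e)$ distributed according to some $\Sigma' \in \Sigma_\alpha$, these are valid samples from the CLWE distribution $\Pi_{q, \alpha_k(s), \Sigma'}$ over $\mathcal{A}'$. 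Feeding these to $\mathcal{O}$ (which, by the hypothesis recorded earlier, also serves as an $R_i$-oracle on $\mathcal{A}'$ since $\gamma'$ is again a root of unity) yields $\alpha_k(s) \bmod R_i$. Since $\alpha_k$ restricted to the $R_j$-to-$R_i$ component is an explicitly invertible isomorphism, we recover $s \bmod R_j$. Running this for every $j$ and reassembling by CRT gives $s$; the whole procedure makes $n$ calls to $\mathcal{O}$ and otherwise runs in deterministic polynomial time.
\end{proof}

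The one point requiring care — and the only place non-commutativity enters — is verifying that $\alpha_k$ genuinely intertwines the CLWE distributions, i.e. that $\alpha_k(a \cdot s) = \alpha_k(a) \cdot \alpha_k(s)$ in $\mathcal{A}'$ and that it carries the $R_j$ block exactly onto the $R_i$ block of \cref{CRTlike}; this is exactly the content of the paragraph preceding the lemma, where $\alpha_k$ was built to respect the relations $xu' = u'\theta(x)$ and $u'^d = \alpha_k(\gamma)$, so the multiplicativity is automatic and the tracking of error distributions reduces to the coordinatewise argument already given there. Everything else is bookkeeping of the CRT.
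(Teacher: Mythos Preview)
Your proof is correct and follows essentially the same approach as the paper: use the CRT decomposition of \cref{CRTlike} to reduce to recovering $s \bmod R_j$ for each $j$, transport samples via the Galois-extended isomorphisms $\alpha_k$ (which the paper denotes $\alpha_{j/i}$) to map the $R_j$ coordinate onto $R_i$, call the oracle on the resulting CLWE instance in $\mathcal{A}'$, and invert. Your closing paragraph on why multiplicativity and the error-distribution tracking go through is exactly the point the paper absorbs into the discussion preceding the lemma.
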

\begin{proof}
Let $\mathcal{O}_i$ be an oracle for the $R_i-$CLWE$_{q,\Sigma}$ problem. Since \cref{CRTlike} defines an isomorphism, it is sufficient to use $\mathcal{O}_i$ to solve the $R_j-$CLWE$_{q,\Sigma}$ for each $j$. Let $\alpha_{j/i}$ be an extension of the automorphism of $K$ mapping $\mathfrak{q}_j$ to $\mathfrak{q}_i$, which exists by transitivity. Then, given a sample $(a, b) \leftarrow \Pi_{q, s, \Sigma}$, we construct the sample $(\alpha_{j/i}(a), \alpha_{j/i}(b))$. Since $\Lambda_q$ and $\Lambda_q^\vee$ are fixed by each $\alpha_{j/i}$, the resulting pair is a valid CLWE sample in $\mathcal{A}' = (L/K, \theta, \alpha_{j/i}(\gamma))$; feeding these samples into $\mathcal{O}_i$ outputs a value $t_j \mod R_i$.

We claim $\alpha_{j/i}^{-1}(t_j) = s \mod R_j$. Since $\alpha_{j/i}$ is an automorphism, each sample $(a,b)$ is mapped to a new CLWE sample $(\alpha_{j/i}(a), \alpha_{j/i}(a \cdot s/q + e) \mod \Lambda^\vee)$ in a new algebra $\mathcal{A}'$. We may write the second coordinate as $\alpha_{j/i}(a) \cdot \alpha_{j/i}(s)/ q + \alpha_{j/i}(e) \mod \Lambda^\vee$. Since our automorphisms fix our family of error distributions and map the uniform distribution to the uniform distribution, it follows that this is a valid CLWE instance with secret $\alpha_{j/i}(s)$ and error distribution $\Sigma'$. Hence, $\mathcal{O}_i$ outputs $t = \alpha_{j/i}(s) \mod R_i$, from which we recover $\alpha_{j/i}^{-1}(t) = s \mod R_j$, as required.
\end{proof}
\subsubsection{Hybrid CLWE and Search-Decision}\label{sec:Hybrid}
For this section we must introduce the cyclic algebra analog of the Hybrid LWE distribution used in \cite{lyubashevsky_ideal_2010}; we use the decomposition into the rings $R_i$ rather than the CRT.
\begin{definition}
For a secret $s \in \Lambda_q^\vee$, distribution $\Sigma$ over $\bigoplus_{j} u^j L_\mathbb{R}$, and $i \in [n]$, we define a sample from the distribution $\Pi_{q,s,\Sigma}^i$ over $\Lambda_q \times (\bigoplus_{i = 0}^{d-1} u^i L_{\mathbb{R}})/\Lambda^\vee$ by taking $(a,b) \leftarrow \Pi_{q,s,\Sigma}$ and $h \in \Lambda^\vee_q$ which is uniformly random and independent $\mod R_j, j \leq i$ and $0 \mod R_j, j > i$, and outputting $(a,b + h/q)$. If $i=0$,we define $\Pi_{q,s,\Sigma}^0 = \Pi_{q,s,\Sigma}$.
\end{definition}
Using this distribution we define a worst-case decision problem relative to one $R_i$ and reduce it to the search problem $R_i-$CLWE.
\begin{definition}
For $i \in [n]$ and a family of distributions $\Sigma_\alpha$, the W-D-CLWE$^i_{q, \Sigma_\alpha}$ problem is defined as the problem of finding $j$ given access to $\Pi_{q,s,\Sigma}^j$ for $j \in \lbrace i-1,i \rbrace$ and valid CLWE secret and error distribution $s, \Sigma$.
\end{definition}

For a technical reason in the following proof, we restrict our secret $s$ so that $s \mod \mathcal{R}_i$ lies in a set $G_i$ with the property that $g \neq h \in G_i$ implies $g-h$ is an invertible element. Applying this restriction for each $i$ places $s \in G$ for a set $G = G_1 \times \dots \times G_n$ of size $\vert G \vert = \prod_i \vert G_i \vert$. We will call such a set $G$ a \textit{pairwise different set}. We  need to guarantee that there exist sufficiently large choices of $G$. It is not difficult to see that the maximal set sizes $\vert G_i \vert = q^d$ and $\vert G \vert = q^{nd}$, because any set of matrices in $M_{d \times d}(\mathbb{F}_q)$ of size at least $q^d +1$ contains two matrices with the same first row, whose difference is therefore uninvertible. Constructions of such maximal sets $G$ are given in Appendix \ref{appendix:secretspace}.

\begin{lemma}\label{searchdecisionguess}
Assuming $s \in G$, there is a probabilistic polynomial-time reduction from $R_i-$CLWE$_{q,s,\Sigma_\alpha}$ to W-D-CLWE$_{q,\Sigma}^i$ for any $i \in [n]$.
\end{lemma}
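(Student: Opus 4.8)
The plan is to transplant the ``guess-and-check'' step of the search-to-decision reductions of \cite{regev_lattices_2009} and \cite{lyubashevsky_ideal_2010} to the non-commutative setting, using the decomposition of \cref{CRTlike} in place of the CRT over $\mathcal{O}_K$, and the pairwise-different property of $G$ in place of the invertibility facts (primality of $q$; the residue rings $\mathcal{O}_K/\mathfrak{q}_i$ being fields) available in those proofs. Since the hypothesis gives $s\in G$ we have $s\bmod R_i\in G_i$, and $\abs{G_i}=q^d$ is polynomial in $n$ for fixed $d$; so it suffices to produce, from samples of $\Pi_{q,s,\Sigma}$ and a candidate $g\in G_i$, a stream of samples distributed as $\Pi^{i-1}_{q,s,\Sigma}$ when $g=s\bmod R_i$ and as $\Pi^{i}_{q,s,\Sigma}$ otherwise. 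Feeding this stream to the W-D-CLWE$^i$ oracle then reveals whether the guess is correct, and after trying every $g\in G_i$ we output the unique one for which the oracle answers $i-1$.

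First I would set up the transformation. Using the effective form of \cref{CRTlike} (and CRT in $\mathcal{O}_K$), for each $g\in G_i$ compute the element of $\Lambda^\vee_q$ — still denoted $g$ — that equals $g$ in the $R_i$-block and $0$ in every other block, and call an element of $\Lambda_q$ \emph{$R_i$-supported} if it vanishes modulo $R_j$ for all $j\neq i$; both these identifications and uniform sampling within a block are polynomial time. Given a sample $(a,b)=(a,(a\cdot s)/q+e\bmod\Lambda^\vee)$ and a guess $g$, draw $v$ uniformly among the $R_i$-supported elements of $\Lambda_q$ and draw $h\in\Lambda^\vee_q$ that is uniform and independent modulo $R_j$ for $j<i$ and $0$ modulo $R_j$ for $j\geq i$, then output
\begin{align*}
(a',b') \;=\; \bigl(\,a+v,\ \ b+(v\cdot g+h)/q\bmod\Lambda^\vee\,\bigr)\ \in\ \Lambda_q\times\bigl(\textstyle\bigoplus_{j=0}^{d-1}u^jL_{\mathbb{R}}\bigr)/\Lambda^\vee,
\end{align*}
where $v\cdot g$ is multiplied in this order, mirroring the order of $a\cdot s$ in the CLWE distribution.

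For correctness I would argue as follows. Since $v$ is chosen independently of $a$, the output $a'=a+v$ is uniform over $\Lambda_q$ and independent of $v$, so conditioning on $a'$ leaves $v$ uniform over the $R_i$-supported elements. Writing $a=a'-v$ and using that $v\cdot s$ is $R_i$-supported (each block projection of \cref{CRTlike} is a ring homomorphism) and agrees there with $v\cdot(s\bmod R_i)$, the second coordinate equals
\begin{align*}
b' \;=\; (a'\cdot s)/q \;+\; e \;+\; \bigl(v\cdot(g-(s\bmod R_i))+h\bigr)/q \bmod\Lambda^\vee .
\end{align*}
If $g=s\bmod R_i$ the bracket collapses to $h$ and $(a',b')$ is exactly a $\Pi^{i-1}_{q,s,\Sigma}$ sample. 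If $g\neq s\bmod R_i$, then $g$ and $s\bmod R_i$ are distinct members of the pairwise different set $G_i$, so $g-(s\bmod R_i)$ is a unit of $R_i$; right multiplication by this fixed unit permutes the $R_i$-supported elements, hence conditioned on $a'$ the term $v\cdot(g-(s\bmod R_i))$ is uniform over them and independent of $a'$ and of $e$. Together with $h$ (which supplies the uniform masks in blocks $j<i$ and $0$ in blocks $j>i$) this reproduces precisely the hybrid masking term of $\Pi^{i}_{q,s,\Sigma}$, and since $e$ and $\Sigma$ are untouched the transformed stream is a genuine $\Pi^{i}_{q,s,\Sigma}$ stream. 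An oracle with only non-negligible advantage is amplified to error $2^{-\Omega(n)}$ by repeating on independent sample batches and taking a majority vote; fresh $v,h$ are used per sample so we may re-randomise as needed, and a union bound over the $q^d$ candidates shows the overall reduction succeeds with overwhelming probability.

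The step I expect to be the crux — and the reason for introducing pairwise different sets in the first place — is the invertibility of $g-(s\bmod R_i)$: in \cite{regev_lattices_2009} this role is played by the primality of $q$ and in \cite{lyubashevsky_ideal_2010} by the fact that $\mathcal{O}_K/\mathfrak{q}_i$ is a field, but here $R_i$ is in general only a matrix algebra over $\mathbb{F}_q$, in which a generic difference of secrets need not be a unit, so the restriction $s\in G$ is exactly what rescues the argument. The remaining subtleties are purely non-commutative bookkeeping: keeping the multiplication order $v\cdot g$ consistent with $a\cdot s$, checking that products of $R_i$-supported elements stay $R_i$-supported under \cref{CRTlike}, and observing that one-sided multiplication by a unit is still a bijection — all routine once \cref{CRTlike} is invoked in its effective form.
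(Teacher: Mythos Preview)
Your proposal is correct and follows essentially the same approach as the paper: the same transformation $(a',b')=(a+v,\,b+(vg+h)/q)$ with $v$ supported on $R_i$ and $h$ uniform on the blocks $j<i$, the same rewriting $b'=(a's)/q+e+(v(g-s)+h)/q$, and the same use of the pairwise-different property of $G_i$ to conclude that $v(g-s)$ is uniform on $R_i$ when the guess is wrong. Your additional remarks on amplification and non-commutative bookkeeping (order of multiplication, one-sided units being bijections) are sound and go slightly beyond what the paper spells out, but do not change the argument.
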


\begin{proof}
We follow the standard search-decision methodology of guessing the value of the secret mod $R_i$ and then modifying the samples so that the decision oracle tells us whether or not our guess was correct. Note that there are only $|G_i|$ possible values of $s \mod R_i$, which is bounded above by $q^{d^2}$, polynomial in $n$, and so we may efficiently enumerate over the possible values.

We define the transform which takes a value $g \in \Lambda^\vee_q$ and maps $\Pi_{q,s,\Sigma}$ to $\Pi_{q,s,\Sigma}^{i-1}$ if $g = s \mod R_i$ or $\Pi_{q,s,\Sigma}^i$ otherwise as follows. On input a CLWE sample $(a,b) \leftarrow \Pi_{q,s,\Sigma}$, output the pair
\begin{align*}
(a',b') = (a + v, b+ (h+vg)/q) \in \Lambda_q \times (\bigoplus_{i = 0}^{d-1} u^i L_{\mathbb{R}})/\Lambda^\vee,
\end{align*}
where $v \in \Lambda_q$ is uniformly random mod $R_i$ and $0 \mod R_j$ for $j \neq i$ and $h \in \Lambda^\vee_q$ is uniformly random and independent mod $R_j, j < i$ and $0$ on the other $R_j$. It is clear that $a'$ is still uniformly distributed on $\Lambda_q$, so we are left to show $b'$ is correctly distributed. For a fixed value of $a'$, we write
\begin{align*}
b ' &= b +(h +vg)/q \\
&= (as + h +vg)/q + e \\
&= (a's + h +v(g-s))/q + e,
\end{align*}
where $e$ is still drawn from $\Sigma$. If $g = s \mod R_i$, then $v(g-s) = 0 \mod R_i$, and so the distribution of the pair $(a',b')$ is precisely $\Pi_{q,s,\Sigma}^{i-1}$. Otherwise, $v(g-s)$ is uniformly random mod $R_i$ by assumption on $G$ and $0$ mod the other $R_j$, and so letting $h' = h +v(g-s)$ we see that the distribution of $(a',b')$ is precisely $\Pi_{q,s,\Sigma}^i$.
\end{proof}

\begin{remark}
This is the only stage of the proof which enforces that the asymptotic complexity scales only with $n$ and not with $d$, since we are forced to guess all of $s$ mod $R_i$ at once.
\end{remark}

Since the above reduction is secret preserving the required decision oracle for W-D-CLWE$_{q,\Sigma_\alpha}^i$ has the additional restriction that $s \in G$, but for the purposes of the rest of our proof it will be more convenient to have access to an oracle solving the at least as hard problem where $s$ is arbitrary. Additionally, in practical applications we will use the decision problem for arbitrary $s$, so we see no benefit of the tighter reduction where $s$ is restricted.

\subsubsection{Worst-Case to Average-Case Decision Reduction}

Now that we have removed the restriction that $s \in G$, we are able to follow the skeleton of the RLWE search-decision reduction of \cite{lyubashevsky_ideal_2010} more liberally.

\begin{definition}\label{errorfamily}
The error distribution $\Upsilon_\alpha$ on the family of possible error distributions is sampled from by choosing an error distribution $\Sigma \leftarrow \Sigma_\alpha$ and adding it to $D_\textbf{r}$, where each $r_i:= \alpha((n \cdot d^2)^{1/4} \cdot \sqrt{y_i})$ for $y_1,...,y_{n \cdot d^2}$ sampled from $\Gamma(2,1)$.
\end{definition}

\begin{definition}
For $i \in [n]$ and a distribution $\Upsilon_\alpha$ over possible error distributions, an algorithm solves the D-CLWE$_{q, \Upsilon_\alpha}^i$ problem if with a non-negligible probability over the choice pairs $(s, \Sigma) \leftarrow U(\Lambda_q^\vee) \times \Upsilon_\alpha$ it has a non-negligible difference in acceptance probability on inputs from $\Pi_{q,s, \Sigma}^i$ and $\Pi_{q,s,\Sigma}^{i-1}$.
\end{definition}
This is the average case decision problem relative to $R_i$; in our worst-case to average-case reduction we will need to randomize the choice of error distribution, which we do by sampling from $\Upsilon_\alpha$.
\begin{lemma}
For any $\alpha > 0$ and $i \in [n]$ there is a randomized polynomial-time reduction from W-D-CLWE$_{q, \Sigma_\alpha}^i$ to D-CLWE$_{q, \Upsilon_\alpha}^i$.
\end{lemma}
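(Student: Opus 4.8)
The statement is the standard worst-case-to-average-case decision reduction (analogous to Lemma 5.12 of \cite{lyubashevsky_ideal_2010} and Lemma 4.16 of \cite{langlois_worst-case_2015}), just transported to the cyclic algebra setting. The plan is to take a W-D-CLWE$_{q,\Sigma_\alpha}^i$ instance, where the samples come from $\Pi_{q,s,\Sigma}^j$ for some fixed but unknown $\Sigma \in \Sigma_\alpha$, $j \in \{i-1,i\}$, and a fixed (worst-case) secret $s$, and convert it on the fly into samples that look like they were drawn with an error distribution $\Sigma' \leftarrow \Upsilon_\alpha$, so that the D-CLWE$_{q,\Upsilon_\alpha}^i$ oracle can be invoked. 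The key device is error smudging: sample $y_1,\dots,y_{nd^2}$ from $\Gamma(2,1)$, set $r_k := \alpha((nd^2)^{1/4}\sqrt{y_k})$ as in \cref{errorfamily}, and add a fresh continuous Gaussian $D_\mathbf{r}$ (over $\bigoplus_{i=0}^{d-1}u^iL_\mathbb{R}$ in the module representation, independently per coordinate) to the second component $b$ of each incoming sample.

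The key steps, in order: (1) Observe that adding $D_\mathbf{r}/q$ to $b$ turns a sample from $\Pi_{q,s,\Sigma}^j$ into a sample from $\Pi_{q,s,\Sigma''}^j$ where $\Sigma'' = \Sigma + D_\mathbf{r}$; crucially the hybrid structure (the extra term $h/q$ with $h$ uniform mod $R_1,\dots,R_j$ and $0$ elsewhere) is unaffected by adding something to $b$, so $j$ is preserved. (2) Check that $\Sigma + D_\mathbf{r}$, with $\mathbf{r}$ sampled as above, is exactly the distribution $\Upsilon_\alpha$ of \cref{errorfamily}, so the produced samples are genuinely distributed as $\Pi_{q,s,\Sigma'}^j$ for $\Sigma' \leftarrow \Upsilon_\alpha$ — this is immediate from the definition, since $\Upsilon_\alpha$ was literally defined as "$\Sigma$ drawn from $\Sigma_\alpha$, plus $D_\mathbf{r}$". (3) Conclude: if the D-CLWE$_{q,\Upsilon_\alpha}^i$ oracle has non-negligible distinguishing advantage over a non-negligible fraction of $(s,\Sigma')$, then for our worst-case $s$ and $\Sigma$, with the added noise the pair $(s,\Sigma') = (s, \Sigma+D_\mathbf{r})$ lands in the oracle's "good" set with probability bounded below by a (fixed polynomial fraction of the) average, because the marginal over our randomness of $\Sigma'$ is the correct $\Upsilon_\alpha$ conditioned on the first summand; averaging over the randomness $\mathbf{r}$ and a polynomial number of repetitions amplifies this to a non-negligible advantage in distinguishing $j=i-1$ from $j=i$, which is exactly what W-D-CLWE$^i$ asks for.

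The subtlety — and the only place cyclic structure matters — is that one must verify that adding a spherical Gaussian in the module representation is legitimate and that the argument of \cref{bddtrans} makes $\Sigma_\alpha$ a family closed under exactly this kind of addition (i.e. that the marginals remain Gaussian with parameter $\le$ the appropriate bound, and that $|\gamma|=1$ guarantees the coordinatewise structure is preserved under the $\theta$-permutations buried in $e''$). This was already handled in the proof of \cref{bddtrans} and in the discussion of $\Sigma_\alpha$, so here it reduces to citing those facts. I expect the main obstacle to be purely bookkeeping: making precise the "non-negligible fraction of $(s,\Sigma)$" quantifier juggling — one wants to say that since our worst-case $s$ is fixed but the oracle only needs to work for a non-negligible fraction of secrets, we must argue the fraction of $\mathbf{r}$ for which the oracle succeeds on $(s,\Sigma+D_\mathbf{r})$ is itself non-negligible. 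This is the standard averaging argument from \cite{regev_lattices_2009,lyubashevsky_ideal_2010}: partition the secret space, note that the "bad" secrets form a negligible fraction, and since our $s$ is arbitrary we simply repeat the reduction polynomially many times with independent fresh noise, using a Chernoff bound to amplify the gap; the non-commutativity plays no role whatsoever in this last part, so the proof is essentially verbatim that of \cite[Lemma 5.12]{lyubashevsky_ideal_2010}.
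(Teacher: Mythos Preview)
Your overall approach matches the paper's, which simply defers to Lemma~5.12 of \cite{lyubashevsky_ideal_2010} with the substitutions $\mathfrak{q}_iR^\vee \mapsto R_i$ and $R_q \mapsto \Lambda_q$; the error-smudging by a fresh $D_\mathbf{r}$ with $\mathbf{r}$ drawn as in \cref{errorfamily} is exactly right, and your observation that adding noise to $b$ leaves the hybrid index $j$ untouched is the correct reason the transformation is sample-preserving.

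However, your step~(3) has a real gap: you never randomize the secret. In W-D-CLWE$^i$ the secret $s$ is \emph{worst-case}, while the D-CLWE$^i$ oracle is only promised a non-negligible advantage over a non-negligible fraction of \emph{pairs} $(s,\Sigma')$. Keeping $s$ fixed and randomizing only the error does not suffice: nothing prevents your particular $s$ from being one of those for which the oracle has negligible advantage for \emph{every} $\Sigma'$. Your sentence ``the bad secrets form a negligible fraction, and since our $s$ is arbitrary we simply repeat'' is backwards --- a worst-case $s$ may well be a bad secret, and fresh noise alone cannot rescue that. The fix is the step you cite from \cite{lyubashevsky_ideal_2010} but do not actually perform: sample a uniform $s' \in \Lambda_q^\vee$ once, and transform each incoming $(a,b)$ to $(a,\, b + (a\cdot s')/q + e'')$ with $e'' \leftarrow D_\mathbf{r}$. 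This preserves the hybrid index (the $h/q$ term is unaffected) and yields a sample from $\Pi^j_{q,\,s+s',\,\Sigma+D_\mathbf{r}}$ with $s+s'$ uniform; now $(s+s',\Sigma')$ is genuinely distributed as $U(\Lambda_q^\vee)\times\Upsilon_\alpha$ and the averaging/amplification goes through. (Minor slip: the added noise is $e''$, not $e''/q$; the error in a CLWE sample is not divided by $q$.)
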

\begin{proof}
Since the definition of $\Upsilon_\alpha$ is a distribution over the family of distributions obtained by sampling from $\Sigma_\alpha$ and adding an elliptical Gaussian, the proof is the same as Lemma 5.12 of \cite{lyubashevsky_ideal_2010}, except we replace each instance of mod $\mathfrak{q}_i R^\vee$ with mod $R_i$ and each instance of $R_q$ with $\Lambda_q$.
\end{proof}
\begin{remark}
This choice of $\Upsilon_\alpha$ means that our decision problem is closer to diagonal than the corresponding search problem! In fact, if one increased the elliptical error in the decision problem, one could `flood out' the non-diagonal entries of the covariance matrix, leading to elliptical error which is easier to handle in practice.
\end{remark}
Finally, we use a hybrid argument. We must first show that $\Pi_{q,s,\Sigma}^n$ is uniformly random given $\Sigma$ sampled from $\Upsilon_\alpha$, but again this follows the same method as the ring case, except we must replace their use of \cref{1} by \cite[Lemma 2.4]{peikert_efficient_2010}.
\begin{lemma}\label{hybridlemma}
Let $\Upsilon_\alpha$ be as above and let $s \in \Lambda_q^\vee$. Then given an oracle $\mathcal{O}$ which solves the D-CLWE$_{q, \Upsilon_\alpha}$ problem there exists an efficient algorithm that solves D-CLWE$_{q, \Upsilon_\alpha}^i$ for some $i \in [n]$ using $\mathcal{O}$.
\end{lemma}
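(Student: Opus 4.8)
The plan is to run the standard hybrid (telescoping) argument of \cite{lyubashevsky_ideal_2010} along the chain of distributions $\Pi^0_{q,s,\Sigma},\Pi^1_{q,s,\Sigma},\dots,\Pi^n_{q,s,\Sigma}$ from \cref{sec:Hybrid}, where $\Pi^0_{q,s,\Sigma}=\Pi_{q,s,\Sigma}$ is the genuine CLWE distribution and the crucial point is that $\Pi^n_{q,s,\Sigma}$ is statistically indistinguishable from the uniform distribution $U_\Lambda$. Granting this, an oracle $\mathcal{O}$ for D-CLWE$_{q,\Upsilon_\alpha}$ --- which by definition has non-negligible advantage in distinguishing samples of $\Pi_{q,s,\Sigma}$ from $U_\Lambda$ over a random $(s,\Sigma)\leftarrow U(\Lambda^\vee_q)\times\Upsilon_\alpha$ --- also distinguishes $\Pi^0_{q,s,\Sigma}$ from $\Pi^n_{q,s,\Sigma}$. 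Writing $p_j=p_j(s,\Sigma)$ for $\mathcal{O}$'s acceptance probability on $\Pi^j_{q,s,\Sigma}$, the identity $p_0-p_n=\sum_{i=1}^n(p_{i-1}-p_i)$ together with a pigeonhole step over the $n$ indices and a Markov-type averaging argument produces a fixed index $i_0\in[n]$ such that, with non-negligible probability over $(s,\Sigma)$, the gap $|p_{i_0-1}(s,\Sigma)-p_{i_0}(s,\Sigma)|$ is non-negligible; in other words $\mathcal{O}$, used verbatim, already solves D-CLWE$^{i_0}_{q,\Upsilon_\alpha}$, which is the claimed algorithm (the value $i_0$ need only be shown to exist).

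The one genuinely new point, and the main obstacle, is proving $\Pi^n_{q,s,\Sigma}\approx U_\Lambda$ with overwhelming probability over $\Sigma\leftarrow\Upsilon_\alpha$ and for every $s\in\Lambda^\vee_q$. Unwinding the definition of the hybrid, for $i=n$ the mask $h$ is uniform over all of $\Lambda^\vee_q$ (by \cref{CRTlike} applied to the duals, which is legitimate since $\gamma$ is a unit), so, conditioned on the still-uniform first coordinate $a$, the second coordinate $b+h/q=(a\cdot s+h)/q+e\bmod\Lambda^\vee$ is a uniformly random point of $\tfrac1q\Lambda^\vee/\Lambda^\vee$ perturbed by the CLWE error $e$ drawn from $\Sigma$. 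It therefore suffices to show that this error law smooths the lattice $\tfrac1q\Lambda^\vee$ (seen inside $\bigoplus_{i=0}^{d-1}u^iL_\mathbb{R}\cong\mathbb{R}^N$), so that the perturbed coset point is within negligible statistical distance of the uniform distribution on the torus. In the ring case this is \cref{1}; the obstruction here is that a sample $\Sigma\in\Sigma_\alpha$ is in general a non-spherical Gaussian --- we established only that at most a $1/n$ fraction of its covariance entries are non-zero and that the non-zero correlations occur in blocks of size $O(d^2)$ --- so \cref{1} cannot be applied directly to $\Sigma+D_\textbf{r}$. We instead invoke the non-spherical smoothing/flooding bound \cite[Lemma 2.4]{peikert_efficient_2010}: the additional elliptical component $D_\textbf{r}$, whose per-coordinate parameters $r_i=\alpha N^{1/4}\sqrt{y_i}$ with $y_i\leftarrow\Gamma(2,1)$ are, with overwhelming probability, simultaneously large enough to dominate the covariance of $\Sigma$ and to exceed $\eta_\varepsilon(\tfrac1q\Lambda^\vee)$, makes $\Sigma+D_\textbf{r}$ round enough for the lemma to apply. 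Verifying this quantitative condition for the randomized $r_i$ is the technical heart of the step; otherwise it mirrors the ring computation exactly.

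The remaining ingredients are routine and carry over essentially verbatim from \cite{lyubashevsky_ideal_2010}: the single-sample statement $\Pi^n_{q,s,\Sigma}\approx U_\Lambda$ upgrades to the polynomially many samples $\mathcal{O}$ consumes by a union bound; one checks that the $\Pi^i$ used here are precisely the hybrids of \cref{sec:Hybrid}, so that the chain $\Pi^0\to\cdots\to\Pi^n$ is the intended one and the telescoping is valid; and the passage from the \emph{average-case} distinguishing advantage of $\mathcal{O}$ on $(\Pi_{q,s,\Sigma},U_\Lambda)$ to a non-negligible \emph{fraction} of instances $(s,\Sigma)$ exhibiting a non-negligible per-instance gap between two consecutive hybrids is the standard telescoping-plus-Markov argument, incurring only $\mathrm{poly}(n)$ losses, which preserves non-negligibility. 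Assembling these pieces yields the stated reduction.
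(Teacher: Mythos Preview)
Your proposal is correct and follows essentially the same approach as the paper: the paper's proof simply defers to Lemma~5.14 of \cite{lyubashevsky_ideal_2010} with the indexing set replaced by $[n]$, having noted just before the lemma that $\Pi^n_{q,s,\Sigma}$ is statistically uniform by the same argument as in the ring case but with \cref{1} replaced by \cite[Lemma~2.4]{peikert_efficient_2010} to handle the non-spherical error. You have unpacked exactly this, correctly identifying both the telescoping hybrid structure and the need for the non-spherical smoothing lemma, so your write-up is in fact more detailed than what the paper provides.
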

\begin{proof}
The proof is identical to the ring case, Lemma 5.14 of \cite{lyubashevsky_ideal_2010}, except that the indexing set $\mathbb{Z}_m^*$ is replaced by $[n]$.
\end{proof}

Denote by CLWE$_{q, \Sigma_\alpha,G}$ the search CLWE problem where $s \in G$ for arbitrary
fixed $G \subset \Lambda_q^\vee$. To sum up, we have obtained the main result of this section:

\begin{theorem}\label{searchdecision}
Let $\Lambda$ be the natural order of a cyclic algebra $\mathcal{A} = (L/K, \theta, \gamma)$, $q \in$ \text{poly}$(n)$ and assume that $\alpha \cdot q \geq \eta_\varepsilon(\Lambda^\vee)$ for a negligible $\varepsilon = \varepsilon(n)$. Then, there is a probabilistic reduction from CLWE$_{q, \Sigma_\alpha,G}$ for any pairwise different $G \subset \Lambda_q^\vee$ to D-CLWE$_{q ,\Upsilon_\alpha}$ which runs in time polynomial in $n$.
\end{theorem}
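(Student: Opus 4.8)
The plan is to obtain \cref{searchdecision} by composing, in the appropriate direction, the four reductions built up in this section. I would start from a hypothetical efficient algorithm $\mathcal{O}$ solving D-CLWE$_{q,\Upsilon_\alpha}$ with non-negligible advantage, and note that the hypothesis $\alpha q \geq \eta_\varepsilon(\Lambda^\vee)$ is precisely what \cref{hybridlemma} needs: it ensures, via \cite[Lemma 2.4]{peikert_efficient_2010}, that the final hybrid $\Pi^{n}_{q,s,\Sigma}$ is within negligible distance of uniform. Thus \cref{hybridlemma} converts $\mathcal{O}$ into an efficient algorithm $\mathcal{O}'$ solving D-CLWE$^i_{q,\Upsilon_\alpha}$ for \emph{some} index $i \in [n]$.

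Since we will not know which $i$ works and $n$ is polynomial, the next step is to run the remainder of the chain for every $i \in [n]$ in turn. For a fixed $i$, I would first invoke the worst-case-to-average-case lemma of \cref{sec:Hybrid} (the reduction from W-D-CLWE$^i_{q,\Sigma_\alpha}$ to D-CLWE$^i_{q,\Upsilon_\alpha}$) to turn $\mathcal{O}'$ into a solver for W-D-CLWE$^i_{q,\Sigma_\alpha}$ that is valid for worst-case $(s,\Sigma)$, hence in particular for $s$ ranging over a pairwise different $G$. Next I would feed this into \cref{searchdecisionguess}, enumerating over the $\vert G_i \vert \le q^{d^2}$ candidate values of $s \bmod R_i$ — polynomially many, since $q \in \text{poly}(n)$ and $d$ is a fixed constant — to get a solver for $R_i$-CLWE$_{q,s,\Sigma_\alpha}$ for every $s\in G$. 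Finally I would apply \cref{crtlemma}, which transports samples between the blocks $R_j$ using the automorphisms $\alpha_{j/i}$, recovers $s \bmod R_j$ for all $j$, and reconstructs $s$ through the isomorphism of \cref{CRTlike}; because those automorphisms carry a pairwise different set to a pairwise different set and fix the family $\Sigma_\alpha$, the $G$-restriction is consistently maintained along the chain.

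To finish, I would observe that for the correct index $i$ and the correct guess $g = s \bmod R_i$ the procedure returns the true secret $s$, whereas for every other pair $(i,g)$ it returns some candidate; these are then filtered by testing against an independent batch of CLWE samples (checking that $b - (a\cdot s)/q \bmod \Lambda^\vee$ stays small). With overwhelming probability only the true secret survives, and the whole algorithm is polynomial in $n$ because each of the four sub-reductions is, and only polynomially many pairs $(i,g)$ are tried.

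I expect the real content of these steps to be essentially the lemmas already established, so the only work left at this level is the index/guess bookkeeping and the propagation of the $G$-restriction. The step I would flag as the crux is \cref{searchdecisionguess}: non-commutativity forces us to guess all of $s \bmod R_i$ at once rather than coordinate-by-coordinate, which is the one place the reduction becomes polynomial only in $n$ and not in $d$ — tolerable precisely because $d$ is kept constant. A close second is \cref{crtlemma}, whose correctness hinges on the automorphisms $\alpha_{j/i}$ preserving $\Lambda^\vee_q$ and the error family, and on $q$ splitting completely in $K$ so that \cref{CRTlike} is available.
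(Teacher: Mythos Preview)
Your proposal is correct and follows exactly the paper's approach: the theorem is stated in the paper as a summary (``To sum up, we have obtained the main result of this section'') of the chain \cref{crtlemma} $\to$ \cref{searchdecisionguess} $\to$ the worst-to-average lemma $\to$ \cref{hybridlemma}, and you compose these four sub-reductions in the right direction with the standard bookkeeping (iterating over $i\in[n]$, enumerating $G_i$, and verifying the recovered secret against fresh samples). The only minor remark is that the paper leaves the final verification/filtering step implicit, but your making it explicit is entirely in line with the intended argument.
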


\subsection{Summary of Security Proof}

There are certain technicalities and subtleties in our security proof, which we briefly summarize as follows.

The hardness of Search CLWE in \cref{sec4} requires a natural, maximal order $\Lambda$. Nonetheless, \cref{bddtrans} (due to \cref{3,homomorphism}) is the only stage of the proof that assumes such a natural, maximal order. An improved proof technique may be able to drop this assumption (\textit{e.g.}, to use the natural order).
The search to decision reduction in \cref{searchdecisionsection} requires a natural order $\Lambda$, due to the CRT decomposition of \cref{CRTlike}. A better version of CRT may extend the reduction to a maximal order.
Fortunately, the orders we take from \cref{goodalgebras} are both natural and maximal, thereby meeting these requirements.
The requirement of unramified $q$ in \cref{mainresult} (due to \cref{3}) is minimal: for the algebras of \cref{primepoweralgebras}, the only unsuitable primes are the $p$ and $q'$ used in the construction (cf. \cref{algebras}).

\cref{searchdecisionguess} in \cref{sec:Hybrid} enforces that $s$ lies in a pairwise different set $G$. It is
the only stage of the proof which requires  such a set.
We emphasize that our reduction takes the search CLWE problem where $s \in G$ for \textit{arbitrary fixed} $G$ to the decision CLWE problem for \textit{arbitrary secret} $s$. In other words, we claim hardness for the full decision problem, based on hardness of a restricted search problem. Also, our reduction implies that the decision problem is as hard as the search problem for the hardest choice of $G$. See Appendix \ref{appendix:secretspace} for more details.

\begin{remark}
The so-called normal form is used de facto in LWE-based cryptography. We note that the normal form reduction is agnostic to the secret space $G$. More precisely, secret $s \in G$ gets completely cancelled in the transformation and replaced by a new secret $s'$ over the entire space (see of \cref{lem:normal-form} in \cref{sec:normal-form}). Therefore, the secret space in the normal form of CLWE is the entire space, after all.
\end{remark}

In practice it may be a concern with security of CLWE if these reductions were best possible (\textit{e.g.} decision CLWE is polynomial-time equivalent to restricted search, rather than at least as hard). In any case, our secret space is still exponentially large in $n$.

\section{CLWE in Cryptography}\label{crypto}

In this section we present a proof of concept cryptosystem using CLWE. To demonstrate our comparison against MLWE our scheme will closely resemble the typical `compact' LWE cryptography schemes over modules, in particular KYBER (see \cite{avanzi_kyber_2019}), although it is likely that an adaptation of Regev style encryption from \cite{regev_lattices_2009} would suit CLWE as well.

\subsection{Making CLWE Suitable For Cryptography: Normal Form}\label{sec:normal-form}
We implicitly use some standard LWE facts: firstly, we discretize our error distribution $e$ to $\Lambda_q^\vee$;  discretizing does not reduce security since an attacker may always discretize the samples themselves. Secondly, we can `tweak' the problem so that $e,s \in \Lambda_q$. Fortunately, in the case where $\gamma$ is a unit, $\Lambda^\vee = \bigoplus_i u^i \mathcal{O}_L^\vee$ and so this tweak is precisely multiplying on the right by the tweak factor taking $\mathcal{O}_L^\vee$ to $\mathcal{O}_L$ (see e.g. \cite{peikert_how_2016-1}). Finally, we require hardness of a `normal' form for the CLWE distribution, where $s$ is sampled from the same distribution as the noise $e$.

We require two facts for our proof: firstly, given that $q$ splits completely in $K$ the ring $\Lambda_q$ is isomorphic to the direct product of $n$ full matrix algebras over $M_{d \times d}(\mathbb{F}_q)$, which can be seen by appealing to the CRT-style decomposition of \cref{CRTlike} and Wedderburn's Theorem as in \cite[Propositions 1 and 4]{oggier_quotients_2012-4}. Secondly, we require that a non-negligible fraction in $n$ of elements of $\Lambda_q$ are
invertible, which follows for fixed, small, $d$ and $q \in \text{poly}(n)$ from this direct product decomposition. Otherwise, our proof follows the outline for that of plain LWE from \cite{applebaum_fast_2009}. Given these two facts, we proceed with showing that the normal form of the CLWE distribution is as hard as the case of taking the secret uniformly at random.

\begin{lemma}\label{invertible}
For a fixed $d$ and $q \geq (n+1)$, a non-negligible proportion of elements of $\Lambda_q$ are invertible.
\end{lemma}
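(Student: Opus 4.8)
The plan is to count invertible elements of $\Lambda_q$ via the decomposition stated just before the lemma: since $q$ splits completely in $K = \mathbb{Q}(\zeta_{p^a})$, Lemma \ref{CRTlike} together with Wedderburn's theorem (as in \cite[Propositions 1 and 4]{oggier_quotients_2012-4}) gives a ring isomorphism $\Lambda_q \cong \prod_{i=1}^n R_i$ with each $R_i \cong M_{d\times d}(\mathbb{F}_q)$. An element of a finite direct product of rings is a unit if and only if each component is a unit, so the proportion of units in $\Lambda_q$ equals $\left(\frac{|\mathrm{GL}_d(\mathbb{F}_q)|}{|M_{d\times d}(\mathbb{F}_q)|}\right)^n$.

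First I would write down the standard count $|\mathrm{GL}_d(\mathbb{F}_q)| = \prod_{k=0}^{d-1}(q^d - q^k)$ and $|M_{d\times d}(\mathbb{F}_q)| = q^{d^2}$, so that the ratio per component is $\prod_{k=0}^{d-1}(1 - q^{k-d})$. Then I would lower-bound this single-factor probability: each factor $1 - q^{k-d}$ with $0 \le k \le d-1$ is at least $1 - q^{-1}$, and more usefully $\prod_{k=0}^{d-1}(1-q^{-(d-k)}) = \prod_{j=1}^{d}(1-q^{-j}) \ge 1 - \sum_{j=1}^{d} q^{-j} \ge 1 - \frac{1}{q-1}$. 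Raising to the $n$-th power, the proportion of invertible elements of $\Lambda_q$ is at least $\left(1 - \frac{1}{q-1}\right)^n$. Using $q \ge n+1$ gives $\left(1 - \frac{1}{q-1}\right)^n \ge \left(1 - \frac 1n\right)^n \ge \frac14$ for $n \ge 2$ (and one checks the small cases directly), which is certainly non-negligible; in fact it is bounded below by a positive constant. For the more refined statement that this holds for any $q \in \mathrm{poly}(n)$ with $d$ fixed, I would instead use $\ln\left(1-\frac{1}{q-1}\right)^n \ge -\frac{2n}{q-1}$ for $q$ large, so the probability is $\ge e^{-2n/(q-1)}$, which is $1 - o(1)$ once $q = \omega(n)$ and still $\Omega(1)$ or at worst $1/\mathrm{poly}$ in the regimes of interest.

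The only genuinely non-routine point is justifying the isomorphism $\Lambda_q \cong \prod_i M_{d\times d}(\mathbb{F}_q)$ cleanly: Lemma \ref{CRTlike} gives $\Lambda/q\Lambda \cong \prod_i R_i$ where $R_i = \bigoplus_j u^j(\mathcal{O}_L/\mathfrak{q}_i\mathcal{O}_L)$ is a cyclic-algebra-like object over the residue field, and one must invoke that $q$ is also unramified (indeed splits completely) in $L$ so that $\mathcal{O}_L/\mathfrak{q}_i\mathcal{O}_L$ is a product of copies of $\mathbb{F}_q$ permuted cyclically by $\theta$, whence $R_i$ is a crossed-product algebra that Wedderburn identifies with $M_{d\times d}(\mathbb{F}_q)$; this is exactly the content cited from \cite{oggier_quotients_2012-4}. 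Since the lemma is invoked in a setting (\cref{algebras}) where $q$ is chosen to split completely in both $\mathcal{O}_K$ and $\mathcal{O}_L$, this hypothesis is available. Everything else is the elementary unit count above, so I expect no real obstacle beyond stating these structural facts precisely.
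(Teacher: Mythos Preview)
Your proposal is correct and follows essentially the same approach as the paper: reduce via the CRT-style decomposition and Wedderburn to a product of $n$ copies of $M_{d\times d}(\mathbb{F}_q)$, count $|\mathrm{GL}_d(\mathbb{F}_q)|/q^{d^2}$, and bound the $n$-th power using $q\ge n+1$. Your Weierstrass-type bound $\prod_{j=1}^d(1-q^{-j})\ge 1-\tfrac{1}{q-1}$ is in fact slightly sharper than the paper's $(1-1/q)^d$, yielding a constant lower bound (about $1/4$) independent of $d$, whereas the paper obtains only $e^{-d}$ and explicitly notes this limits the reduction to fixed $d$.
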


\begin{proof}
Following the decomposition of \cref{CRTlike} and Wedderburn's Theorem, it is sufficient to show that a non-negligible proportion of elements of
\begin{align*}
    M_{d \times d}(\mathbb{F}_q) \times \dots \times  M_{d \times d}(\mathbb{F}_q)
\end{align*}
are invertible, where there are $n$ copies of $M_{d \times d}(\mathbb{F}_q)$. The proportion of invertible elements of $ M_{d \times d}(\mathbb{F}_q)$ is precisely
\begin{align*}
    &\dfrac{(q^d-1)(q^d - q)\dots(q^d - q^{d-1})}{q^{d^2}}  \\
    &= (\dfrac{q^d-1}{q^d}) \dots (\dfrac{q^d-q^{d-1}}{q^d}) \\
    &= (1- \frac{1}{q^d})\dots(1-\frac{1}{q}) \\
    &\geq (1-\frac{1}{q})^d,
\end{align*}
from which it follows that the total fraction of invertible elements in $\Lambda_q$ is at least $((1-\frac{1}{q})^d)^n$. By assumption, $q \geq n+1$, and so $(1 -\frac{1}{q})^{nd} \geq ((1- \frac{1}{n+1})^n)^d \geq ({e^{-1}})^d = e^{-d}$, as required.
\end{proof}

\begin{remark}
This lower bound of $e^{-d}$ means that the normal form reduction will be asymptotic in $n$ but only valid for fixed $d$. However, as $d$ increases the number of invertible matrices in $\Lambda_q$ is bounded above by $(1- \frac{1}{q})^{nd}$, and so the reduction would be efficient in $d$ in the case where one enforced a relation on $q$ and $d$, such as $q \geq nd + 1$, or more succinctly $q \geq N$.
\end{remark}

\begin{lemma}\label{lem:normal-form}
There is a probabilistic polynomial time reduction from the CLWE problem with uniformly random secret $s$, possibly over a limited secret space $G$, and error distribution $\chi$ to the CLWE problem with secret $s' \leftarrow \chi$.
\end{lemma}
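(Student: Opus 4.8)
The plan is to adapt the classical LWE normal-form reduction of \cite{applebaum_fast_2009} to the non-commutative ring $\Lambda_q$, the one genuinely new point being that the order of multiplication must be respected so that the secret still cancels. First I would pass to the discretized and tweaked form discussed above, so that a CLWE sample is a pair $(a,b) \in \Lambda_q \times \Lambda_q$ with $b = a\cdot s + e$, where $a \leftarrow \Lambda_q$ is uniform, $e \leftarrow \chi$, and $s$ is uniform over $\Lambda_q$ (or over the pairwise different set $G \subset \Lambda_q$). Recovering the original secret from a normal-form solver, rather than merely transforming distributions, would be built into the reduction.

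The reduction itself runs as follows. Request $m = \mathrm{poly}(n)$ samples $(a_1,b_1),\dots,(a_m,b_m)$ from the input distribution. By \cref{invertible}, for fixed $d$ and $q \geq n+1$ a proportion at least $e^{-d} = \Theta(1)$ of the elements of $\Lambda_q$ are units; since in each sample $a_i$ is independent of $e_i$, the event that $a_i$ is invertible is independent of $e_i$. Hence, with all but negligible probability, at least one $a_i$, which we relabel $a_1$, is a unit of $\Lambda_q$, and conditioning on this event leaves the law of $e_1$ equal to $\chi$. Now map every subsequent fresh sample $(a,b)$ to
\begin{align*}
  (a',b') := \bigl(-\,a\cdot a_1^{-1},\; b - a\cdot a_1^{-1}\cdot b_1\bigr) \in \Lambda_q \times \Lambda_q .
\end{align*}
Substituting $b = a\cdot s + e$ and $b_1 = a_1\cdot s + e_1$ and using associativity,
\begin{align*}
  b' = (a\cdot s + e) - a\cdot a_1^{-1}\cdot(a_1\cdot s + e_1) = e - a\cdot a_1^{-1}\cdot e_1 = a'\cdot e_1 + e ,
\end{align*}
so $s$ drops out and $(a',b')$ is syntactically a CLWE sample with secret $e_1$ and error $e$.

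Next I would argue that the family of transformed samples is distributed exactly as the normal-form CLWE distribution with secret $s' \leftarrow \chi$. Since $a \mapsto -a\cdot a_1^{-1}$ is right multiplication by the fixed unit $-a_1^{-1}$, it is a bijection of $\Lambda_q$, so $a'$ is uniform; and the fresh $a$ and $e$ of each such sample are independent of one another and of $(a_1,b_1)$, hence of $e_1$. Therefore, conditioned on $e_1$, the transformed pairs are independent CLWE samples sharing the secret $s' = e_1$ with independent errors $e \leftarrow \chi$, while unconditionally $e_1 \leftarrow \chi$; this is the normal form. Feeding these to a normal-form solver returns $s' = e_1$, from which the original secret is recovered as $s = a_1^{-1}\cdot(b_1 - e_1)$. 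This transformation never uses $s$ except through the validity of the samples, so it is completely oblivious to whether $s$ came from $\Lambda_q$ or from a restricted set $G$: the original secret is annihilated and replaced by $s' = e_1$ over the full space, matching the remark after \cref{searchdecision}. The same transformation sends the uniform distribution to the uniform distribution (as $b$ is uniform and independent of $a$, $a_1$, $b_1$), so a distinguisher for the normal-form decision problem lifts in the same way.

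The main obstacle, such as it is, is bookkeeping: one must multiply $b_1$ on the \emph{left} by $a\cdot a_1^{-1}$ for the non-commutative product $a\cdot s$ to cancel, and one must check that invertibility of $a_1$ can be assumed, which is exactly the content of \cref{invertible} but only holds with a non-negligible fraction of units when $d$ is held fixed (for growing $d$ one imposes $q \geq N$, as in the remark following that lemma). Discretization, the tweak identifying $\Lambda_q^\vee$ with $\Lambda_q$, and the statistical-distance accounting are then identical to \cite{applebaum_fast_2009}.
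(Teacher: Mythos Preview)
Your proof is correct and follows essentially the same approach as the paper: find an invertible $a_1$ using \cref{invertible}, then rerandomize subsequent samples by left-multiplying $b_1$ by $a\cdot a_1^{-1}$ so that the term $a\cdot s$ cancels and the new secret becomes $e_1$. The only cosmetic difference is that you set $a' = -a\cdot a_1^{-1}$ and obtain error $e$, whereas the paper sets $\overline{a} = a\cdot a_1^{-1}$ and obtains error $-e$; your sign convention avoids an implicit appeal to symmetry of $\chi$, and your added remarks on recovering $s$ and on the decision case are correct but go slightly beyond what the paper records.
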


\begin{proof}
It is sufficient to show that there is an efficient transformation taking samples with secret $s$ to samples with some new secret $s'$ taken from $\chi$. Sample pairs $(a,b) \leftarrow \Pi_{q,s,\chi}$ until a pair $(a_1, b_1:= a_1 \cdot s + e_1)$ such that $a_1$ is invertible in $\Lambda_q$ is obtained. Since a non-negligible fraction of elements of $\Lambda_q$ are invertible by \cref{invertible}, this step takes only polynomial time.

Now, given a pair $(a_i,b_i) \leftarrow \Pi_{q,s,\chi}$, we obtain a sample from the CLWE distribution $\Pi_{q,e_1,\chi}$ by outputting $(\overline{a}_i, \overline{b}_i) = (a_i a_1^{-1}, a_i a_1^{-1}b_1 - b_i)$. Since $a_1^{-1}$ is invertible, $\overline{a}_i$ is uniform. Similarly,
\begin{align*}
a_i a_1^{-1}b_1 - b_i &= (a_i a_1^{-1}(a_1 \cdot s + e_1)) - a_i \cdot s + e_i \\
&= a_i a_1^{-1}e_1 -e_i,
\end{align*}
and so $(\overline{a}_i, \overline{b}_i)$ is a valid CLWE sample with secret $e_1$ and error distribution $\chi$. Relabelling $e_1$ as $s'$ completes the proof.
\end{proof}

\subsection{Sample Cryptosystem}\label{clwecrypto}
Our scheme is parameterized by an algebra $\mathcal{A} := (L/K, \theta, \gamma)$, where $\mathcal{A}$ is as in \cref{algebras}, an error distribution $\Sigma$, and a prime modulus $q \equiv 1 \mod m$ (recall $K = \mathbb{Q}(\zeta_{m})$) which is completely split in $L$. We will denote with bold faced letters the vector form of an element of $\Lambda_q$, e.g. if $a = a_0 + u a_1 +... +u^{d-1} a_{d-1}$ then $\textbf{a} = (a_0, a_1,...,a_{d-1})$. We note that $\mathcal{O}_L/q\mathcal{O}_L$ has a polynomial representation of dimension $n \cdot d$, and so we encode our message $\in \lbrace 0,1 \rbrace^{n \cdot d^2}$ as an entry of $\Lambda_q$ as a vector $\textbf{m}$ of $d$ $\lbrace 0,1 \rbrace$ polynomials. The scheme proceeds as follows:
\begin{itemize}
\item Alice generates a CLWE sample $(a,b := a \cdot s + e)$, where $a \in \Lambda_q$ is uniformly random and $e \leftarrow \Sigma$, and outputs public key $\textbf{a}, \textbf{b}$.
\item To encrypt $\textbf{m} \in \lbrace 0,1 \rbrace^{n \cdot d^2}$, Bob samples $t, e_1, e_2 \leftarrow \Sigma$ and outputs $\textbf{u} := \phi(a)^T \textbf{t} + \textbf{e}_1, \textbf{v} := \phi(b)^T \textbf{t} + \textbf{e}_2 + \lceil \frac{q}{2} \rfloor \cdot \textbf{m}$.
\item To decrypt, Alice computes $\textbf{c} = \textbf{v} - \phi(s)^T \textbf{u}$ and recovers each coordinate of $\textbf{m}$ by rounding the corresponding entry of $\textbf{c}$ to $0$ or $\lceil \frac{q}{2} \rfloor$ and outputting $0$ or $1$ respectively.
\end{itemize}
\begin{remark}
There are two benefits of instantiating this scheme in the cyclic algebra setting rather than over modules as in \cite{avanzi_kyber_2019}, both following from the matrix embedding $\phi$. Firstly, in the module setting Alice must publish a matrix $\textbf{A}$ rather than the vector $\textbf{a}$ in her key, since $\phi(a)$ lets us generate a matrix; this saves a factor of $d$ in the size of the public key. Secondly, by extending $\textbf{b}$ to $\phi(b)$ we are able to increase the dimension of $\textbf{v}$, and correspondingly increase the size of the message by a factor of $d$.
\end{remark}

\begin{example}\label{example:prototype}

Recall our explicit algebras from \cref{algebras}. Without considering streamlined implementation for specific NIST submissions, we will pick toy comparison parameters for equivalent module based systems and ring based schemes, e.g. KYBER and NewHope. For the module case, consider a module of dimension $4$ over a ring $L$ of dimension $256$, with $2$-power cyclotomic base field $[K:\mathbb{Q}] = 64$.  Our public key $(\textbf{a}, \textbf{b})$ requires storing only $8$ elements of $R_q =O_L/q \cdot O_L$ rather than $20$ in the form $(A, \textbf{b})$. Our message consists of $1024$ bits, corresponding to the total dimension of the algebra rather than the module versions $256$ which corresponds to the field dimension; if the private key size is $256$, our CLWE scheme allows a rate-$1/4$ binary error correction code, while KYBER does not. Our ciphertext sizes are the same. As far as the modulus $q$ is concerned, we find $q=3329$ splits completely in a quartic cyclic extension $L$ of $K$. This matches with the modulus $q$ used in KYBER\footnote{The initial version of KYBER uses $q=7681$, but it has been reduced to $3329$ later which does not split completely in $L=\mathbb{Q}(\zeta_{512})$. It is noteworthy that, with a similar technique, further reduction of $q$ in CLWE may also be possible.}. Overall this represents a noteworthy gain in key and message size without loss in efficiency. For the ring case, consider an instantiation of NewHope in dimension $1024$. Both public keys are in the form $(a,s)$ and so require equivalent levels of storage ($8$ elements of a field of dimension $256$ or $2$ in dimension $1024$), and the same phenomenon is true of ciphertext sizes and message length. However, a larger modulus $q=12289$ is ued in NewHope. Hence, we hope to gain in security without losing much efficiency.
\end{example}

Before considering security and correctness we need a somewhat technical lemma allowing the use of the matrix transpose operation. Essentially, it states that if the CLWE problem is hard in an algebra $\mathcal{A}$, then for $a, s, e \in \Lambda_q$, the equation $\phi(a)^T \textbf{s} + \textbf{e}$ is a valid CLWE instance in some other algebra $\mathcal{A}'$ for which the CLWE problem is still hard.
\begin{lemma}\label{technical}
Let $\mathcal{A} = (L/K, \theta, \gamma)$ be a cyclic division algebra with matrix embedding $\phi(a)$ and natural order $\Lambda$. Then there exists another cyclic algebra $\mathcal{A}' = (L/K, \theta, \gamma^{-1})$ with matrix embedding $\phi'(a')$ and natural order $\Lambda'$ such that for $a \in \mathcal{A}$ there exists $a' \in \Lambda'$ satisfying $\phi(a)^T = \phi'(a')$. Moreover, $\mathcal{A}'$ still satisfies the division algebra condition, and $\Lambda_q'$ are $\Lambda_q$ canonically isomorphic as additive groups.
\end{lemma}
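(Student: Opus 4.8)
The plan is to exhibit the algebra $\mathcal{A}' = (L/K, \theta, \gamma^{-1})$ explicitly and check the claimed properties directly from the definition of the matrix embedding $\phi$. First I would write down $\phi(a)^T$ coordinatewise. Recall that the $(k,\ell)$ entry of $\phi(a)$ is $\theta^k(a_{\ell-k})$ for $\ell \ge k$ and $\gamma\,\theta^k(a_{d+\ell-k})$ for $\ell < k$, with indices of the $a_i$ taken mod $d$. Transposing swaps the roles of $k$ and $\ell$, so the $(k,\ell)$ entry of $\phi(a)^T$ is $\theta^\ell(a_{k-\ell})$ when $k \ge \ell$ and $\gamma\,\theta^\ell(a_{d+k-\ell})$ when $k < \ell$. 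The key step is to recognise this as $\phi'(a')$ for a suitable $a' = a_0' + u' a_1' + \dots + u'^{d-1} a_{d-1}' \in \mathcal{A}'$, where $\phi'$ is built from the same $\theta$ but with $\gamma^{-1}$ in place of $\gamma$; comparing the two patterns of $\theta$-powers and of where the multiplier appears, the natural guess is $a_\ell' = \gamma\,\theta^{-\ell}(a_{d-\ell})$ for $\ell = 1,\dots,d-1$ and $a_0' = a_0$ (up to a reindexing that I would pin down by matching a single row). I would then verify the match entry-by-entry, using $\theta^d = \mathrm{id}$ on $L$ and the fact that $\gamma \in K$ is fixed by $\theta$, so that powers of $\theta$ commute past $\gamma$ and $\gamma^{-1}$; the bookkeeping over the two index regimes $k \ge \ell$ and $k < \ell$ is the only slightly delicate part, since a summand that "wrapped around" for $\phi(a)$ need not wrap around for $\phi'(a')$, which is exactly where the $\gamma$ versus $\gamma^{-1}$ discrepancy is absorbed.

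Next I would confirm that $\mathcal{A}'$ is itself a cyclic division algebra. Since the transpose map is an anti-isomorphism of matrix rings, $a \mapsto a'$ is a ring anti-isomorphism $\mathcal{A} \to \mathcal{A}'$; alternatively one can observe directly that $(L/K,\theta,\gamma^{-1})$ is the opposite algebra of $(L/K,\theta,\gamma)$ up to isomorphism. Division algebra-ness then follows from \cref{non-norm}: $\gamma$ satisfies the non-norm condition (no $\gamma^t$, $1 \le t \le d-1$, lies in $N_{L/K}(L)$) if and only if $\gamma^{-1}$ does, because $N_{L/K}(L^\times)$ is a subgroup of $K^\times$ and hence closed under inversion, so $\gamma^{-t} \in N_{L/K}(L)$ would force $\gamma^{t} \in N_{L/K}(L)$. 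When $\gamma$ is a root of unity as in \cref{algebras}, $\gamma^{-1} = \gamma^{m-1}$ is again a root of unity in $\mathcal{O}_K$ with $|\gamma^{-1}| = 1$, so all the structural hypotheses of our constructions carry over to $\mathcal{A}'$.

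For the natural orders, $\Lambda = \bigoplus_i u^i \mathcal{O}_L$ and $\Lambda' = \bigoplus_i u'^i \mathcal{O}_L$; the correspondence $a \mapsto a'$ derived above sends $a_i \in \mathcal{O}_L$ to $a_i' = \gamma\,\theta^{-i}(a_{d-i})$, which again lies in $\mathcal{O}_L$ because $\gamma \in \mathcal{O}_K \subset \mathcal{O}_L$ and $\theta$ preserves $\mathcal{O}_L$; this map is clearly additive and bijective on coefficient vectors, so it descends to a canonical additive isomorphism $\Lambda_q \cong \Lambda_q'$ (and likewise $\mathcal{O}_L/q\mathcal{O}_L$ is untouched, so the quotient is well-defined for the split $q$ we use). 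I expect the main obstacle to be purely notational: getting the index arithmetic mod $d$ exactly right in the two regimes $k \ge \ell$, $k < \ell$, so that the formula for $a'$ in terms of $a$ is consistent across all entries of the matrix and genuinely yields an element of $\bigoplus_i u'^i \mathcal{O}_L$ rather than something only defined on a single row. Everything else — division algebra property, root-of-unity shape of $\gamma^{-1}$, additive isomorphism of the order quotients — is then immediate.
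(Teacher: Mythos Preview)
Your approach is essentially the same as the paper's: write down $\phi(a)^T$, guess $a'$ from one row/column, and verify; the division-algebra claim via closure of $N_{L/K}(L^\times)$ under inversion and the additive isomorphism $\Lambda_q \cong \Lambda_q'$ are exactly what the paper does as well.

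One correction worth flagging: your coordinate description of $\phi(a)$ has the $\theta$-power on the wrong index (in the paper's convention the $(k,\ell)$ entry carries $\theta^\ell$, not $\theta^k$), and this propagates into your guess $a'_\ell = \gamma\,\theta^{-\ell}(a_{d-\ell})$. The paper instead reads $a'$ off the first \emph{row} of $\phi(a)$ (which becomes the first column of $\phi(a)^T$ and hence of $\phi'(a')$), obtaining
\[
a' \;=\; x_0 + u'\,\gamma\theta(x_{d-1}) + u'^2\,\gamma\theta^2(x_{d-2}) + \cdots + u'^{d-1}\,\gamma\theta^{d-1}(x_1),
\]
i.e.\ $a'_i = \gamma\,\theta^{i}(x_{d-i})$ for $1 \le i \le d-1$. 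With this $a'$ the entrywise check in both regimes $k \ge \ell$ and $k < \ell$ goes through cleanly using $\theta^d = \mathrm{id}$ and $\theta(\gamma)=\gamma$, exactly as you anticipated. Your side remark that $a \mapsto a'$ is an anti-isomorphism (so $\mathcal{A}'$ is the opposite algebra) is correct and a nice conceptual gloss, though the paper does not use it.
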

\begin{proof}
The fact that $\mathcal{A}'$ is still a division algebra follows from the non-norm property on $\gamma$ and the fact that $N_{L/K}(L^\times)$ is a multiplicative group. $\Lambda_q'$ and $\Lambda_q$ are additive isomorphic because both algebras share the same underlying fields and $\gamma, \gamma^{-1}$ are both units of $\mathcal{O}_L$. Since the first row of $\phi(a)$ is precisely $(x_0, \gamma \theta(x_{d-1}), \gamma \theta^2(x_{d-2}), \ldots,\gamma \theta^{d-1}(x_{1}))$, by setting $a' = x_0 + u \gamma \theta(x_{d-1}) +\dots + u^{d-1} \gamma \theta^{d-1}(x_1)$ and observing that $\theta^d$ is the identity it is easy to check that $\phi(a)^T = \phi'(a')$.
\end{proof}

The proofs of correctness and security are similar in spirit to those of other compact LWE schemes such as e.g. NewHope \cite{alkim_post-quantum_2016-2} or KYBER \cite{avanzi_kyber_2019}. We proceed with a somewhat informal security argument.
\begin{lemma}
The defined scheme is IND-CPA secure under the assumption that the decision CLWE$_{q,\Upsilon}$ problem is hard.
\end{lemma}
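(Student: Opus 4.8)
The plan is to prove IND-CPA security by a standard sequence-of-games (hybrid) argument, turning any PPT adversary $\mathcal{B}$ that wins the IND-CPA game with advantage $\epsilon$ into a D-CLWE$_{q,\Upsilon}$ distinguisher of comparable advantage, whence $\epsilon$ is negligible. In Game~$0$, the genuine experiment, the challenger publishes $(\mathbf{a},\mathbf{b})$ with $\mathbf{b}=a\cdot s+e$ a CLWE sample, receives $\mathbf{m}_0,\mathbf{m}_1$ from $\mathcal{B}$, picks $\beta$ and $t,e_1,e_2\leftarrow\Sigma$, and returns $(\mathbf{u},\mathbf{v})=(\phi(a)^T\mathbf{t}+\mathbf{e}_1,\ \phi(b)^T\mathbf{t}+\mathbf{e}_2+\lceil q/2\rfloor\mathbf{m}_\beta)$. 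In Game~$1$ the challenger instead samples $(\mathbf{a},\mathbf{b})$ uniformly from $\Lambda_q\times\Lambda_q$ and leaves everything else unchanged. Since the public key is exactly one CLWE sample --- in normal form if $s\leftarrow\Sigma$, which is handled by \cref{lem:normal-form} --- any algorithm distinguishing Games~$0$ and~$1$ is a D-CLWE$_{q,\Upsilon}$ distinguisher, so the two games are computationally close.

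The key hop is Game~$2$, in which the challenge ciphertext $(\mathbf{u},\mathbf{v})$ is replaced by a uniformly random pair. To justify this I invoke \cref{technical}: with $\mathcal{A}'=(L/K,\theta,\gamma^{-1})$ there are $a',b'\in\Lambda_q'$ with $\phi(a)^T=\phi'(a')$ and $\phi(b)^T=\phi'(b')$, and since $a,b$ are uniform in Game~$1$ and $\Lambda_q\cong\Lambda_q'$ canonically as additive groups, $a',b'$ are uniform. Thus $\bigl(\mathbf{u},\ \mathbf{v}-\lceil q/2\rfloor\mathbf{m}_\beta\bigr)=\bigl(\phi'(a')\mathbf{t}+\mathbf{e}_1,\ \phi'(b')\mathbf{t}+\mathbf{e}_2\bigr)$ is precisely a pair of CLWE samples in $\mathcal{A}'$ with common secret $\mathbf{t}\leftarrow\Sigma$ (again a normal-form secret, \cref{lem:normal-form}) and independent fresh errors; hardness of D-CLWE in $\mathcal{A}'$ follows from hardness in $\mathcal{A}$ because the two differ only by the root-of-unity invariant, via the isomorphisms of \cref{searchdecisionsection}. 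Consequently a distinguisher for Games~$1$ and~$2$ --- using two CLWE samples --- breaks D-CLWE$_{q,\Upsilon}$, so these games are computationally close as well. In Game~$2$ the pair $(\mathbf{u},\mathbf{v})$ is uniform and independent of $\beta$, so $\mathcal{B}$ has advantage $0$; chaining the two hops bounds $\epsilon$ by a constant multiple of the best D-CLWE advantage, which is negligible.

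The main obstacle, and the only genuinely cyclic-algebra-specific part, is legitimizing the transposed matrix in the encryption routine: one must check through \cref{technical} that $\phi(a)^T\mathbf{t}+\mathbf{e}_1$ and $\phi(b)^T\mathbf{t}+\mathbf{e}_2$ are honest CLWE samples in the twin algebra $\mathcal{A}'$ (not some structurally weaker object), and that handing a D-CLWE oracle two such samples sharing the secret $\mathbf{t}$ is admissible --- i.e. silently using the multi-sample form of decision CLWE together with the normal-form reduction \cref{lem:normal-form} to cope with $\mathbf{t}$ and $s$ being drawn from $\Sigma$ rather than uniformly. Matching the fixed error family $\Sigma$ of the scheme to the randomized family $\Upsilon_\alpha$ for which hardness was established in \cref{searchdecisionsection} is routine bookkeeping but must be stated explicitly.
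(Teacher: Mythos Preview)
Your proposal is correct and follows essentially the same two-hop hybrid as the paper: first replace the public key $b=as+e$ by uniform via D-CLWE, then use \cref{technical} to recognise $(\mathbf{u},\mathbf{v}-\lceil q/2\rfloor\mathbf{m}_\beta)$ as two CLWE samples in the twin algebra $\mathcal{A}'$ with secret $\mathbf{t}$ and replace them by uniform via D-CLWE again. You are more explicit than the paper about the normal-form reduction, the transfer of hardness from $\mathcal{A}$ to $\mathcal{A}'$, and the error-family bookkeeping, but the skeleton is identical.
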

\begin{proof}
The goal of an IND-CPA adversary is to distinguish, with non-negligible advantage, between encryptions of two plaintexts $m_1, m_2$. The challenger chooses $i \in \lbrace 0,1 \rbrace$ uniformly at random and encrypts $m_i$ as $\textbf{u}, \textbf{v}$. By the assumption that the decision CLWE problem is hard, the adversary cannot distinguish between the case where $b = as +e$ and the case where it is replaced by a uniform random $b'$, so we replace the challenge ciphertext $\textbf{v}$ with $\textbf{v}'$ by replacing $b$ with $b'$. Setting $\textbf{v}'' := \textbf{v}' - \lceil \frac{q}{2} \rfloor \cdot \textbf{m}_i$, it follows by \cref{technical} that $\textbf{u}, \textbf{v}''$ represent two samples from a valid CLWE distribution with secret $\textbf{t}$, and so the adversary cannot distinguish them from uniform with non-negligible advantage. Hence, the challenger cannot distinguish $\textbf{v}'$ and hence $\textbf{v}$ from uniform with non-negligible advantage and so cannot guess $i$ with non-negligible advantage.
\end{proof}
Finally, we demonstrate conditions on the error term for the scheme to be correct.
\begin{lemma}
The defined scheme is correct as long as the $\ell_\infty$ norm of $\textbf{e}' =(\phi(e)^T \textbf{t} + \textbf{e}_2 - \phi(s)^T \textbf{e}_1)$ is less than $\lceil \frac{q}{4} \rfloor$, where the $\ell_\infty$ norm is over the vector of all polynomial coefficients of each $u^i$ entry of $\textbf{e}'$ of dimension $n \cdot d^2$.
\end{lemma}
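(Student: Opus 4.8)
The plan is to run the decryption computation symbolically, show that the recovered vector $\textbf{c}$ is exactly $\lceil \frac{q}{2} \rfloor \cdot \textbf{m}$ offset by the error vector $\textbf{e}'$, and then invoke a coordinatewise rounding argument.

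First I would substitute the ciphertext $\textbf{u} = \phi(a)^T\textbf{t} + \textbf{e}_1$ and $\textbf{v} = \phi(b)^T\textbf{t} + \textbf{e}_2 + \lceil\frac{q}{2}\rfloor\textbf{m}$ into $\textbf{c} = \textbf{v} - \phi(s)^T\textbf{u}$, so that $\textbf{c} = \phi(b)^T\textbf{t} - \phi(s)^T\phi(a)^T\textbf{t} + \textbf{e}_2 - \phi(s)^T\textbf{e}_1 + \lceil\frac{q}{2}\rfloor\textbf{m}$. The crucial step is that $\phi$ is the left regular representation and hence a ring homomorphism: since the public key satisfies $b = a\cdot s + e$ we have $\phi(b) = \phi(a)\phi(s) + \phi(e)$, and transposing gives $\phi(b)^T = \phi(s)^T\phi(a)^T + \phi(e)^T$. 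Plugging this in cancels both occurrences of $\phi(s)^T\phi(a)^T\textbf{t}$ and leaves precisely $\textbf{c} = \phi(e)^T\textbf{t} + \textbf{e}_2 - \phi(s)^T\textbf{e}_1 + \lceil\frac{q}{2}\rfloor\textbf{m} = \textbf{e}' + \lceil\frac{q}{2}\rfloor\textbf{m} \bmod q$.

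It then remains to argue the rounding. Viewing $\textbf{c}$ over the $n d^2$-dimensional coefficient representation of $\bigoplus_i u^i \mathcal{O}_L/q\mathcal{O}_L$, every coordinate of $\textbf{c}$ has the form $\lceil\frac{q}{2}\rfloor m_k + e'_k \bmod q$ with $m_k \in \{0,1\}$; the hypothesis $\|\textbf{e}'\|_\infty < \lceil\frac{q}{4}\rfloor$ forces this value to lie strictly nearer $0$ when $m_k = 0$ and strictly nearer $\lceil\frac{q}{2}\rfloor$ when $m_k = 1$, so the decryption rule returns $m_k$. A union bound over the $nd^2$ coordinates then yields correct decryption of the whole message.

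The only real work — and the main, admittedly mild, obstacle — is the bookkeeping: one must make sure the matrix--vector products are interpreted consistently over $\mathcal{O}_L/q\mathcal{O}_L$ after the normal-form tweak moving $\Lambda^\vee$ to $\Lambda$, and justify the transpose identity cleanly. The transpose identity follows from ordinary matrix algebra once $\phi(as) = \phi(a)\phi(s)$ is in hand, but it can also be made conceptually transparent via \cref{technical}: rewriting $\phi(\cdot)^T$ as the left regular representation $\phi'(\cdot)$ of the companion algebra $\mathcal{A}' = (L/K,\theta,\gamma^{-1})$ puts the whole computation inside a single non-commutative ring, where the cancellation is immediate. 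A final routine check handles the rounding boundary when $q$ is odd, which is covered by the strict inequality.
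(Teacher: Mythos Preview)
Your proposal is correct and follows essentially the same route as the paper: substitute $\textbf{u},\textbf{v}$ into $\textbf{c}=\textbf{v}-\phi(s)^T\textbf{u}$, use that $\phi$ is a ring homomorphism (so $\phi(b)^T=\phi(s)^T\phi(a)^T+\phi(e)^T$) to cancel the $\phi(s)^T\phi(a)^T\textbf{t}$ terms, and conclude $\textbf{c}=\textbf{e}'+\lceil q/2\rfloor\textbf{m}$, after which the paper simply says ``from which the result follows immediately.'' Your write-up spells out the transpose identity and the coordinatewise rounding more explicitly than the paper does, but the argument is the same; the only nit is that ``union bound'' is a slight misnomer here since the claim is deterministic once $\|\textbf{e}'\|_\infty<\lceil q/4\rfloor$.
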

\begin{proof}
To decrypt, Alice computes $\textbf{v} - \phi(s)^T \textbf{u}$ and computes $\textbf{m}$ by rounding. Since $\phi(\cdot)$ is a homomorphism, we have
\begin{align*}
\textbf{v} - \phi(s)^T \textbf{u} &= \phi(b)^T \textbf{t} + \textbf{e}_2 + \lceil \frac{q}{2} \rfloor \cdot \textbf{m} - \phi(s)^T( \phi(a)^T \textbf{t} + \textbf{e}_1) \\
&= \phi(e)^T \textbf{t} + \textbf{e}_2 - \phi(s)^T \textbf{e}_1 + \lceil \frac{q}{2} \rfloor \cdot \textbf{m} \\
&= \textbf{e}' + \lceil \frac{q}{2} \rfloor \cdot \textbf{m}.
\end{align*}
from which the result follows immediately.
\end{proof}
We note that the error term $\textbf{e}'$ will be unsurprising to those familiar with LWE based cryptography. Although we do not provide concrete correctness estimations, the error parameters for our decision reduction are equivalent to those of MLWE up to some small covariance terms.
We do not expect this covariance to greatly affect the distribution of the error and thus for equivalent parameter choices we expect a similarly small probability of decryption failure.

\subsection{Operational Complexity in Cyclic Algebras}\label{complexity}
In the previous subsection we showed that the CLWE problem can be used to construct a standard LWE based cryptosystem. Assuming that parameters across all variants of the LWE assumption are roughly equivalent, the CLWE problem supports key and message sizes as advantageous as those of the RLWE problem, and better than those of the module case. Along with storage considerations, another important facet of the ambient space in LWE cryptography is the efficiency of operations. Here, we will construct algorithms and consider the asymptotic complexity of multiplication in a cyclic algebra in order to compare it to the ring and module variants. Since in practice we consider operations modulo some prime $q$, addition in rings, modules, and cyclic algebras can be considered as addition in vector spaces over $\mathbb{Z}_q$, which has complexity dominated by that of multiplication.

Consequentially, we only concern ourselves with a comparison of the cost of computing the multiplication operation $\textbf{A} \textbf{s}$ in the three cases. In order to keep our comparison consistent, we let $N$ denote the total dimension of the underlying LWE instance. In the ring case, $N$ denotes the ring dimension; in the module case, $N = nd$, where $n$ denotes the ring dimension and $d$ the module rank; in the cyclic algebra case $N = nd^2$, where the ring dimension is $nd$ and the algebra has `module' rank $d$. However, since it will be important later we remark here that the cyclotomic part of the ring will be of dimension $n$ rather than $nd$. The three cases can be considered as follows:
\begin{itemize}
\item In the ring case, the operation $\textbf{A} \textbf{s}$ over $\mathbb{Z}_q$ is a representation of the ring operation $a \cdot s$ in $R_q \cong {\mathbb{Z}_q[X]}/{(X^N+1)}$. Using the CRT decomposition in dimension $N$ of \cite{lyubashevsky_toolkit_2013}, this operation is decomposed into coordinatewise multiplication in a vector of dimension $N$ over $\mathbb{Z}_q$,  following which the decomposition is reversed to recover $a \cdot s$. The complexity of this technique is dominated by that of the CRT decomposition, which takes time $O(N \log N)$, although the coordinatewise multiplication also requires time $O(N)$.
\item In the module case, $\textbf{A}$ is a $d \times d$ matrix over $R_q$. In this case, one can compute $\textbf{A} \textbf{s}$ by applying the CRT in dimension $n$ coordinatewise on $\textbf{A}$ and $\textbf{s}$. This requires $d^2+d$ applications of the CRT, for a total asymptotic complexity of $O(d^2 n\log n) = O(Nd \log (N/d))$. Again, this hides a coordinatewise multiplication step which takes time $O(Nd)$ in this setting.
\item In the cyclic algebra case, $\textbf{A}$ is a matrix in the shape $\phi(a)$, where $\phi(a)$ is the left regular representation of $a \in \Lambda_q$. We estimate the complexity of the operation $\phi(a) \cdot \textbf{s}$ in Appendix \ref{appendix:multiplication-complexity}. Explicitly, our algorithm has complexity $O(N \log (N/d^2)) + O(Nd^{\omega-2})$ in the case where $q$ splits completely in $L$, with $\omega \in [2, 2.373]$ denoting the exponent of matrix multiplication. The latter term corresponds to the cost of multiplication in our analog of the finite fields used in the CRT method for RLWE.
\end{itemize}

We see that cyclic algebras compare favourably with modules for multiplication in the same dimension $N$,
depending on the exact relationship between $\log d^2$ and $d^{\omega-2}$. Since $d$ is
likely to be fixed while $n$ scales up, we expect that the $O(N \log N)$ term will
dominate the complexity. Nonetheless, we include the second term in our results to
quantify our claims. The second term $O(Nd^{\omega-2})$ becomes $O(Nd)$ with naive matrix multiplication instead of the algorithms of \cite{caruso_fast_2017-4}, yet its overall multiplication complexity is still lower than that of module multiplication in the same dimension.

\section{Conclusions and Future Work}\label{conclusions}

The primary goal of this work is the introduction of the Learning with Errors problem over Cyclic Algebras, CLWE, adding to the family of available LWE assumptions for use in cryptography. To this end, the central pillars of an LWE problem are provided for the cyclic algebra case. First, in order to provide a foundation for the construction the notion of lattices derived from ideals of the natural order of a cyclic algebra are applied in cryptography for the first time. Then, in \cref{sec3}, the CLWE problem is formally introduced, following which explicit algebras are provided with dimensions and structure appropriate for cryptographic use. Then, in \cref{security_proof}, the usual LWE security reductions are established in the CLWE case: namely, the problem of solving short vector problems on order-ideal lattices is reduced to the search CLWE problem, and then a variant of the search CLWE problem where the secret is restricted to a fixed, well constructed subset of its usual space is reduced to the decision CLWE problem. Under plausible assumptions on this restricted search problem, combining these two reductions gives the necessary security grounding for CLWE based cryptography, which is that samples from the CLWE distribution appear pseudorandom to an onlooker with no knowledge of the secret $s$. Finally, in \cref{crypto}, the necessary steps are taken to mold the CLWE problem into a practical format for cryptography. Normal form reduction is shown and a sample cryptosystem in this form is provided. Additionally, the complexity of operations in CLWE cryptography is compared to that of RLWE and MLWE based schemes.

Cyclic algebras exhibit substantial novel structures within lattice-based cryptography, and discovering use cases for these previously unseen features represents an exciting area of future research. We outline a few directions of future research in the following.

From a theoretical standpoint, the most pressing question to be solved about CLWE is whether or not the search and decision problem are polynomial time equivalent, or instead if the hardness of the decision variant can be based directly on hard lattice problems via some other technique. In this work, the hardness of the decision problem for arbitrary secret is shown to derive from the assumed hardness of a variant of the search problem where the secret is restricted to lie in any so-called pairwise difference set $G$. Although this substantially lowers the size of the secret space, the resulting secret space is still far too large to exhaustively search. Furthermore, the decision problem is as hard as the search problem for the hardest choice of decision set $G$, precluding particularly easy cases. Nonetheless, this does not establish the formal hardness of the decision CLWE problem based on the lattice problems of \cref{latticeproblems}. The reduction fails to permit arbitrary secret since the decomposition into matrix rings of \cref{CRTlike} results in a problem that can not be `guessed' effectively, since the oracle does not necessarily accept inputs as valid when the guess is wrong.

Another method of establishing the hardness of decision RLWE that is not shown for CLWE in this work is a direct to decision reduction, which more generally represents a security proof for the decision problem that holds for wider classes of cyclic division algebras than those of \cref{searchdecisionsection}. The direct to decision reduction of \cite{peikert_pseudorandomness_2017-2} is the only security reduction for RLWE which establishes the hardness of the decision problem without enforcing that $K$ is a cyclotomic field within which $q$ splits completely, as in the search-decision reduction of \cite{lyubashevsky_ideal_2010} and the presented analog for CLWE. Dropping this restriction, and hence widening the possible choices of cyclic algebras supporting the hardness of the decision problem, would provide larger design space for CLWE based cryptography.


As for another direction of future work, we view a drawback of our work to be that we are restricted to certain instances of cyclic algebras. Although in practice most cryptography would use a fixed choice of algebra, this is a function of our methods and may be possible to remove. Additionally, showing the aforementioned direct-to-decision reduction may generalize the choice of algebras.

Finally, this work is focused on the theoretical construction of a non-commutative Ring-LWE assumption, and we leave practical analysis and implementation of cryptography based on CLWE as further research.

\section*{Acknowledgment}

The authors would like to thank Jyrki Lahtonen, Damien Stehle and Martin Albrecht for helpful discussions.
They are also grateful to Andrew Mendelsohn for finding the prime $q=3329$ in \cref{example:prototype}.

\appendices

\section{Attacking non-Division Algebras}\label{badgamma}
In \cref{discussions}, the condition that $\gamma$ is a non-norm element of $L/K$ is required in order to stop parallelizing attacks in the style of that of \cite{bootland_security_2018} applying to the CLWE problem. Thus, $\gamma$ is chosen so that $\gamma^i$ is not in the norm group of $L$ into $K$ for $i =1,2,\dots, d-1$. Here, we demonstrate that picking $\gamma$ that violates this condition leads to potentially vulnerable instances of the CLWE problem. We will need the following lemma, a rephrasing of \cite[Theorem 30.4]{reiner_maximal_1975}.
\begin{lemma}\label{reinerlemma}
Let $\mathcal{A} = (L/K, \theta, \gamma)$ be a cyclic algebra with $[L:K] = d$. Let $\gamma, \delta \in K$ be non-zero. Then
\begin{itemize}
\item $\mathcal{A} \cong \mathcal{A}_i := (L/K, \theta^i, \gamma^i)$ for each $i$ such that $(i,d) = 1$.
\item If $\gamma = 1$ then $\mathcal{A} \cong M_{d \times d}(K)$.
\item If $\delta = N_{L/K}(\beta)\gamma$ for some non-zero $\beta \in L$ then $\mathcal{A} \cong \mathcal{A}' := (L/K, \theta, \delta)$.
\end{itemize}
\end{lemma}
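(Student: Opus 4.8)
The statement is classical — it is exactly \cite[Theorem 30.4]{reiner_maximal_1975} — so one option is simply to invoke that reference; I would nonetheless spell out the short direct arguments, all of which amount to a ``change of uniformizer'' $u \mapsto u'$ inside $\mathcal{A}$ (for the first and third items) or to writing down a left regular representation (for the second).

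For the first item, put $u' := u^i$. Since $\gcd(i,d) = 1$ the automorphism $\theta^i$ still generates $\mathrm{Gal}(L/K)$, so $(L/K,\theta^i,\gamma^i)$ is a bona fide cyclic algebra. A one-line induction from $xu = u\theta(x)$ gives $x u^i = u^i \theta^i(x)$ for every $x \in L$, and $(u')^d = u^{id} = (u^d)^i = \gamma^i$. As $\gamma \in K^\times$ is a unit, $u^{ik} L = u^{(ik \bmod d)} L$, and the residues $ik \bmod d$ run through $0, 1, \dots, d-1$ as $k$ does; hence $\{1, u', \dots, u'^{\,d-1}\}$ is again an $L$-basis of $\mathcal{A}$, and reading off its multiplication table exhibits $\mathcal{A}$ as $(L/K,\theta^i,\gamma^i)$. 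For the third item, put $u' := u\beta$. From $xu = u\theta(x)$ one gets $u x = \theta^{-1}(x) u$, and then an induction yields $(u')^k = u^k \prod_{j=0}^{k-1}\theta^j(\beta)$, so that $(u')^d = u^d\, N_{L/K}(\beta) = \gamma\, N_{L/K}(\beta) = \delta$; moreover $x u' = u\theta(x)\beta = u\beta\,\theta(x) = u'\theta(x)$ because $L$ is commutative, and $u'^k L = u^k L$ since $\prod_j \theta^j(\beta) \in L^\times$. Thus $\{1,u',\dots,u'^{\,d-1}\}$ is an $L$-basis of $\mathcal{A}$ and its multiplication table identifies $\mathcal{A}$ with $(L/K,\theta,\delta)$.

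For the second item, with $\gamma = 1$, I would let $\mathcal{A}$ act on $V := L$ regarded as a $d$-dimensional $K$-vector space, with $\lambda \in L$ acting by multiplication and $u$ acting by $\theta^{-1}$; checking the relations $u\lambda = \theta^{-1}(\lambda)u$ and $u^d = \mathrm{id}$ shows this is a well-defined $K$-algebra homomorphism $\rho \colon \mathcal{A} \to \mathrm{End}_K(V) \cong M_{d \times d}(K)$. Since $\mathcal{A}$ is a central simple $K$-algebra (the standard structure theorem for cyclic algebras with nonzero invariant, again \cite[Theorem 30.4]{reiner_maximal_1975}), $\ker \rho$ is a two-sided ideal not containing $1$ and hence is $0$; comparing dimensions, $\dim_K \mathcal{A} = d^2 = \dim_K \mathrm{End}_K(V)$, forces $\rho$ to be an isomorphism.

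The main thing to be careful about is the commutation bookkeeping — the relation $xu = u\theta(x)$ is easily mis-transcribed with $\theta$ in place of $\theta^{-1}$ when one moves $u$ to the left — together with, in the second item, the appeal to the simplicity of $\mathcal{A}$; neither is a genuine obstacle, and in any case the whole lemma may be quoted directly from \cite[Theorem 30.4]{reiner_maximal_1975}.
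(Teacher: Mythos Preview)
Your proposal is correct and aligns with the paper's treatment: the paper does not prove this lemma but simply cites \cite[Theorem 30.4]{reiner_maximal_1975}, and the remark following the lemma sketches exactly the maps you write out (re-indexing $u \mapsto u^i$ for the first item, $u \mapsto \beta u$ for the third, and the left-regular action on $L$ via multiplication and $\theta$ for the second). Your bookkeeping is in fact cleaner than the remark's --- e.g.\ your choice $\rho(u) = \theta^{-1}$ is the one consistent with the paper's convention $xu = u\theta(x)$, whereas the remark's ``$u \mapsto \theta$'' is a minor slip under that convention.
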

\begin{remark}
If $\gamma \in \mathcal{O}_K$ is a unit then all isomorphisms of this lemma hold when replacing $L$ and $K$ with $\mathcal{O}_L$ and $\mathcal{O}_K$ respectively. The first and third can be seen by examining the proofs in \cite{reiner_maximal_1975}; the first is a re-indexing of $u$ coordinates of $\mathcal{A}$, and the third simply sends $u$ to $\beta u$. The second requires a little more work. We map $\mathcal{A}$ to Hom$_K(L,L)$ by sending $u$ to $\theta$ and $x \in L$ to the $K$-homomorphism on $L$ defined by multiplying by $x$. Finally, we appeal to the standard isomorphism between Hom$_K(L,L)$ and $M_{d \times d}(K)$, which preserves integral elements as long as there exists an integral basis of $L$ over $K$. We discuss the details of this last part later, because we also require it to preserve a notion of smallness.
\end{remark}
Armed with this lemma, we demonstrate potential weaknesses of choosing $\gamma$ poorly. Let $\mathcal{A} = (L/K, \theta, \gamma)$ be a cyclic algebra where $\gamma$ lies in the norm group $N_{L/K}(L^\times)$ (and still lies in $\mathcal{O}_K)$; later we will generalize our argument to the case where instead some power of $\gamma$ less than $d$ is a norm instead. Consider the primal CLWE instance $(a, a \cdot s + e) \in \Lambda_q \times \Lambda_q$, where $a, s$ are uniform\footnote{In practice $s$ is typically sampled from the error distribution, but this normal form variant is no easier than the case where $s$ is uniform. We assume uniform $s$ here for ease of illustration.} and $e \leftarrow \chi$ is drawn from an error distribution which is of Gaussian shape. Applying \cref{reinerlemma} transforms our sample into one over $M_{d \times d}({\mathcal{O}_K}_q) \times M_{d \times d}({\mathcal{O}_K}_q)$. That is, we construct a sample in the form
\begin{align*}
(\textbf{A}, \textbf{A} \cdot \textbf{S} + \textbf{E})
\end{align*}
where $\textbf{A},\textbf{S},\textbf{E} \in M_{d \times d}({\mathcal{O}_K}_q)$. Since isomorphisms are bijections, $\textbf{A}$ and $\textbf{S}$ are uniformly random matrices. Assume for the time being that the isomorphisms are also smallness preserving, so that if $e$ is a small element of $\Lambda_q$ then the corresponding matrix $\textbf{E}$ will have entries that are small elements of $\mathcal{O}_K$.

Let $\textbf{s}_i, \textbf{e}_i$ denote the $i^\text{th}$ columns of $\textbf{S}$ and $\textbf{E}$ respectively. Then, for each $i$ the pair $(\textbf{A},\textbf{A} \textbf{s}_i + \textbf{e}_i)$ constitutes $d$ samples from the MLWE distribution in dimension $n$ and rank $d$. That is, the single CLWE sample provides a collection of $d$ samples from $d$ instances of the MLWE distribution with different secrets $\textbf{s}_1, \dots, \textbf{s}_d$, where each set of samples shares the same uniformly random matrix $\textbf{A}$. Since the difficulty of LWE problems is assumed to be superlinear in dimension $N$, solving $d$ instances of the MLWE problem in dimension $n$ and module rank $d$ is easier than solving a single instance in dimension $nd$ and rank $d$, the targeted dimension of our CLWE problem, which is essentially the parallelizing argument of the attack of \cite{bootland_security_2018} on $m$-RLWE. Furthermore, the matrix $\textbf{A}$ being common to each set of samples potentially weakens the resulting MLWE instances. Thus, assuming that $\textbf{e}_i$ is suitably distributed, it is clear that choosing a $\gamma$ that is the norm of an element of $L$ compromises security.

We are left to consider the distribution of the error matrix $\textbf{E}$. In order to understand this, we must discuss the proof of \cref{reinerlemma} further. Let $\gamma = N_{L/K}(\beta)$, so that the isomorphism mapping $\mathcal{A}$ to $\mathcal{A}' = (L/K, \theta, 1)$ fixes $L$ and sends $u$ to $u \beta$. Following the proof of \cref{primepoweralgebras} we see that the $\gamma$ which are both roots of unity and norm elements are precisely norms of some other root of unity. Hence, $\beta$ is a root of unity and this isomorphism maps a Gaussian distribution on $\mathcal{A}$ to a Gaussian distribution on $\mathcal{A}'$.

The isomorphism mapping $\mathcal{A}'$ to $M_{d \times d}(K)$ begins with a mapping from $\mathcal{A}'$ to Hom$_K(L,L)$ that sends $x \in L$ to the multiplication function $f_x(y) = xy$ for $y \in L$ and sends $u$ to $\theta$. Then, it applies the well known isomorphism sending Hom$_K(L,L)$ to $M_{d \times d}(K)$, which can be defined constructively as follows:
\begin{itemize}
\item Fix a $K$-basis $\lbrace \ell_1,\dots,\ell_d \rbrace$ of $L$ over $K$.
\item Define $f_j: K \rightarrow L$ as $f_j(x) = \ell_j x$, a mapping onto the $j$ coordinate of the basis.
\item Let $\pi_j: L \rightarrow K$ denote the projection map onto the $\ell_j$ sending $\pi_j(\sum_{i=1}^d x_i \ell_i) = \ell_j$.
\item Define $\Delta: \text{Hom}_K(L,L) \rightarrow M_{d \times d} (K)$ coordinatewise as $\Delta(\psi)_{i,j} = (\pi_i \circ \psi \circ f_j)(1)$.
\end{itemize}
Since it permits an arbitrary choice of $K$-basis, this isomorphism is non-unique. Furthermore, an attacker trying to apply this isomorphism would be able to use their choice of basis and still compute the isomorphism efficiently. We are interested in the image of a Gaussian sample $e \in \Lambda_q$ under this isomorphism, with $e = \sum_{i=0}^{d-1} u^i e_i$, having each $e_i$ sampled independently from a discrete Gaussian over ${\mathcal{O}_L}_q$, being sent to $\psi_e = \sum_{i=0}^{d-1} e_i' \sigma^i$. Correspondingly, the $i,j$ coordinate of the matrix $\textbf{E}  = \Delta(e)$ is
\begin{align*}
\pi_i (\sum_{k=0}^{d-1} e_k \theta^k(\ell_j)).
\end{align*}
For the $j^\text{th}$ column of $\textbf{E}$ (the error vector in the set of $d$ MLWE samples with secret $\textbf{s}_i$), the error is precisely the $\ell_i$ coordinate of $\sum_{k=0}^{d-1} e_k \theta^k(\ell_j)$.

Now the distribution of the error in each collection of MLWE samples depends on the properties of the chosen basis. Since the $e_k$ are independent Gaussian samples from $L$, $j$ is fixed and $\theta$ represents a permutation of the canonical embedding coordinates of $L$ elements. Hence, $\sum_{k=0}^{d-1} e_k \theta^k(\ell_j)$ is an elliptical Gaussian with $n$ blocks of $d$ different parameters. Furthermore, if $\lbrace \ell_1, \dots, \ell_d \rbrace$ is a cyclic basis then, since the distribution of $\Vert \sigma_L (\ell_k) \Vert_2$ is independent of $k$, the projection $\pi_i (\sum_{k=0}^{d-1} e_k \theta^k(\ell_j))$ follows an elliptical Gaussian. In addition, these coordinates are not independent and are potentially highly correlated.

The end result of this exposition is that, depending on the properties of the cyclic bases of $L/K$ and given the choice of $\gamma$ as a norm element, from a single CLWE instance we can construct $d$ parallel copies of $d$ MLWE instances in dimension $n$ and rank $d$ with correlated error. These correlated instances of the MLWE problem are plausibly substantially easier than the claimed security of the CLWE instance, which is that it is roughly as hard as an MLWE instance in the same dimension $nd$ and rank $d$. Of course, the error distributions in the underlying MLWE instances are non-standard and we have not presented a concrete attack on them. Instead, we believe this discussion is sufficient to persuade the unconvinced reader that solving the CLWE problem with norm element $\gamma$ can be simplified by some parallelization into MLWE instances, and thus we should stick to our specification that $\gamma$ is a norm.

In the above exposition we restricted ourselves to cases where $\gamma$ is a norm, but the definition of the non-norm condition also precludes $\gamma$ as valid if and only if $\gamma^i$ is a norm for some $i < d$ that is coprime with $d$ (see \cite{vehkalahti_densest_2009}). However, we have previously assumed that the hardness of the CLWE problem was independent of the choice of primitive $n^\text{th}$ root. In the constructions of \cref{primepoweralgebras} $\gamma$ is an $n^\text{th}$ root of unity and $d$ divides the prime power $n$, so if $i$ is coprime with $d$ then $i$ is also coprime with $n$ and so $\gamma^i$ is a primitive root which defines a cyclic algebra in which the CLWE problem can be parallelized. Thus, we conclude that $\gamma$ must satisfy the non-norm condition rather than just itself not be a norm. Independently, a recent work \cite{Revisit-MRLWE} revisiting $m$-RLWE observes that the underlying property causing the attacks of \cite{bootland_security_2018} on the original instantiations was the presence of zero-divisors in the ambient space. In our case, zero-divisors exist in a cyclic algebra if and only if the non-norm condition is not satisfied, so their argument should preclude not just the $\gamma$ that are themselves norms but also all $\gamma$ which fail the non-norm condition.

\section{Impossible Algebras}\label{impossiblealgebras}
We show that certain algebras that would otherwise be what we are looking for do not exist under our restrictions. As discussed above we would like to begin with a base field that is cyclotomic, $K = \mathbb{Q}(\zeta_m)$ for integer $m$, and proceed to fix some low degree cyclic Galois extension $L/K$ and non-norm element $\gamma \in \mathcal{O}_K$ with $\vert \gamma \vert = 1$ e.g. $\gamma$ is a root of unity. Given these restrictions and the shape of lattice cryptography, the most natural fields to look for are low degree extensions of two-power cyclotomics e.g. $m = 2^k$. Unfortunately, we are able to prove the non-existence of a large class of such extensions.
\begin{theorem}\label{dcoprime}
Let $K = \mathbb{Q}(\zeta_m)$ for some positive integer $m$ and let $p \geq 2$ be some integer which is coprime with $m$. Then, for any Galois extension $L/K$ of degree $p$ each $\zeta_m, \zeta_m^2,\dots,\zeta_m^{m-1}$ lies in $N_{L/K}(K^\times)$.
\end{theorem}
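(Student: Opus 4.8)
The plan is to write each power $\zeta_m^i$ as a relative norm of an element that already lies in the base field $K$. There are two essential observations. First, $K = \mathbb{Q}(\zeta_m)$ contains the full cyclic group $\langle \zeta_m \rangle$ of order $m$. Second, for any $\beta \in K$ one has $N_{L/K}(\beta) = \beta^{[L:K]} = \beta^{p}$, since the relative norm of a base-field element is just its $[L:K]$-th power; note this uses only $[L:K]=p$ and not that $L/K$ is Galois.

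It therefore suffices to check that every $\zeta_m^i$ is a $p$-th power within $\langle \zeta_m\rangle \subseteq K^\times$. Since $\gcd(p,m)=1$, the integer $p$ is invertible modulo $m$, so for a given $i$ we may pick $j$ with $pj \equiv i \pmod{m}$; then $\zeta_m^i = \zeta_m^{pj} = (\zeta_m^j)^p$. Equivalently, $x \mapsto x^p$ is an automorphism of the order-$m$ cyclic group $\langle\zeta_m\rangle$.

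Combining the two steps: fix $i$ with $1 \le i \le m-1$, choose $j$ as above, and set $\beta = \zeta_m^j$, which lies in $K$ and hence in $L$. Then $N_{L/K}(\beta) = \beta^p = \zeta_m^{pj} = \zeta_m^i$, so $\zeta_m^i \in N_{L/K}(K^\times) \subseteq N_{L/K}(L^\times)$, as claimed. The argument is completely elementary, and that is precisely the content of the theorem: when the extension degree $d$ is coprime to $m$, the non-norm condition for $\gamma = \zeta_m$ fails in the strongest possible way (even $\gamma$ itself, not just some higher power, is a norm), which is why the constructions of \cref{primepoweralgebras} instead take $d$ to be a divisor of $m$. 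There is no genuine obstacle to overcome here; the only mild care needed is to recall the formula for the norm of a base-field element and the fact that coprimality makes $p$-th powering a bijection on the group of $m$-th roots of unity.
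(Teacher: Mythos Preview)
Your proof is correct and follows essentially the same approach as the paper's own proof: both observe that $N_{L/K}(\beta)=\beta^{p}$ for $\beta\in K$, and then use the invertibility of $p$ modulo $m$ to solve $pj\equiv i\pmod m$ so that $N_{L/K}(\zeta_m^j)=\zeta_m^i$. Your additional remark that the Galois hypothesis is not actually needed (only $[L:K]=p$) is a valid observation not made explicit in the paper.
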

\begin{proof}
Since $L/K$ is a Galois extension of degree $p$, the relative norm map $N_{L/K}(\cdot)$ induces the map $x \rightarrow x^p$ on elements $x \in K^\times$. Let $1 \leq i \leq m-1$ be an integer; we will prove the theorem by finding $1 \leq j \leq m-1$  such that $N_{L/K}(\zeta_m^j) = \zeta_m^i$. Since $\zeta_m$ and its powers lie in $K$, the relative norm map takes $\zeta_m^j$ to $\zeta_m^{jp}$ and we are left to solve the congruence $jp \equiv i \mod m$. By assumption, g.c.d.$(m,p) = 1$ and so $p$ is invertible modulo $m$. Denoting this inverse $p^{-1}$ and letting $j = p^{-1} i \mod m$ it is easy to see that $jp \equiv i p^{-1} p \equiv i \mod m$. The theorem statement follows immediately.
\end{proof}
This theorem precludes the existence of a very large class of cyclic division algebras with cyclotomic base field. In particular, if the degree of $[L:K]$ is coprime with $m$ then we can not have our restrictions that $\vert \gamma \vert = 1$, is integral, and that $K$ is cyclotomic. We draw attention to the specific classes whose non-existence we are interested in: in an ideal world we might instantiate CLWE with $K = \mathbb{Q}(\zeta_{2^k})$ and $[L:K] = d$ for arbitrary small integer $d$ corresponding to the module rank, which in practice is likely to be at most say $5$. However, as a result of \cref{dcoprime} we know that $d$ can not be coprime with $2^k$ and must be even in order to permit a suitable $\gamma$, from which it follows that we can not have $d=3,5$.

\section{Proof of \cref{natural_is_maximal}}\label{appendix:natural-maximal}

Before proving \cref{natural_is_maximal} we need some additional  concepts and a Lemma.
Given  a $K$-central division algebra $\mathcal{A}$ and some $\mathcal{O}_K$ order $\Lambda$ in it, then
the  $\mathcal{O}_K$-\emph{discriminant}  of $\Lambda$, $d(\Lambda/\mathcal{O}_K)$, is a certain ideal in $\mathcal{O}_K$ \cite[p.126]{reiner_maximal_1975}. While $\mathcal{A}$ has many maximal orders they all share the same discriminant, which is called the discriminant of the algebra $d_\mathcal{A}$. Now the key fact about discriminants we need is that an order $\Lambda$ is maximal if and only it's discriminant equals that of $d_\mathcal{A}$.

We will now use the notation of  \cref{algebras}. According to \cite{lahtonen_construction_2008} the field $L$ and therefore also its subfield $M$ are subfields of $\mathbb{Q}(\zeta_m,\zeta_{q'})$, where $m=p^a$, and $q'\neq p$ is some large prime. Let $n=\varphi(m)=p^{a-1}(p-1)$. Furthermore it is known that $q'$ splits completely in the field $K=\mathbb{Q}(\zeta_m)$. Let us now denote with
$$
q'\mathcal{O}_K =\mathfrak{q}'_1 \cdots \mathfrak{q}'_{n},
$$
the prime ideal decomposition of $q'$ in $K$. We then have the following result.

\begin{lemma}
\label{naturaldiscriminant}
Let  $(M/K,\theta, \zeta_m)$ be an index $d$ division algebra of Theorem \ref{primepoweralgebras}
and let   $\Lambda$ be the corresponding natural order.
Then we have that
\begin{equation}\label{natural}
d(\Lambda/\mathcal{O}_K)= (\mathfrak{q}'_1, \cdots \mathfrak{q}'_{n})^{d(d-1)}.
\end{equation}
\end{lemma}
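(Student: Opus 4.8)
### Proof proposal for Lemma \ref{naturaldiscriminant}

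\textbf{The plan is to compute the discriminant of the natural order directly from its definition via the reduced trace form, and identify which primes of $\mathcal{O}_K$ divide it.} Recall that $d(\Lambda/\mathcal{O}_K)$ is the $\mathcal{O}_K$-ideal generated by $\det\big(\mathrm{Tr}(x_i x_j)\big)$ over $\mathcal{O}_K$-bases $\{x_i\}$ of $\Lambda$, where $\mathrm{Tr}$ is the reduced trace. The natural order $\Lambda = \bigoplus_{i=0}^{d-1} u^i \mathcal{O}_M$ has the obvious $\mathcal{O}_K$-module structure: if $\{w_1,\dots,w_d\}$ is an $\mathcal{O}_K$-basis of $\mathcal{O}_M$ (which exists because $M/K$ is a subextension of a cyclotomic extension, so $\mathcal{O}_M$ is $\mathcal{O}_K$-free — this should be checked, but follows from the explicit construction), then $\{u^i w_j : 0\le i\le d-1,\ 1\le j\le d\}$ is an $\mathcal{O}_K$-basis of $\Lambda$ of size $d^2$. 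First I would write out the reduced trace form on this basis: using the relation $\mathrm{Tr}(u^k \ell) = 0$ for $\ell \in M$ and $1 \le k \le d-1$, and $\mathrm{Tr}(\ell) = \mathrm{Tr}_{M/K}(\ell)$, together with the commutation rule $w u = u\theta(w)$, one sees the Gram matrix is block-structured and its determinant factors as a product involving $\gamma = \zeta_m$ (a unit, so contributing nothing to the ideal) and powers of $\mathrm{disc}(\mathcal{O}_M/\mathcal{O}_K) = d(\mathcal{O}_M/\mathcal{O}_K)$.

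\textbf{The key computational step} is the standard formula relating the discriminant of the natural order of a cyclic algebra to the relative discriminant of $M/K$: one expects $d(\Lambda/\mathcal{O}_K) = \big(d(\mathcal{O}_M/\mathcal{O}_K)\big)^{d} \cdot \langle \gamma \rangle^{?}$, but since $\gamma = \zeta_m$ is a unit the contribution of $\gamma$ disappears and we are left with a pure power of the relative discriminant $\mathfrak{d}_{M/K} := d(\mathcal{O}_M/\mathcal{O}_K)$. I would pin down the exponent by a direct block-determinant computation (it should come out to $d$, giving $\mathfrak{d}_{M/K}^d$, but the statement claims $\mathfrak{d}$ raised to make the total exponent $d(d-1)$ in the $\mathfrak{q}'_i$, so I must reconcile this carefully). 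Then the task reduces to: \textbf{compute $\mathfrak{d}_{M/K}$ explicitly.} Since $M \subset \mathbb{Q}(\zeta_m, \zeta_{q'})$ with $M/K$ of degree $d \mid (q'-1)$ and ramified only at the primes above $q'$ (because the extension $K(\zeta_{q'})/K$ is ramified only at $q'$, and $M$ is an intermediate field), conductor–discriminant or the tower formula for discriminants in the cyclotomic extension $K(\zeta_{q'})/K$ gives that $\mathfrak{d}_{M/K}$ is supported exactly on $\mathfrak{q}'_1,\dots,\mathfrak{q}'_n$. Because $q'$ is totally split in $K$ and the ramification of $M/K$ at each $\mathfrak{q}'_i$ is tame (as $q' \neq p$ and $d \mid q'-1$), the different exponent at each $\mathfrak{q}'_i$ is $d-1$, so $\mathfrak{d}_{M/K} = \prod_i \mathfrak{q}_i'^{\,d-1}$, and raising to the appropriate power yields \eqref{natural}.

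\textbf{The main obstacle} I anticipate is twofold: first, verifying that the Gram matrix of the reduced trace form on the natural-order basis has determinant (up to a unit) exactly $\mathfrak{d}_{M/K}^{d}$ — this requires care with the non-commutative multiplication and the vanishing of mixed traces $\mathrm{Tr}(u^iw \cdot u^j w') $ for $i+j \not\equiv 0 \pmod d$, and one must track the power of $\gamma$ appearing from the "wrap-around" terms $u^i u^{d-i} = \gamma$ — but since $\gamma$ is a unit this only affects the computation cosmetically. Second, and more substantively, establishing that $M/K$ is tamely ramified of different-exponent $d-1$ at each $\mathfrak{q}'_i$ and unramified everywhere else; this is where the explicit description of $M$ from \cref{lahtonenalgebras} (as the fixed field of $\sigma^m$ inside $\mathbb{Q}(\zeta_{mq'})/\mathbb{Q}(\zeta_m)$) is essential, and I would lean on the conductor-discriminant formula for the abelian extension $\mathbb{Q}(\zeta_{mq'})/\mathbb{Q}(\zeta_m)$ together with the fact that the nontrivial characters cutting out $M$ all have conductor $\mathfrak{q}'_i$ (exponent $1$), giving different exponent $d-1$ after accounting for the $d-1$ nontrivial characters. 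Once both pieces are in place, the product $\mathfrak{d}_{M/K}^{d} = \big(\prod_i \mathfrak{q}_i'^{\,d-1}\big)^{d}$... — here I would double-check the bookkeeping so that the final exponent matches $d(d-1)$ as stated, adjusting the claimed power of $\mathfrak{d}_{M/K}$ in the Gram-determinant step if the two must be made consistent.
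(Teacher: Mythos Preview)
Your proposal is correct and follows essentially the same route as the paper. The paper cites \cite[Lemma 5.4]{vehkalahti_densest_2009} for the formula $d(\Lambda/\mathcal{O}_K) = d(M/K)^d \cdot \zeta_m^{d(d-1)} = d(M/K)^d$ rather than recomputing the Gram determinant of the reduced trace, and then argues the ramification in $M/K$ via the tower $\mathbb{Q}(\zeta_{mq'})/\mathbb{Q}(\zeta_m)$ (multiplicativity of ramification indices, total tame ramification at the $\mathfrak{q}'_i$, Dedekind's discriminant theorem) rather than conductor--discriminant; but both routes arrive at $d(M/K) = (\mathfrak{q}'_1\cdots\mathfrak{q}'_n)^{d-1}$ and then raise to the $d$th power, so your hesitation about the exponent is unwarranted --- the bookkeeping you sketch is exactly right.
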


\begin{proof}
According to \cite[Lemma 5.4]{vehkalahti_densest_2009} we have that
$$
d(\Lambda/\mathcal{O}_K)=d(M/K)^d \zeta_m^{d(d-1)}=d(M/K)^d,
$$
where $d(M/K)$ is the relative number field discriminant of the extension  $M/K$. In order to find the discriminant of the natural order, it is now enough to find $d(M/K)$.
By the basic theory of cyclotomic fields we know that $\mathbb{Q}(\zeta_m,\zeta_{q'})=\mathbb{Q}(\zeta_{mq'})$. We also know   that the only ramified primes in  the extension $\mathbb{Q}(\zeta_{mq'})/\mathbb{Q}$ are $p$ and $q'$ and their ramification indices are $e_1=n$ and   $e_2=q'-1$, respectively.  Furthermore   ramification index of $p$ in the extension $\mathbb{Q}(\zeta_m)/\mathbb{Q}$ is $e_1$.  As ramification indices  are multiplicative in towers of extensions we can deduce that the only primes  that are possibly ramified in the extension $\mathbb{Q}(\zeta_{mq'})/\mathbb{Q}(\zeta_m)$ are those that lie above $q'$ in the ring  $\OO_{K}$. As $q'$ is not ramified in $\mathbb{Q}(\zeta_m)$, we get again by the multiplicativity of the ramification indices that all the  primes $\mathfrak{q'_i}$ are totally ramified in the extension $\mathbb{Q}(\zeta_{mq'})/\mathbb{Q}(\zeta_m)$. Therefore they are also totally ramified in the extension $M/\mathbb{Q}(\zeta_m)$.  Because $q'$ does not divide $d$ the prime ideals $\mathfrak{q'_i}$  are  tamely ramified. Dedekind's discriminant theorem now imply
that
\[
d(M/K)=(\mathfrak{q}'_1 \cdots \mathfrak{q}'_{n})^{(d-1)}.
\]
\end{proof}

Now we are ready to prove the natural order in \cref{natural_is_maximal} is actually maximal.

\begin{proof}
The proof is based on the result in \cite{reiner_maximal_1975} that states that an order is maximal if and only if it has the same discriminant as the discriminant of the algebra.
According to Lemma \ref{naturaldiscriminant} we have that
\begin{equation}\label{nat}
d(\Lambda/ \mathcal{O}_K) =d(M/K)^d=(\mathfrak{q}'_1 \cdots \mathfrak{q}'_{n})^{d(d-1)}.
\end{equation}
According to \cite{reiner_maximal_1975} the discriminant of the maximal order will always divide the discriminant of the natural order. Hence we know  that the only prime ideals that can possibly divide the discriminant of the maximal order are $\mathfrak{q}'_i$. Let us now assume that $\mathfrak{Q}_i$ is prime ideal above $\mathfrak{q}'_i$ in $L$. By abusing notation we will denote with $L_{\mathfrak{q}'_i}$ the  $\mathfrak{Q}_i$-adic completion of $L$ and in the same way the respective completion $M_{\mathfrak{q}'_i}$.

Following the proof of  \cite[Theorem 4]{lahtonen_construction_2008} we can see that the authors actually prove that $\zeta_m$ is a non-norm element in the extension
$L_{\mathfrak{q}'_i}/K_{\mathfrak{q}'_i}$ for each prime ideal $\mathfrak{q}'_i$. Using the same proof as in Theorem 2 we can now see that $\zeta_m$ is a non-norm element in the extensions $M_{\mathfrak{q}'_i}/K_{\mathfrak{q}'_i}$, for all $i$. According to \cite[Theorem 30.8]{reiner_maximal_1975}  $A\otimes_{K}K_{\mathfrak{q}'_i}\cong (M_{\mathfrak{q}'_i}/K_{\mathfrak{q}'_i},\theta', \zeta_m)$, where $\theta'$ naturally extends $\theta$. As $\zeta_m$ is a non-norm element, $(M_{\mathfrak{q}'_i}/K_{\mathfrak{q}'_i},\theta', \zeta_m)$  is an index $d$ division algebra. By  definition of the local index we can see that the local indices $m_{\mathfrak{q}'_i}$ are $d$ for all $\mathfrak{q}'_i$.  We now know that $\mathfrak{q}'_i$ are the only possible primes dividing the discriminant and that their local indices are $d$. 
According to \cite[Theorem 32.1]{reiner_maximal_1975}  the discriminant of the algebra $\mathcal{A}$ is
$$
 d_{\mathcal{A}}= \prod_{i=1}^{n} \mathfrak{q'}_i^{(m_{\mathfrak{q}'_i}-1)\frac{d^2}{m_{\mathfrak{q}'_i}}}=\prod_{i=1}^{n} \mathfrak{q'}_i^{(d-1)d},
$$
completing the proof.
\end{proof}

\section{Constructions Using Compositum Fields}\label{appendix:compositum}
Our other method for constructing suitable extensions starts from extensions which are nearly what we are looking for and applies field compositums (cf. \cite[Chapter 30]{reiner_maximal_1975}). We recommend this method to build on top of fields constructed using either \cref{lahtonenalgebras} or \cref{primepoweralgebras}. Say we have a Galois field extension $L'/K'$ with non-norm element $\gamma \in \mathcal{O}_{K'}$ whose Galois group is cyclic of degree $d$. Let $F$ be some other Galois number field with $F \cap  L'= \mathbb{Q}$. Then Gal$(L'F/K'F) \cong \text{Gal}(L'/K')$ and $\gamma$ is a non-norm element in $L'F/K'F$. Relabelling this extension as $L/K$ and letting $\theta$ denote the cyclic generator of the Galois group gives a cyclic field extension with non-norm $\gamma$ such that $[L:K] = d$ and $[K:\mathbb{Q}] = [K':\mathbb{Q}] \cdot [F : \mathbb{Q}]$. The relations among these fields are illustrated in \cref{fig:compositum}(a).

One can generalize this method to the case where the base field can not be written conveniently as a compositum of two fields. Let $L'/K'$ be a cyclic Galois extension of degree $d$ with non-norm element $\gamma$ and let $K$ be another Galois number field which contains $K'$. Then $KL'/K$ is a cyclic Galois extension of degree $k$ for some $k$ dividing $d$, and in particular if $K \cap L' = K'$ then $k = d$ since the fields are linearly disjoint above $K'$. See \cref{fig:compositum}(b) for the relations among these fields.

\begin{figure}[htb]
               \centering
   \includegraphics[width=0.70\linewidth]{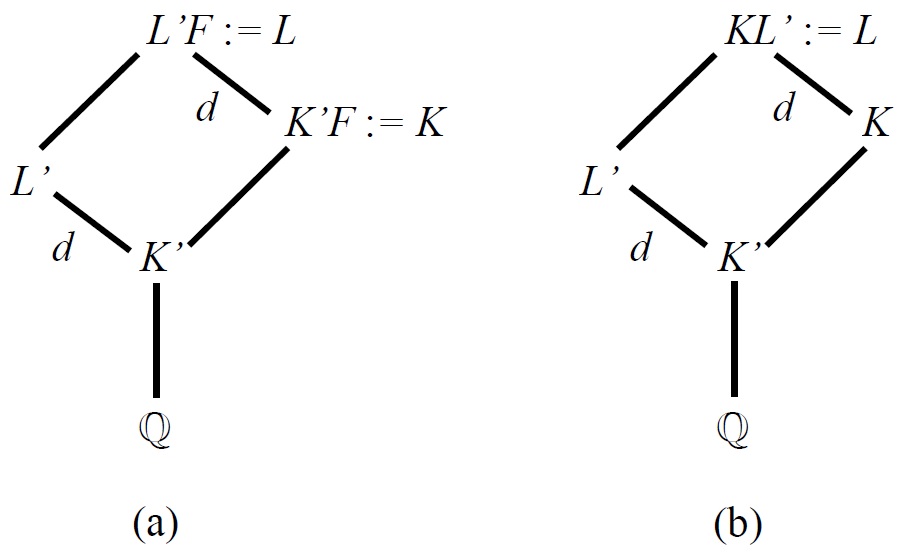}\\
   \caption{Constructions using field compositums: (a) base field $K$ is a compositum $K'F$, (b) $K$ cannot be written as a compositum.}\label{fig:compositum}
\end{figure}

We give example algebras of dimensions $576$, $768$ and $1152$ in Table \ref{Sample Cyclic Algebra Parameters} with less restrictive dimension using field compositum techniques. We propose two alternate methods of applying field compositums in  \cref{fig:compositum}(a): either use \cref{primepoweralgebras} to make an algebra which already has large dimension by selecting large center $K$ and small extension $L$, then compose a small field $F$ onto $K$ and $L$ to tweak the total dimension. Alternatively, one can create  algebras by selecting small fields $L$ and $K$ using \cref{lahtonenalgebras} and composing both with a large field $F$.

We begin with an example of the first method that achieves dimension $768$. Let $L'$ be a degree two extension of the field $K' = \mathbb{Q}(\zeta_{64})$ chosen by \cref{primepoweralgebras} with non-norm root of unity $\gamma$, so that the corresponding algebra $\mathcal{A}'$ has dimension $128$. Compose both $L'$ and $K'$ with the field $F = \mathbb{Q}(\zeta_9)$, denoting the compositums by $L$ and $K$ respectively. Then $\gamma$ is still a non-norm element in the extension $L/K$, a degree two extension that is cyclic and Galois, and the algebra $\mathcal{A}= (L/K, \theta, \gamma)$ is a cyclic algebra of dimension $6 \times 128 = 768$, as required. We observe that here the center $K$ corresponds to the fields with fast operations used in \cite{lyubashevsky_truly_2019}.

Our final method of composing large degree fields onto small degree extensions is aimed at targeting odd module ranks. Begin by choosing the desired module rank $d$ as a (likely small) odd prime. Then set $K' = \mathbb{Q}(\zeta_d)$ and pick $L'$ as a cyclic Galois extension of $K'$ in which the $d^\text{th}$ root of unity is a non-norm element using \cref{lahtonenalgebras}. Let $F := \mathbb{Q}(\zeta_{2^k})$ and again let $L$ and $K$ denote its compositum with $L'$ and $K'$ respectively. Then $\mathcal{A} = (L/K, \theta, \gamma)$ is a cyclic algebra with $n := [K:\mathbb{Q}] = (d-1)2^{k-1}$ and $d = [L:K]$ a small prime. The form of the total dimension $N = d^2(d-1)2^{k-1}$ constrains our choice of dimension, but for examples of cryptographically relevant sizes with $d=3$ one can consider setting $k =6$ or $k=7$ to achieve dimension $N = 576$ or $N =1152$ respectively. If one required additional flexibility of dimension one could also consider increasing $d$ or replacing the power-of-two cyclotomic field with any cyclotomic field whose intersection with $\mathbb{Q}(\zeta_d)$ is precisely $\mathbb{Q}$. This method comes with the subtle drawback that the module rank $d$ is also present in the dimension of the base field $K$, which precludes the case where one wants a large module rank and a small center. On the other hand, since such cases are excluded in our security proof we view this drawback as minor.

\section{Extensions Where $q$ Splits Completely in $L$}\label{qsplits}
We would like $q$ to be of roughly appropriate cryptographic size (say between $3000$ and $15000$ as a soft estimate, once again presuming parameters similar to those of NewHope or KYBER). Having $q$ split completely in $L$ is not as straightforward as in $K$ because $L$ is not a cyclotomic field, so we return to our examination of the proof of \cref{lahtonenalgebras}. Recall that in this proof the extension field $L$ is a subfield of $K(\zeta_{mq'})$ for some prime integer $q'$ satisfying $q' = 1 \mod m$ and, for $m = p^a$, $p^{a+1}$ does not divide $q'-1$. That is, $a$ is the highest power of $p$ that divides $q'-1$. We have several methods to ensure that $q$ splits completely in $L$, of which we start with the most naive.
\subsubsection{Naive Method}
For our general method we rely on the following fact: If $\mathfrak{q}_i$ is an ideal of $\mathcal{O}_K$ which splits completely in an extension $M/K$ then it splits completely in any intermediate field $M/L/K$. As it is conceptually simpler to apply this idea to the integer $q$ than to the $\mathcal{O}_K$-ideals $\mathfrak{q}_i$ we use a simpler statement, that if $\langle q \rangle$ splits completely in some $M$ containing $L$ then it splits completely in $L$. This gives us an easy way to find some $q$ that splits completely by examining a cyclotomic field that contains $L$: let $K = \mathbb{Q}(\zeta_m)$ and let $M = K(\zeta_{q'})$. Then since $q' = 1 \mod m$ it follows that $M = \mathbb{Q}(\zeta_{mq'})$. Thus $q$ splits completely in $M$ if and only if $q = 1 \mod mq'$ and consequentially splits completely in our extension $L$ if $q = 1 \mod mq'$. Since there are infinitely many primes equal to $1 \mod mq'$ this recipe always provides a prime $q$ that splits completely in $L$. The upside of this method is that it is both very general and simple, since all candidate fields $L$ we construct are contained in a larger cyclotomic field. Theoretically, this method can be extended to any abelian extension of $\mathbb{Q}$ using the partial converse of the Kronecker-Weber Theorem. However, using the Kronecker-Weber Theorem constructively is not as straightforward as picking $q'$ as in the proof of \cref{lahtonenalgebras}, so this extension to general abelian $L$ is slightly contrived.

The downside to this method is that it seems that often this will result in unrealistically large $q$. Since $q' = 1 \mod m$ and not $1 \mod p^{a+1}$, $q'$ must be chosen carefully and there are not many `small' primes satisfying these conditions. For example, in our quadratic extension case with $m = 512$ the smallest prime that is $1 \mod m$ but not $1 \mod 2m$ is $q' = 7681$. The smallest $q$ which is $1 \mod (512 \cdot 7681)$ has to be bigger than $512 \cdot 7681 = 3932672$, which is inappropriately large for lattice cryptography. Of course, one could be lucky here and have much smaller $q$ for different choices of $L$ and $K$, but in general we regard this as a theoretical result rather than a practical method. Even for smaller $2$-power cases such as $m = 128$ one must set $q' = 641$, which leads to a smallest valid prime of $q = 820481$.

Remarkably, this is much less bad in the cubic case; $K = \mathbb{Q}(\zeta_{81})$ gives $q' = 163$ as a suitable prime and $q = 26407$ still splits completely. This is perhaps slightly too large, but certainly not so much so that it is completely impractical. Nonetheless, we move on to a better method for quadratic cases.

\subsubsection{Quadratic Case}
In the case where $L/K$ ($K = \mathbb{Q}(\zeta_{512})$) is a quadratic extension we are able to choose substantially smaller $q$ by examining the unique quadratic subfields of $E' :=\mathbb{Q}(\zeta_{q'})$. We rewrite $M$ as the compositum of $E'$ and $K$, and observe that since our chosen $L$ contains $K$ our method of choosing $L$ as a subfield of $M$ allows us to write $L = EK$ for a subfield $E$ of $E'$. In the case where $L$ is a degree two extension of $K$ we know that $E$ is a quadratic field, and since $E'$ is a prime cyclotomic field we have an explicit description for its unique quadratic subfield $E$; namely that $E = \mathbb{Q}(\sqrt{q'})$ if $q' = 1 \mod 4$ and $E = \mathbb{Q}(\sqrt{-q'})$ is $q' = 3 \mod 4$. It is a standard fact that the discriminant $d_E$ of $E$ is $q'$ if $q' = 1 \mod 4$ and $-q'$ otherwise. Finally, we know that a prime $q$ splits completely in $E$ if and only if the congruence $d_E = x^2 \mod q$ has a solution e.g. if $d_E$ is a square mod $q$. Plugging in the prime numbers $q = 12289$ and $q' = 7681$ that are common in cryptography we see that $q' = 1 \mod 4$ and that $7681 = 3788^2 \mod 12289$, so that $q = 12289$ splits completely in $E,K$, and thus $L$, as required. Since this prime is explicitly the prime used in NewHope for all parameter sets we view this method as a substantial improvement on the previous technique.

\subsubsection{Quartic Fields}
Again, we use the method of describing $L$ as a compositum $MK/K$. Now, $M$ will be a quartic subfield of the field $\mathbb{Q}(\zeta_{q'})$ and one can establish the linearly disjoint nature of $M$ and $K$ required to express $L$ as this compositum by e.g. examining their discriminants: since $K$ is a power-of-two cyclotomic field the only prime appearing in its discriminant is $2$, and since $M$ is a subfield of $\mathbb{Q}(\zeta_{q'})$ the only prime in its discriminant is $q'$. Since they have coprime discriminants they are linearly disjoint, and since ramified primes are factors of the discriminant we have a relatively easy way to discount $q$ being ramified ($q \neq 2, q'$), so the remaining case to concern ourselves with is $q$ being inert.

Since the discriminants are coprime we have a method for explicitly describing the integral basis of $L = MK$; the integral basis for $K$ is clear, and an integral basis for $M$ in fixed dimension can be computed relatively easily since it has degree $4$. Then, the product of their integral bases is an integral basis for $L$. Now one only needs to check whether $q$ splits completely in $M$, since splitting in $K$ is well understood. We are unable to provide a general method for finding such $q$, but an easy computation reveals that for $q = 10753$ and $K= \mathbb{Q}(\zeta_{256})$ there is a quartic field $M$ such that $q$ splits completely in $M$ and $K$ and hence $L$. Since we have a relatively small range in which we wish to place $q$ and $M$ has low degree we do not consider the cost of this search as a large drawback since it can be done efficiently on computational software such as SAGE or PARI.

\begin{remark}
In fact, this quartic method can be applied to other instances where we do not have an explicit description of the subfields of $K(\zeta_{q'})$ which have degree $d$ over $K$: define the families of $q$ which split completely in $K$, then check whether those $q$ split completely in $L$ using computational software. Since $q = 1 \mod m$ and $m$ is relatively large, there will not be many $q$ to check of appropriate size for lattice cryptography, and so we conclude that this method is sufficient for fixed choices of fields $L,K$ for which a satisfactory $q$ exists.
\end{remark}

\section{Restricting the Secret Space}\label{appendix:secretspace}

In \cref{searchdecisionguess} we need to use a fact that is implicit in the search-decision reduction of \cite{lyubashevsky_ideal_2010}: for uniformly random $v \in \mathcal{R}_i$ and an incorrect guess $g$ of the secret $s$ modulo $\mathcal{R}_i$, the distribution of $v(g-s)$ is uniformly random. In the ring and module cases, the secret space is decomposed into a direct product of finite fields, so it is clear that $v(g-s)$ is uniformly random in each finite field for $g \neq s$.

In our case, an appeal to Wedderburn's theorem demonstrates that, since for our parameter choices each $\mathcal{R}_i$ is a central simple algebra over ${\mathcal{O}_K}^\vee/ \mathfrak{q}_i{\mathcal{O}_K}^\vee  \cong \mathbb{F}_q$,  each $\mathcal{R}_i$ is isomorphic to the full matrix ring $M_{d \times d}(\mathbb{F}_q)$, for which it is not true in general that $v(g-s)$ is uniformly random for $g \neq s$; in fact, it is uniformly random if and only if $g-s$ is invertible. Thus we restrict our secret $s$ so that $s \mod \mathcal{R}_i$ lies in a set $G_i$ with the property that $g \neq h \in G_i$ implies $g-h$ is an invertible matrix. Applying this restriction for each $i$ places $s \in G$ for a set $G = G_1 \times \dots \times G_n$ of size $\vert G \vert = \prod_i \vert G_i \vert$. Now, an incorrect guess $g \in G_i$ of $s \mod \mathcal{R}_i$ results in a distribution of $v(g-s)$ which is uniformly random mod $\mathcal{R}_i$. We will call such a set $G$ a pairwise difference set.

We also need to guarantee that there exist sufficiently large choices of $G$. A simple method for constructing a valid $G_i$ is by fixing some arbitrary embedding $\beta$ of $\mathbb{F}_{q^d}$ into $M_{n \times n}(\mathbb{F}_q)$ and letting $G_i$ equal the image of this embedding, such that $\vert G_i \vert = q^d$ and $\vert G \vert = q^{nd}$. Indeed, a $G_i$ constructed in this way is maximal because any set of matrices in $M_{d \times d}(\mathbb{F}_q)$ of size at least $q^d +1$ contains two matrices with the same first row, whose difference is therefore uninvertible.

There are a number of choices of embedding $\beta$, and thus set $G_i$, equal to the number of irreducible polynomials of degree $d$ in $\mathbb{F}_q[x]$, which can be calculated by the Necklace polynomial and in general will vastly exceed $q$. We make clear that our reduction will take the decision CLWE problem for \textit{arbitrary secret} $s$ to the search CLWE problem where $s \in G$ for \textit{arbitrary fixed} $G$, which we denote by CLWE$_{q, \Sigma_\alpha, G}$. Thus, our reduction states that the decision problem is as hard as the search problem for the hardest choice of $G$, precluding obvious attacks on the unique case where $G = {{\mathcal{O}_L}_q}^\vee$ and the CLWE problem with $s \in G$ corresponds to $d$ parallel copies in $L$ of the RLWE problem\footnote{Although this case exists only when each $\mathfrak{q}_i \mathcal{O}_L$ is a prime ideal in $\mathcal{O}_L$.}. For a general set $G$, $s \in G$ will not provide parallelization since they need not have the property of $L$ that they are entirely contained in one $u$ coordinate of $\mathcal{A}$. Additionally, even though elements of $G$ constructed this way co-commute, they do not lie in the center of $\Lambda$ and the multiplication $a \cdot s$ in the CLWE instance will not be a commutative operation.

Of course, fixing a $G$ of size $q^{nd}$ restricts the size of the secret space by a factor of $\frac{q^{nd}}{q^{nd^2}}$, a substantial loss in size even for fixed, small $d$. For concrete parameter settings, this may result in a much easier problem, but asymptotically it is still exponential in $n$ and thus establishes a suitable hardness property for decision CLWE. Of course, attacks based on exhaustive search are unlikely to represent the best attacks on the CLWE problem, so this may or may not substantially aid an attacker in practice.

In fact, there is no a priori reason why $G_i$ should be a field, or even closed under multiplication. For example, fixing a pair of invertible matrices $M_1, M_2$ and replacing $G_i$ with $M_1 \cdot G_i \cdot M_2 = \lbrace M_1 X M_2 \vert X \in G_i\rbrace$ results in a new set of size $q^d$ whose pairwise differences are all invertible but is not multiplicatively closed in general. Although the field embedding technique is perhaps the most elegant way of building $G_i$, and certainly the most constructive, it may transpire that taking $s$ from some set with less algebraic structure is advantageous in terms of the hardness of the resulting search problem. One can also construct the valid set $G_i + X$ by adding a fixed matrix $X$ to each element of $G_i$, but this technique is somewhat constrained by the fact that LWE samples are additive in the secret $s$ (e.g. one could just add $a \cdot X$ into the second coordinate of the resulting samples).

Although this restriction is not ideal, we have a remark about the implications on the security of the CLWE problem. Restricting the secret space in (R)LWE problems is not an uncommon idea: tertiary secrets, where each coordinate of $s \in \lbrace -1, 0,1 \rbrace$, are used in the NIST candidate LAC \cite{lu_lac_2018} amongst others, and security whilst restricting the secret to orders or subfields is discussed in \cite{bolboceanu_order_2018}, and to other $K$-lattices in \cite{peikert_algebraically_2019}. Overall, we suspect that the decision CLWE problem is polynomial time equivalent to the search CLWE problem without restriction on $s$, in particular when the number of samples is small as in our applications in \cref{crypto}, and that the restriction is a function of our reduction technique rather than some causal property of the CLWE distribution. For the purposes of constructing a cryptosystem, we assume that this reduction implies that the decision CLWE problem is hard.

\section{Estimating the Multiplication Complexity}\label{appendix:multiplication-complexity}

The overall flow to compute the multiplication is depicted in Fig. \ref{fig:multiplication}, which is explained in detail in the sequel.

\begin{figure}[htb]
               \centering
   \includegraphics[trim= 51mm 40mm 00mm 10mm, clip, width=1.1\linewidth]{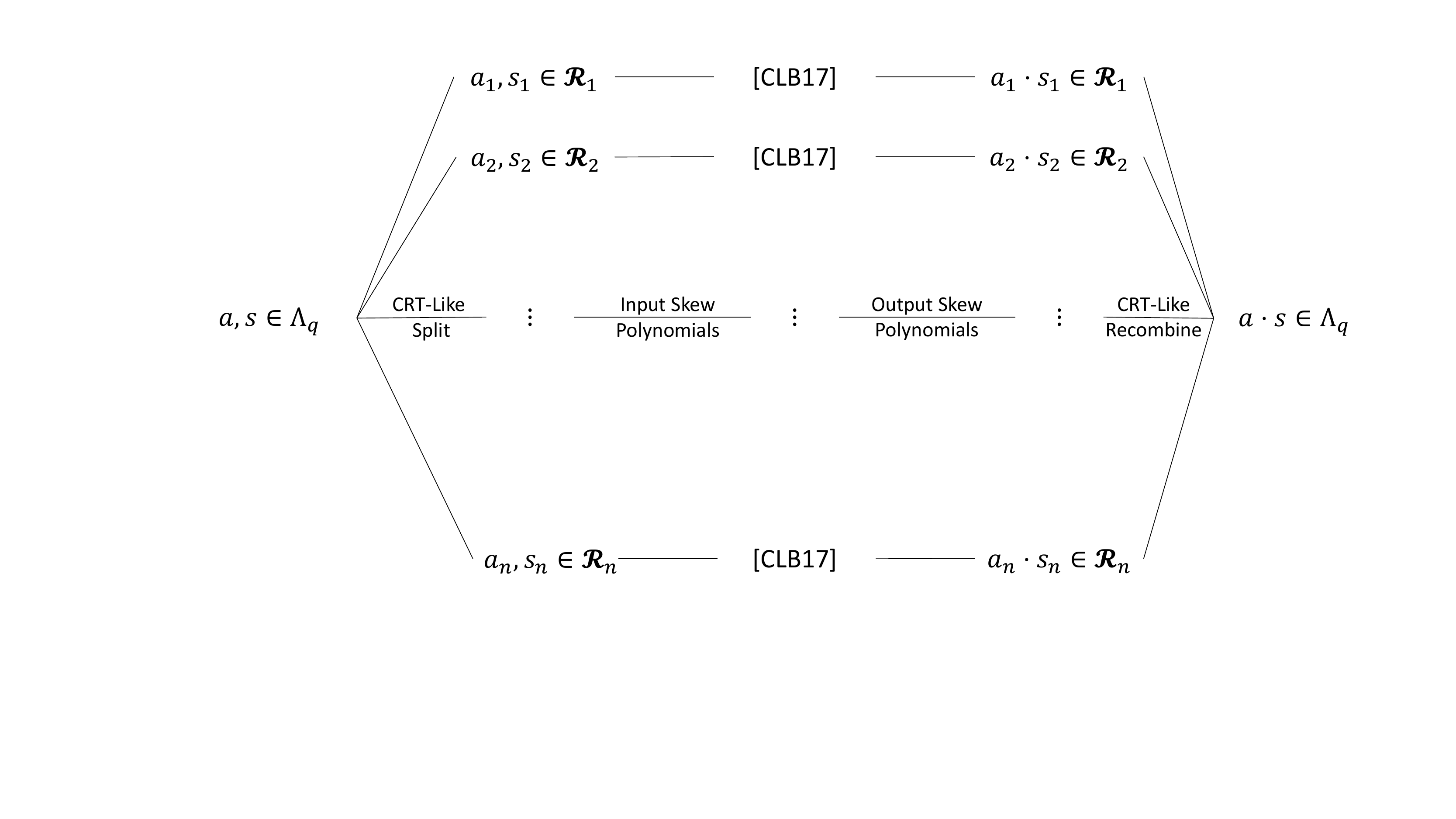}
   \caption{Depiction of the multiplication algorithm for cyclic algebras. [CLB17] is referred to as \cite{caruso_fast_2017-4}.}\label{fig:multiplication}
\end{figure}

\subsection{Algorithm for Multiplication in Cyclic Algebras}
We recall some details necessary to understand our multiplication algorithm. Recall that in the explicit constructions of \cref{primepoweralgebras} the base field $K$ is cyclotomic and $q$ is a prime integer chosen so that $\langle q \rangle$ splits completely in $\mathcal{O}_K$ as $\langle q \rangle = \mathfrak{q}_1\dots\mathfrak{q}_n$, where $n$ is the dimension of $K$ as an extension of $\mathbb{Q}$. Furthermore, the degree of $L$ over $K$ is a typically small $d$. Then, following the CRT-like decomposition of \cref{CRTlike} we write
\begin{align*}
\Lambda_q \cong \mathcal{R}_1 \times \dots \times \mathcal{R}_n
\end{align*}
for $\mathcal{R}_i = \bigoplus_{j=0}^{d-1} u^j \mathcal{O}_L/\mathfrak{q}_i \mathcal{O}_L$. We will show that each $\mathcal{R}_i$ is a skew polynomial ring over $\mathbb{Z}_q$, and in particular a skew polynomial ring for which we can apply the algorithms of \cite{caruso_fast_2017-4} to compute multiplication independently in each $\mathcal{R}_i$ in $O(d^\omega)$ operations in $\mathbb{Z}_q$, which output elements whose $u$ coordinates are in the form $\sum_i \ell_i k_i$ for $k_i \in {\mathcal{O}_K}_q$ and $\{\ell_i\}$ some arbitrary normal basis for ${\mathcal{O}_L}_q$ over ${\mathcal{O}_K}_q$. We remark that the representation as a skew polynomial ring need not contradict the fact that we viewed the rings $\mathcal{R}_i$ as matrix rings in \cref{searchdecisionsection}, since computing matrix multiplication can be reduced to the problem of computing multiplication of skew polynomials (see \cite{caruso_fast_2017-4}). Since $\omega \leq 2.373$ and we can compute the multiplication in each $\mathcal{R}_i$ in parallel, this leads to a complexity of approximately $O(Nd^{0.373})$. However, we must also compute the complexity of the splitting isomorphism.

\subsection{The Rings $\mathcal{R}_i$}
In order to apply the algorithm of \cite{caruso_fast_2017-4}, we must confirm that each $\mathcal{R}_i$ satisfies the following conditions:
\begin{itemize}
\item $\mathcal{R}_i$ is the quotient of a skew polynomial ring with center $\mathcal{O}_K/\mathfrak{q}_i$ by a polynomial in the form $X^d - \gamma$.
\item $\gamma$ is a norm from $\mathcal{O}_L/\mathfrak{q}_i \mathcal{O}_L$ into $\mathcal{O}_K/\mathfrak{q}_i$.\footnote{Due to the modulo reduction this does not contradict the assumption that $\gamma$ is not a global norm.}
\item $\mathcal{O}_L/\mathfrak{q}_i \mathcal{O}_L$ is a field extension of $\mathcal{O}_K/\mathfrak{q}_i$ or an \'{e}tale-$\mathcal{O}_K/\mathfrak{q}_i$ algebra.
\end{itemize}
The first of the conditions follows immediately from the definitions of a skew polynomial ring and a cyclic algebra. The veracity of the latter conditions will depend on how the prime ideal $\mathfrak{q}_i$ of $\mathcal{O}_K$ splits in $\mathcal{O}_L$ as $\mathfrak{q}_i \mathcal{O}_L$. Since $\mathfrak{q}_i$ is prime in $K$ and $L/K$ is Galois, we know
\begin{align*}
\mathfrak{q}_i \mathcal{O}_L = \prod_{j=1}^g (\mathfrak{q}_{i,j})^e
\end{align*}
for some prime ideals $\mathfrak{q}_{i,j}$ in $\mathcal{O}_L$ and integers $e,g$ satisfying $efg = [L:K] = d$, where $f$ denotes the inertial degree. Assuming that $L$ is constructed as a subfield of a cyclotomic field as in \cite{lahtonen_construction_2008}, it is a Galois number field and it follows that each $\mathfrak{q}_i$ splits with the same $e,f,$ and $g$. Furthermore, since they are coprime as ideals of $\mathcal{O}_K$, their factorizations' in $L$ are disjoint. Thus, we are left to consider three cases.

We first consider the case where each $\mathfrak{q}_i \mathcal{O}_L$ remains prime in $\mathcal{O}_L$. It follows that $\mathcal{O}_L/\mathfrak{q}_i \mathcal{O}_L$ is a finite field, and computing the norm of $\mathfrak{q}_i \mathcal{O}_L$ indicates $\mathcal{O}_L/\mathfrak{q}_i \mathcal{O}_L \cong \mathbb{F}_{q^d}$. In this case it is easy to see that $\mathcal{O}_L/\mathfrak{q}_i \mathcal{O}_L$ is a finite field extension of $\mathcal{O}_K/\mathfrak{q}_i \cong \mathbb{F}_q$ and consequentially, because the norm map is surjective over finite field extensions, that $\gamma$ is a norm. Here it is clear that the algorithms of \cite{caruso_fast_2017-4} can be applied.

The second case we consider is $g = d$, $e =f = 1$. Now each $\mathfrak{q}_i \mathcal{O}_L$ splits completely in $\mathcal{O}_L$ into a product of prime ideals $\mathfrak{q}_{i,1}\dots\mathfrak{q}_{i,d}$. By the CRT we have
\begin{align*}
\mathcal{O}_L/\mathfrak{q}_i \mathcal{O}_L \cong \bigotimes_{j=1}^d \mathcal{O}_L/\mathfrak{q}_{i,j}
\end{align*}
where each $\mathcal{O}_L/\mathfrak{q}_{i,j} \cong \mathbb{F}_q$, and it follows that $\mathcal{O}_L/\mathfrak{q}_i \mathcal{O}_L$ is an \'{e}tale-$\mathcal{O}_K/\mathfrak{q}_i$ algebra. We are left to show that $\gamma$ is a norm, which we show via the stronger condition that the norm map in this extension is surjective. By the CRT, $\mathcal{O}_L/\mathfrak{q}_i \mathcal{O}_L$ is isomorphic to a direct product of $d$ copies of $\mathbb{F}_q$. Since the embeddings of $L$ cyclically permute the ideal factors of $\mathfrak{q}_i$ it follows that the relative norm of an element $(x_1,\dots,x_d) \in \bigotimes_{j=0}^d \mathcal{O}_L/\mathfrak{q}_{i,j}$ is precisely $\prod_{k=1}^d x_k \mod q$. It is easy to see that this norm is surjective (because any $x \in \mathbb{F}_q$ is the norm of e.g. $(1,1,\dots,x)$) and now once again we can apply the multiplication algorithms of \cite{caruso_fast_2017-4}.

Intermediate cases, where $\mathfrak{q}_i$ splits into a product of prime ideals with the same norm such that $e=1, fg = d$, can be handled using a straightforward combination of these two methods.

The final case to consider is the ramified case, when $e \neq 1$. Now the factorization of $\mathfrak{q}_i \mathcal{O}_L$ contains some power $\mathfrak{p}_i^{e_i}$ of a prime $\mathcal{O}_L$ ideal $\mathfrak{p}_i$. In this case, we are not able to verify that the necessary conditions for the algorithms of \cite{caruso_fast_2017-4} hold. However, we observe that the ideal $\langle q \rangle$ ramifies in $\mathcal{O}_L$ if and only if $q$ divides the discriminant of $\mathcal{O}_L$. Since only a finite number of primes divide this discriminant, we restrict ourselves to considering the cases where $q$ does not ramify. We emphasize that in the main cases of interest, where $K$ is the $m^\text{th}$ cyclotomic field with $m$ having small divisors and $[L:K]$ is small, it is particularly unlikely that the large modulus $q$ typical in cryptography divides the discriminant of $L$. Indeed, when we pick $L$ as a subfield of $K(\zeta_{q'})$ for some large prime integer $s$ using the techniques of \cite{lahtonen_construction_2008} as in \cref{primepoweralgebras}, it is easy to quantify which primes potentially ramify for a fixed choice of fields: either $s$ or the primes smaller than or equal to the divisors of $m$. As an easy example, the modulus $q =12289$ does not ramify in the example algebras given in the \cref{sampleparameters} achieving dimension $1024$.
\subsection{Complexity of the CRT Style Isomorphism}
We have shown that we may apply the algorithms of \cite{caruso_fast_2017-4} to compute the multiplication operation in each $\mathcal{R}_i$ in complexity $O(d^\omega)$. We are left to consider the complexity of the isomorphism defined by \cref{CRTlike} generating the rings $\mathcal{R}_i$. Essentially, this operation is a coordinatewise split of the $u$ coordinates of $\Lambda_q = \bigoplus_{j=0}^{d-1} u^j \mathcal{O}_L$, where each entry is split into its mod $\mathfrak{q}_i \mathcal{O}_L$ parts. That is, the isomorphism maps
\begin{align*}
\sum_{j=0}^{d-1} u^j x_j \rightarrow \bigotimes_{i=1}^n \sum_{j=0}^{d-1} u^j (x_j \mod \mathfrak{q}_i \mathcal{O}_L).
\end{align*}
Splitting one element $x_i \in \mathcal{O}_K$ can be done in time $O(n \log n)$ using the CRT algorithm of \cite{lyubashevsky_toolkit_2013} when $K$ is a cyclotomic field of dimension $n$. However, $L$ is a not a cyclotomic field, but instead a small degree $d$ cyclic extension of a cyclotomic. Furthermore, we are trying to split the elements of $L$ modulo ideals of $K$ extended to those of $L$. We do not know of an existing general, efficient way of doing this. The naive estimate for an optimal method would take time $O(nd \log nd)$, where $nd$ is the dimension of $L$, but we suspect something this efficient is impossible. We have to perform $d$ such splits, which would result in a total complexity of $O(N \log N/d)$. Note that this compares relatively closely with the $O(Nd^{0.3})$ claimed for the multiplication step, and since these steps are sequential rather than parallel which of them dominates the asymptotic complexity would depend on the exact relationship between $n$ and $d$, but the result is an operational complexity essentially equivalent to that of the ring variant.

Of course, the discussion of the previous paragraph relies on our implausibly low estimate of $O(nd \log nd)$ complexity of the CRT split and so we do not claim such efficiency. Instead, we present techniques in the proceeding sections to work around the problem of splitting the $L$ part modulo the $K$ ideals in the factorization of $q$. Our methods are particularly efficient in the case where $q$ splits completely in $L$, but can be generalized to arbitrary splitting at only a small cost.

\subsection{Fast Cryptography When $q$ Splits Completely in $L$}\label{qfast}
We consider an explicit method for implementing fast cryptography in the special case where the ideal $\langle q \rangle$ splits completely in $\mathcal{O}_L$. By construction, $\langle q \rangle = \prod_i \mathfrak{q}_i$ in $\mathcal{O}_K$, so in this case we split $\langle q \rangle = \prod_{i,j} \mathfrak{q}_{i,j}$ in $\mathcal{O}_L$, where the prime $\mathcal{O}_K$-ideals have prime decomposition in $\mathcal{O}_L$ denoted $\mathfrak{q}_i \mathcal{O}_L = \prod_{j=1}^d \mathfrak{q}_{i,j}$.

We recall some facts about the extension ${\mathcal{O}_L}_q$ of ${\mathcal{O}_K}_q$. It is clear that the extension is cyclic of degree $d$, with Galois group generated by $\theta$. By the CRT,
\begin{align*}
    {\mathcal{O}_K}_q &\cong \prod_i \mathcal{O}_K/\mathfrak{q}_i
    \cong {\mathbb{F}_q}^n \\
    {\mathcal{O}_L}_q &\cong \prod_{i,j} \mathcal{O}_L/\mathfrak{q}_{i,j}
    \cong {\mathbb{F}_q}^{nd}
\end{align*}
where operations on the finite field products are applied coordinatewise. We represent the CRT decomposition of ${\mathcal{O}_L}_q$ as $({\mathbb{F}_q}^d)^n$, where each copy of ${\mathbb{F}_q}^d$ corresponds to the extension $\prod_j \mathcal{O}_L/\mathfrak{q}_{i,j}$ of $\mathcal{O}_K/\mathfrak{q}_i$. In the finite field representation of $\prod_j \mathcal{O}_L/\mathfrak{q}_{i,j}$, the elements of $\mathcal{O}_K/\mathfrak{q}_i$ embed as elements of ${\mathbb{F}_q}^d$ with the same entry in each coordinate, e.g. $(x,x,\dots, x)$, corresponding to scalars over $(\mathbb{F}_q)^d$, which can be seen from the following argument: for $k \in \mathcal{O}_K$, $k = x \mod \mathfrak{q}_i$ implies $k-x \in \mathfrak{q}_i$. Then it follows that $k -x \in \mathfrak{q}_{i,j}$ and thus $k = x \mod \mathfrak{q}_{i,j}$ for each $j$. Furthermore there is a simple, explicit, description of the action of $\theta$ in this representation: since $\theta$ cyclically shifts the ideals in the factorization of $\mathfrak{q}_i$, one can order each copy of ${\mathbb{F}_q}^d$ so that the action of $\theta$ on $({\mathbb{F}_q}^d)^n$ is a cyclical shift of the coordinates of each of the $n$ copies of ${\mathbb{F}_q}^d$ concurrently. We exhibit this with a trivial example: set $d=3, n =2$. Then the action of $\theta$ on $({\mathbb{F}_q}^3)^2$ is
\begin{align*}
    \theta(a_1,a_2,a_3,b_1,b_2,b_3) = (a_3, a_1, a_2, b_3, b_1, b_2).
\end{align*}
A valid $\mathcal{O}_K/\mathfrak{q}_i$ basis for $\mathcal{O}_L/\mathfrak{q}_i\mathcal{O}_L$ of size $d$ is $\textbf{e}_1,\dots,\textbf{e}_d$, where $\textbf{e}_i = (0,\dots, 1, \dots 0)$ denotes the $i^\text{th}$ element of the standard basis of dimension $d$. Furthermore, this basis is orthonormal in the sense that $\textbf{e}_i \cdot \textbf{e}_j = \textbf{e}_i$ for $i = j$ and $0$ otherwise and cyclic\footnote{As long as we choose the ordering in the right way.} in the sense that $\theta(\textbf{e}_i) = \textbf{e}_{i+1}$ (e.g. normal), since the Galois group $\langle \theta \rangle$ of $L$ over $K$ permutes the factors $\mathfrak{q}_{i,j}$ of $\mathfrak{q}_i\mathcal{O}_L$ for each $i$. Because the CRT splits ${\mathcal{O}_L}_q$ into a direct product within which operations are computed coordinatewise, we can extend this to a basis of ${\mathcal{O}_L}_q$ over ${\mathcal{O}_K}_q$ in the finite field representation by concatenating $n$ copies of this basis together, denoting by $\textbf{e}_i^n$ the vector of dimension $nd$ $(\textbf{e}_i, \textbf{e}_i, \dots, \textbf{e}_i)$. This basis is still cyclic, with $\theta$ operating independently on each of the $n$ copies of ${\mathbb{F}_q}^d$ and hence the $n$ copies of $\textbf{e}_i$. Concatenating the bases in this way also preserves the orthonormal property.

Denote the above basis by $\ell_1,\dots, \ell_d$. Recall that the CRT-like decomposition \cref{CRTlike} splits each $u$ coordinate, an element of ${\mathcal{O}_L}_q$, into its mod $\mathfrak{q}_i \mathcal{O}_L$ parts. However, we already know the mod $\mathfrak{q}_i \mathcal{O}_L$ parts of each $\ell_j$ by construction. So, if we store elements of ${\mathcal{O}_L}_q$ as $\ell = \sum_{j = 1}^d \ell_j k_j$ for $k_j \in {\mathcal{O}_K}_q$ we can split $\ell$ into its $\mathcal{O}_L/\mathfrak{q}_i \mathcal{O}_L$ components in time $O(d \cdot n \log n)$ as long as the $k_j$ elements are stored in the polynomial representation of ${\mathcal{O}_K}_q$. Consequentially, we can perform the CRT style decomposition of an element in $\Lambda_q$ whose $u$ coordinates are stored in this manner in time $O(d^2 \cdot n \log n) = O(N \log (N/d^2))$.

Now we see a way to achieve fast multiplication in $\Lambda_q$. We are required to perform the CRT in each of the $d$ $u$ coordinates, after which we can plug the rings $\mathcal{R}_i$ into the fast multiplication algorithm of \cite{caruso_fast_2017-4}. Since the CRT is an isomorphism and we know the image of $\ell_i$ under the CRT, this reduces to $d$ copies of the CRT in $\mathcal{O}_K$, each with complexity $O(dn \log n)$, and therefore a total multiplication complexity of $O(N \log (N/d^2)) + O(Nd^{\omega - 2})$. However, this algorithm comes with complications associated with the chosen representation of elements of ${\mathcal{O}_L}_q$, which we handle in the next section.

\subsubsection{Handling Elements in the Representation}
To use the above multiplication algorithms in the scheme of \cref{clwecrypto} we need to be able to store the elements compactly and sample the elements efficiently. Storing elements in this form turns out to be straightforward: each ${\mathcal{O}_L}_q$ element requires storing $d$ elements of ${\mathcal{O}_K}_q$. An element of $\Lambda_q$ is $d$ elements of ${\mathcal{O}_L}_q$, so in total we store $d^2$ elements of ${\mathcal{O}_K}_q$, corresponding to one element of dimension $N = nd^2$, which is equivalent to storing $d$ elements of dimension $nd$.

We now discuss how to efficiently sample elements of $\Lambda_q$ according to an appropriate error distribution. Recall from the security reduction of \cref{sec3} that the error distributions we recommend in practice are spherical or elliptical Gaussians in the coordinates of the embedding $\sigma_\mathcal{A}$. We sample using the following result.
\begin{theorem}\label{gaussianorth}
Let $L/K$ be a tower of number fields with $[K:\mathbb{Q}] = n$ and $[L:K] = d$ where $K$ is a prime-power cyclotomic field. Let $q \geq 2$ be a prime modulus which splits completely in $\mathcal{O}_L$ and let $\ell_1, \dots, \ell_d$ be the cyclic basis of ${\mathcal{O}_L}_q$ over ${\mathcal{O}_K}_q$ satisfying $\ell_i \cdot \ell_j = \ell_i$ if $i = j$ and $0$ otherwise. Then, the distribution on ${\mathcal{O}_L}_q$ obtained by sampling $k_1, \dots, k_d$ independently from a discrete Gaussian over ${\mathcal{O}_K}_q$ in the polynomial representation and outputting $\ell = \sum_i \ell_i k_i$ is a discrete Gaussian over ${\mathcal{O}_L}_q$ in the $\ell_2$ norm over $L_\mathbb{R}$.
\end{theorem}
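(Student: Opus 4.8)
The plan is to reduce the statement to a compatibility between two parallel ``coordinate'' decompositions of $\mathcal{O}_L$ relative to $\mathcal{O}_K$ — one modulo $q$, one archimedean — and to exploit that the idempotent basis $\ell_1,\dots,\ell_d$ is orthonormal for the relative trace form in \emph{both} pictures simultaneously.

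First I would fix notation following \cref{CRTlike} and the discussion preceding the theorem: write $\langle q\rangle=\mathfrak q_1\cdots\mathfrak q_n$ in $\mathcal{O}_K$ and $\mathfrak q_i\mathcal{O}_L=\mathfrak q_{i,1}\cdots\mathfrak q_{i,d}$ in $\mathcal{O}_L$ (all of degree one, since $q$ splits completely), ordered so that $\theta$ acts by the cyclic shift $j\mapsto j+1$ on the factors above each $\mathfrak q_i$. Then $\mathcal{O}_K/q\mathcal{O}_K\cong\prod_i\mathbb{F}_q$ and $\mathcal{O}_L/q\mathcal{O}_L\cong\prod_i\big(\prod_j\mathbb{F}_q\big)$, and in these coordinates $\ell_j$ is, in each of the $n$ blocks, the $j$-th standard basis vector $\mathbf{e}_j$, with $\mathbf{e}_i\mathbf{e}_j=\delta_{ij}\mathbf{e}_i$ as already recorded. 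Consequently the map $\Psi\colon(\mathcal{O}_K/q\mathcal{O}_K)^{d}\to\mathcal{O}_L/q\mathcal{O}_L$, $(k_1,\dots,k_d)\mapsto\sum_j\ell_j k_j$, is the $\mathbb{F}_q$-linear bijection which in CRT coordinates carries the array $(k_j\bmod\mathfrak q_i)_{i,j}$ to the element whose residue modulo $\mathfrak q_{i,j}$ is $k_j\bmod\mathfrak q_i$; in particular $\Psi$ commutes with reduction modulo $q$, so it suffices to identify the image under the corresponding map of a product of $d$ independent discrete Gaussians over $\mathcal{O}_K$, reduced mod $q$.

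Next I would set up the archimedean side. Since $K=\mathbb{Q}(\zeta_m)$ has no real place, $L_{\mathbb R}=L\otimes_{\mathbb Q}\mathbb R\cong\prod_{w\mid\infty}(L\otimes_K K_w)$ with each $L\otimes_K K_w\cong\mathbb{C}^d$, the $d$ archimedean places of $L$ above $w$ again forming a torsor under $\langle\theta\rangle$ on which $\theta$ is the cyclic shift; labelling these compatibly with the factorisations above yields an identification $L_{\mathbb R}\cong (K_{\mathbb R})^{d}$ under which $\sum_j\ell_j k_j\leftrightarrow(k_1,\dots,k_d)$. The crucial point is that this identification is an \emph{isometry} for the canonical-embedding $\ell_2$ inner products: the inner product on $L_{\mathbb R}$ is $\langle x,y\rangle=\operatorname{Tr}_{L/\mathbb Q}(x\bar y)=\sum_w\operatorname{Tr}_{K_w/\mathbb R}\big(\operatorname{Tr}_{(L\otimes_K K_w)/K_w}(x_w\bar y_w)\big)$, and in each block the idempotent basis satisfies $\operatorname{Tr}_{(L\otimes_K K_w)/K_w}(\mathbf{e}_i\overline{\mathbf{e}_j})=\delta_{ij}$ — the exact archimedean analogue of the multiplicative orthonormality $\mathbf{e}_i\mathbf{e}_j=\delta_{ij}\mathbf{e}_i$. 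Thus folding $d$ independent spherical discrete Gaussians over $\mathcal{O}_K$ through the $\ell_j$ is, on the level of $L_{\mathbb R}$, the isometry applied to a spherical Gaussian, which produces a spherical Gaussian of the same parameter on the image lattice; combining this with the mod-$q$ reduction and the compatibility of $\Psi$ established above yields the distribution $D_{\mathcal{O}_L,r}\bmod q\mathcal{O}_L$, i.e. a discrete Gaussian over $\mathcal{O}_L/q\mathcal{O}_L$ in the $\ell_2$ norm on $L_{\mathbb R}$. The hypothesis ``$K$ prime-power cyclotomic, polynomial representation'' enters only to pin down which fixed Gaussian over $\mathcal{O}_K/q\mathcal{O}_K$ we start from (for a $2$-power cyclotomic the power basis is orthogonal under $\sigma_K$ up to a global scalar, so this is literally the spherical case; for the other prime powers one samples the Gaussian whose $\sigma_K$-image is spherical), and it plays no further role in the argument.

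The main obstacle I anticipate is the gap between ``$\ell_1,\dots,\ell_d$ is an $(\mathcal{O}_K/q\mathcal{O}_K)$-basis of $\mathcal{O}_L/q\mathcal{O}_L$'' and ``it is an $\mathcal{O}_K$-basis of $\mathcal{O}_L$'': the mod-$q$ and complex idempotents do not descend to a single integral $K$-basis of $L$, so the isometry $L_{\mathbb R}\cong(K_{\mathbb R})^d$ need not carry $\mathcal{O}_K^{d}$ exactly onto $\mathcal{O}_L$, and one must argue that this discrepancy vanishes after reduction modulo $q$ — which is precisely the level at which the theorem is stated. I would handle this by localising at the primes above $q$: since $q$ is unramified in $L$, $\mathcal{O}_L\otimes_{\mathcal{O}_K}\mathcal{O}_{K,\mathfrak q_i}$ is a product of $d$ copies of $\mathcal{O}_{K,\mathfrak q_i}$ in which the $\ell_j$ are genuine (up to units) idempotents, so $\Psi$ matches the two quotient structures on the nose; should an exact identity prove awkward, the fallback — in the spirit of \cref{bddtrans} and \cref{1} — is to show the two distributions are within negligible statistical distance, which is all that is required downstream. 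A secondary, purely bookkeeping, point is checking that the labellings of the primes $\mathfrak q_{i,j}$ and of the places of $L$ above $w$ can be chosen simultaneously so that $\theta$ realises the same cyclic shift on both; this is immediate from $\langle\theta\rangle$ acting simply transitively in each case.
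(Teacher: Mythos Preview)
Your approach is structurally different from the paper's and contains a genuine gap. You set up two decompositions---the mod-$q$ CRT decomposition with idempotents $\ell_j$, and the archimedean decomposition $L_{\mathbb R}\cong(K_{\mathbb R})^d$ with idempotents $\mathbf e_j$---and then assert that ``$\sum_j\ell_j k_j\leftrightarrow(k_1,\dots,k_d)$'' under the archimedean identification. This is the error: the element $\sum_j\ell_j k_j$ is formed using \emph{lifts} of the mod-$q$ idempotents to $\mathcal{O}_L$, and those lifts are not the archimedean idempotents $\mathbf e_j$ (which do not lie in $L$ at all). The linear map $(k_1,\dots,k_d)\mapsto\sum_j\ell_j k_j$ on $L_{\mathbb R}$ is therefore not the isometry you describe, and your conclusion that the output is a spherical Gaussian ``of the same parameter'' is false on its face---the paper computes the output parameter as $\sqrt{\sum_i|a_i|^2}\cdot r$, where the $a_i$ are the archimedean coordinates of a lift of $\ell_1$. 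You correctly flag the mismatch as your ``main obstacle'', but the proposed fix (localising at primes above $q$, or a statistical-distance fallback) addresses only the $q$-adic side; the Gaussian is defined via the \emph{archimedean} $\ell_2$ norm on $L_{\mathbb R}$, and no amount of mod-$q$ bookkeeping makes the archimedean discrepancy disappear.

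The paper proceeds instead by fixing lifts of the $\ell_j$, writing each $d$-block of $\sigma_L(\sum_j\ell_j k_j)$ as $\mathbf A\mathbf k$ where the rows of $\mathbf A$ are cyclic shifts of $(a_1,\dots,a_d)$ with $a_i=\theta^i(a_1)$ the embeddings of the lift of $\ell_1$, and then arguing directly that $\mathbf A\mathbf A^{\dagger}$ is a scalar matrix: the cyclic-shift structure makes the diagonal entries equal to $\sum_i|a_i|^2$, and the idempotent orthogonality $\ell_i\ell_j=0$ for $i\neq j$ is invoked to kill the off-diagonal entries. Equality of the scalar across the $n$ blocks then comes from invariance under the $K$-embeddings. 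So rather than an abstract isometry, the argument is a concrete covariance computation tied to the specific archimedean coordinates of the chosen lifts; this is the missing idea in your proposal.
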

\begin{proof}
Recall that in the case where $K$ is a prime power cyclotomic the power basis is a rotation and a scaling of the canonical basis (see e.g. \cite{crockett_challenges_2016}), so a discrete Gaussian in the polynomial representation corresponds to a discrete Gaussian in the canonical basis as well.  Order the canonical embedding of $\mathcal{O}_L$ such that elements of $\mathcal{O}_K$ embed as vectors of $n$ blocks of length $d$ that are the same in each block, e.g.
\begin{align*}
k_1 = (k_{1,1}, k_{1,1} \dots, k_{1,1}, k_{1,2}, \dots, k_{1,n}),
\end{align*}
where each entry $k_{i,j}$ of $k_i$ appears $d$ times. Since the $\ell_i$ form a cyclic basis, in each $d$-block the entries of $\ell_{i+1}$ are just a cyclic shift of those of $\ell_i$ \footnote{Again assuming a sensible ordering.}. For a fixed choice of basis the distribution in each $d$-block of $\ell$ is independent, because the $k_{i,j}$ are sampled independently from a spherical Gaussian. So we can consider one $d$ block of $\ell$ at a time, and write the $d$-block of $\ell_1$ as $a_1, \dots , a_d$. Since multiplication in the canonical embedding is coordinatewise and the $\ell_i$ form a cyclic basis, the first block of $\ell$ can be written as
\begin{align*}
\begin{pmatrix}
a_1 & a_2 & \dots & a_d \\
a_d & a_1 & \dots & a_{d-1} \\
\vdots & \vdots & \ddots & \vdots \\
a_{2} & a_{3} & \dots & a_1
\end{pmatrix} \cdot
\begin{pmatrix}
k_{1,1} \\
k_{2,1} \\
\vdots \\
k_{d,1}
\end{pmatrix}.
\end{align*}
Call the left matrix $\textbf{A}$ and the right vector $\textbf{k}$. $\textbf{k}$ is a Gaussian of parameter $r$, so $\textbf{A} \textbf{k}$ has has a Gaussian distribution with covariance matrix $r \cdot \textbf{A}\textbf{A}^{\dagger}$ by e.g. \cite[Lemma 2.5]{luzzi_almost_2018}, and if this is diagonal and constant on the lead diagonal then we are done. Due to the structure of the canonical embedding and how we picked our basis in the $\mathcal{O}_L/\langle q \rangle$ representation, we have that $a_i = \theta^i(a_1)$, and that for $i \neq j$ $\theta^i(a_1) \cdot \theta^j(a_1) = 0 \mod q$. It follows that the off-diagonal entries of $\textbf{A}\textbf{A}^{\dagger}$ are $0$ (since product being $0$ is preserved under representations) and the diagonal entries are $\sum_{i=1}^{d} |a_i|^2$, where $\vert \cdot \vert$ denotes the absolute value. Hence, the first $d$-block of $\ell$ is a spherical Gaussian distribution, and since this analysis holds for any block it follows that each block of $\ell$ is a spherical Gaussian. One also needs to show that the Gaussian distribution has the same variance in each block, but this follows from the fact that the $K$-embeddings permute the mod $\mathfrak{q}_i$ values and fix the $\ell_2$ norm of $K_\mathbb{R}$. Explicitly, by construction  each $K$ embedding modulo $\langle q \rangle$ can be extended `identically' onto $\mathcal{O}_L \mod \langle q \rangle$ in a way that fixes each $\ell_i$, so they must have the same set of values in each block (this would not be the case if we considered their norm in a global sense, and the restriction modulo $q$ is strictly necessary).
\end{proof}
Note that the statement does not define the resulting parameter of the Gaussian outputting $\ell$, but the proof allows one to compute this: say each $k_i$ was chosen from a discrete Gaussian of parameter $r$. Then each element of $\ell$ has parameter $\sqrt{\sum_i \vert a_i\vert^2} \cdot r$. Computing $\sqrt{\sum_i \vert a_i\vert^2}$ is a one time cost for a fixed choice of $\ell_1,\dots,\ell_d$, so one can sample the required Gaussian over ${\mathcal{O}_L}_q$ of parameter $r'$ by sampling from the discrete Gaussian over ${\mathcal{O}_K}_q$ of parameter $r = r'/\sqrt{\sum_i \vert a_i\vert^2}$.

Finally, to sample elements of $\Lambda_q$ we merely sample each $u$ coordinate independently according to the above technique. If we wanted to use this method in the cryptosystem of \cref{clwecrypto} to attain efficient operations then we would sample and store all elements using this representation over the cyclic basis $\ell_1,\dots \ell_d$.

Unfortunately, we are unable to generalize this theorem to the case where $\mathfrak{q}_i$ remains prime, or even intermediate cases. In this case, there exist cyclic bases of $\mathcal{O}_L/\mathfrak{q}_i \mathcal{O}_L$ over $\mathcal{O}_K/\mathfrak{q}_i$, but since $\mathcal{O}_L/\mathfrak{q}_i \mathcal{O}_L$ is a finite field and thus has no zero-divisors the cyclic bases are not orthogonal. Consequentially, the matrix $\textbf{A}$ does not in general give a diagonal $\textbf{A} \textbf{A}^T$ and thus the distribution of $\textbf{A}\textbf{k}$ has several potentially large covariance terms. If one were able to tolerate the covariance, the method can be extended in this case. It is also possible that a cyclic basis satisfying the condition that $\textbf{A}\textbf{A}^T$ is diagonal may exist for certain choices of field, but we were not able to find such a family of fields. We note that this question can be asked as a more generic question about finite fields: let $F = \mathbb{F}_{q^d}$ be a finite field with $d > 1$ and let $\theta$ denote the Frobenius automorphism of $F$. Does there exist a cyclic basis $b_1, \dots, b_d$ with $b_j = \theta^j (b_1)$ for $F$ over $\mathbb{F}_q$ satisfying
\begin{align*}
    \sum_{i=0}^{d-1} \theta^i(b_1 \cdot \theta^{j-k}(b_1)) = 0
\end{align*}
for all $j \neq k$ less than $d$? Here $j$ and $k$ correspond to $j,k^\text{th}$ entry of $\textbf{A} \textbf{A}^T$. We were unable to come up with a basis satisfying this condition, but neither can we show that no such basis exists.
\begin{example}
We exhibit an example of the basis $\ell_1, \ell_2$ in the simplest setting, that of a degree $2$ extension of $\mathbb{Q}$. Let $L = \mathbb{Q}(i)$, with ring of integers $\mathcal{O}_L = \mathbb{Z}[i]$, and consider the ideal $\langle 5 \rangle$ of $\mathcal{O}_L$. $5$ factorizes in $\mathcal{O}_L$ as $5 = (2+i)(2-i)$, and it is clear that $\langle 5 \rangle = \langle 2+i \rangle \cdot \langle 2-i \rangle$ is a decomposition into a product of prime ideals.

Using the notation $\mathfrak{q}_1 := \langle 2+i \rangle, \mathfrak{q}_2 := \langle 2-i \rangle$, it is easy to check that $2+i = -1 \mod \mathfrak{q}_2$ and thus $-(2+i) = -2-i$ is a valid choice for $\ell_1$. Similarly, $-(2-i) = -2+i$ is an appropriate choice for $\ell_2$. Correspondingly, the distribution obtained by sampling $k_1, k_2 \leftarrow D_r$, the discrete Gaussian of parameter $r$ over $\mathbb{Z}_5$, and outputting $k_1 \cdot (-2 +i) + k_2 \cdot (-2 -i)$ is a discrete Gaussian over $\mathcal{O}_L \mod \langle 5 \rangle$. Furthermore, to multiply two elements $k = k_1 \ell_1 + k_2 \ell_2$ and $g = g_1 \ell_1 + g_2 \ell_2$ modulo $5$ one outputs $kg = (k_1 g_1 \mod 5) \cdot \ell_1 + (k_2 g_2 \mod 5) \cdot \ell_2$, at a cost of two operations in $\mathbb{Z}_5$, and performing the $\mathcal{O}_L \mod 5$ CRT on each $u$ coordinate of an element of the resulting natural order $\Lambda_5$ can be done by merely reading off the $d^2 = 4$ values of $k_i$ and no additional computation.

Furthermore, this is an example where the techniques of our next section may be advantageous. We will generalize the multiplication and CRT technique so that one is free to use any basis of $\mathcal{O}_L$ over $\mathbb{Z}$, for example the basis $\lbrace 1, i \rbrace$. In this basis it is particularly easy to sample a discrete Gaussian in the polynomial representation of $\mathcal{O}_L \mod \langle 5 \rangle \cong \frac{\mathbb{Z}_5[x]}{x^2+1}$, but the resulting multiplication operation and CRT decomposition is not coordinatewise in the basis and so a small amount of efficiency is lost at a gain in parameter of the Gaussian. Specifically, to compute the CRT on an element $k = k_1 + k_2 \cdot i$, one has to precompute\footnote{Note that precomputing the image of $1$ is trivial.} the values $i = -2 \mod \mathfrak{q}_1, i = 2 \mod \mathfrak{q}_2$ and output
\begin{align*}
    (k_1 - 2 k_2 \mod \mathfrak{q}_1, 2 k_2 \mod \mathfrak{q}_2),
\end{align*}
which requires additional operations over $\mathbb{Z}_5$.
\end{example}

\subsection{Generalizing to non-Split $q$ and Arbitrary Bases}
In order to construct the cyclic, orthonormal, basis of \cref{gaussianorth}, the previous section requires that $q$ be completely split in both $K$ and $L$. However, it is possible to drop the splitting condition in $L$ and obtain fast multiplication algorithms in the general case at only a small loss of efficiency. We demonstrate the technique in this section and then briefly describe cases where a general algorithm may be superior to the one requiring that $q$ splits by discussing alternatives to \cref{gaussianorth}.

Observe that, regardless of the prime ideal decomposition of each $\mathfrak{q}_i\mathcal{O}_L$, under the CRT decomposition the quotient ring $\mathcal{O}_L/\mathfrak{q}_i \mathcal{O}_L$ is a vector space of dimension $d$ over $\mathbb{F}_q \cong {\mathcal{O}_K}/\mathfrak{q}_i$. Consequentially, an arbitrary ${\mathcal{O}_K}_q$ basis $\ell_1, \dots, \ell_d$ of ${\mathcal{O}_L}_q$ can be decomposed into $n$ bases $\ell_j = (\ell_{1,j}, \dots, \ell_{n,j})$ so that each collection $\ell_{i,1},\dots,\ell_{i,d}$ of $\mathfrak{q}_i \mathcal{O}_L$ parts is a vector space basis of dimension $d$ over $\mathcal{O}_K/\mathfrak{q}_i$. Indeed, in the split case we constructed each $\ell_i$ in this manner. Armed with this knowledge, we adapt the multiplication algorithm as follows.

Choose an arbitrary integral $\mathcal{O}_K$-basis $\ell_1,\dots, \ell_d$ of $\mathcal{O}_L$. As a precomputation phase, compute and store the images $\ell_j \mod \mathfrak{q}_i \mathcal{O}_L$ for each $i$ and $j$. The CRT-like decomposition of \cref{CRTlike} splits each of the $u$ coordinates of an element of $\Lambda_q$, an element of ${\mathcal{O}_L}_q$, into its mod $\mathfrak{q}_i \mathcal{O}_L$ parts. Once again, we suggest an algorithm where elements of ${\mathcal{O}_L}_q$ are stored in the form $\ell = \sum_{j = 1}^d \ell_j k_j$ for $k_j \in {\mathcal{O}_K}_q$, e.g. on elements stored as $K$-combinations of this basis. We split ${\ell \in \mathcal{O}_L}_q$ into its $\mathcal{O}_L/\mathfrak{q}_i$ components in time $O(d \cdot n \log n)$, since
\begin{align*}
\sum_{j=1}^d \ell_j k_j \mod \mathfrak{q}_i \mathcal{O}_L= \sum_{j=1}^d (\ell_j \mod \mathfrak{q}_i\mathcal{O}_L) \cdot (k_j \mod \mathfrak{q}_i \mathcal{O}_L),
\end{align*}
where each $k_j \mod \mathfrak{q}_i$ can be computed in time $O(n \log n)$ by the $K$-CRT and each $\ell_j \mod \mathfrak{q}_i \mod \mathcal{O}_L$ was computed in the precomputation phase. Consequentially, we can perform the CRT style decomposition of an element in $\Lambda_q$ whose $u$ coordinates are all stored in this manner in time $O(d^2 \cdot n \log n)$, since we must split $d^2$ elements of $\mathcal{O}_K$. This decomposing complexity is the same as in the previous case where $q$ splits completely. Following this, each ring $\mathcal{R}_i$ can be plugged in to the algorithm of \cite{caruso_fast_2017-4} to compute the multiplication in time $O(Nd^{\omega-2})$. However, since the $\ell_i$ do not correspond to a standard orthonormal basis we incur an extra cost when reversing this transformation. Namely, each of the $u$ coordinates of each ring $\mathcal{R}_i$ is output by the algorithm of \cite{caruso_fast_2017-4} as an element $\ell \in \mathcal{O}_L \mod \mathfrak{q}_i \mathcal{O}_L$ expressed in an arbitrary normal basis. Before reversing the decomposition we must allow for the complexity of expressing each element of the output in the bases obtained by the images of $\ell_1, \dots, \ell_d \mod \mathfrak{q}_i \mathcal{O}_L$, as this basis was not necessarily normal. Since $\mathcal{O}_L \mod \mathfrak{q}_i \mathcal{O}_L$ is a vector space of dimension $d$ over $\mathbb{F}_q$ this can be done via a precomputed change of basis matrix over $\mathbb{F}_q$ in time $O(d^\omega)$, and since there are $n$ rings with $d$ coordinates each the complexity of computing this on every coordinate is $O(nd^{\omega+1})$. The resulting multiplication algorithm has total complexity $O(N \log (N/d^2)) + O(Nd^{\omega-1})$. While this represents only a minor asymptotic loss, especially since we expect the first term to dominate the complexity, it is likely in practice that the extra step required to recover the basis representation would cause a tangible slowdown.

An unfortunate issue with this technique is that by replacing the orthonormal basis with an arbitrary basis we have lost \cref{gaussianorth} and thus the efficient method for sampling a discrete Gaussian in the representation $\ell = \sum_j \ell_j k_j$. However, this generalization allows for the use of an arbitrary basis $\ell_1,\dots, \ell_d$, unlike in the split case in which we chose a specific basis. Since we require that elements of $\Lambda_q$ are input into the algorithm with $u$ coordinates in the form $\sum_j \ell_j k_j$ this algorithm can be combined with the cryptosystem of \cref{clwecrypto} in the case where there is a basis $g_1, \dots, g_d$ of ${\mathcal{O}_L}_q$ over ${\mathcal{O}_K}_q$ in which one can compute the representation $\ell = \sum_j g_j k_j$ particularly efficiently. This is because one can just sample $\ell$ from the usual Gaussian distribution over the polynomial basis of ${\mathcal{O}_L}_q$, compute its representation as $\ell = \sum_j g_j k_j$, and then apply the multiplication algorithm in this form. More generally, the flexible choice of basis allows for both non-split $q$ and for a user to choose their favourite $\mathcal{O}_L$ basis properties, such as a normal basis or a basis consisting of small elements. We remark that it is likely possible to construct a pair of fields $L/K$ that allow for a basis $\ell_1, \dots, \ell_d$ permitting a fast algorithm transforming from the polynomial representation of $\mathcal{O}_L$ to the representation $\sum_i \ell_i k_i$ with each $k_i$ in polynomial representation, which would allow one to bypass the complications of sampling Gaussian distributions by just sampling in $\mathcal{O}_L$ directly.

\subsection{Generalizing to Other Centers}
In the exposition of the previous section we required that $q$ splits completely in the center $K$. This corresponds to the requirement in the ring and module cases that $q$ splits completely in the field $K$, which allows the use of the NTT to compute multiplications over a direct product of finite fields. However, there has been recent progress in loosening this requirement for the NTT and allowing the modulus $q$ to be $1$ mod $n$ rather than $1$ mod $m$, where as usual $K$ is the $m^\text{th}$ cyclotomic field of degree $n$. For example, in the second round specification of KYBER \cite{avanzi_kyber_2019} $q$ is set as $3329$ and $n = 256$, yet they still support efficient NTT based multiplication. In such cases, $q$ is `well' split but not completely split, and the fast NTT operations use the method of \cite{lyubashevsky_truly_2019}, where $q$ splits into some product of prime ideals $\mathfrak{q}_i$ whose norms can be small powers of $q$.

We observe that our methods can be partially generalized to this case in the following manner. Say $\langle q \rangle = \prod_i \mathfrak{q}_i$ is a decomposition into prime ideals in $\mathcal{O}_K$ and there exists an efficient algorithm for fast multiplication in ${\mathcal{O}_K}_q$. We can replace our condition that $q$ splits completely in $\mathcal{O}_L$ with the condition that each ideal $\mathfrak{q}_i$ in the $\mathcal{O}_K$-factorization of $q$ splits completely into a product of $d$ prime ideals $\mathfrak{q}_i \mathcal{O}_L = \prod_{j=1}^d \mathfrak{q}_{i,j}$ in $\mathcal{O}_L$ of the same norm. Then, we can replicate the method of \cref{qfast} to find a cyclic, orthonormal basis $\textbf{e}_1,\dots, \textbf{e}_d$ of $\mathcal{O}_L/\mathfrak{q}_i \mathcal{O}_L$ over $\mathcal{O}_K/\mathfrak{q}_i$ and concatenate together the bases for each $i$ to make the cyclic, orthonormal, basis $\ell_1,\dots,\ell_d$ of ${\mathcal{O}_L}_q$ over ${\mathcal{O}_K}_q$. Since the basis is orthonormal, if $\ell = \sum_i \ell_i k_i$ and $g = \sum_i \ell_i g_i$ with each $k_i, g_i \in {\mathcal{O}_K}_q$, then
\begin{align*}
    \ell \cdot g = \sum_{i=1}^d \ell_i (g_i \cdot k_i).
\end{align*}
Since the basis is cyclic,
\begin{align*}
    \theta(\ell) &= \sum_i \theta(\ell_i) k_i \\
    &= \sum_i \ell_i k_{i-1}
\end{align*}
where we define $k_0 := k_d$.

Now we are able to use existing fast multiplication algorithms in ${\mathcal{O}_K}_q$ to compute operations in ${\mathcal{O}_L}_q$ by expressing elements in this basis. Represent each $x = \sum_{i=0}^{d-1} u^i x_i \in \Lambda_q$ by expressing each $x_i \in {\mathcal{O}_L}_q$ in the $\ell_j$ basis. Then, to multiply $x$ and $y$ in $\Lambda_q$ one only has to compute multiplications in ${\mathcal{O}_K}_q$, since the operations required are just computing the non-commutative relation $\ell u = u \theta(\ell)$, which merely permutes the $\ell_i$ using $\theta$, and computing multiplication and addition, which can be done coordinatewise in the orthonormal $\ell_i$ basis. Each $L$ multiplication requires $d$ multiplications in $K$, and each $u$ coordinate of $\Lambda$ requires $d$ multiplications in $L$. Consequentially, naive multiplication in $\Lambda_q$ takes $d^3$ instances of the efficient ${\mathcal{O}_K}_q$-multiplication algorithm we have access to. For specific $K$-multiplication algorithms it is likely that this process can be streamlined; the intention of this section is merely to demonstrate that one can build efficient $\Lambda_q$ operations from more general efficient operations over the center in the same manner that the techniques of \cref{qfast} used the CRT method.

\footnotesize
\bibliographystyle{IEEEtran}
\bibliography{CLWE_references,CLWE_references_thesis}

\end{document}